\colorlet{crvena}{black!80}
\colorlet{plava}{black!50}
\colorlet{zelena}{black!20}
\def\rank{\mathrm{rank}}
\def\arctan{\mathrm{arc\,tan\,}}
\def\tan{\mathrm{tan\,}}
\def\dist{\mathrm{dist}}
\def\diag{\mathrm{diag}}
\def\Gr {\mathrm{Gr}}
\numberwithin{equation}{section}
\newtheorem{theorem}{Theorem}[section]
\newtheorem{proposition}[theorem]{Proposition}
\newtheorem{corollary}[theorem]{Corollary}
\newtheorem{lemma}[theorem]{Lemma}
\newtheorem{remark}[theorem]{Remark}
\newtheorem{example}[theorem]{Example}
\newtheorem{definition}[theorem]{Definition}
\definecolor{lightgray}{gray}{0.9}
\definecolor{Lavender}{cmyk}{0,0.28,0,0}
\definecolor{SkyBlue}{cmyk}{0.50,0,0.08,0}
\definecolor{SpringGreen}{cmyk}{0.26,0,0.76,0}
\definecolor{FireBrick}{cmyk}{0,0.809,0.809,0.302}
\definecolor{Choco}{cmyk}{0.0,0.5,0.857,0.176}
\definecolor{DarkOlive}{cmyk}{0.206,0,0.561,0.58}
\definecolor{OliveGrab}{cmyk}{0.246,0,0.754,0.443}
\definecolor{Orange}{cmyk}{0,0.3529,1,0}
\definecolor{Magenta}{cmyk}{0,1,0,0}
\date{}
\begin{document}

\hyphenation{Abe-li-an}

\hyphenation{boun-da-ry}

\hyphenation{Cha-o-tic cur-ves}

\hyphenation{Dy-na-mi-cal dy-na-mics}

\hyphenation{El-lip-ti-cal en-coun-ter}

\hyphenation{Lo-ba-chev-sky}

\hyphenation{Mar-den Min-kow-ski}

\hyphenation{pa-ra-met-ri-zes pa-ra-met-ri-za-tion
Pon-ce-let-Dar-boux}

\hyphenation{quad-ric quad-rics qua-dri-ques}

\hyphenation{sin-gu-la-ri-ties spa-ces}

\hyphenation{tra-jec-to-ry trans-ver-sal}

\author{Vladimir Dragovi\'c}
\address{
Department of Mathematical Sciences, University of Texas at Dallas, EC 35, 800 West Campbell Road, TX 75080 USA
\newline\indent Mathematical Institute SANU, Kneza Mihaila 36, Belgrade,
Serbia} \email{vladimir.dragovic@utdallas.edu}

\author{Milena Radnovi\'c}
\address{Mathematical Institute SANU, Kneza Mihaila 36, Belgrade, Serbia}
\email{milena@mi.sanu.ac.rs}

\title[Bicentennial of the Great Poncelet Theorem]
{Bicentennial of the Great Poncelet Theorem (1813-2013): Current Advances}

\thanks{The research which led to this paper was partially
supported by the Serbian Ministry of Education and Science (Project
no.~174020: \emph{Geometry and Topology of Manifolds and Integrable
Dynamical Systems}).
M.~R.~ is grateful the associateship scheme of the \emph{The Abdus Salam} ICTP (Trieste, Italy) for the support and to Vered Rom-Kedar from The Weizmann Institute of Science (Rehovot, Israel) for hospitality and support.
V.~D.~ is grateful to Marcelo Viana and IMPA (Rio de Janeiro, Brazil) for hospitality and support.
}

\keywords{Poncelet theorem, periodic billiard trajectories, pencils of quadrics, relativistic quadrics, integrable line congruences, double reflection nets, pseudo-integrable billiards, interval exchange transformations}

\begin{abstract}
The paper gives a review of very recent results related to the Poncelet Theorem, on the occasion of its bicentennial.
We are telling the story of one of the most beautiful theorems of Geometry, recalling for the general mathematical audience the dramatic historic circumstances which led to its discovery, a glimpse of its intrinsic appeal, and importance of its relationship to the dynamics of billiards within confocal conics.
We focus on the three main issues:
A) The case of Pseudo-Euclidean spaces, presenting a recent notion of relativistic quadrics, and applying it to the  description of periodic trajectories of billiards within quadrics.
B) The relationship between so-called billiard algebra and foundations of modern discrete differential geometry which leads to the Double-reflection nets.
C) We introduce a new class of dynamical systems -- pseudo-integrable billiards generated by the boundary composed of several arcs of confocal conics having nonconvex angles.
The dynamics of such billiards has several extraordinary properties.
They are related to the interval exchange transformations and generate families of flows which are minimal but not uniquely ergodic.
This type of dynamics provides a novel type of the Poncelet porisms -- the local ones.  
\end{abstract}

\maketitle

\tableofcontents

\setlength{\parskip}{2pt}

\section{Introduction and history}
\label{sec:intro}
On November 18th, 1812, near Smolensk, a young French officer serving as a battery commander was wounded and his horse killed under him.
This was the last day of the battle of Krasnoi, when the French III Corps of Marshal Ney clashed with three corps commanded by general Count Miloradovich of the Russian Army.
Napoleon's army was heavily defeated and many thousands of his men were left during the withdrawal and imprisoned.
A day after the battle, the officer was found by Russian soldiers.

During his subsequent imprisonment in Saratov, which lasted from April 1813 until June 1814, while recovering from an illness, he recalled the fundamental principles of geometry to which he had been introduced during his studies at \'Ecole Polytechnique.
Without literature, he not only recollected what he had learned from his professors Monge, Carnot, and Brianchon, but also went on developing projective geometry, in particular properties of conics.
The young officer's name was Jean Victor Poncelet.
The notes he made in prison, called \emph{Cahiers de Saratov}\footnote{Saratov notebooks} contained one of deepest, most beautiful, and most important theorems of projective geometry -- the Great Poncelet Theorem, see \cite{PonceletBio}.  

The first published proof of the Great Poncelet Theorem appeared in Poncelet's famous work \emph{Trait\'e des propri\'et\'es projectives des figures} in 1822 \cite{Poncelet1822}.
The complete \emph{Cahiers de Saratov} were published many decades later, in 1862 as \cite{Poncelet1862}.
Meanwhile, Poncelet became a professor of Mechanics at Sorbonne and at the Coll\`ege de France, general of a brigade, the governor of the \'Ecole Polytechnique, the commander of the National Guard of the Department of the Seine, elected member of the Constitutional Assembly, President of the Scientific Commission for the English exhibition of 1851.
Poncelet was also a Grand Officer of the Legion of Honour, Chevalier of the Prussian order, corresponding member of the academies of Sankt Petersburg, Turin, Berlin, and a Foreign member of the Royal Society of London.

It is interesting to mention that, in Chebyshev's report on his business trip to France \cite{Tcheb1852}, Poncelet is described ``as a well-known scientist in practical mechanics''.

Let us present now one of the formulations of the Poncelet Theorem, with an example shown on Figure \ref{fig:sedmouglovi}.
\begin{figure}[h]
\centering
\input{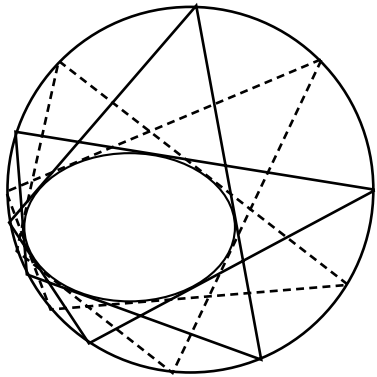}
\caption{Two heptagons inscribed in a conic and circumscribed about another one.}\label{fig:sedmouglovi}
\end{figure}

\begin{theorem}[The Poncelet Theorem]\label{th:poncelet0}
Let $\mathcal C$ and $\mathcal D$ be two conics in the plane.
Suppose that there is a polygon inscribed in $\mathcal C$ and circumscribed about $\mathcal D$.
Then there is infinitely many such polygons and all of them have the same number of sides.
Moreover, each point of $\mathcal C$ is a vertex of such a polygonal line.
\end{theorem}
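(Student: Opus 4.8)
The plan is to run the classical argument of Jacobi, recast through an auxiliary elliptic curve, working over $\mathbb{C}$ and returning to the real picture at the very end. First I would introduce the \emph{Poncelet correspondence}
\[
\mathcal{E}=\{(P,\ell)\ :\ P\in\mathcal{C},\ \ell\ \text{tangent to}\ \mathcal{D},\ P\in\ell\}\subset\mathcal{C}\times\mathcal{D}^{*},
\]
where $\mathcal{D}^{*}$ is the dual conic parametrizing the tangents of $\mathcal{D}$. Assume first that $\mathcal{C}$ and $\mathcal{D}$ are in general position, so that they meet in four distinct points; the degenerate configurations can be recovered afterwards by a limiting argument within the pencil spanned by $\mathcal{C}$ and $\mathcal{D}$, or are ruled out once a genuine polygon is assumed to exist. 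The first projection $\pi_{1}\colon\mathcal{E}\to\mathcal{C}\cong\mathbb{P}^{1}$ is a double cover, since through a generic point of $\mathcal{C}$ there pass exactly two tangents to $\mathcal{D}$, and these coincide precisely when the point lies on $\mathcal{D}$. Thus $\pi_{1}$ is branched over the four points of $\mathcal{C}\cap\mathcal{D}$, and Riemann--Hurwitz gives $2g(\mathcal{E})-2=2(-2)+4$, that is $g(\mathcal{E})=1$; the same computation applies to the second projection $\pi_{2}\colon\mathcal{E}\to\mathcal{D}^{*}\cong\mathbb{P}^{1}$. Since each of these covers is ramified, $\mathcal{E}$ is connected, hence a smooth curve of genus $1$.

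Next I would consider the two deck involutions: $\sigma(P,\ell)=(P,\ell')$, where $\ell'$ is the second tangent from $P$ to $\mathcal{D}$, and $\tau(P,\ell)=(P',\ell)$, where $P'$ is the second point in which $\ell$ meets $\mathcal{C}$. Both are regular involutions, and each has (generically) four fixed points, namely the branch points of $\pi_{1}$, respectively of $\pi_{2}$. The elementary step in building a Poncelet polygon --- slide along a tangent line from one vertex to the next, then rotate to the next tangent line --- is precisely the composition $T=\sigma\circ\tau$. Choosing a group law on $\mathcal{E}$, any involution possessing fixed points is of the form $x\mapsto -x+c$; writing $\sigma(x)=-x+c_{1}$ and $\tau(x)=-x+c_{2}$ we obtain $T(x)=x+(c_{1}-c_{2})$, i.e.\ $T$ is \emph{translation} by the fixed element $t:=c_{1}-c_{2}$ of $\mathcal{E}$.

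The porism then follows formally. A Poncelet polygon starting from a vertex $P_{0}$ with first side $\ell_{0}$ closes up after exactly $n$ steps if and only if $T^{n}(P_{0},\ell_{0})=(P_{0},\ell_{0})$, i.e.\ $nt=0$ in $\mathcal{E}$. But $T^{n}$ is translation by $nt$, so $nt=0$ forces $T^{n}=\mathrm{id}$ on \emph{all} of $\mathcal{E}$: every pair $(P,\ell)\in\mathcal{E}$ then generates a closed polygonal line, these exhaust the Poncelet polygons, and they all have the same number of sides, equal to the order of $t$ in $\mathcal{E}$. Infinitude is clear since $\mathcal{C}$ has infinitely many points. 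To get that each point of $\mathcal{C}$ is a vertex, one picks a tangent line $\ell$ through $P$ and uses $T^{n}(P,\ell)=(P,\ell)$; here one must check that the construction stays inside the real locus, which it does because a real tangent to $\mathcal{D}$ meets only the closed exterior region of $\mathcal{D}$, so all successive vertices remain exterior once the initial one is, and $\ell$ then genuinely exists.

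The main obstacle I anticipate is the identification $g(\mathcal{E})=1$ in the degenerate cases --- tangency between $\mathcal{C}$ and $\mathcal{D}$, or one of them splitting into lines --- where $\mathcal{E}$ acquires singular points or drops genus; the cleanest fix is to perturb $\mathcal{C},\mathcal{D}$ within their pencil and pass to the limit, or to dispose of the low-genus cases by direct inspection (in genus $0$ the Poncelet polygon degenerates). A second point needing care is the real descent: one has to argue that the complex translation $T$ restricts to a rotation of each connected component (a circle) of $\mathcal{E}(\mathbb{R})$, so that the closure of one real polygon propagates to all of them; but this again follows at once from $T^{n}=\mathrm{id}$ over $\mathbb{C}$.
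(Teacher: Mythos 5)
Your argument is correct in its essentials. Note that the paper does not prove Theorem \ref{th:poncelet0} at all: it states it as a classical result and points to Poncelet's synthetic proof, Jacobi's proof via the addition theorem for elliptic functions, and the Griffiths--Harris derivation through points of finite order on an elliptic curve. What you have written is precisely this last approach: the incidence curve $\mathcal{E}\subset\mathcal{C}\times\mathcal{D}^{*}$ of bidegree $(2,2)$, the genus computation via Riemann--Hurwitz from the four branch points $\mathcal{C}\cap\mathcal{D}$, the two deck involutions each of the form $x\mapsto -x+c$, and the observation that their composite $T$ is a translation, so that $T^{n}=\mathrm{id}$ holds everywhere as soon as it holds at one point. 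The two issues you flag yourself are indeed the only places requiring routine verification: smoothness of $\mathcal{E}$ (needed before applying Riemann--Hurwitz; it holds for transverse $\mathcal{C},\mathcal{D}$ and is where the general-position hypothesis enters) and the descent to the real locus, where the final claim that \emph{every} point of $\mathcal{C}$ is a vertex tacitly uses that all of $\mathcal{C}$ lies in the closed exterior of $\mathcal{D}$, i.e.\ the nested configuration of the classical statement. With those caveats made explicit, the proof is complete and is the standard modern one.
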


Later, using the addition theorem for elliptic functions, Jacobi gave another proof of the theorem in 1828 (see \cite{JacobiGW}).
Essentially, the Poncelet Theorem is equivalent to the addition theorems for elliptic curves and Poncelet's proof represents a synthetic way of deriving the group structure on an elliptic curve.

An important question is to find an analytical condition determining, for two given conics, if an $n$-polygon inscribed in one conic and circumscribed about the second exists.
In 1853, such a condition was derived by Cayley, who used the theory of Abelian integrals \cite{Cayley1854}.
He was dealing with the Poncelet Porism in a number of other papers \cites{Cayley1853, Cayley1855,Cayley1857, Cayley1858, Cayley1861}.
Cayley's work served as an inspiration for another great mathematician, Lebesgue, who translated Cayley's proof to the geometric language.
He derived the proof of Cayley's condition using methods of projective geometry and algebra, see the remarkable book \emph{Les coniques} \cite{LebCONIQUES}.
In modern settings, Griffiths and Harris derived Cayley theorem by finding an analytical condition for points of finite order on an elliptic curve \cite{GrifHar1978}.

We have to emphasize that Poncelet originally proved a statement that is much more general than the theorem formulated above, see \cites{BergerGeometry, Poncelet1822}. He derived the latter as a corollary.
Namely, he considered $n+1$ conics of a pencil in the projective plane.
If there exists an $n$-polygon with vertices lying on the first of these conics and each side touching one of the other
$n$ conics, then infinitely many such polygons exist.
We shall refer to this statement as \emph{the Full Poncelet Theorem} and call such polygons \emph{the Poncelet polygons}.

A nice historical overview of the Poncelet Theorem, together with modern proofs and remarks is given in \cite{BKOR1987}.
Various classical theorems of Poncelet type with short modern proofs are reviewed in \cite{BarthBauer1996}, while the algebro-geometrical approach to families of the Poncelet polygons via modular curves is given in \cites{BaMi1993,Jak1993}.
There are also two recent books on the subject \cites{FlattoBOOK,DragRadn2011book}.


The Poncelet Theorem has an important mechanical interpretation.
\emph{The elliptical billiard} \cites{KozTrBIL,Kozlov2003,Tab2005book} is a dynamical system where a material point of the unit mass is moving under inertia, or in other words, with a constant velocity inside an ellipse and obeying the reflection law at the boundary, i.e.\ having congruent impact and reflection angles with the tangent line to the ellipse at any bouncing point, see Figure \ref{fig:reflection}.
\begin{figure}[h]
\centering
\input{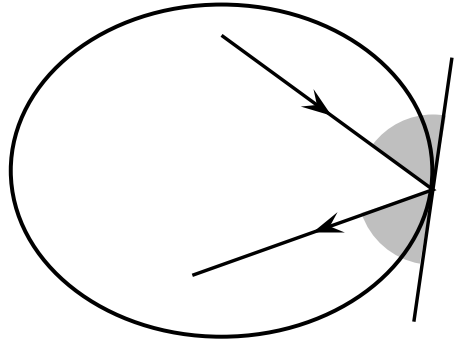}
\caption{Billiard reflection.}\label{fig:reflection}
\end{figure}
It is also assumed that the reflection is absolutely elastic and friction is neglected.

It is well known that any segment of a given elliptical billiard trajectory is tangent to the same conic, confocal with the boundary \cite{CCS1993}, see Figure \ref{fig:caustic3}.
\begin{figure}[h]
\psset{unit=0.9}
\input{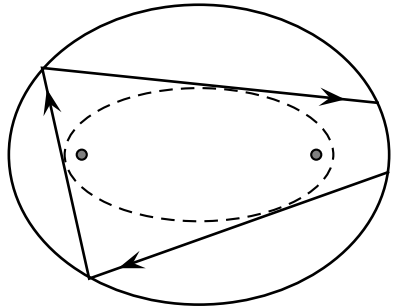}
\caption{Caustic of the billiard trajectory.}\label{fig:caustic3}
\end{figure}
If a trajectory becomes closed after $n$ reflections, then the Poncelet Theorem implies that any trajectory of the billiard system, which shares the same caustic curve, is also periodic with the period $n$.

Moreover, for any given pair of conics, there is a projective transformation of coordinates, such that the conics become confocal in the new coordinates.
Then, polygonal lines inscribed in one of the conics and circumscribed about the other conic will become billiard trajectories.

The Full Poncelet Theorem also has a mechanical meaning.
The configuration dual to a pencil of conics in the plane is a family of confocal second order curves \cite{ArnoldMMM}.
Let us consider the following, slightly unusual billiard.
Suppose $n$ confocal conics are given.
A particle is bouncing on each of these $n$ conics respectively.
Any segment of such a trajectory is tangent to the same conic confocal with the given $n$ curves.
If the motion becomes closed after $n$ reflections, then, by the Full Poncelet Theorem, any such a trajectory with the
same caustic is also closed.

The statement dual to the Full Poncelet Theorem can be generalized to the $d$-di\-men\-sio\-nal space \cite{CCS1993} (see also \cites{Previato1999,Previato2002}).
Suppose vertices of the polygon $x_1x_2\dots x_n$ are respectively placed on confocal quadric hyper-surfaces $\mathcal Q_1$, $\mathcal Q_2$, \dots, $\mathcal Q_n$ in the $d$-di\-men\-sio\-nal Euclidean space, with consecutive sides obeying the reflection law at the corresponding hyper-surface.
Then all sides are tangent to some quadrics $\mathcal Q^1$, \dots, $\mathcal Q^{d-1}$ confocal with $\{\mathcal Q_i\}$; for the hyper-surfaces $\{\mathcal Q_i,\mathcal Q^j\}$, an infinite family of polygons with the same properties exist.
Systematic exposition of this higher-dimensional theory has been presented in a survey paper \cite{DragRadn2010} and in the book \cite{DragRadn2011book}.

But, more than one century before these quite recent results, in 1870, Darboux proved the generalization of the Poncelet Theorem for a billiard within an ellipsoid in the three-dimensional space \cite{Darboux1870}.
It seems that his work on this topic was completely forgotten until very recently, see \cite{DragRadn2011book}.

An interesting generalization of the Poncelet theorem concerning polyhedra that are simultaneously inscribed and circumscribed about two given quadrics in the three-dimensional space was obtained in \cite{GrifHar1977}.
Further generalizations in that direction can be found in \cite{DragRadn2008,DragRadn2011book}.

It is natural to search for a Cayley-type condition related to generalizations of the Poncelet Theorem.
Such conditions for the billiard system inside an ellipsoid in the Euclidean space of arbitrary finite dimension were derived in \cites{DragRadn1998a,DragRadn1998b} using algebro-geometric approach from \cites{Ves1988,MoVes1991}, where billiards within quadrics also considered as discrete time systems.
In recent papers \cites{DragRadn2004,DragRadn2004cor,DragRadn2006jmpa,DragRadn2006jms}, algebro-geometric conditions for existence of periodical billiard trajectories within $k$ quadrics in $d$-di\-men\-sio\-nal Euclidean space were derived.

Most of the results on the subject obtained by 2008 have been presented in the book \cite{DragRadn2011book}.
An important part of the book has been devoted to the Griffiths-Harris program of development of a synthetic approach to higher genera addition theorems. The book also offered a historical overview of the subject, and it included a detailed analysis of the results of Darboux and the contributions of Lebesgue.

About the same time, an extremely interesting book \cite{DuistermaatBOOK} appeared, devoted to discrete integrable systems from the point of view of the QRT maps\footnote{The QRT maps are named after Quispel, Roberts, and Thompson \cite{QRT1988}.}, elliptic surfaces, and elliptic fibrations.
That book provides approach to the Poncelet Theorem as to an important example of the symmetric QRT maps, see \cite[Chapter 10]{DuistermaatBOOK}.
Connections to the elliptic billiards are presented in \cite[Chapter 11.2]{DuistermaatBOOK}.

The present paper is devoted to the bicentennial jubilee of the celebrated Poncelet Theorem and it mostly exposes the current advances of the subject, the results which have been obtained in the last four years.
The interrelation between geometry of pencils of quadrics  and related billiard dynamics is continuing to play a crucial role. 

The projective geometry nucleus of that billiard dynamics is the Double reflection theorem, see Theorem \ref{th:DRT} below.
The meaning of that theorem is that reflections off two confocal quadrics commute.
Namely, there are four lines belonging to a certain linear space and forming a \emph{double reflection configuration}: these four lines reflect to each other according to the billiard law at two confocal quadrics, see Figure \ref{fig:double-reflection-theorem} in Section \ref{sec:drc}.

The double reflection configuration is a cornerstone of a new type of integrable line congruences -- so-called \emph{double reflection nets}.
In these discrete integrable systems of geometric origin, double reflection configurations are playing the role of quad-equations.
The integrability condition is a consequence of an operational consistency of the billiard algebra from \cite{DragRadn2008}, see \emph{the Six-pointed star theorem} -- Theorem \ref{th:zvezda}.
These results are presented in Section \ref{sec:cong}.

Section \ref{sec:pseudo.poncelet} is devoted to pencils of quadrics in the pseudo-Euclidean spaces and to related billiard dynamics and Poncelet configurations.
A novelty of our approach is based on the notion of \emph{relativistic quadrics}, see \cite{DragRadn2012adv}.
The general theory of relativistic quadrics is exposed in the Section \ref{sec:relativistic}.
Section \ref{sec:minkowski.plane} is devoted to the case of the Minkowski plane and serves as an introduction to the higher-dimensional cases.
It also contains a detailed and quite elementary description of periodic trajectories of elliptic billiards and an analysis of topological properties of such systems.
Generalized Cayley-type conditions for pseudo-Euclidean spaces of arbitrary dimension are derived in Section \ref{sec:periodic}.

The last section is devoted to the billiards in Euclidean plane with more complex boundary, formed by arcs of conics from a confocal family, see \cite{DragRadn2012nosonja}.

For the case of billiard systems within confocal conics without non-convex angles on the boundary, it is well known that the famous Poncelet Porism holds:
\begin{itemize}
\item[(A)]
if there is a periodic billiard trajectory with one initial point of the boundary, then there are infinitely many such periodic trajectories with the same period, sharing the same caustic;
\item[(B)]
even more is true, if there is one periodic trajectory, then all trajectories sharing the same caustic are periodic with the same period.
\end{itemize}
See \cites{DragRadn2006jms,DragRadn2006jmpa,DragRadn2011book} where the corresponding conditions of Cayley-type were derived.

However, when non-convex angles exist on the boundary, which is the case studied in Section \ref{sec:nosonja}, one sees  that (A) is still generally true.
However, (B) is not true any more.
\emph{The Poncelet Porism is true locally, but not globally.}
Algebro-geometric conditions of Cayley's type in such a case provide only sufficient but not necessary conditions for periodicity.
A deeper analysis of dynamics in this case is related to a class of interval exchange transformations and to the use of a modified Keane's condition. 
Section \ref{sec:keane} is concluded by an explicit example of the billiard system which satisfies the Cayley type condition, but is still not periodic since it satisfies the Keane's condition as well.
In Section \ref{sec:unique.erg}, we show that there are infinitely many billiard tables bounded by arcs of confocal conics such that the corresponding flow will not be uniquely ergodic.
In particular, Theorem \ref{th:ergodic} shows that the interval exchange transformation corresponding to certain tables is equivalent to the Veech example of minimal and not uniquely ergodic systems \cites{Veech1969,MasurTab2002}.

\section{Billiards and quadrics}
\label{sec:billiards}
As a starting point, we collect some basic notions and facts, which are going to be used in sequel.

\subsection{Elliptical billiards and confocal conics in the Euclidean plane}
\label{sec:billiards.confocal}
We will start with the billiard systems within confocal conics as a mechanical setting for the Poncelet porism.

A general family of confocal conics in the plane can be represented in the following way:
\begin{equation}\label{eq:confocal-in-plane}
\mathcal{C}_{\lambda}\ :\ \frac{x^2}{a-\lambda}+\frac{y^2}{b-\lambda}=1,
\quad\lambda\in\mathbf{R},
\end{equation}
with $a>b>0$ being constants, see Figure \ref{fig:confocal-e}.

\begin{figure}[h]
\input{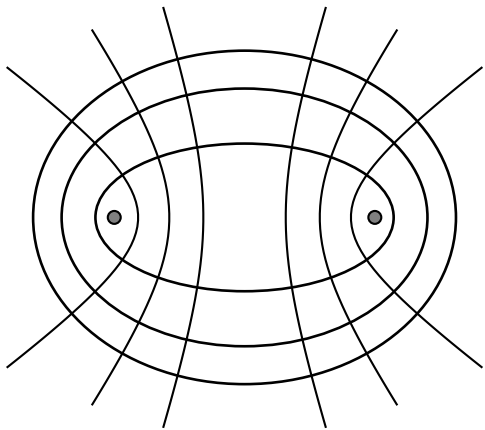}
\caption{Family of confocal conics in the Euclidean plane.}\label{fig:confocal-e}
\end{figure}

By the famous \emph{Chasles' theorem} \cite{Chasles}, each segment of a given billiard trajectory is tangent to a fixed conic that is confocal to the boundary (see also
\cites{KozTrBIL,DragRadn2011book}).
This conic is called \emph{the caustic} of the given trajectory, see Figure \ref{fig:caustic3}.

Now, fix a constant $\alpha_0<b$ and consider billiard trajectories within confocal ellipses
$\mathcal{C}_{\lambda}$ ($\lambda<\alpha_0$) having ellipse 
$\mathcal{C}_{\alpha_0}$ as the caustic.

By the Full Poncelet Theorem, if there is a billiard trajectory with the caustic $\mathcal{C}_{\alpha_0}$ which becomes closed after successive reflections on ellipses $\mathcal{C}_{\lambda_1}$, \dots, $\mathcal{C}_{\lambda_n}$, then each point of $\mathcal{C}_{\lambda_1}$ is a vertex of such a closed trajectory.

The following generalization was proved by Darboux in \cite[Volume 3, Book VI, Chapter I]{DarbouxSUR}.

\begin{theorem}[Darboux theorem on grids]\label{th:darboux.grids}
Suppose that billiard trajectories with the caustic $\mathcal{C}_{\alpha_0}$ become closed after successive reflections on ellipses $\mathcal{C}_{\lambda_1}$, \dots, $\mathcal{C}_{\lambda_n}$.
Then, the intersection points of $i$-th and $j$-th sides of all such trajectories belong to an ellipse $\mathcal{C}_{\lambda^{ij}}$.
\end{theorem}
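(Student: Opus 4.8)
The plan is to reduce the statement to the standard elliptic-curve picture of billiards within confocal conics, which is, in modern language, essentially Darboux's own route via elliptic functions. First I would recall (see \cite{DragRadn2011book}, and \cite{LebCONIQUES} for the classical version) that the lines tangent to the fixed caustic $\mathcal{C}_{\alpha_0}$ are parametrized, after passing to the natural double cover that records orientation, by the points of an elliptic curve $\Gamma$ attached to the confocal pencil, and that the billiard reflection off any confocal conic $\mathcal{C}_\mu$, viewed as a transformation of the set of $\mathcal{C}_{\alpha_0}$-tangent lines, corresponds on $\Gamma$ to a \emph{translation} by an element $\tau_\mu$ depending only on $\mu$ and $\alpha_0$. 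Equivalently: two lines tangent to $\mathcal{C}_{\alpha_0}$ meet on $\mathcal{C}_\mu$ — in which case they automatically satisfy the reflection law there, by Chasles' theorem — exactly when their $\Gamma$-parameters differ by $\tau_\mu$. This is the same structure that underlies the double reflection theorem (Theorem~\ref{th:DRT}) and the Cayley-type conditions.

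Given this, here is the argument. By the Full Poncelet Theorem the closed trajectories $P_1\dots P_n$ with $P_k\in\mathcal{C}_{\lambda_k}$ and sides $s_k=P_kP_{k+1}$ tangent to $\mathcal{C}_{\alpha_0}$ form a one-parameter family. Writing $p_k\in\Gamma$ for the parameter of $s_k$, the reflection at $P_{k+1}\in\mathcal{C}_{\lambda_{k+1}}$ gives $p_{k+1}=p_k+\tau_{\lambda_{k+1}}$, hence $p_j-p_i=\tau_{\lambda_{i+1}}+\dots+\tau_{\lambda_j}=:\sigma_{ij}$, which depends only on $i,j$ and on the fixed data $\alpha_0,\lambda_1,\dots,\lambda_n$ — the same element of $\Gamma$ for every trajectory of the family, since along the connected family the orientations, and with them the signs of the individual $\tau$'s, cannot jump. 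The key lemma is then: for a point $Q$ exterior to $\mathcal{C}_{\alpha_0}$ the two tangent lines from $Q$ to $\mathcal{C}_{\alpha_0}$ have $\Gamma$-parameters of the form $u(Q)\pm v(Q)$, where the difference $2v(Q)$ depends only on one of the two confocal coordinates of $Q$ (indeed $2v(Q)=\tau_\mu$ for the member $\mathcal{C}_\mu$ of the family through $Q$, because reflecting off $\mathcal{C}_\mu$ at $Q$ interchanges those two tangent lines). Apply this with $Q=P_{ij}:=s_i\cap s_j$: the lines $s_i$ and $s_j$ are precisely the two tangent lines from $P_{ij}$ to $\mathcal{C}_{\alpha_0}$, and their parameters differ by $\sigma_{ij}$, so the relevant confocal coordinate of $P_{ij}$ is pinned to a value $\lambda^{ij}$ with $\tau_{\lambda^{ij}}=\sigma_{ij}$, a value fixed along the connected family. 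Hence $P_{ij}\in\mathcal{C}_{\lambda^{ij}}$ for every trajectory of the family, which is the claim, with $\mathcal{C}_{\lambda^{ij}}$ a confocal conic — an ellipse, in the range relevant here.

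I expect the real work to be concentrated in the key lemma, i.e. the precise facts that reflection off $\mathcal{C}_\mu$ is translation by an $\mu$-determined amount and that the two tangent lines from a point have parameters $u\pm v$ with $2v$ governed by a single confocal coordinate. Making this rigorous demands the orientation bookkeeping on the double cover $\Gamma$ (confocal conics meet orthogonally, so the ellipse-type and hyperbola-type coordinates of $Q$ enter through $u$ and $v$ in opposite ways, and one must keep the signs of the $\tau_\mu$ straight), together with treating the degenerate cases (if $\sigma_{ij}\equiv 0$ the sides $s_i,s_j$ coincide; $\mathcal{C}_{\lambda^{ij}}$ may degenerate to the focal segment or to the line through the foci). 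An alternative, elliptic-curve-free route would work in the dual plane, where the dual conics $\mathcal{C}_\lambda^{*}$ do form an honest pencil through the four foci and each reflection step becomes a Poncelet step of the pair $(\mathcal{C}_{\alpha_0}^{*},\mathcal{C}_{\lambda_k}^{*})$ within it, so that the assertion becomes: chords of $\mathcal{C}_{\alpha_0}^{*}$ joining points that correspond under the composed Poncelet maps envelope a single member of the pencil — but identifying that envelope as a pencil member is the same Cayley-type computation, so the elliptic-curve argument is the efficient one.
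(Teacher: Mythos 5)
The paper does not prove this theorem: it is stated as a classical result of Darboux with a reference to \cite{DarbouxSUR}, so there is no in-paper argument to compare yours against. Your proposal is, however, a correct proof in outline, and it is the standard modern route (essentially Darboux's own, via elliptic functions; it is also the mechanism behind the Poncelet--Darboux grid results of Schwartz and Levi--Tabachnikov cited in the surrounding text, and the same Abel--Jacobi bookkeeping the paper itself uses for the Cayley-type conditions in Theorems \ref{th:cayley2} and \ref{th:cayley-nosonja}). Two points deserve to be nailed down explicitly. First, the statement asserts the locus is an \emph{ellipse}, while through $P_{ij}$ there pass both an ellipse and a hyperbola of the confocal family, and the two tangent lines from $P_{ij}$ are exchanged by reflection off \emph{either} of them; which of the two conics your argument pins depends on the orientation convention, exactly as you flag. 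The clean resolution is that $\sigma_{ij}$ is a sum of translations $\tau_{\lambda_k}$ attached to confocal \emph{ellipses}, hence lies in the identity component of $\Gamma(\mathbf{R})$, and the translations realized by reflection off confocal ellipses (together with orientation reversal) exhaust that component minus its $2$-torsion --- by the monotonicity and limiting values of the rotation function --- whereas hyperbola reflections give the other component; so $\sigma_{ij}=\pm\tau_{\lambda^{ij}}$ for an ellipse parameter $\lambda^{ij}$, as required. Second, note that the paper's own Proposition \ref{prop:rotation} is \emph{proved using} Theorem \ref{th:darboux.grids}, so one cannot shortcut your key lemma by invoking the rotation-number metric of that proposition without circularity; your direct appeal to the elliptic curve avoids this.
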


\begin{example}
Consider the case when $n=7$ and $\lambda_1=\dots=\lambda_6\neq\lambda_7$.
By the Darboux theorem of grids the intersection points of the first and the fourth side, the second and the fifth, the third and the sixth, as well as the third and the seventh side of each corresponding heptagon will belong to the same ellipse.
On Figure \ref{fig:darboux} two such heptagons are shown, and the ellipse containing the mentioned intersection points is gray.
\end{example}

\begin{figure}[h]
\input{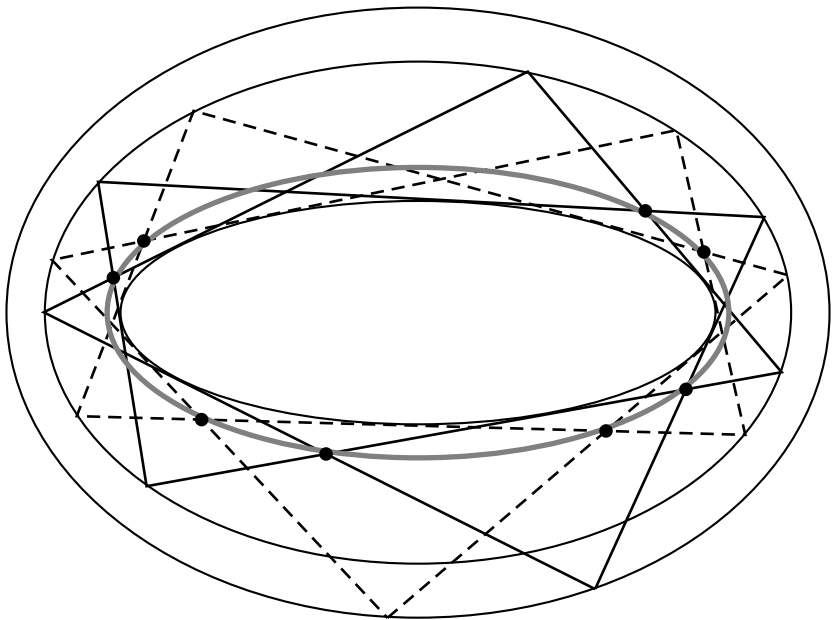}
\caption{Poncelet heptagons and the Darboux grid.}\label{fig:darboux}
\end{figure}

Let us note that the Darboux theorem on grids from \cite{DarbouxSUR} was even more general, since it related to geodesic polygons on Liouville surfaces.
For discussion and the generalization of the Darboux theorem to pairs of non-closed billiard trajectories and to the arbitrary dimension, see \cites{DragRadn2006jmpa,DragRadn2008,DragRadn2011book}.
The Darboux theorem on grids has also recently been subject of interest in \cites{Schw2006,LeviTab}.

\begin{proposition}\label{prop:rotation}
There exist metric $\mu$ on a conic $\mathcal{C}_{\alpha_0}$ and a function 
$$
\rho:(-\infty,\alpha_0)\to\mathbf{R}
$$
satisfying:
\begin{itemize}
\item
metric $\mu$ is non-atomic, i.e.\ $\mu(\{X\})=0$ for each point $X$ on 
$\mathcal{C}_{\alpha_0}$;

\item
$\mu(\ell)\neq0$ for each open arc $\ell$ of $\mathcal{C}_{\alpha_0}$;

\item
for any $\lambda<\alpha_0$, and each triplet of points $X\in\mathcal{C}_{\lambda}$, $A\in\mathcal{C}_{\alpha_0}$, $B\in\mathcal{C}_{\alpha_0}$ such that $XA$, $XB$ are tangent to $\mathcal{C}_{\alpha_0}$, the following equality holds:
$$
\mu(AB)=\rho(\lambda);
$$

\item
$\mu(\mathcal{C}_{\alpha_0})=1$.
\end{itemize}
\end{proposition}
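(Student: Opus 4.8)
\emph{Strategy.} The plan is to obtain $\mu$ as a \emph{normalized holomorphic measure} on an elliptic curve attached to the confocal pencil \eqref{eq:confocal-in-plane} together with the fixed caustic $\mathcal{C}_{\alpha_0}$, and to read $\rho$ off from the fact that, in the associated coordinate, the billiard correspondence linearizes. I would introduce the genus--one curve
\[
\Gamma:\quad y^{2}=(a-x)(b-x)(\alpha_0-x),
\]
and identify $\mathcal{C}_{\alpha_0}$ --- equivalently, the circle of its (oriented) tangent lines --- with the real bounded oval $\Gamma_{0}=\{(x,y)\in\Gamma(\mathbf{R}):x\in[b,a]\}$: a point of $\mathcal{C}_{\alpha_0}$ has Jacobi coordinate $\lambda_{2}=\alpha_{0}$ and a second elliptic coordinate $\lambda_{1}\in[b,a]$ attained exactly twice, which matches the two sheets of $\Gamma$ over $x=\lambda_{1}$. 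Let $\Pi=\oint_{\Gamma_{0}}\tfrac{dx}{y}>0$ and let $\mu$ be the image on $\mathcal{C}_{\alpha_0}$ of the probability differential $\tfrac{1}{\Pi}\tfrac{dx}{y}$; equivalently, writing $u(P)=\tfrac1\Pi\int_{P_{0}}^{P}\tfrac{dx}{y}\in\mathbf{R}/\mathbf{Z}$ for a fixed base point $P_{0}$, one has $\mu(\ell)=u(Q)-u(R)\pmod 1$ for an arc $\ell$ running from $R$ to $Q$ in a fixed orientation of $\mathcal{C}_{\alpha_0}$. In the coordinate $\lambda_{1}$ the density of $\mu$ is proportional to $\big((a-\lambda_{1})(\lambda_{1}-b)(\lambda_{1}-\alpha_{0})\big)^{-1/2}$: it is finite and strictly positive on the open oval and has only integrable square--root singularities at $\lambda_{1}=a,b$, the four axis--vertices of $\mathcal{C}_{\alpha_0}$. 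Hence $\mu$ is a Borel probability measure on $\mathcal{C}_{\alpha_0}$ (total mass $\Pi/\Pi=1$), absolutely continuous with respect to arc length, so $\mu(\{X\})=0$ for all $X$ and $\mu(\ell)>0$ for every nonempty open arc $\ell$; this already yields the first, second and fourth bulleted properties.

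\emph{The key step.} It then remains to prove the third one: that in the coordinate $u$ the correspondence $X\mapsto\{A,B\}$ is a translation whose shift depends only on $\lambda$. By Chasles' theorem \cite{Chasles} the billiard reflection off $\mathcal{C}_{\lambda}$ preserves the caustic $\mathcal{C}_{\alpha_0}$, hence induces a map $T_{\lambda}\colon\mathcal{C}_{\alpha_0}\to\mathcal{C}_{\alpha_0}$; if $XA,XB$ are the two tangents to $\mathcal{C}_{\alpha_0}$ from a point $X\in\mathcal{C}_{\lambda}$, then, with the trajectory oriented, $T_{\lambda}(A)=B$. The claim is that, under $u$, $T_{\lambda}$ becomes the rotation $u\mapsto u+\rho(\lambda)$, with $\rho(\lambda)\in\mathbf{R}$ an elliptic integral whose endpoints involve only $\lambda$ and the branch points $a,b,\alpha_{0}$ --- this is the integrability of the confocal billiard (Jacobi; in discrete time \cites{Ves1988,MoVes1991}; in the confocal setting \cites{DragRadn1998a,DragRadn2011book}). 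The structural reason I would exploit is that the divisor cut on $\Gamma$ by the degenerate conic $XA\cup XB$ together with $\mathcal{C}_{\alpha_0}$ is principal modulo a divisor supported over $x=\lambda$, so Abel's theorem forces $u(B)-u(A)$ to be independent of the choice of $X\in\mathcal{C}_{\lambda}$; defining $\rho(\lambda)$ to be this common value gives $\mu(AB)=\rho(\lambda)$ for the arc from $A$ to $B$ in the fixed orientation, which is the asserted equality, and $\rho\colon(-\infty,\alpha_{0})\to\mathbf{R}$ is the required function.

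\emph{Main obstacle, and an alternative.} The hardest point will be exactly this translation claim --- that $T_{\lambda}$ linearizes and, crucially, that the \emph{same} coordinate $u$ works for all $\lambda$ simultaneously, so that $\rho(\lambda)$ is nothing more than a $\mu$-measure. A way to organize the argument that avoids theta functions is dynamical: each $T_{\lambda}$ is an orientation--preserving homeomorphism of a circle, and these maps pairwise commute by the Double Reflection Theorem (Theorem \ref{th:DRT}); the rotation number $\rho_{0}(\lambda)$ of $T_{\lambda}$ is a non-constant continuous function of $\lambda$, hence attains an irrational value $\rho_{0}(\lambda^{\ast})$; then $T_{\lambda^{\ast}}$ is conjugate to the rigid rotation $R_{\rho_{0}(\lambda^{\ast})}$ (it is real--analytic, so Denjoy's theorem applies --- or, on $\Gamma(\mathbf{R})$, it is literally a translation), and because every $T_{\lambda}$ commutes with $T_{\lambda^{\ast}}$ this one conjugacy simultaneously conjugates the whole family to rotations; taking $\mu$ to be the pull-back of Lebesgue measure under it yields all four properties. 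In either route the real bookkeeping to watch is the orientation of the tangent lines and the choice of which of the two arcs $AB$ is being measured --- so that $\rho$ is single--valued rather than defined only up to $\rho\leftrightarrow 1-\rho$ --- together with the finitely many degenerate members of the pencil (the foci and the two coordinate axes), which carry no $\mu$-mass and so cause no difficulty.
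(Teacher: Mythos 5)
Your proposal is correct in substance but follows a genuinely different route from the paper. The paper's proof is synthetic: it starts from a single parameter $\lambda_0$ admitting a closed trajectory, invokes King's result that a measure with the tangent-line invariance property for that one conic exists and is unique up to a scalar, and then propagates the invariance to all $\lambda<\alpha_0$ by repeatedly applying the Darboux theorem on grids (Theorem \ref{th:darboux.grids}) to produce conics with $2^kn$-periodic trajectories, finishing with density of the rational rotation numbers; $\rho(\lambda)$ is read off as $m/n$ in the periodic cases and extended by continuity. You instead build $\mu$ explicitly as the normalized holomorphic density $|\mathcal{P}(\lambda_1)|^{-1/2}\,d\lambda_1$ coming from the curve $y^2=(a-x)(b-x)(\alpha_0-x)$, and obtain the translation property either from Abel's theorem or from the dynamical argument with commuting analytic circle maps and Denjoy. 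What your approach buys is an explicit formula for $\mu$ and for $\rho$ as a ratio of elliptic integrals, and (in the dynamical variant) a proof that uses only the Double Reflection Theorem plus standard circle dynamics, with no appeal to King or to Darboux grids; what the paper's approach buys is brevity, since both external ingredients are already on the table. Your second route is the cleaner one to complete: continuity and non-constancy of the rotation number (it runs from $0$ near $\alpha_0$ to $1/2$ at $-\infty$) give an irrational value, Denjoy yields a topological conjugacy $h$ of that one map to a rigid rotation, and the centralizer of an irrational rotation in $\mathrm{Homeo}^+(S^1)$ consists of rotations, so the single conjugacy $h$ linearizes the whole commuting family at once.

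Two small points to tidy up. First, the identification of $\mathcal{C}_{\alpha_0}$ with the bounded real oval of $\Gamma$ is not a bijection: a generic value $\lambda_1\in(b,a)$ of the hyperbolic Jacobi coordinate is attained at \emph{four} points of the ellipse (two per branch of the confocal hyperbola), so the ellipse is a degree-$2$ cover of the oval; this only changes the normalizing constant $\Pi$ and does not affect the argument, but the phrase ``attained exactly twice'' should be corrected. Second, the Abel's-theorem variant as written (``the divisor cut by the degenerate conic $XA\cup XB$ \dots is principal modulo a divisor supported over $x=\lambda$'') is only a gesture at the argument; to make it precise you would work on the incidence curve $E_\lambda=\{(X,\ell):X\in\mathcal{C}_\lambda,\ X\in\ell,\ \ell\ \text{tangent to}\ \mathcal{C}_{\alpha_0}\}$, write the Poncelet map as a composition of two involutions with fixed points, and then identify all the $E_\lambda$ with the fixed curve $\Gamma$ via the Jacobi coordinates so that a single coordinate $u$ serves every $\lambda$. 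Neither issue is a gap in the strategy.
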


Notice that the third property means that all arcs whose endpoints are on two tangents issued from a point on $\mathcal{C}_{\lambda}$ is the same, see Figure \ref{fig:metric1}.
\begin{figure}[h]
\input{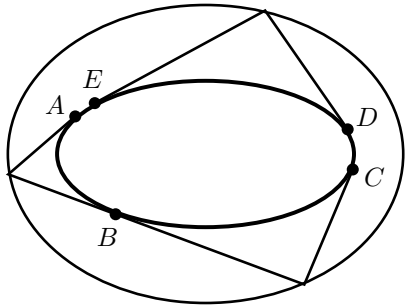}
\caption{$\mu(AB)=\mu(BC)=\mu(DE)$.}\label{fig:metric1}
\end{figure}

\begin{proof}
Take $\lambda_0$ such that there is a closed billiard trajectory in $\mathcal{C}_{\lambda_0}$ with the caustic $\mathcal{C}_{\alpha_0}$.
There is a metric $\mu$ satisfying the first three requested properties for $\lambda=\lambda_0$ -- moreover, such a metric is unique up to the multiplication by constant (see \cite{King1994}).
Thus, there is unique metric satisfying the last property as well.

Suppose the closed trajectory has $n$ vertices -- then by Theorem \ref{th:darboux.grids} there is $\lambda_1$ such that the billiard trajectories within $\mathcal{C}_{\lambda_1}$ with the caustic $\mathcal{C}_{\alpha_0}$ become closed after $2n$ reflections and $\mathcal{C}_{\lambda_0}$ contains intersections of $i$-th and $(i+2)$-nd sides of those trajectories.
Moreover, the metric $\mu$ will satisfy the requested properties for $\mathcal{C}_{\lambda_1}$.

By induction, we get the sequence $\mathcal{C}_{\lambda_k}$ of ellipses, such that billiard trajectories within $\mathcal{C}_{\lambda_k}$ with caustic $\mathcal{C}_{\alpha_0}$ are $2^kn$-periodic and $\mu$ satisfies the listed properties for these ellipses as well.
Because of the Darboux theorem, the metric will satisfy the properties for each $\mathcal{C}_{\lambda}$ that has closed billiard trajectories whose period is multiple of $n$ and the caustic $\mathcal{C}_{\alpha_0}$.

For a periodic trajectory which becomes closed after $n$ bounces on 
$\mathcal{C}_{\lambda}$ and $m$ windings about $\mathcal{C}_{\lambda_0}$,
$\rho(\lambda)=\dfrac{m}{n}$.
Since rational numbers are dense in the reals, $\mu$ will have the required properties for all $\lambda<\alpha_0$.
\end{proof}

\begin{remark}
The function $\rho$ from Proposition \ref{prop:rotation} is called \emph{the rotation function} and its values \emph{the rotation numbers}.
Note that $\rho$ is a continuously strictly decreasing function with
$\left(0,\dfrac12\right)$ as the image:
$$
\lim_{\lambda\to-\infty}\rho(\lambda)=\dfrac12,
\quad
\lim_{\lambda\to\alpha_0}\rho(\lambda)=0.
$$
\end{remark}

\subsubsection*{Elliptical billiard as a Hamiltonian system}

The standard Poisson bracket for the billiard system is defined as:
$$
\{f,g\}=\frac{\partial f}{\partial x}\frac{\partial g}{\partial\dot{x}}-
\frac{\partial f}{\partial\dot{x}}\frac{\partial g}{\partial{x}}+
\frac{\partial f}{\partial y}\frac{\partial g}{\partial\dot{y}}-
\frac{\partial f}{\partial\dot{y}}\frac{\partial g}{\partial{y}}.
$$

Define the following functions:
$$
K_{\lambda}(x,y,\dot{x},\dot{y})=\frac{\dot{x}^2}{a-\lambda}+
\frac{\dot{y}^2}{b-\lambda}-\frac{(\dot{x}y-\dot{y}x)^2}{(a-\lambda)(b-\lambda)}.
$$
These functions represent well-known first integrals of the billiard systems, see \cite{KozTrBIL}. 

\begin{proposition}
Each two functions $K_{\lambda}$ commute:
$$
\{K_{\lambda_1},K_{\lambda_2}\}=0
$$
and for $\lambda_1\neq\lambda_2$, they are functionally independent.
\end{proposition}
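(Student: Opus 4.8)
The plan is to sidestep a direct expansion of the four-variable bracket by first simplifying the pencil $K_{\lambda}$. Introduce $\ell:=\dot x y-\dot y x$ (the angular momentum) and $v^2:=\dot x^2+\dot y^2$. Clearing denominators gives
$$
(a-\lambda)(b-\lambda)\,K_{\lambda}=(b\dot x^2+a\dot y^2-\ell^2)-\lambda\,v^2 ,
$$
which is \emph{linear} in $\lambda$; hence $K_{\lambda}$ has only simple poles, at $\lambda=a$ and $\lambda=b$, and reading off the residues yields the partial-fraction form
$$
K_{\lambda}=\frac{A}{a-\lambda}+\frac{B}{b-\lambda},
\qquad
A=\dot x^2+\frac{\ell^2}{a-b},
\quad
B=\dot y^2+\frac{\ell^2}{b-a}.
$$
Thus every $K_{\lambda}$ is a constant-coefficient linear combination of the two functions $A$ and $B$, and it suffices to prove $\{A,B\}=0$.

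To compute $\{A,B\}$ I would use only bilinearity, antisymmetry and the Leibniz rule. Since $\dot x^2$ and $\dot y^2$ depend on the velocities alone, $\{\dot x^2,\dot y^2\}=0$; the two cross terms combine by antisymmetry to
$$
\{A,B\}=\frac{1}{a-b}\,\{\ell^2,\dot x^2+\dot y^2\}=\frac{2\ell}{a-b}\,\{\ell,v^2\},
$$
and $\{\ell,v^2\}=0$ because $\ell$ generates rotations while $v^2$ is rotation invariant — equivalently, a one-line computation from the definition of the bracket gives $\{\ell,\dot x^2\}=-2\dot x\dot y=-\{\ell,\dot y^2\}$. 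Hence $\{A,B\}=0$, and since each $K_{\lambda_i}$ is a linear combination of $A$ and $B$, the bracket $\{K_{\lambda_1},K_{\lambda_2}\}$ is a linear combination of $\{A,A\}$, $\{A,B\}$, $\{B,B\}$, so it vanishes.

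For functional independence when $\lambda_1\neq\lambda_2$, the point is that $(A,B)\mapsto(K_{\lambda_1},K_{\lambda_2})$ is a linear change of coordinates whose matrix has determinant
$$
\frac{1}{a-\lambda_1}\cdot\frac{1}{b-\lambda_2}-\frac{1}{b-\lambda_1}\cdot\frac{1}{a-\lambda_2}
=\frac{(a-b)(\lambda_2-\lambda_1)}{(a-\lambda_1)(b-\lambda_1)(a-\lambda_2)(b-\lambda_2)}\neq 0
$$
(using $a>b$). So $K_{\lambda_1},K_{\lambda_2}$ are functionally independent on an open dense set precisely when $A,B$ are, and the latter is clear from the differentials at a convenient point — for instance any point with $x=y=0$ and $\dot x,\dot y\neq 0$, where $dA=2\dot x\,d\dot x$ and $dB=2\dot y\,d\dot y$ are visibly independent.

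There is no genuine obstacle in the argument; the only step that needs care is the initial reduction. A frontal attack on $\{K_{\lambda_1},K_{\lambda_2}\}$ produces many terms and obscures the structure, whereas the partial-fraction decomposition turns both the involutivity and the independence into short computations.
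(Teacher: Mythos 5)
Your proof is correct. Note that the paper itself does not prove this proposition: it simply records the functions $K_{\lambda}$ as well-known first integrals with a reference to the literature, so there is no in-paper argument to compare against. Your reduction is the standard and efficient one: clearing denominators shows $(a-\lambda)(b-\lambda)K_{\lambda}$ is affine in $\lambda$, so the pencil is spanned by the two residues $A=\dot x^2+\ell^2/(a-b)$ and $B=\dot y^2+\ell^2/(b-a)$ (with $A+B=v^2$), and both involutivity and independence become two-line checks. I verified the key computations: $\{A,B\}=\tfrac{1}{a-b}\{\ell^2,v^2\}=\tfrac{2\ell}{a-b}\{\ell,v^2\}$ and $\{\ell,\dot x^2\}=-2\dot x\dot y=-\{\ell,\dot y^2\}$, so the bracket vanishes; and the change-of-basis determinant $\frac{(a-b)(\lambda_2-\lambda_1)}{(a-\lambda_1)(b-\lambda_1)(a-\lambda_2)(b-\lambda_2)}$ is nonzero for $\lambda_1\neq\lambda_2$, so independence of $K_{\lambda_1},K_{\lambda_2}$ reduces to that of $A,B$, which your evaluation of $dA,dB$ at a point with $x=y=0$, $\dot x\dot y\neq0$ settles on an open dense set. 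No gaps.
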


It is straightforward to prove the following
\begin{proposition}
Along a billiard trajectory within any conic $\mathcal{C}_{\lambda_0}$, with caustic 
$\mathcal{C}_{\alpha_0}$ and the speed of the billiard particle being equal to $s$, the value of each function $K_{\lambda}$ is constant and equal to
$$
K_{\lambda}=\frac{\alpha_0-\lambda}{(a-\lambda)(b-\lambda)}\cdot s^2.
$$
\end{proposition}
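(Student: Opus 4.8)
The plan is to use the known invariance of the integrals $K_\lambda$ to reduce the claim to one convenient evaluation. Recall, as already quoted (see \cite{KozTrBIL}), that each $K_\lambda$ is a first integral of the billiard dynamics: it is constant along every straight-line segment of the trajectory and is preserved by the reflection law at any conic confocal with the family $\mathcal C_\mu$. The speed is preserved as well, so $\dot x^2+\dot y^2=s^2$ throughout the motion. Hence the common value of $K_\lambda$ along a fixed trajectory may be computed at a single convenient state $(x,y,\dot x,\dot y)$, simultaneously for all $\lambda$.

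The key elementary observation is that, after clearing denominators,
$$
(a-\lambda)(b-\lambda)\,K_\lambda=\dot x^2\,(b-\lambda)+\dot y^2\,(a-\lambda)-(\dot x y-\dot y x)^2=-(\dot x^2+\dot y^2)\,\lambda+C(x,y,\dot x,\dot y)
$$
is an \emph{affine} function of $\lambda$, with leading coefficient $-s^2$. Thus the whole proposition reduces to identifying the constant term $C$, and for this it suffices to know $K_\lambda$ for a single value of $\lambda$. The decisive choice is $\lambda=\alpha_0$: once we show $K_{\alpha_0}=0$, we obtain $C=\alpha_0 s^2$, and therefore $(a-\lambda)(b-\lambda)\,K_\lambda=s^2(\alpha_0-\lambda)$, which is exactly the asserted equality.

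It remains to check that $K_{\alpha_0}=0$, and this is precisely where Chasles' theorem \cite{Chasles} enters: the line carrying the current segment of the trajectory is tangent to the caustic $\mathcal C_{\alpha_0}$. A direct computation identifies $K_\mu(x,y,\dot x,\dot y)$ with the reduced discriminant in $t$ of the quadratic $\frac{(x+t\dot x)^2}{a-\mu}+\frac{(y+t\dot y)^2}{b-\mu}=1$ that describes the intersection of the line $t\mapsto(x,y)+t(\dot x,\dot y)$ with $\mathcal C_\mu$; tangency to $\mathcal C_{\alpha_0}$ is precisely the vanishing of this discriminant at $\mu=\alpha_0$, i.e.\ $K_{\alpha_0}=0$. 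Equivalently and even more concretely, one evaluates $K_\lambda$ at the point $T=(x_T,y_T)$ where the segment touches $\mathcal C_{\alpha_0}$: there $(\dot x,\dot y)$ is proportional to the tangent vector $\bigl(\tfrac{y_T}{b-\alpha_0},-\tfrac{x_T}{a-\alpha_0}\bigr)$, so $\dot x y_T-\dot y x_T$ is proportional to $\tfrac{x_T^2}{a-\alpha_0}+\tfrac{y_T^2}{b-\alpha_0}=1$; substituting back into $K_{\alpha_0}$ and using $T\in\mathcal C_{\alpha_0}$ collapses all three terms to zero.

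No essential obstacle arises here; the argument is genuinely a short computation. The only points worth a moment's care are the affine-in-$\lambda$ reduction, which is what makes a single evaluation enough, and the harmless degenerate configurations — a segment lying along a principal axis or passing through a focus — for which ``tangent to the caustic'' has to be understood in the limiting sense, the formula then following by continuity in the initial data.
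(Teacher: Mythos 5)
Your proof is correct and complete. The paper itself offers no argument here (it simply says the proposition is ``straightforward to prove''), so there is nothing to diverge from; your route is the natural one and is fully rigorous. The two ingredients you isolate are exactly right: (i) $(a-\lambda)(b-\lambda)K_\lambda$ is affine in $\lambda$ with leading coefficient $-s^2$, so the invariant value of the whole family $\{K_\lambda\}$ is pinned down by a single evaluation; and (ii) $K_\mu$ coincides with the reduced discriminant of the quadratic governing the intersection of the trajectory's line with $\mathcal{C}_\mu$, so Chasles' theorem gives $K_{\alpha_0}=0$ and hence $(a-\lambda)(b-\lambda)K_\lambda=s^2(\alpha_0-\lambda)$. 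Point (ii) is moreover consistent with the machinery the paper deploys later for the pseudo-Euclidean case, where the tangency condition is written as the vanishing of precisely this discriminant in (\ref{eq:diskriminanta})--(\ref{eq:integralsF}); your computation is the planar Euclidean specialization of that identity. The closing remark about degenerate configurations (caustic degenerating when the line passes through a focus or lies on an axis) is an appropriate caveat, handled correctly by continuity.
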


\begin{corollary}
Each $K_{\lambda}$ is integral for the billiard motion in any domain with border composed of a few arcs of confocal conics.
\end{corollary}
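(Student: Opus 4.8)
The plan is to split the motion into the two kinds of pieces it is built from — the free rectilinear segments between consecutive impacts, and the reflections at the boundary arcs — and to check invariance of each $K_{\lambda}$ separately on each kind of piece. Along a free segment the velocity $(\dot x,\dot y)$ is constant and $(x(t),y(t))=(x_0+t\dot x,\,y_0+t\dot y)$, so $\dot x y-\dot y x=\dot x y_0-\dot y x_0$ is constant as well; substituting into the defining expression
$$
K_{\lambda}=\frac{\dot{x}^2}{a-\lambda}+\frac{\dot{y}^2}{b-\lambda}-\frac{(\dot{x}y-\dot{y}x)^2}{(a-\lambda)(b-\lambda)}
$$
shows that $K_{\lambda}$ is constant along the segment. (Equivalently, the $K_{\lambda}$ are the classical first integrals of the free motion recalled above, separated in elliptic coordinates.)

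Next, at an impact point the trajectory reflects off an arc of some conic $\mathcal{C}_{\mu}$ of the confocal family. By Chasles' theorem the reflected ray is tangent to the same confocal caustic $\mathcal{C}_{\alpha_0}$ as the incoming ray, and the reflection is elastic, so the speed $s=\sqrt{\dot x^2+\dot y^2}$ is unchanged. By the preceding Proposition, on any state whose ray is tangent to $\mathcal{C}_{\alpha_0}$ and whose speed is $s$ one has $K_{\lambda}=\dfrac{\alpha_0-\lambda}{(a-\lambda)(b-\lambda)}\,s^2$; since both $\alpha_0$ and $s$ are preserved by the reflection, $K_{\lambda}$ takes the same value immediately before and immediately after the impact. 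Combining the two observations, $K_{\lambda}$ is constant along the whole composite trajectory, hence it is a first integral of the billiard in the given domain.

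The only point that requires care is what happens at the finitely many corner points where two boundary arcs meet, and along exceptional trajectories that pass through a focus or become tangent to a degenerate member of the pencil; these form a set of measure zero in phase space, and the statement is understood for the generic full-measure set of trajectories avoiding them, where the argument above applies verbatim. The main — and only mildly technical — obstacle is thus to make sure that Chasles' theorem (equivalently, the fact that reflection preserves the confocal caustic) is being invoked for reflections off \emph{every} member $\mathcal{C}_{\mu}$ of the confocal family that occurs on the boundary, and not merely off a single fixed conic as in the statements preceding this corollary.
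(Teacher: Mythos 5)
Your argument is correct and is essentially the proof the paper intends (the paper states the corollary without proof, as an immediate consequence of the preceding Proposition): $K_{\lambda}$ is a first integral of the free flight, and its value $\frac{\alpha_0-\lambda}{(a-\lambda)(b-\lambda)}s^2$ depends only on the caustic parameter $\alpha_0$ and the speed $s$, both of which are preserved by reflection off any member of the confocal family (Chasles), including the degenerate ones carrying the straight segments of the boundary.
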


Let us recall that a Hamiltonian system on a $2n$-dimensional symplectic manifold is completely integrable if it possesses $n$ functionally independent first integrals such that the Poisson bracket between any two of them is zero.
It is well known that the Liouville-Arnold theorem (see \cite{ArnoldMMM}) describes regular compact leaves of a completely integrable Hamiltonian system, which are common level sets of the first integrals, as tori, with the dynamics being  quasi-periodic and uniform on these invariant tori.

Although being with one-sided constraints, the billiard systems can be seen as Hamiltonian.
Previous considerations show that the billiard system within an ellipse can be considered as a completely integrable system, since it has $n=2$ functionally independent and commuting first integrals.
The symplectic manifold is the four-dimensional cotangent bundle of the plane.
This system can be reduced to the two-dimensional symplectic manifold of lines in the plane, with the standard area form in the role of the symplectic form.
The metric from the Proposition \ref{prop:rotation} can be related to the flat structure on a one-dimensional invariant torus, represented by the caustic conic.

The invariant tori can be also viewed geometrically: the domain between the billiard border and the caustic, which is filled with the corresponding trajectories, is the projection of such tori.
With the fixed caustic, each point within the domain is the projection of four points from the corresponding level set of the phase space, see Figure \ref{fig:ring-phase}.
\begin{figure}[h]
\input{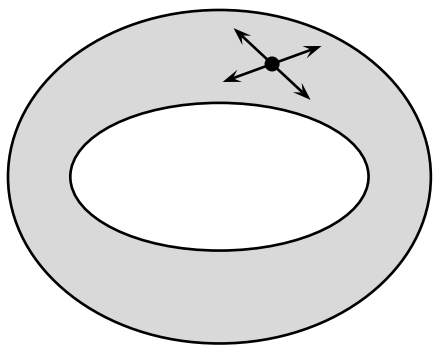}
\caption{Four possible directions of motion from a given point with the fixed caustic.}\label{fig:ring-phase}
\end{figure}

When the caustic is an ellipse, then the ring where the trajectories are placed is the projection of two Liouville tori -- each one corresponding to one direction of winding around the caustic, see Figure \ref{fig:ring-phase-ellipse}.
\begin{figure}[h]
\input{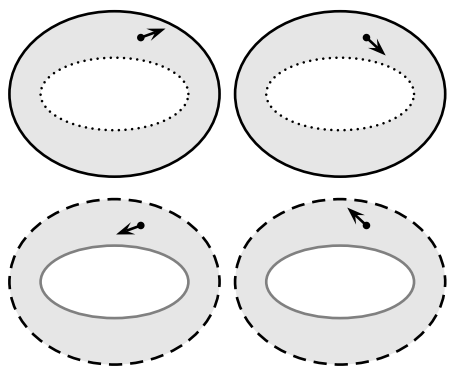}
\caption{Gluing of rings along their borders that gives two tori in the phase space.}\label{fig:ring-phase-ellipse}
\end{figure}
If the caustic is a hyperbola, then the curvilinear quadrangle bounded by the branches of the hyperbola and the ellipse is the projection of a single torus, see Figure \ref{fig:ring-phase-hyperbola}.
\begin{figure}[h]
\input{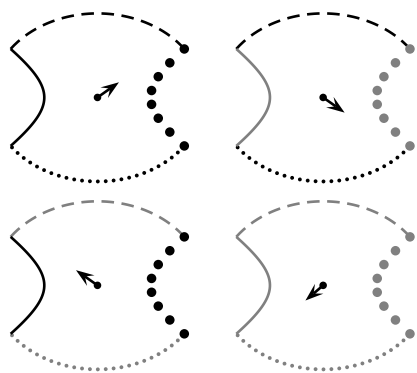}
\caption{Gluing of curvilinear quadrangles rings along their borders that gives a torus in the phase space.}\label{fig:ring-phase-hyperbola}
\end{figure}

In Section \ref{sec:nosonja} we will introduce a more general class of systems -- \emph{the pseudo-integrable systems}, and formulate a generalization of the Liouville-Arnold theorem -- see Theorem \ref{th:maier}.

\subsection{Confocal quadrics the Euclidean space and billiards}
\label{sec:confocal.euclid}
A general family of confocal quadrics in the $d$-dimensional Euclidean space is given by:
\begin{equation}\label{eq:conf.Euclid}
\frac{x_1^2}{b_1-\lambda}+\dots+\frac{x_d^2}{b_d-\lambda}=1,\quad\lambda\in\mathbf{R}
\end{equation}
with $b_1>b_2>\dots>b_d>0$, see Figure \ref{fig:konfE3}.

\begin{figure}[h]
\includegraphics[width=7.58cm, height=7.95cm]{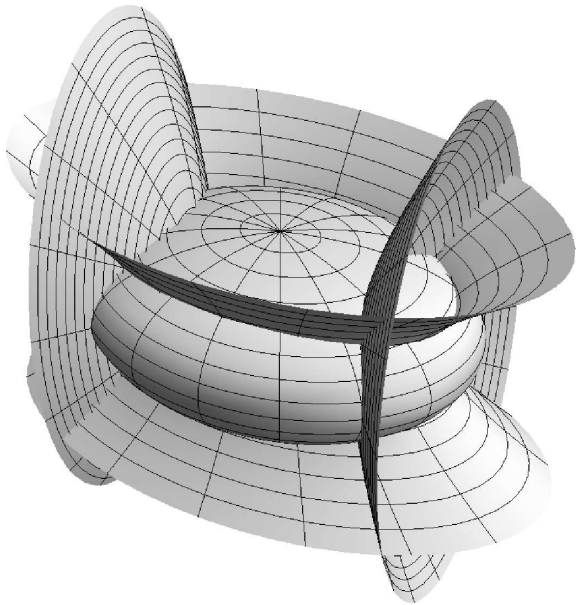}
\caption{Confocal quadrics in the three-dimensional Euclidean space.}\label{fig:konfE3}
\end{figure}

Such a family has the following properties:
\begin{itemize}
\item[E1]
each point of the space $\mathbf{E}^d$ is the intersection of exactly $d$ quadrics from (\ref{eq:conf.Euclid});
moreover, all these quadrics are of different geometrical types;

\item[E2]
family (\ref{eq:conf.Euclid}) contains exactly $d$ geometrical types of non-degenerate quadrics 
-- each type corresponds to one of the disjoint intervals of the parameter $\lambda$:
$(-\infty,b_d)$, $(b_d,b_{d-1})$, \dots, $(b_2,b_1)$.
\end{itemize}

The parameters $(\lambda_1,\dots,\lambda_d)$ corresponding to the quadrics of (\ref{eq:conf.Euclid}) that contain a given point in $\mathbf{E}^d$ are called \emph{Jacobi coordinates}.
We order them $\lambda_1>\dots>\lambda_d$.

Now, let us consider the motion of a billiard ball within ellipsoid $\mathcal{E}$.
Without losing generality, take that the parameter $\lambda$ corresponding to this ellipsoid to be equal to $0$.
Recall that, by Chasles' theorem, each line in $\mathbf{E}^d$ is touching some $d-1$ quadrics from (\ref{eq:conf.Euclid}).
Moreover, for a line and its billiard reflection on a quadric from (\ref{eq:conf.Euclid}), the $d-1$ quadrics are the same.
This means that each segment of a given trajectory within $\mathcal{E}$ has the same $d-1$ \emph{caustics} --
denote their parameters by $\beta_1$, \dots, $\beta_{d-1}$, and introduce the following:
$$
\{\bar{b}_1,\dots,\bar{b}_{2d}\}=\{b_1,\dots,b_d,0,\beta_1,\dots,\beta_{d-1}\},
$$
such that $\bar{b}_1\ge\bar{b}_2\ge\dots\ge\bar{b}_{2d}$.
In this way, we will have $0=\bar{b}_{2d}<\bar{b}_{2d-1}$, $b_1=\bar{b}_1>\bar{b}_1$.
Moreover, it is always: $\beta_i\in\{\bar{b}_{2i},\bar{b}_{2i+1}\}$, for each $i\in\{1,\dots,d\}$, see \cite{Audin1994}.

Now, we can summarize the main properties of the flow of the Jacobi coordinates along the billiard trajectories:
\begin{itemize}
 \item[E3]
along a fixed billiard trajectory, the Jacobi coordinate $\lambda_i$ ($1\le i\le d$) takes values in segment 
$[\bar{b}_{2i-1},\bar{b}_{2i}]$;

 \item[E4]
along a trajectory, each $\lambda_i$ achieves local minima and maxima exactly at touching points with corresponding caustics, intersection points with corresponding coordinate hyper-planes, and, for $i=d$, at reflection points;

 \item[E5]
values of $\lambda_i$ at those points are $\bar{b}_{2i-1}$, $\bar{b}_{2i}$;
between the critical points, $\lambda_i$ is changed monotonously.
\end{itemize}

Those properties represent the key in the algebro-geometrical analysis of the billiard flow.

\subsection{Double reflection configurations}
\label{sec:drc}
Billiards within pencils of quadrics induce fruitful dynamical systems in arbitrary dimension.
They are meaningful in spaces with non-Euclidean metric as well, and even in spaces without any metric at all.   

In this section, we review a fundamental projective geometry configuration of double reflection in the $d$-dimensional projective space $\mathbf{P}^d$ over an arbitrary field of characteristic not equal to $2$.
A detailed discussion on this matter can be found in \cites{DragRadn2008,DragRadn2011book} (see also \cite{CCS1993}).

The section is concluded by Proposition \ref{prop:drc.quad}, where we show that double reflection configuration can take the role of the quad-equation, that is every line in such a configuration is determined by the remaining three.
This simple observation is going to play a key role in Section \ref{sec:cong}, in particular in Section \ref{sec:nets}.

Let us start with recalling the notions of quadrics and confocal families in the projective space.

\emph{A quadric} in $\mathbf{P}^d$ is the set given by equation of the form:
$$
(Q\xi,\xi)=0,
$$
where $Q$ is a symmetric $(d+1)\times(d+1)$ matrix, and $\xi=[\xi_0:\xi_1:\dots:\xi_d]$ are homogeneous coordinates of a point in the space.

Assume two quadrics are given:
$$
\mathcal{Q}_1\ :\ (Q_1\xi,\xi)=0,
\qquad
\mathcal{Q}_2\ :\ (Q_2\xi,\xi)=0.
$$
\emph{A pencil of quadrics} is the family of quadrics given by equations:
$$
\left((Q_1+\lambda Q_2)\xi,\xi\right)=0,
\quad
\lambda\in\mathbf{P}^1.
$$

\emph{A confocal system of quadrics} is a family of quadrics such that its projective dual is a pencil of quadrics.

Now, let us recall the definition of reflection off a quadric in the projective space, where metrics is not defined.
This definition, together with its crucial properties -- the One reflection theorem and the Double reflection theorem, can be found in \cite{CCS1993}.

Denote by $u$ the tangent plane to $\mathcal{Q}_1$ at point $x$ and by $z$ the pole
of $u$ with respect to $\mathcal{Q}_2$.
Suppose lines $\ell_1$ and $\ell_2$ intersect at $x$, and the plane containing these two lines meet $u$ along $\ell$, see Figure \ref{fig:projective-reflection}.
\begin{figure}[h]
\centering
\input{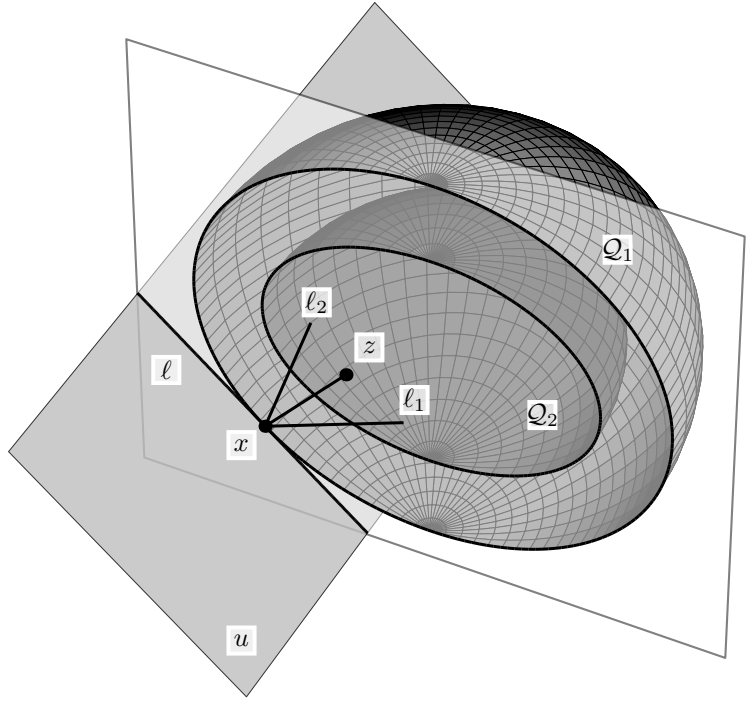}
\caption{The reflection in the projective space.}\label{fig:projective-reflection}
\end{figure}

\begin{definition}\label{def:refl}
If lines $\ell_1$, $\ell_2$, $xz$, $\ell$ are coplanar and harmonically conjugated, we say that $\ell_1$ \emph{is reflected to} $\ell_2$ off quadric $\mathcal{Q}_1$.
\end{definition}

It can be proved that this definition does not depend on the choice of quadric 
$\mathcal{Q}_2$ from a given confocal system \cite{CCS1993}.

If we introduce a coordinate system in which quadrics $\mathcal{Q}_1$ and
$\mathcal{Q}_2$ are confocal in the usual sense, the reflection introduced by Definition \ref{def:refl} is the same as the standard, metric one.

\begin{theorem}[One reflection theorem]\label{th:ORT}
Suppose line $\ell_1$ is reflected to $\ell_2$ off $\mathcal{Q}_1$ at point $x$,
with respect to the confocal system determined by quadrics
$\mathcal{Q}_1$ and $\mathcal{Q}_2$.
Let $\ell_1$ intersect $\mathcal{Q}_2$ at $y_1'$ and $y_1$, $u$ be the tangent plane to 
$\mathcal{Q}_1$ at $x$, and $z$ the pole of $u$ with respect to $\mathcal{Q}_2$.
Then lines $y_1'z$ and $y_1z$ respectively contain intersecting points $y_2'$ and $y_2$ of line $\ell_2$ with $\mathcal{Q}_2$.
The converse is also true.
(See Figure \ref{fig:one-reflection-theorem}).
\end{theorem}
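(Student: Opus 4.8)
The plan is to reduce everything to a planar picture and then to recognise the projective reflection as a harmonic homology attached to the conic cut out by $\mathcal{Q}_2$. Let $\pi$ be the plane spanned by $\ell_1$ and $\ell_2$. By the definition of reflection the lines $xz$ and $\ell=u\cap\pi$ lie in $\pi$ as well, and all four of $\ell_1,\ell_2,xz,\ell$ pass through $x$ (the tangent plane $u$ to $\mathcal{Q}_1$ at $x$ contains $x$, hence so does $\ell$). Put $C=\mathcal{Q}_2\cap\pi$. First I would record two classical pole--polar facts: since $z\in\pi$ and the polar plane of $z$ with respect to $\mathcal{Q}_2$ is $u$, the polar line of $z$ with respect to the conic $C$ is exactly $\ell$; in other words $(z,\ell)$ is a pole--polar pair for $C$. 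Degenerate positions --- $x$ a singular point of $\mathcal{Q}_1$, $\ell_1$ tangent to $\mathcal{Q}_2$, $z$ lying on its own polar, $\pi\subseteq u$, and the like --- I would dispatch by passing to the algebraic closure and a limiting argument (the base field has characteristic $\neq 2$, so harmonic conjugates are available throughout), and will not dwell on them.

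The heart of the argument is the harmonic homology $\sigma$ of $\pi$ with centre $z$ and axis $\ell$. Because $(z,\ell)$ is a pole--polar pair for $C$, the involution $\sigma$ preserves $C$; explicitly, for $p\in C$ the point $\sigma(p)$ is the second intersection of the line $zp$ with $C$. Now $x\in\ell$, so $\sigma$ fixes $x$ and therefore maps the pencil of lines through $x$ to itself. On that pencil $\sigma$ is an involution having the two manifest fixed members $\ell$ (fixed pointwise) and $xz$ (a line through the centre, fixed as a set); an involution of $\mathbf{P}^1$ is determined by its pair of fixed points, and the one with fixed points $xz$ and $\ell$ carries each line through $x$ to its harmonic conjugate with respect to $\{xz,\ell\}$. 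Comparing with Definition \ref{def:refl}, this is precisely the statement $\sigma(\ell_1)=\ell_2$.

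Granting $\sigma(\ell_1)=\ell_2$, the theorem follows at once: $\ell_1$ meets $C$ in $y_1,y_1'$, so $\ell_2=\sigma(\ell_1)$ meets $C$ in $\sigma(y_1),\sigma(y_1')$, and by the description of $\sigma$ on $C$ these lie on the lines $zy_1$ and $zy_1'$ respectively; taking $y_2=\sigma(y_1)$ and $y_2'=\sigma(y_1')$ gives the claimed incidences. The converse is obtained by reversing this chain: if $\ell_1,\ell_2$ meet at $x$ and the two intersection points of $\ell_2$ with $\mathcal{Q}_2$ lie on $zy_1$ and $zy_1'$, then each of them is forced to be the second intersection of the relevant line through $z$ with $C$, i.e.\ $\sigma(y_1)$ and $\sigma(y_1')$, whence $\ell_2=\sigma(\ell_1)$ and so $\ell_1$ is reflected to $\ell_2$ off $\mathcal{Q}_1$. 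The only point demanding care is the bookkeeping in the pole--polar reduction from $\mathbf{P}^d$ to $\pi$ and the verification that $\sigma$ acts on $C$ and on the pencil through $x$ as claimed; these are classical, so the proof is really just the assembly of these pieces together with the observation that the four lines of the reflection configuration are concurrent at $x$.
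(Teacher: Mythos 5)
Your argument is correct. Note that the paper itself offers no proof of Theorem \ref{th:ORT}: it is quoted from \cite{CCS1993}, where the definition of projective reflection and its basic properties are established, so there is no in-paper proof to compare against. Your reduction to the plane $\pi$ of $\ell_1,\ell_2$ (which contains $z$ by the coplanarity clause of Definition \ref{def:refl}), the identification of $\ell=u\cap\pi$ as the polar line of $z$ with respect to the conic $C=\mathcal{Q}_2\cap\pi$, and the recognition of the reflection as the harmonic homology $\sigma$ with centre $z$ and axis $\ell$ is exactly the classical polarity mechanism underlying the theorem; packaging it as ``$\sigma$ preserves $C$ and acts on the pencil through $x$ as the harmonic involution with fixed members $xz$ and $\ell$'' is a clean way to get both the direct statement and the converse in one stroke (for the converse, the hypothesis that $y_2\in zy_1$ and $y_2'\in zy_1'$ already forces $z$ into the plane of $\ell_1$ and $\ell_2$, as you use implicitly). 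The only points you gloss are genericity assumptions --- $z\notin\mathcal{Q}_2$ so that the homology exists, $z\neq x$ so that the induced involution on the pencil is not the identity, and the possibility $y_2=y_1$ in the converse --- all of which you flag and which are standard to dispose of; the proof stands.
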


\begin{figure}[h]
\centering
\input{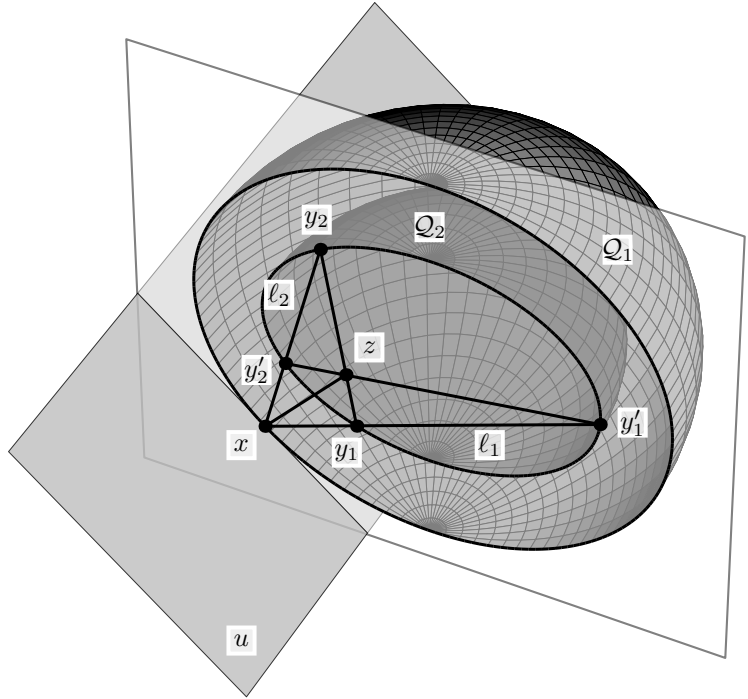}
\caption{One reflection theorem.}\label{fig:one-reflection-theorem}
\end{figure}

Theorem \ref{th:ORT} enables us to prove that the caustics are preserved by the reflection:

\begin{corollary}\label{cor:ORT}
Let lines $\ell_1$ and $\ell_2$ reflect to each other off $\mathcal{Q}_1$ with respect to the confocal system determined by quadrics $\mathcal{Q}_1$ and $\mathcal{Q}_2$.
Then $\ell_1$ is tangent to $\mathcal{Q}_2$ if and only if $\ell_2$ is tangent to 
$\mathcal{Q}_2$;
$\ell_1$ intersects $\mathcal{Q}_2$ at two points if and only if $\ell_2$ intersects
$\mathcal{Q}_2$ at two points.
\end{corollary}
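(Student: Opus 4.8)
The plan is to derive the corollary entirely from the One Reflection Theorem (Theorem~\ref{th:ORT}), using two auxiliary observations. First, the relation ``$\ell_1$ reflects to $\ell_2$ off $\mathcal{Q}_1$'' is symmetric, since the conditions of Definition~\ref{def:refl} (coplanarity and harmonic conjugation of $\ell_1,\ell_2,xz,\ell$) are symmetric in $\ell_1,\ell_2$; hence $\ell_2$ reflects to $\ell_1$ off $\mathcal{Q}_1$ at the \emph{same} point $x$, with the \emph{same} tangent plane $u$ and the \emph{same} pole $z$. So in each equivalence it suffices to prove one implication, the converse following by interchanging $\ell_1$ and $\ell_2$. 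Second, $z$ is the pole of $u$ with respect to $\mathcal{Q}_2$, so the polar hyperplane of $z$ with respect to $\mathcal{Q}_2$ is $u$, and a line through $z$ is tangent to $\mathcal{Q}_2$ exactly when it meets $\mathcal{Q}_2$ on $u$.

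For the two-point case, suppose $\ell_1$ meets $\mathcal{Q}_2$ in two distinct points $y_1,y_1'$. By Theorem~\ref{th:ORT} the line $\ell_2$ meets $\mathcal{Q}_2$ at points $y_2\in zy_1$ and $y_2'\in zy_1'$, and what remains is to check $y_2\ne y_2'$. Setting aside the degenerate position $\ell_1=xz$ (where $\ell_1=\ell_2$ and there is nothing to prove), one has $z\notin\ell_1$, so $zy_1\ne zy_1'$ and these lines meet only in $z$; thus $y_2=y_2'$ would force $y_2=z\in\mathcal{Q}_2$, which is impossible since $z\in\mathcal{Q}_2$ would mean $u$ is tangent to $\mathcal{Q}_2$. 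Hence $\ell_2$ meets $\mathcal{Q}_2$ in two distinct points. For the tangency case, suppose $\ell_1$ is tangent to $\mathcal{Q}_2$ at $y$; again assume $\ell_1\ne\ell_2$ and $x\notin\mathcal{Q}_2$, so $y\ne x$. Since $\ell_1$ passes through $x\in u$ but is not contained in $u$, we get $\ell_1\cap u=\{x\}$, so $y\notin u$, i.e. $z$ is not on the tangent hyperplane to $\mathcal{Q}_2$ at $y$; therefore $zy$ is a secant, meeting $\mathcal{Q}_2$ at $y$ and a second point $y^{*}\ne y$. By Theorem~\ref{th:ORT} both points of $\ell_2\cap\mathcal{Q}_2$ lie on $zy$, hence in $\{y,y^{*}\}$; if one of them were $y$, then $\ell_1$ and $\ell_2$ would share the two distinct points $x$ and $y$ and so coincide. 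Thus both equal $y^{*}$, i.e. $\ell_2$ is tangent to $\mathcal{Q}_2$ at $y^{*}$. Combining the two equivalences (and, over a field where a line may miss $\mathcal{Q}_2$, eliminating the remaining possibility) finishes the argument.

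The only point that requires genuine care — and the expected obstacle — is the bookkeeping of degenerate configurations: when $\ell_1$ coincides with $xz$ or lies in $u$ (so $\ell_1=\ell_2$ and both claims are vacuous), and when $x$ or the auxiliary pole $z$ happens to lie on $\mathcal{Q}_2$ (so $u$ is tangent to $\mathcal{Q}_2$). In the latter situations the ``secant versus tangent'' dichotomy exploited above degenerates and must be inspected directly; one should either argue that such positions do not occur for a reflection point in generic position within the confocal system, or dispose of them by a short separate computation. Everywhere else the proof is purely the incidence-geometric consequence of Theorem~\ref{th:ORT} sketched above.
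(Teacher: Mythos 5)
Your argument is correct and follows the route the paper intends: the paper states Corollary~\ref{cor:ORT} as an immediate consequence of the One Reflection Theorem without supplying any details, and your incidence-geometric derivation --- symmetry of the reflection relation from Definition~\ref{def:refl}, plus the observation that the two lines $zy_1$, $zy_1'$ through the pole $z$ meet only at $z$, and that $zy$ is a genuine secant when $y\notin u$ --- is exactly the missing verification. The degenerate configurations you flag ($\ell_1=xz$, or $u$ a common tangent hyperplane so that $z\in\mathcal{Q}_2$) lie in a proper closed subset and are harmless for the statement, but you are right that a fully rigorous treatment should dispose of them explicitly; the paper does not address them either.
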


Next theorem is crucial for our further considerations -- its meaning is that billiard reflections off confocal quadrics commute.

\begin{theorem}[Double reflection theorem]\label{th:DRT}
Suppose that $\mathcal{Q}_1$, $\mathcal{Q}_2$ are given quadrics and
$x_1\in\mathcal{Q}_1$, $y_1\in\mathcal{Q}_2$.
Let $u_1$ be the tangent plane of $\mathcal{Q}_1$ at $x_1$;
$z_1$ the pole of $u_1$ with respect to $\mathcal{Q}_2$;
$v_1$ the tangent plane of $\mathcal{Q}_2$ at $y_1$;
and $w_1$ the pole of $v_1$ with respect to $\mathcal{Q}_1$.
Denote by $x_2$ the intersecting point of line $w_1x_1$ with $\mathcal{Q}_1$,
$x_2\neq x_1$;
by $y_2$ the intersection of $y_1z_1$ with $\mathcal{Q}_2$, $y_2\neq y_1$;
and $\ell_1=x_1y_1$, $\ell_2=x_1y_2$, $\ell_1'=y_1x_2$, $\ell_2'=x_2y_2$.

Then pair $\ell_1$, $\ell_2$ obey the reflection law off $\mathcal{Q}_1$ at $x_1$;
$\ell_1$, $\ell_1'$ obey the reflection law off $\mathcal{Q}_2$ at $y_1$;
$\ell_2$, $\ell_2'$ obey the reflection law off $\mathcal{Q}_2$ at $y_2$;
and $\ell_1'$, $\ell_2'$ obey the reflection law off $\mathcal{Q}_1$ at point $x_2$.
(See Figure \ref{fig:double-reflection-theorem}).
\end{theorem}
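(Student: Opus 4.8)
\medskip

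The plan is to derive the four reflection relations from two applications of the One reflection theorem (Theorem~\ref{th:ORT}), glued together by a single pole--polar computation; no metric enters, so the argument is valid over the arbitrary field of characteristic $\neq2$ of Definition~\ref{def:refl}. Throughout I assume the quadrics in play are nonsingular, so that all poles below are defined; degenerate positions (a line tangent to $\mathcal{Q}_2$, or coincidences among the auxiliary points and planes) are handled at the end via Corollary~\ref{cor:ORT} and a specialization argument. \emph{The two outer reflections.} Let $\ell_2^{\ast}$ be the line into which $\ell_1$ is reflected off $\mathcal{Q}_1$ at $x_1$; it exists and is unique by Definition~\ref{def:refl}. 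Since $\ell_1$ passes through $y_1\in\mathcal{Q}_2$ and $z_1$ is the pole of $u_1$ with respect to $\mathcal{Q}_2$, Theorem~\ref{th:ORT} forces $\ell_2^{\ast}$ to pass through the intersection of the line $y_1z_1$ with $\mathcal{Q}_2$ other than $y_1$, that is, through $y_2$; as $\ell_2^{\ast}$ also passes through $x_1$, we get $\ell_2^{\ast}=x_1y_2=\ell_2$, so $\ell_1,\ell_2$ obey the reflection law off $\mathcal{Q}_1$ at $x_1$. The same reasoning with $\mathcal{Q}_1$ and $\mathcal{Q}_2$ exchanged (now $v_1$ and $w_1$ replace $u_1$ and $z_1$, and $\ell_1$ passes through $x_1\in\mathcal{Q}_1$) shows that the reflection of $\ell_1$ off $\mathcal{Q}_2$ at $y_1$ passes through the second intersection of $x_1w_1$ with $\mathcal{Q}_1$, namely $x_2$; hence it equals $y_1x_2=\ell_1'$, and $\ell_1,\ell_1'$ obey the reflection law off $\mathcal{Q}_2$ at $y_1$.

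\emph{The collinearity lemma.} Write $u_2$ for the tangent plane of $\mathcal{Q}_1$ at $x_2$, $z_2$ for its pole with respect to $\mathcal{Q}_2$, $v_2$ for the tangent plane of $\mathcal{Q}_2$ at $y_2$, and $w_2$ for its pole with respect to $\mathcal{Q}_1$. The claim is that $y_1,z_1,z_2$ are collinear and, dually, that $x_1,w_1,w_2$ are collinear. For the first: $w_1$ lies on the line $x_1x_2$ by construction, so its polar hyperplane with respect to $\mathcal{Q}_1$ — which is $v_1$, since $w_1$ is the pole of $v_1$ — contains the codimension-two polar subspace $u_1\cap u_2$ of the line $x_1x_2$. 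Thus $u_1$, $u_2$, $v_1$ all lie in the pencil of hyperplanes through $u_1\cap u_2$, and applying the correlation ``pole with respect to $\mathcal{Q}_2$'', which carries a pencil of hyperplanes onto a line of points, shows that $z_1$, $z_2$ and the pole of $v_1$ with respect to $\mathcal{Q}_2$ are collinear; that last pole is $y_1$ because $y_1\in\mathcal{Q}_2$. The second collinearity is the mirror statement: since $y_2$ was constructed as the second intersection of $y_1z_1$ with $\mathcal{Q}_2$, the point $z_1$ lies on the line $y_1y_2$, so its polar hyperplane $u_1$ with respect to $\mathcal{Q}_2$ contains $v_1\cap v_2$; taking poles with respect to $\mathcal{Q}_1$ then shows that $x_1$ (the pole of $u_1$), $w_1$, $w_2$ are collinear.

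\emph{The two inner reflections.} Let $m$ be the line into which $\ell_1'=x_2y_1$ is reflected off $\mathcal{Q}_1$ at $x_2$. Since $\ell_1'$ meets $\mathcal{Q}_2$ at $y_1$ and the pole of $u_2$ with respect to $\mathcal{Q}_2$ is $z_2$, Theorem~\ref{th:ORT} puts the corresponding intersection of $m$ with $\mathcal{Q}_2$ on the line $y_1z_2$; by the first collinearity $y_1z_2=y_1z_1$, so that intersection is $y_2$, and since $m$ passes through $x_2$ we get $m=x_2y_2=\ell_2'$: thus $\ell_1',\ell_2'$ obey the reflection law off $\mathcal{Q}_1$ at $x_2$. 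Symmetrically, the reflection of $\ell_2=x_1y_2$ off $\mathcal{Q}_2$ at $y_2$ passes, by Theorem~\ref{th:ORT} and the second collinearity, through the second intersection of $x_1w_2=x_1w_1$ with $\mathcal{Q}_1$, which is $x_2$; hence it equals $x_2y_2=\ell_2'$, and $\ell_2,\ell_2'$ obey the reflection law off $\mathcal{Q}_2$ at $y_2$. This establishes all four reflections.

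The one genuinely delicate point is the collinearity lemma: one must keep track of which of the two polarities is being applied at each stage, and repeatedly use the elementary fact that a point of a quadric is the pole of its own tangent plane. Everything else is bookkeeping around Theorem~\ref{th:ORT}. The degenerate configurations excluded above cause no trouble, since both sides of every assertion are closed conditions that already hold on a dense set, so the conclusion propagates by specialization (or, in the tangency case, directly from Corollary~\ref{cor:ORT} together with the harmonic-conjugate form of Definition~\ref{def:refl}).
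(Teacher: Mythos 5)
Your proof is correct. Note that the paper itself does not prove Theorem \ref{th:DRT}: it states it as a known result and defers to \cite{CCS1993}, so there is no in-text argument to compare against. Your derivation is a clean, self-contained route from Theorem \ref{th:ORT}: the two ``outer'' reflections at $x_1$ and at $y_1$ follow immediately from the One reflection theorem and its $\mathcal{Q}_1\leftrightarrow\mathcal{Q}_2$ mirror, and the whole weight of the theorem is carried by your collinearity lemma ($y_1,z_1,z_2$ collinear and $x_1,w_1,w_2$ collinear), which you prove correctly by observing that incidence of a point with the line $x_1x_2$ (resp.\ $y_1y_2$) dualizes, under the polarity of $\mathcal{Q}_1$ (resp.\ $\mathcal{Q}_2$), to membership of its polar hyperplane in the pencil through $u_1\cap u_2$ (resp.\ $v_1\cap v_2$), and that the second polarity carries a pencil of hyperplanes to a line of poles, with $y_1$ and $x_1$ reappearing as the poles of their own tangent planes. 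This is essentially the standard projective argument (it is in the spirit of the original one in \cite{CCS1993}), and it correctly uses only polarity, so it is valid over any field of characteristic $\neq 2$ as required. The only soft spots are the generic-position assumptions (that the reflected line at each step does not return through the same intersection point, and that $z_2\neq y_1$, $w_2\neq x_1$ so the lines $y_1z_2$ and $x_1w_2$ are well defined); your closing remark that all four assertions are closed conditions (coplanarity plus harmonic conjugacy) and therefore propagate by specialization from the dense open set where the construction is nondegenerate is an adequate way to dispose of these cases.
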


\begin{figure}[h]
\centering
\input{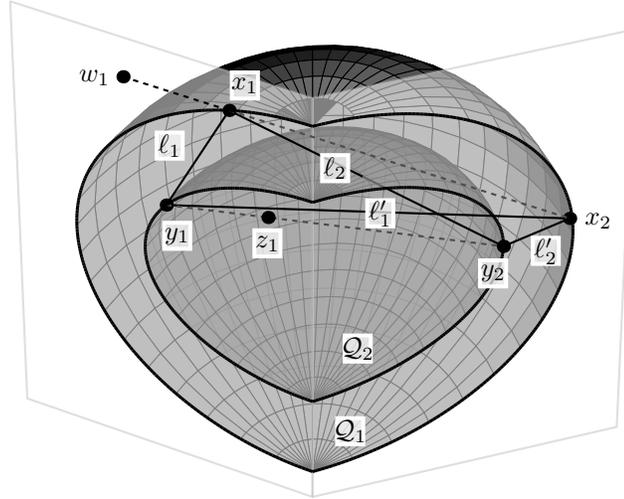}
\caption{Double reflection theorem.}\label{fig:double-reflection-theorem}
\end{figure}

Let us remark that in Theorem \ref{th:DRT} the four tangent planes at the reflection points belong to a pencil, see Figure \ref{fig:virtual_reflection}.


\begin{corollary}\label{cor:refl}
If line $\ell_1$ is tangent to a quadric $\mathcal{Q}'$ confocal with $\mathcal{Q}_1$ and $\mathcal{Q}_2$, then $\ell_2$, $\ell_1'$, $\ell_2'$ also touch $\mathcal{Q}'$.
\end{corollary}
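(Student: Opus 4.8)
The plan is to derive everything from Corollary~\ref{cor:ORT} together with the fact, recorded after Definition~\ref{def:refl} and proved in \cite{CCS1993}, that the reflection off $\mathcal{Q}_1$ with respect to a confocal system does not depend on which quadric of that system is used to define it. The key observation is that the four billiard relations furnished by Theorem~\ref{th:DRT} --- $\ell_1\leftrightarrow\ell_2$ off $\mathcal{Q}_1$, $\ell_1\leftrightarrow\ell_1'$ off $\mathcal{Q}_2$, $\ell_2\leftrightarrow\ell_2'$ off $\mathcal{Q}_2$, and $\ell_1'\leftrightarrow\ell_2'$ off $\mathcal{Q}_1$ --- are statements about the confocal system determined by $\mathcal{Q}_1$ and $\mathcal{Q}_2$, and this is the same system as the one determined by $\mathcal{Q}_1$ and $\mathcal{Q}'$ (and by $\mathcal{Q}_2$ and $\mathcal{Q}'$). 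Hence each of these relations can equally be read as a reflection with respect to the confocal system determined by $\mathcal{Q}_1$ and $\mathcal{Q}'$ (resp. by $\mathcal{Q}_2$ and $\mathcal{Q}'$), so that Corollary~\ref{cor:ORT} applies with $\mathcal{Q}'$ in the role played there by $\mathcal{Q}_2$.

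Granting this, I would carry out three applications of Corollary~\ref{cor:ORT}. First, apply it to the pair $\ell_1,\ell_2$, which reflect to each other off $\mathcal{Q}_1$, taking $\mathcal{Q}'$ as the distinguished confocal quadric: since $\ell_1$ is tangent to $\mathcal{Q}'$, so is $\ell_2$. Second, apply it to the pair $\ell_1,\ell_1'$, which reflect to each other off $\mathcal{Q}_2$, again with $\mathcal{Q}'$ distinguished: from tangency of $\ell_1$ to $\mathcal{Q}'$ we obtain tangency of $\ell_1'$ to $\mathcal{Q}'$. Third, apply it once more --- say to the pair $\ell_1',\ell_2'$ reflecting off $\mathcal{Q}_1$, or equivalently to $\ell_2,\ell_2'$ reflecting off $\mathcal{Q}_2$ --- to conclude that $\ell_2'$ touches $\mathcal{Q}'$ as well, since $\ell_1'$ (resp. $\ell_2$) is now known to touch it. This yields the statement.

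The only genuine subtlety, and the step I would treat with the most care, is the interchangeability of the auxiliary quadric in Corollary~\ref{cor:ORT}: once the reflection law is recognized as an intrinsic attribute of the confocal system this is immediate, but the degenerate members of the pencil deserve a word. If $\mathcal{Q}'$ coincides with $\mathcal{Q}_2$ the first two applications are literally Corollary~\ref{cor:ORT} as stated; if $\mathcal{Q}'$ is a degenerate quadric of the confocal family, one may either run the same pole--polar argument as in \cite{CCS1993} directly, or deduce the conclusion by continuity, approximating $\mathcal{Q}'$ by nondegenerate confocal quadrics and using that tangency of a line to a quadric is a closed condition. In all cases nothing beyond what is already contained in Theorem~\ref{th:ORT} and Corollary~\ref{cor:ORT} is required.
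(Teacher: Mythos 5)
Your argument is correct and is essentially the intended derivation: the paper states this corollary without proof precisely because it follows from Corollary \ref{cor:ORT} applied in turn to the four reflection relations of Theorem \ref{th:DRT}, using the fact (recorded after Definition \ref{def:refl} and proved in \cite{CCS1993}) that the reflection depends only on the confocal system and not on the auxiliary quadric chosen to define it. Your handling of the one subtlety --- replacing $\mathcal{Q}_2$ by $\mathcal{Q}'$ in Corollary \ref{cor:ORT}, including the degenerate members of the family by continuity --- is exactly the right point to flag and is dealt with adequately.
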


The following definition of virtual reflection configuration and double reflection configuration is from \cite{DragRadn2008}, where these configurations played the central role.
In Theorem \ref{th:virt.refl}, which is also proved in \cite{DragRadn2008}, some important properties of these configurations are given.


Let points  $x_1$, $x_2$ belong to $\mathcal{Q}_1$ and  $y_1$, $y_2$ to $\mathcal{Q}_2$.

\begin{definition}\label{def:VRC}
We will say that the quadruple of points $x_1, x_2, y_1, y_2$ constitutes 
\emph{a virtual reflection configuration} if pairs of lines 
$x_1 y_1$, $x_1 y_2$; $x_2 y_1$, $x_2 y_2$; $x_1 y_1$, $x_2 y_1$; $x_1 y_2$, $x_2 y_2$ satisfy the reflection law at points
$x_1$, $x_2$ off $\mathcal Q_1$ and $y_1$, $y_2$ off $\mathcal Q_2$
respectively, with respect to the confocal system determined by
$\mathcal Q_1$ and $\mathcal Q_2$.

If, additionally, the tangent planes to $\mathcal Q_1, \mathcal Q_2$
at $x_1, x_2$; $y_1, y_2$ belong to a pencil, we say that these
points constitute \emph{a double reflection configuration} (see
Figure \ref{fig:virtual_reflection}).
\end{definition}

\begin{figure}[h]
\centering
\input{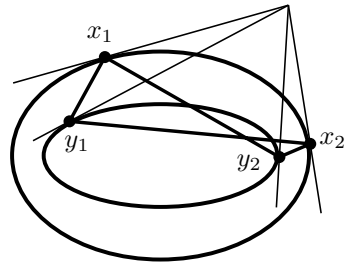}
\caption{Double reflection configuration}\label{fig:virtual_reflection}
\end{figure}

Now, we list some of the basic facts about double reflection configurations.

\begin{theorem}\label{th:virt.refl}
Let $\mathcal Q_1$, $\mathcal Q_2$ be two quadrics in the projective space
$\mathbf{P}^d$, $x_1$, $x_2$ points on $\mathcal Q_1$ and $y_1$, $y_2$ on 
$\mathcal{Q}_2$.
If the tangent hyperplanes at these points to the quadrics belong to a pencil, then $x_1, x_2, y_1, y_2$
constitute a virtual reflection configuration.

Furthermore, suppose that the projective space is defined over the field of reals.
Introduce a coordinate system, such that $\mathcal{Q}_1$, $\mathcal{Q}_2$ become confocal ellipsoids in the Euclidean space.
If $\mathcal{Q}_2$ is placed inside $\mathcal{Q}_1$, then the sides of the quadrilateral $x_1y_1x_2y_2$ obey the real reflection from $\mathcal{Q}_1$ and the virtual reflection from $\mathcal{Q}_2$.
\end{theorem}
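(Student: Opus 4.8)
The plan is to handle the two assertions separately: the first by reducing it to the Double Reflection Theorem (Theorem~\ref{th:DRT}), the second by an explicit computation in confocal coordinates.

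\emph{The first assertion.} In homogeneous coordinates, let $Q_1,Q_2$ denote the (symmetric, nondegenerate) matrices of $\mathcal Q_1,\mathcal Q_2$, so that the tangent hyperplane to $\mathcal Q_1$ at $x_i$ is dual to the point $Q_1x_i\in\mathbf P^{d*}$ and $T_{y_j}\mathcal Q_2$ is dual to $Q_2y_j$. A pencil of hyperplanes is a line in $\mathbf P^{d*}$, so the hypothesis says that $Q_1x_1,Q_1x_2,Q_2y_1,Q_2y_2$ are collinear there; hence $Q_1x_2$ is a linear combination of $Q_1x_1$ and $Q_2y_1$, and applying $Q_1^{-1}$ shows that $x_1,x_2$ and $w_1:=Q_1^{-1}Q_2y_1$ are collinear. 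But $w_1$ is precisely the pole of $T_{y_1}\mathcal Q_2$ with respect to $\mathcal Q_1$; symmetrically, $y_1,y_2$ and the pole $z_1:=Q_2^{-1}Q_1x_1$ of $T_{x_1}\mathcal Q_1$ with respect to $\mathcal Q_2$ are collinear. Thus $x_2$ is the second intersection of the line $w_1x_1$ with $\mathcal Q_1$ and $y_2$ the second intersection of $y_1z_1$ with $\mathcal Q_2$, i.e.\ the quadruple $x_1,x_2,y_1,y_2$ is exactly the configuration produced from $x_1\in\mathcal Q_1$, $y_1\in\mathcal Q_2$ by Theorem~\ref{th:DRT}, so all four reflection laws demanded by Definition~\ref{def:VRC} hold by that theorem. (If the four dual points fail to be pairwise distinct — essentially the degenerate case $T_{x_1}\mathcal Q_1=T_{y_1}\mathcal Q_2$ — one chooses a different spanning pair or argues by continuity.)

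\emph{The second assertion.} Put $\mathcal Q_1,\mathcal Q_2$ in confocal form, $\mathcal Q_j=\{\sum_i x_i^2/(b_i-\lambda_j)=1\}$ with $\lambda_1<\lambda_2$ so that $\mathcal Q_2$ lies inside $\mathcal Q_1$, and work affinely with $Q_j=\diag\bigl(1/(b_i-\lambda_j)\bigr)$. First one records the classical fact that, for confocal quadrics, the line joining a point $x\in\mathcal Q_1$ to the pole of $T_x\mathcal Q_1$ with respect to $\mathcal Q_2$ is the Euclidean normal of $\mathcal Q_1$ at $x$ (immediate from the diagonal form). By the first assertion the projective reflections of Definition~\ref{def:refl} hold at $x_1,x_2$ off $\mathcal Q_1$ and at $y_1,y_2$ off $\mathcal Q_2$, and this fact identifies each of them with the ordinary reflection law — the two lines symmetric with respect to the normal line, equivalently to the tangent hyperplane — but as a statement about \emph{unoriented} lines only; it remains to fix the oriented picture. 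At $x_1$ (and likewise $x_2$) this is immediate: since $y_1,y_2$ lie on $\mathcal Q_2$, hence in the interior of $\mathcal Q_1$, both segments $x_1y_1,x_1y_2$ leave $x_1$ into the interior of $\mathcal Q_1$, which is exactly a real billiard reflection off $\mathcal Q_1$.

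At $y_1$ (and likewise $y_2$) one must instead show the reflection is virtual, i.e.\ that $x_1$ and $x_2$ lie on opposite sides of $T_{y_1}\mathcal Q_2$, so that exactly one of the segments $y_1x_1,y_1x_2$ leaves $y_1$ into the interior of $\mathcal Q_2$. Using that $x_2$ is the second intersection of the line $x_1w_1$ with $\mathcal Q_1$ (from the first assertion), a short computation gives the identity
$$
1-\langle Q_2y_1,x_2\rangle=\frac{\bigl(1-\langle Q_2y_1,x_1\rangle\bigr)\bigl(1-\langle Q_1w_1,w_1\rangle\bigr)}{\langle Q_1(w_1-x_1),\,w_1-x_1\rangle}.
$$
Here the denominator is positive since $Q_1$ is positive definite and $w_1\neq x_1$, and $\langle Q_1w_1,w_1\rangle>1$ because the tangent hyperplane of the inner ellipsoid $\mathcal Q_2$ at $y_1$ is \emph{secant} to the outer ellipsoid $\mathcal Q_1$ — this is where the nesting hypothesis is used, and it follows from the term-by-term comparison of $\sum_i y_{1i}^2(b_i-\lambda_1)/(b_i-\lambda_2)^2$ with $\sum_i y_{1i}^2/(b_i-\lambda_2)=1$. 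Consequently $1-\langle Q_2y_1,x_2\rangle$ and $1-\langle Q_2y_1,x_1\rangle$ have opposite signs, which is precisely the claim, and the vertices $x_2,y_2$ follow by the symmetry of the configuration. I expect the genuine difficulty to lie in exactly this oriented bookkeeping: the projective Double Reflection Theorem controls only unoriented lines, so one must argue carefully about which side of each tangent hyperplane a vertex lies on and about which ray ``penetrates'' $\mathcal Q_2$, and must dispose by continuity of the degenerate sub-cases (a side of the quadrilateral tangent to $\mathcal Q_2$, or a pair of coinciding tangent planes).
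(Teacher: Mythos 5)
Your argument is correct. Note that the paper does not actually contain a proof of Theorem \ref{th:virt.refl} — it is quoted from \cite{DragRadn2008} — so there is no in-text proof to compare against; I can only assess your route on its merits. The reduction you use for the first assertion is the natural one: reading the pencil hypothesis in the dual space forces $Q_1x_2$ into the span of $Q_1x_1$ and $Q_2y_1$, hence $x_2$ onto the line $x_1w_1$ with $w_1=Q_1^{-1}Q_2y_1$ the pole of $T_{y_1}\mathcal Q_2$ (and symmetrically $y_2$ onto $y_1z_1$), so the quadruple is exactly the configuration of Theorem \ref{th:DRT} and all four reflection laws follow. For the second assertion I verified your identity: with $s=\langle Q_2y_1,x_1\rangle$, $W=\langle Q_1w_1,w_1\rangle$ and $D=\langle Q_1(w_1-x_1),w_1-x_1\rangle=W-2s+1$, the second intersection parameter along $x_1+t(w_1-x_1)$ is $t_2=2(1-s)/D$, which gives $1-\langle Q_2y_1,x_2\rangle=(1-s)(1-W)/D$; since $D>0$ and $W>1$ (the pole of a hyperplane secant to $\mathcal Q_1$ lies outside $\mathcal Q_1$, equivalently your term-by-term comparison using $\lambda_1<\lambda_2$), the points $x_1,x_2$ lie on opposite sides of $T_{y_1}\mathcal Q_2$, which is precisely the virtual reflection; the real reflection at $x_1,x_2$ is immediate since $y_1,y_2$ lie in the interior of $\mathcal Q_1$. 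The only loose ends are the degenerate subcases you already flag (coinciding tangent hyperplanes, $x_2=x_1$, a side tangent to $\mathcal Q_2$), which are excluded for a genuine quadrilateral or disposed of by continuity, so I consider the proof complete.
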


The statement converse to Theorem \ref{th:virt.refl} is the following
\begin{proposition}\label{prop:drc-ellipsoids}
In the Euclidean space $\mathbf{E}^d$, two confocal ellipsoids $\mathcal{E}_1$ and $\mathcal{E}_2$ are given.
Let points $X_1$, $X_2$ belong to $\mathcal{E}_1$, $Y_1$, $Y_2$ to $\mathcal{E}_2$, and let $\alpha_1$, $\alpha_2$, $\beta_1$, $\beta_2$ be the corresponding tangent planes.
If a quadruple $X_1, X_2, Y_1, Y_2$ is a virtual reflection configuration, then planes $\alpha_1$, $\alpha_2$, $\beta_1$, $\beta_2$ belong to a pencil.
\end{proposition}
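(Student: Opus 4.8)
The plan is to translate the statement into pole--polar language and reduce it to a single collinearity. Let $Q_1,Q_2$ be the symmetric matrices of $\mathcal E_1,\mathcal E_2$; then $\alpha_j=\mathrm{polar}_{\mathcal E_1}(X_j)$, $\beta_j=\mathrm{polar}_{\mathcal E_2}(Y_j)$, the pole of $\alpha_j$ with respect to $\mathcal E_2$ is $z_j:=Q_2^{-1}Q_1X_j$, and the pole of $\beta_j$ with respect to $\mathcal E_1$ is $w_j:=Q_1^{-1}Q_2Y_j$. We may assume $Y_1\neq Y_2$ (otherwise swap the roles of the two ellipsoids), so that $\beta_1\cap\beta_2$ is a $(d-2)$-plane; then ``$\alpha_1,\alpha_2,\beta_1,\beta_2$ belong to a pencil'' is equivalent to ``$\beta_1\cap\beta_2\subseteq\alpha_1$ and $\beta_1\cap\beta_2\subseteq\alpha_2$''. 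A short pole--polar computation identifies $\mathrm{polar}_{\mathcal E_1}(\beta_1\cap\beta_2)$ with the line $w_1w_2$, so the whole proposition reduces to showing that $X_1$ and $X_2$ lie on the line $w_1w_2$, i.e.\ that $X_1,X_2,w_1,w_2$ are collinear (symmetrically, it would equally suffice to prove $Y_1,Y_2,z_1,z_2$ collinear).

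To get this collinearity I would unwind the reflection laws. By the coplanarity built into Definition \ref{def:refl}, the reflection off $\mathcal E_2$ at $Y_1$ forces $w_1$ into the plane $\langle\ell_1,\ell_1'\rangle=\langle X_1,X_2,Y_1\rangle$, where $\ell_1=X_1Y_1$, $\ell_1'=X_2Y_1$; working inside that plane, the harmonic condition identifies $\ell_1'$ as the image of $\ell_1$ under the harmonic homology centred at $w_1$ with axis the tangent to $\mathcal E_2\cap\langle X_1,X_2,Y_1\rangle$ at $Y_1$. This homology fixes $Y_1$, preserves the conic $\mathcal E_1\cap\langle X_1,X_2,Y_1\rangle$, and acts on it as central projection from $w_1$; hence the second intersection of $\ell_1'$ with $\mathcal E_1$ is the $w_1$-projection of $X_1$, so $X_2$ is either that point --- which gives $w_1\in X_1X_2$, as wanted --- or the remaining intersection point of $\ell_1'$ with $\mathcal E_1$. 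Excluding this second alternative is exactly where the hypothesis that $\mathcal E_1,\mathcal E_2$ are \emph{real ellipsoids} is genuinely used: for real confocal ellipsoids the four reflections of Definition \ref{def:VRC} are honest metric (real off $\mathcal E_1$, virtual off $\mathcal E_2$) reflections, so each one determines not merely the reflected line but the next vertex, and the resulting rigidity forces the quadrilateral $X_1Y_1X_2Y_2$ to coincide with the double reflection configuration that Theorem \ref{th:DRT} builds from $x_1=X_1,\ y_1=Y_1$ --- whose tangent planes lie in a pencil by the remark following that theorem; in particular $w_1,w_2\in X_1X_2$.

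With $X_1,X_2\in w_1w_2$ established, we get $\alpha_j=\mathrm{polar}_{\mathcal E_1}(X_j)\supseteq\mathrm{polar}_{\mathcal E_1}(w_1w_2)=\beta_1\cap\beta_2$ for $j=1,2$, so all four tangent planes contain the fixed $(d-2)$-plane $\beta_1\cap\beta_2$ and therefore form a pencil. The one non-formal point is the branch selection in the middle paragraph --- that for real ellipsoids the over-determined system of four reflection laws has no solution other than the double reflection configuration of Theorem \ref{th:DRT}. If one prefers to bypass that discussion, the collinearity $Y_1,Y_2,z_1,z_2$ can instead be checked by a direct computation in confocal coordinates (\ref{eq:conf.Euclid}): there $z_j=DX_j$ with $D=\mathrm{diag}\!\bigl((b_i-\lambda_2)/(b_i-\lambda_1)\bigr)$, and the required relation drops out of the reflection equations written at $X_1,X_2,Y_1,Y_2$.
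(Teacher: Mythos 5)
Your pole--polar reduction is correct and is a clean way to organize the problem: with $w_j=Q_1^{-1}Q_2Y_j$ the pole of $\beta_j$ with respect to $\mathcal E_1$, the pencil condition is indeed equivalent to the collinearity of $X_1,X_2,w_1,w_2$, and your harmonic-homology analysis correctly shows that the reflection law at $Y_1$ leaves exactly two candidates for $X_2$ on $\ell_1'\cap\mathcal E_1$, one of which is the point $h_1(X_1)$ of the line $w_1X_1$. The gap is the step you yourself flag: excluding the other candidate. The justification offered --- that for real ellipsoids ``each reflection determines not merely the reflected line but the next vertex'' --- is not true of the reflection law of Definition~\ref{def:refl}, which constrains unoriented lines only; which of the two points of $\ell_1'\cap\mathcal E_1$ plays the role of $X_2$ is an independent choice, and for the reflections at $Y_1,Y_2$ nothing in Definition~\ref{def:VRC} as literally written selects a branch. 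The fallback ``direct computation in confocal coordinates'' is promised but not performed.

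Moreover, the asserted rigidity genuinely fails, so the gap cannot be closed by the argument you sketch. Let $r_1,r_2$ be the rotation numbers of $\mathcal E_1,\mathcal E_2$ with respect to the common caustic of the four lines (Proposition~\ref{prop:rotation}). Parametrizing the lines by their tangency points with the caustic, the four reflection laws say that the four parameters form two $r_1$-pairs and two $r_2$-pairs; besides the double-reflection solution, this system admits a second solution precisely when $r_1+r_2=\tfrac12$, and by the uniqueness in Proposition~\ref{prop:drc.quad} that second solution is a quadruple satisfying all four reflection laws whose tangent planes do \emph{not} lie in a pencil. The most transparent instance is the (degenerate) concentric case: for $x^2+y^2=4$ and $x^2+y^2=4/3$ with all lines tangent to the unit circle, the parallelogram $X_1=(1,\sqrt3)$, $Y_1=(1,-1/\sqrt3)$, $X_2=-X_1$, $Y_2=-Y_1$ satisfies all four reflection laws, yet $\alpha_1\parallel\alpha_2$ and $\beta_1\parallel\beta_2$ in two different directions. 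What this configuration violates is exactly the \emph{virtuality} of the reflections off the inner conic: there $X_1$ and $X_2$ lie on the same side of $\beta_1$. So the way to rescue your skeleton is to use that hypothesis explicitly: if $\beta_1$ separates $X_1$ from $X_2$, then $X_2$ is the unique point of $\ell_1'\cap\mathcal E_1$ in the half-space not containing $X_1$, and since $w_1$ lies outside $\mathcal E_1$ (its polar $\beta_1$ meets the interior), the harmonic relation forces $\beta_1$ to separate $X_1$ from $h_1(X_1)$ as well, whence $X_2=h_1(X_1)$ and the collinearity follows. Without that input --- which must be extracted from the intended meaning of ``virtual reflection configuration'' rather than from the four line-reflection conditions alone --- the key step is not established.
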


The next proposition shows that three lines of a double reflection configuration uniquely determine the fourth one.

\begin{proposition}\label{prop:drc.quad}
Let $\ell$, $\ell_1$, $\ell_2$ be lines and $\mathcal{Q}_1$, $\mathcal{Q}_2$ quadrics in the projective space.
Suppose that $\ell$, $\ell_1$ reflect to each other off $\mathcal{Q}_1$,
and $\ell$, $\ell_2$ off $\mathcal{Q}_2$, with respect to the confocal system determined by these two quadrics.
Then there is a unique line $\ell_{12}$ such that four lines 
$\ell$, $\ell_1$, $\ell_2$, $\ell_{12}$ form a double reflection configuration.
\end{proposition}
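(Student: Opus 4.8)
The plan is to deduce the statement directly from the Double Reflection Theorem (Theorem~\ref{th:DRT}), together with one elementary observation, which I will call \emph{uniqueness of the reflected line}: given a line and a point $x$ of a quadric, the reflection of that line off the quadric at $x$ is uniquely determined, because the coplanarity-and-harmonic-conjugacy condition of Definition~\ref{def:refl} pins down the fourth member of a harmonic quadruple once the other three lines through $x$ are fixed.

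\textbf{Existence.} Let $x_1$ be the common point of $\ell$ and $\ell_1$; since these reflect to each other off $\mathcal{Q}_1$, Definition~\ref{def:refl} gives $x_1\in\mathcal{Q}_1$ (with $x_1$ the point of tangency in the degenerate case $\ell=\ell_1$). Likewise put $y_1:=\ell\cap\ell_2\in\mathcal{Q}_2$, so $\ell=x_1y_1$. Now run the construction of Theorem~\ref{th:DRT} starting from the pair $(x_1,y_1)$: it produces $x_2\in\mathcal{Q}_1$, $y_2\in\mathcal{Q}_2$ and lines $x_1y_2$, $y_1x_2$, $x_2y_2$, where by the conclusion of that theorem $x_1y_2$ reflects to $\ell$ off $\mathcal{Q}_1$ at $x_1$ and $y_1x_2$ reflects to $\ell$ off $\mathcal{Q}_2$ at $y_1$. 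By uniqueness of the reflected line we get $\ell_1=x_1y_2$ and $\ell_2=y_1x_2$. Set $\ell_{12}:=x_2y_2$. By Theorem~\ref{th:DRT} and the remark following it, the tangent planes to $\mathcal{Q}_1,\mathcal{Q}_2$ at $x_1,x_2,y_1,y_2$ lie in a pencil, so by Definition~\ref{def:VRC} the quadruple $x_1,x_2,y_1,y_2$ is a double reflection configuration, and its four sides are exactly $\ell,\ell_1,\ell_2,\ell_{12}$.

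\textbf{Uniqueness.} Suppose $\ell,\ell_1,\ell_2,\ell_{12}'$ also form a double reflection configuration, of points $x_1'\in\mathcal{Q}_1$, $y_1'\in\mathcal{Q}_2$, $x_2'\in\mathcal{Q}_1$, $y_2'\in\mathcal{Q}_2$. Every side of such a configuration joins a point of $\mathcal{Q}_1$ to a point of $\mathcal{Q}_2$, the two sides through the $\mathcal{Q}_1$-vertex reflect off $\mathcal{Q}_1$ there, and the two through the $\mathcal{Q}_2$-vertex reflect off $\mathcal{Q}_2$. Writing $P\in\mathcal{Q}_1$, $R\in\mathcal{Q}_2$ for the endpoints of $\ell$, the side through $P$ other than $\ell$ is the only one reflecting to $\ell$ off $\mathcal{Q}_1$, and the side through $R$ other than $\ell$ the only one reflecting to $\ell$ off $\mathcal{Q}_2$; comparing with the hypotheses (and using uniqueness of the reflected line to exclude coincidences that would let $\ell_1$ or $\ell_2$ be the side opposite $\ell$) forces these to be $\ell_1$ and $\ell_2$, hence $x_1'=P=\ell\cap\ell_1=x_1$ and $y_1'=R=\ell\cap\ell_2=y_1$. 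But in Theorem~\ref{th:DRT} the remaining data $x_2,y_2$ — obtained from the polar constructions $w_1,z_1$ and the second intersections with the quadrics — depend only on $x_1,y_1,\mathcal{Q}_1,\mathcal{Q}_2$. Therefore $x_2'=x_2$, $y_2'=y_2$ and $\ell_{12}'=x_2y_2=\ell_{12}$.

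The main obstacle is precisely the bookkeeping in the uniqueness step: one must be sure that $\ell,\ell_1,\ell_2$ can only occupy three consecutive sides of any containing double reflection configuration and cannot be arranged otherwise, which amounts to ruling out degenerate incidences among $\ell,\ell_1,\ell_2$ and their reflection points (these become especially delicate when $d=2$, where opposite sides of the quadrilateral always meet). Once non-degeneracy is secured, both existence and uniqueness follow from the rigidity of the Double Reflection Theorem construction in its initial pair $(x_1,y_1)$.
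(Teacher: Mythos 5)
The paper states Proposition~\ref{prop:drc.quad} without proof, evidently regarding it as an immediate consequence of the Double Reflection Theorem; your argument supplies exactly that intended route. The existence half is complete: identifying $x_1=\ell\cap\ell_1\in\mathcal{Q}_1$ and $y_1=\ell\cap\ell_2\in\mathcal{Q}_2$, running the construction of Theorem~\ref{th:DRT} from the pair $(x_1,y_1)$, and using uniqueness of the reflected line to recognize the two constructed lines through $x_1$ and $y_1$ as $\ell_1$ and $\ell_2$ is correct, and the pencil property recorded in the remark after Theorem~\ref{th:DRT} makes the resulting quadruple a double reflection configuration in the sense of Definition~\ref{def:VRC}.

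The uniqueness half, however, has a gap at its last step. You correctly force $x_1'=x_1$ and $y_1'=y_1$, but the inference ``the remaining data $x_2,y_2$ of Theorem~\ref{th:DRT} depend only on $x_1,y_1,\mathcal{Q}_1,\mathcal{Q}_2$, therefore $x_2'=x_2$, $y_2'=y_2$'' does not follow: a double reflection configuration is defined through the reflection laws and the pencil condition of Definition~\ref{def:VRC}, not as the output of the construction in Theorem~\ref{th:DRT}, so a priori a competing configuration on the same vertices $x_1,y_1$ need not arise from that construction. Concretely, $\ell_1$ meets $\mathcal{Q}_2$ in two points and $\ell_2$ meets $\mathcal{Q}_1$ in two points, so there are several candidates for $(x_2',y_2')$ compatible with $\ell_1=x_1y_2'$ and $\ell_2=x_2'y_1$, and one must exclude all but the pair produced by the theorem. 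The missing step is short: the tangent hyperplanes at $x_1$ and $y_1$ span the pencil of the configuration; a pencil of hyperplanes is a line in the dual space and hence meets the dual quadric $\mathcal{Q}_2^{*}$ in (generically) exactly two points, one of which is the tangent hyperplane at $y_1$, so the requirement that the tangent hyperplane at $y_2'$ lie in the pencil forces $y_2'$ to be the unique second point of tangency, which is the $y_2$ of Theorem~\ref{th:DRT}; the same dual count for $\mathcal{Q}_1$ pins down $x_2'$. With this insertion your proof is complete, up to the generic-position caveats you already flag.
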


\begin{remark}\label{rem:drc.quad}
Proposition \ref{prop:drc.quad} shows that double reflection configuration is playing the role of the quad-equation for lines in the projective space, see Section \ref{sec:cong}.
\end{remark}


\subsection{Pseudo-Euclidean spaces and confocal families of quadrics}
\label{sec:pseudo.confocal}

In this section, we first give a necessary account of basic notions connected with pseudo-Euclidean spaces and their confocal families of quadrics.

\subsubsection*{Pseudo-Euclidean spaces}


\emph{The pseudo-Euclidean space} $\mathbf{E}^{k,l}$ is a
$d$-dimensional space $\mathbf{R}^d$ with \emph{the pseudo-Euclidean scalar product}:
\begin{equation}\label{eq:scalar.product}
\langle x,y
\rangle_{k,l}=x_1y_1+\dots+x_ky_k-x_{k+1}y_{k+1}-\dots-x_dy_d.
\end{equation}
Here, $k,l\in\{1,\dots, d-1\}$, $k+l=d$.
The pair $(k,l)$ is called \emph{the signature} of the space.
Denote $E_{k,l}=\diag(1,1,\dots,1,-1,\dots,-1)$, with $k$ $1$'s and $l$ $-1$'s.
Then the pseudo-Euclidean scalar product is:
$$
\langle x,y \rangle_{k,l} = E_{k,l}x\circ y,
$$
where $\circ$ is the standard Euclidean product.

\emph{The pseudo-Euclidean distance} between points $x$, $y$ is:
$$
\dist_{k,l}(x,y)=\sqrt{\langle{x-y,x-y}\rangle_{k,l}}.
$$
Since the scalar product can be negative, notice that the pseudo-Euclidean distance can have imaginary values as well.

Let $\ell$ be a line in the pseudo-Euclidean space, and $v$ its
vector. $\ell$ is called:
\begin{itemize}
\item
\emph{space-like} if $\langle{v,v}\rangle_{k,l}>0$;
\item
\emph{time-like} if $\langle{v,v}\rangle_{k,l}<0$;
\item
and \emph{light-like} if $\langle{v,v}\rangle_{k,l}=0$.
\end{itemize}
Two vectors $x$, $y$ are \emph{orthogonal} in the pseudo-Euclidean space if $\langle x,y \rangle_{k,l}=0$.
Note that a light-like line is orthogonal to itself.

For a given vector $v\neq0$, consider a hyper-plane $v\circ x=0$.
Vector $E_{k,l}v$ is orthogonal to the hyper-plane; moreover, all other orthogonal vectors are collinear with $E_{k,l}v$.
If $v$ is light-like, then so is $E_{k,l}v$, and $E_{k,l}v$ belongs to the hyper-plane.

\subsubsection*{Billiard reflection in the pseudo-Euclidean space}

Let $v$ be a vector and $\alpha$ a hyper-plane in the pseudo-Euclidean space.
Decompose vector $v$ into the sum $v=a+n_{\alpha}$ of a vector $n_{\alpha}$ orthogonal to $\alpha$ and $a$ belonging to $\alpha$.
Then vector $v'=a-n_{\alpha}$ is \emph{the billiard reflection} of $v$ on $\alpha$.
It is easy to see that then $v$ is also the billiard reflection of $v'$ with respect to $\alpha$.

Moreover, let us note that lines containing vectores $v$, $v'$, $a$, $n_{\alpha}$ are harmonically conjugated \cite{KhTab2009}.

Note that $v=v'$ if $v$ is contained in $\alpha$ and $v'=-v$ if it is orthogonal to $\alpha$.
If $n_{\alpha}$ is light-like, which means that it belongs to $\alpha$, then the reflection is not defined.

Line $\ell'$ is the billiard reflection of $\ell$ off a smooth surface $\mathcal{S}$ if their intersection point $\ell\cap\ell'$ belongs to $\mathcal{S}$ and the vectors of $\ell$, $\ell'$ are reflections of each other with respect to the tangent plane of $\mathcal{S}$ at this point.

\begin{remark}\label{remark:type}
It can be seen directly from the definition of reflection that the type of line is preserved by the billiard reflection.
Thus, the lines containing segments of a given billiard trajectory within $\mathcal{S}$ are all of the same type: they are all either space-like, time-like, or light-like.
\end{remark}

If $\mathcal{S}$ is an ellipsoid,
then it is possible to extend the reflection mapping to those points
where the tangent planes contain the orthogonal vectors. At such
points, a vector reflects into the opposite one, i.e.~$v'=-v$ and
$\ell'=\ell$. For the explanation, see \cite{KhTab2009}.
As follows from the explanation given there, it is natural to consider each such reflection as two reflections.

\subsection*{Families of confocal quadrics}
For a given set of positive constants $a_1$, $a_2$, \dots, $a_d$, an
ellipsoid is given by:
\begin{equation}\label{eq:ellipsoid}
\mathcal{E}\ :\
\frac{x_1^2}{a_1}+\frac{x_2^2}{a_2}+\dots+\frac{x_d^2}{a_d}=1.
\end{equation}
Let us remark that equation of any ellipsoid in the pseudo-Euclidean
space can be brought into the canonical form (\ref{eq:ellipsoid})
using transformations that preserve the scalar product
(\ref{eq:scalar.product}).

The family of quadrics confocal with $\mathcal{E}$ is:
\begin{equation}\label{eq:confocal-pseudo}
 \mathcal{Q}_{\lambda}\ :\
\frac{x_1^2}{a_1-\lambda} +\dots+ \frac{x_k^2}{a_k-\lambda} +
\frac{x_{k+1}^2}{a_{k+1}+\lambda} +\dots+
\frac{x_d^2}{a_d+\lambda}=1,\qquad\lambda\in\mathbf{R}.
\end{equation}

Unless stated differently, we are going to consider the
non-degenerate case, when set
$\{a_1,\dots,a_k,-a_{k+1},\dots,-a_{d}\}$ consists of $d$ different
values:
$$
a_1>a_2>\dots>a_k>0>-a_{k+1}>\dots>-a_d.
$$

For $\lambda\in\{a_1,\dots,a_k,-a_{k+1},\dots,-a_{d}\}$, the quadric
$\mathcal Q_{\lambda}$ is degenerate and it coincides with the
corresponding coordinate hyper-plane.

It is natural to join one more degenerate quadric to the family
(\ref{eq:confocal-pseudo}): the one corresponding to the value
$\lambda=\infty$, that is the hyper-plane at the infinity.

For each point $x$ in the space, there are exactly $d$ values of
$\lambda$, such that the relation (\ref{eq:confocal-pseudo}) is satisfied.
However, not all the values are necessarily real: either all $d$ of
them are real or there are $d-2$ real and $2$ conjugate complex
values. Thus, through every point in the space, there are either $d$
or $d-2$ quadrics from the family (\ref{eq:confocal-pseudo})
\cite{KhTab2009}.

The line $x+tv$ ($t\in\mathbf{R}$) is tangent to quadric
$\mathcal{Q}_{\lambda}$ if quadratic equation:
\begin{equation}\label{eq:tangent}
A_{\lambda}(x+tv)\circ(x+tv)=1,
\end{equation}
has a double root.
Here we denoted:
$$
A_{\lambda}=\diag\left( \frac{1}{a_1-\lambda}, \cdots,
\frac{1}{a_k-\lambda}, \frac{1}{a_{k+1}+\lambda}, \cdots,
\dfrac{1}{a_d+\lambda} \right).
$$

Now, calculating the discriminant of (\ref{eq:tangent}), we get:
\begin{equation}\label{eq:diskriminanta}
(A_{\lambda}x\circ v)^2 -
(A_{\lambda}v{\circ}v)(A_{\lambda}x{\circ}x-1) =0,
\end{equation}
which is equivalent to:
\begin{equation}\label{eq:discr}
\sum_{i=1}^d \frac{\varepsilon_i F_i(x,v)}{a_i-\varepsilon_i\lambda}=0,
\end{equation}
where
\begin{equation}\label{eq:integralsF}
F_i(x,v)=\varepsilon_iv_i^2 + \sum_{j{\neq}i}
\frac{(x_iv_j-x_jv_i)^2}{\varepsilon_ja_i-\varepsilon_ia_j},
\end{equation}
with $\varepsilon$'s given by:
\begin{equation*}
\varepsilon_i=\begin{cases}
1, & 1\le i\le k;\\
-1, & k+1\le i\le d.
\end{cases}
\end{equation*}

The equation (\ref{eq:discr}) can be transformed to:
\begin{equation}\label{eq:polinom.P}
\frac{\mathcal{P}(\lambda)}{\prod_{i=1}^d(a_i-\varepsilon_i\lambda)}=0,
\end{equation}
where the coefficient of $\lambda^{d-1}$ in $\mathcal{P}(\lambda)$
 is equal to $\langle{v,v}\rangle_{k,l}$.
Thus, polynomial $\mathcal{P}(\lambda)$ is of degree $d-1$ for space-like and time-like
lines, and of a smaller degree for light-like lines.
However, in the latter case, it turns out to be natural to consider the
polynomial $\mathcal{P}(\lambda)$ also as of degree $d-1$, taking
the corresponding roots to be equal to infinity.
So, light-like
lines are characterized by being tangent to the quadric
$\mathcal{Q}_{\infty}$.

Having this setting in mind, we note that  it is proved in \cite{KhTab2009} that the polynomial
$\mathcal{P}(\lambda)$ has at least $d-3$ roots in $\mathbf{R}\cup\{\infty\}$.

Thus, we have:

\begin{proposition}\label{prop:kaustike}
Any line in the space is tangent to either $d-1$ or $d-3$ quadrics
of the family (\ref{eq:confocal-pseudo}).
If this number is equal to $d-3$,
then there are two conjugate complex values of $\lambda$, such that
the line is tangent also to these two quadrics in $\mathbf{C}^d$.
\end{proposition}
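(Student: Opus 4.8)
The plan is to reduce everything to counting the roots of the polynomial $\mathcal{P}(\lambda)$ from (\ref{eq:polinom.P}). By the computation preceding the statement, a line $\ell\colon x+tv$ is tangent to $\mathcal{Q}_{\lambda}$ exactly when the discriminant (\ref{eq:diskriminanta}) vanishes, i.e.\ (after clearing denominators in (\ref{eq:discr})) when $\mathcal{P}(\lambda)=0$, for values $\lambda$ not among $a_1,\dots,a_k,-a_{k+1},\dots,-a_d$ (those correspond to the degenerate coordinate hyper-planes and are not counted as caustics). Hence the number of quadrics of (\ref{eq:confocal-pseudo}) to which $\ell$ is tangent equals the number of roots of $\mathcal{P}$, where for light-like lines the convention fixed above assigns the roots ``lost'' to the degree drop (recall the coefficient of $\lambda^{d-1}$ is $\langle v,v\rangle_{k,l}$, which vanishes in the light-like case) to $\lambda=\infty$, i.e.\ to $\mathcal{Q}_{\infty}$. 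With this convention $\mathcal{P}$ has exactly $d-1$ roots in $\mathbf{C}\cup\{\infty\}$, counted with multiplicity.

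First I would note that $\mathcal{P}(\lambda)$ has real coefficients: this is immediate from (\ref{eq:discr})--(\ref{eq:integralsF}), since the $F_i(x,v)$, the $a_i$, and the $\varepsilon_i$ are all real. Therefore the non-real roots of $\mathcal{P}$ occur in complex-conjugate pairs. Next I would invoke the fact quoted from \cite{KhTab2009}, namely that $\mathcal{P}(\lambda)$ has at least $d-3$ roots in $\mathbf{R}\cup\{\infty\}$. Combining this with the total count $d-1$ and the conjugate-pair structure leaves only two cases: either all $d-1$ roots lie in $\mathbf{R}\cup\{\infty\}$, so $\ell$ is tangent to $d-1$ quadrics of the family; or exactly $d-3$ of them do and the remaining two form a single conjugate pair $\lambda_0,\bar\lambda_0\in\mathbf{C}\setminus\mathbf{R}$, so $\ell$ is tangent to $d-3$ real quadrics and, reading (\ref{eq:tangent}) over $\mathbf{C}^d$, also to $\mathcal{Q}_{\lambda_0}$ and $\mathcal{Q}_{\bar\lambda_0}$. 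This is precisely the assertion.

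The delicate points are bookkeeping rather than conceptual. One must check that a root of $\mathcal{P}$ cannot coincide with a pole $\lambda=a_i$ (or $-a_i$) in a way that breaks the correspondence ``root $\leftrightarrow$ tangent non-degenerate quadric''; this follows since at such a value (\ref{eq:discr}) reduces to its single surviving term $\varepsilon_i F_i(x,v)$, which is nonzero for a line in general position, and multiplicities must be handled with the usual caveat that the count $d-1$ or $d-3$ is with multiplicity (generically the tangent quadrics are distinct). One must also justify the light-like case, namely that the drop in the finite-root count is compensated, under the stated convention, by the order of tangency to $\mathcal{Q}_{\infty}$ — but this is exactly the content of the sentence ``light-like lines are characterized by being tangent to $\mathcal{Q}_{\infty}$'' recorded just before the statement. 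Since these remarks, together with the lower bound $d-3$ on the number of real roots (cited from \cite{KhTab2009}), are all already in place, the proof is a short counting argument. The only part I would regard as the real substance — the bound $d-3$ — is imported, so within the present framework there is no serious obstacle.
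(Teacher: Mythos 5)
Your proposal is correct and follows essentially the same route as the paper: the paper derives this proposition directly from the preceding discussion (tangency to $\mathcal{Q}_\lambda$ $\Leftrightarrow$ $\mathcal{P}(\lambda)=0$, with $\deg\mathcal{P}=d-1$ under the convention assigning roots at infinity to light-like lines, plus the bound of at least $d-3$ real roots imported from the cited reference), and then simply notes that the statement with proof appears in that reference. Your counting argument, including the observation that $\mathcal{P}$ has real coefficients so non-real roots pair off, is exactly the intended completion of that setup.
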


This statement with the proof is given in \cite{KhTab2009}.
Let us remark that in \cite{KhTab2009} is claimed that light-like line have
only $d-2$ or $d-4$ caustic quadrics.
That is because $\mathcal{Q}_{\infty}$ is not considered there as a member of the confocal family.

As noted in \cite{KhTab2009}, a line having non-empty intersection with an ellipsoid from (\ref{eq:confocal-pseudo}) will be tangent to $d-1$ quadrics from the confocal family.
However, the next theorem, proved in \cite{DragRadn2012adv}, will provide some more insight into the distribution of the parameters of the caustics of a given line, together with a detailed description of the distribution of the parameters of quadrics containing a given point placed inside an ellipsoid from (\ref{eq:confocal-pseudo}).

\begin{theorem}\label{th:parametri.kaustike}
In pseudo-Euclidean space $\mathbf{E}^{k,l}$ consider a line intersecting ellipsoid $\mathcal{E}$ (\ref{eq:ellipsoid}).
Then this line is touching $d-1$ quadrics from (\ref{eq:confocal-pseudo}).
If we denote their parameters by $\alpha_1$, \dots, $\alpha_{d-1}$ and take:
\begin{gather*}
\{b_1,\ \dots,\ b_p,\ c_1,\ \dots,\ c_q\}=\{\varepsilon_{1}a_1,\ \dots,\ \varepsilon_{d}a_d,\ \alpha_1,\ \dots,\ \alpha_{d-1}\},\\
c_q\le\dots\le c_2\le c_1<0<b_1\le b_2\le\dots\le b_p,\quad p+q=2d-1,
\end{gather*}
we will additionally have:
\begin{itemize}
 \item
if the line is space-like, then $p=2k-1$, $q=2l$, $a_1=b_p$, $\alpha_i\in\{b_{2i-1},b_{2i}\}$ for $1\le i\le k-1$, and $\alpha_{j+k-1}\in\{c_{2j-1},c_{2j}\}$ for $1\le j\le l$;
 \item
if the line is time-like, then $p=2k$, $q=2l-1$, $c_q=-a_d$, $\alpha_i\in\{b_{2i-1},b_{2i}\}$ for $1\le i\le k$, and $\alpha_{j+k}\in\{c_{2j-1},c_{2j}\}$ for $1\le j\le l-1$;
 \item
if the line is light-like, then $p=2k$, $q=2l-1$, $b_p=\infty=\alpha_k$, $b_{p-1}=a_1$, $\alpha_i\in\{b_{2i-1},b_{2i}\}$ for $1\le i\le k-1$, and $\alpha_{j+k}\in\{c_{2j-1},c_{2j}\}$ for $1\le j\le l-1$.
\end{itemize}
Moreover, for each point on $\ell$ inside $\mathcal{E}$, there is exactly $d$ distinct quadrics from (\ref{eq:confocal-pseudo}) containing it.
More precisely, there is exactly one parameter of these quadrics in each of the intervals:
$$
(c_{2l-1},c_{2l-2}),\ \dots,\ (c_3,c_2),\ (c_1,0),\ (0,b_1),\ (b_2,b_3),\ \dots,\ (b_{2k-2},b_{2k-1}).
$$
\end{theorem}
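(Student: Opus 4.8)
The plan is to reduce every assertion to a sign analysis of two explicit rational functions of $\lambda$. Substituting the parametrization $x+tv$ of the line $\ell$ into \eqref{eq:tangent}, tangency of $\ell$ to $\mathcal{Q}_\lambda$ is equivalent to the vanishing of the discriminant $(A_\lambda x\circ v)^2-(A_\lambda v\circ v)(A_\lambda x\circ x-1)$; expanding $(A_\lambda x\circ v)^2-(A_\lambda v\circ v)(A_\lambda x\circ x)$ and combining the cross terms with indices $i\neq j$ (a Lagrange-type identity) one checks that this discriminant equals the function $\phi(\lambda):=\sum_{i=1}^{d}\dfrac{\varepsilon_iF_i(x,v)}{a_i-\varepsilon_i\lambda}$ from \eqref{eq:discr}. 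Hence $\phi(\lambda)$ is positive, zero or negative according as $\ell$ meets $\mathcal{Q}_\lambda$ of \eqref{eq:confocal-pseudo} in two real points, is tangent to it, or misses it (the finitely many $\lambda$ with $A_\lambda v\circ v=0$, where the intersection equation degenerates, keep $\phi$ non-negative and are handled directly). The function $\phi$ has only simple poles, exactly at $\lambda=\varepsilon_ia_i$, that is at the $k$ points $a_1>\dots>a_k>0$ and the $l$ points $0>-a_{k+1}>\dots>-a_d$; its residue at $\varepsilon_ia_i$ is $-F_i(x,v)$; and $\phi(\lambda)=-\langle v,v\rangle_{k,l}\lambda^{-1}+O(\lambda^{-2})$ as $\lambda\to\pm\infty$, so its sign near $\pm\infty$ is governed by whether $\ell$ is space-like, time-like or light-like (in the light-like case $\phi$ drops a degree and $\mathcal{Q}_\infty$ must then be counted as a caustic, as already noted above). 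By Proposition \ref{prop:kaustike} together with \cite{KhTab2009} — or from the count of zeros produced below — $\ell$ has exactly $d-1$ caustics.

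The same machinery describes the Jacobi coordinates of a point $X=x+t_0v$ of $\ell$ lying inside $\mathcal{E}$ of \eqref{eq:ellipsoid}: these are the real zeros of $g(\lambda):=\sum_{i=1}^{d}\dfrac{X_i^2}{a_i-\varepsilon_i\lambda}-1$, again a rational function with simple poles exactly at the $\varepsilon_ia_i$, with $g(\pm\infty)=-1$, and with $g(0)=\sum X_i^2/a_i-1<0$ precisely because $X$ lies inside $\mathcal{E}$. The residue of $g$ at $\varepsilon_ia_i$ is $-\varepsilon_iX_i^2$, hence negative for $i\le k$ and positive for $i>k$; so, examining the behaviour at the two ends of each gap between consecutive poles, $g$ is forced (intermediate value theorem) to change sign on each of the $k-1$ gaps between consecutive positive poles, on each of the $l-1$ gaps between consecutive negative poles, and — since $g(0)<0$ while $g\to+\infty$ at both ends of $(-a_{k+1},a_k)$ — once in $(-a_{k+1},0)$ and once in $(0,a_k)$. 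That is $(k-1)+(l-1)+2=d$ sign changes; since the numerator of $g$ has degree $d$ there are no others, so there is exactly one Jacobi coordinate in each of the $d$ open gaps determined by the list $-a_d<\dots<-a_{k+1}<0<a_k<\dots<a_1$, and none in the two unbounded tails.

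It remains to locate the caustics and to sharpen the intervals of the previous paragraph; the work is in the signs of the residues $-F_i(x,v)$ of $\phi$, and this is the step I expect to be the main obstacle. Unlike those of $g$, these residues depend in a more delicate way on $x$, $v$ and the signature, and one must extract them from \eqref{eq:integralsF} using the inequalities forced by $\ell$ being a chord of $\mathcal{E}$ (equivalently $\phi(0)\ge 0$, and $\phi>0$ near $0$ in the transversal case), or, alternatively, by inspecting how $\mathcal{Q}_\lambda$ degenerates as $\lambda$ crosses each $\varepsilon_ia_i$ and deciding whether a chord of $\mathcal{E}$ must meet the neighbouring members of \eqref{eq:confocal-pseudo}; it is precisely the outcome of this sign computation — which branches according to the type of $\ell$, and in the light-like case according to whether the caustic at infinity is present — that yields the three cases $(p,q)=(2k-1,2l)$ and $(p,q)=(2k,2l-1)$. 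Once the residue signs are known, a gap-by-gap bookkeeping of the sign changes of $\phi$ (using the total count $d-1$ to exclude superfluous zeros, and the end behaviour to see which unbounded tails absorb a zero) places the caustic parameters so that, after merging and sorting the $\varepsilon_ia_i$ with the $\alpha_m$, one obtains exactly the block-interlacing assertions $\alpha_i\in\{b_{2i-1},b_{2i}\}$ and $\alpha_{j+k-1}\in\{c_{2j-1},c_{2j}\}$ together with the boundary identifications $a_1=b_p$, $c_q=-a_d$, $b_p=\infty=\alpha_k$. Finally, to refine the Jacobi-coordinate intervals of the second paragraph to $(c_{2l-1},c_{2l-2}),\dots,(c_3,c_2),(c_1,0),(0,b_1),(b_2,b_3),\dots,(b_{2k-2},b_{2k-1})$ one uses the elementary identity $g(\lambda)\,(A_\lambda v\circ v)=\bigl(A_\lambda v\circ v\bigr)^2\bigl(t_0-t_\ast(\lambda)\bigr)^2-\phi(\lambda)$, with $t_\ast(\lambda)$ the parameter of a point of $\ell\cap\mathcal{Q}_\lambda$: evaluated at a caustic $\lambda=\alpha_m$, where $\phi$ vanishes, it controls $\operatorname{sign}g(\alpha_m)$ via $\operatorname{sign}(A_{\alpha_m}v\circ v)$, which forbids a Jacobi coordinate from crossing a caustic and thus confines it to the band bounded by an $\varepsilon_ia_i$ and the adjacent $\alpha_m$ — the pseudo-Euclidean counterpart of properties E3--E5 of Section \ref{sec:confocal.euclid}. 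The degenerate situations ($\ell$ tangent to $\mathcal{E}$, $\ell$ light-like, or $\ell$ meeting some $\mathcal{Q}_\lambda$ only along an asymptotic direction) follow by passing to the limit from the generic case.
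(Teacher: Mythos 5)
Your analytic set-up is the right one and, as far as it goes, correct: the discriminant of (\ref{eq:tangent}) does equal $\sum_i\varepsilon_iF_i(x,v)/(a_i-\varepsilon_i\lambda)$ (the Lagrange-identity computation checks out), its residues $-F_i(x,v)$ and its $-\langle v,v\rangle_{k,l}\lambda^{-1}$ behaviour at infinity are as you state, and your second paragraph is a complete and correct proof that a point inside $\mathcal{E}$ has exactly one generalized Jacobi coordinate in each of the $d$ gaps cut out of $(-a_d,a_1)$ by the poles $\varepsilon_ia_i$ and by $0$. (The paper itself gives no proof of Theorem \ref{th:parametri.kaustike}, deferring to \cite{DragRadn2012adv}, so there is no in-text argument to compare against; but this is certainly the intended framework.)

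The gap is in your third paragraph, and it is not a bookkeeping gap: the three bulleted cases --- $(p,q)=(2k-1,2l)$ versus $(2k,2l-1)$ and the interlacing $\alpha_i\in\{b_{2i-1},b_{2i}\}$, $\alpha_{j+k-1}\in\{c_{2j-1},c_{2j}\}$ --- \emph{are} the theorem, and you only assert that they ``would be the outcome'' of a sign computation you do not perform. Worse, the specific mechanism you propose (read off the signs of the residues $-F_i(x,v)$ from (\ref{eq:integralsF})) cannot succeed as stated: each $F_i$ is a sum of terms of both signs (already in the Euclidean case the denominators $a_i-a_j$ change sign with $j$), so these residues have no fixed sign, and the single scalar inequality $\phi(0)\ge 0$ coming from $\ell\cap\mathcal{E}\neq\emptyset$ is nowhere near enough to pin down the $d$ residue signs, hence the distribution of the $d-1$ zeros of $\phi$ among the gaps. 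What is actually needed is a global argument: for instance, observe that the integer vector ``number of zeros of $\phi$ in each gap between consecutive poles'' is locally constant on the set of chords of $\mathcal{E}$ of a fixed type, check that it does not jump when a caustic parameter collides with a pole or with another caustic parameter, and then evaluate it on one explicit line of each type (say one lying in a coordinate $2$-plane); your parenthetical alternative (inspecting how $\mathcal{Q}_\lambda$ degenerates as $\lambda$ crosses each $\varepsilon_ia_i$) points in this direction but is likewise not carried out. Your final completed-square identity is the right tool for refining the Jacobi intervals to $(c_1,0)$, $(0,b_1)$, $(b_2,b_3)$, etc.\ (note that $t_\ast(\lambda)$ must be the vertex of the quadratic in $t$, which coincides with a point of $\ell\cap\mathcal{Q}_\lambda$ only at a caustic, which is indeed where you invoke it), but it too presupposes the unproved ordering of the $\alpha_m$ relative to the $\varepsilon_ia_i$.
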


The analogue of Theorem \ref{th:parametri.kaustike} for the Euclidean space is proved in \cite{Audin1994}.

\begin{corollary}
For each point placed inside an ellipsoid in the pseudo-Euclidean space, there are exactly two other ellipsoids from the confocal family containing this point.
\end{corollary}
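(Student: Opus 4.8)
The plan is to derive the statement purely as a bookkeeping consequence of Theorem~\ref{th:parametri.kaustike}. The first step is to identify which members of the confocal family (\ref{eq:confocal-pseudo}) are ellipsoids: $\mathcal{Q}_{\lambda}$ is an ellipsoid exactly when all of its denominators are positive, i.e.\ $a_i-\lambda>0$ for $1\le i\le k$ and $a_j+\lambda>0$ for $k+1\le j\le d$; given the ordering $a_1>\dots>a_k>0>-a_{k+1}>\dots>-a_d$, this is equivalent to $\lambda\in(-a_{k+1},a_k)$. So the corollary amounts to the claim that, among the $d$ values of $\lambda$ for which $\mathcal{Q}_{\lambda}$ passes through a fixed interior point $x$ of an ellipsoid $\mathcal{E}$, exactly two lie in the interval $(-a_{k+1},a_k)$.

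Next I would apply Theorem~\ref{th:parametri.kaustike} to a line through $x$. Taking $\ell=\{x+te_1:t\in\mathbf{R}\}$, this line is space-like, since its direction $v=e_1$ satisfies $\langle v,v\rangle_{k,l}=1>0$, and, because $\mathcal{E}$ from (\ref{eq:ellipsoid}) is a bounded convex body in $\mathbf{R}^d$ with $x$ in its interior, $\ell$ meets $\mathcal{E}$ in two real points. Thus the theorem applies in the space-like case and tells us that the $d$ parameters of the quadrics through $x$ are distributed one in each of the $d$ intervals
\[
(c_{2l-1},c_{2l-2}),\ \dots,\ (c_3,c_2),\ (c_1,0),\ (0,b_1),\ (b_2,b_3),\ \dots,\ (b_{2k-2},b_{2k-1}).
\]
The two intervals adjacent to $0$ automatically lie in the ellipsoid range: $b_1$ is the smallest of the positive numbers $b_1,\dots,b_p$, and since $\varepsilon_k a_k=a_k$ belongs to that list, $b_1\le a_k$, whence $(0,b_1)\subseteq(-a_{k+1},a_k)$; symmetrically $c_1$ is the largest of the negative numbers $c_1,\dots,c_q$, and since $\varepsilon_{k+1}a_{k+1}=-a_{k+1}$ belongs to that list, $c_1\ge-a_{k+1}$, whence $(c_1,0)\subseteq(-a_{k+1},a_k)$.

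It remains to check that the remaining $d-2$ intervals are disjoint from $(-a_{k+1},a_k)$, and this is the only place where a genuine argument is needed; I expect it to be the main obstacle. Equivalently, one must show $b_2\ge a_k$ and $c_2\le-a_{k+1}$, so that $a_k$ and $-a_{k+1}$ fall into the gaps $[b_1,b_2]$ and $[c_2,c_1]$ between consecutive intervals. For this I would use the interlacing relations in Theorem~\ref{th:parametri.kaustike}: since $\alpha_i\in\{b_{2i-1},b_{2i}\}$ for $1\le i\le k-1$, the only caustic parameter that can occur among $b_1,b_2$ is $\alpha_1$; as no $a_i$ is smaller than $a_k$, any positive number strictly below $a_k$ would have to be a caustic parameter, and there is room for at most one such among $b_1,b_2$, which forces $b_2\ge a_k$. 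The mirror argument, using $\alpha_{j+k-1}\in\{c_{2j-1},c_{2j}\}$ and the fact that no $-a_i$ exceeds $-a_{k+1}$, gives $c_2\le-a_{k+1}$. Hence exactly the two intervals $(c_1,0)$ and $(0,b_1)$ meet the ellipsoid range, so exactly two of the $d$ quadrics of (\ref{eq:confocal-pseudo}) through $x$ are ellipsoids; since $x$ is strictly inside $\mathcal{E}=\mathcal{Q}_0$ it does not lie on $\mathcal{E}$, so both of these ellipsoids are distinct from $\mathcal{E}$, which is exactly the asserted statement.
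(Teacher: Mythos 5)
Your argument is correct and follows the paper's intended route: the corollary is presented as an immediate consequence of Theorem \ref{th:parametri.kaustike}, and you have simply made explicit the bookkeeping the paper leaves implicit --- that the ellipsoid range of the parameter is $(-a_{k+1},a_k)$, that the two intervals $(c_1,0)$ and $(0,b_1)$ adjacent to $0$ lie inside it, and that the interlacing relations force $b_2\ge a_k$ and $c_2\le -a_{k+1}$ so that the remaining $d-2$ intervals miss it. For comparison, the paper effectively reproves the same fact more directly in Proposition \ref{prop:svojstva12}, where monotonicity of the left-hand side of (\ref{eq:jednacinad}) between consecutive poles gives one root in each outer interval and two roots (one positive, one negative) in $(-a_{k+1},a_k)$.
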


\section{Pseudo-Euclidean spaces and Poncelet theorem}
\label{sec:pseudo.poncelet}
\subsection{Minkowski plane, confocal conics and billiards}
\label{sec:minkowski.plane}
\subsubsection*{Confocal conics in the Minkowski plane}

Here, we give a review of basic properties of families of confocal conics in the Minkowski plane, see \cite{DragRadn2012adv}.

Denote by
\begin{equation}\label{eq:ellipse}
\mathcal{E}\ :\ \frac{x^2}{a}+\frac{y^2}{b}=1
\end{equation}
an ellipse in the plane, with $a$, $b$ being fixed positive numbers.

The associated family of confocal conics is:
\begin{equation}\label{eq:confocal.conics} 
\mathcal C_{\lambda}\ :\
\frac{x^2}{a-\lambda}+\frac{y^2}{b+\lambda}=1, \quad
\lambda\in\mathbf{R}.
\end{equation}

The family is shown on Figure \ref{fig:confocal.conics}.
\begin{figure}[h]
\input{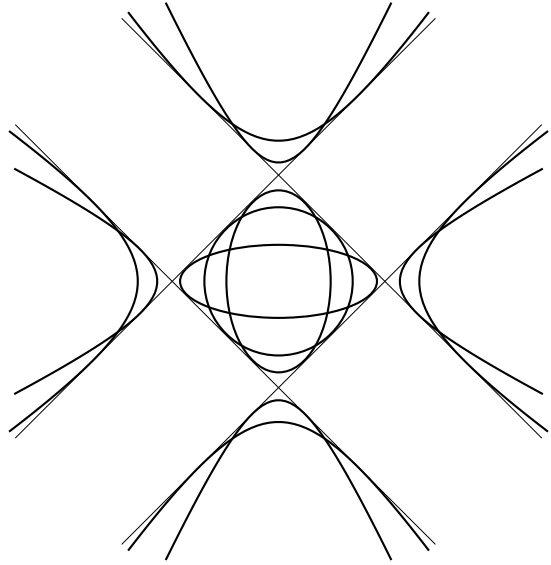}
\caption{Family of confocal conics in the Minkowski
plane.}\label{fig:confocal.conics}
\end{figure}
We may distinguish the following three subfamilies in the family
(\ref{eq:confocal.conics}):
\begin{itemize}
\item
for $\lambda\in(-b,a)$, conic $\mathcal{C}_{\lambda}$ is an ellipse;

\item
for $\lambda<-b$, conic $\mathcal{C}_{\lambda}$ is a hyperbola with $x$-axis as the major one;

\item
for $\lambda>a$, it is a hyperbola again, but now its major axis is $y$-axis.
\end{itemize}
In addition, there are three degenerated quadrics: $\mathcal{C}_{a}$, $\mathcal{C}_{b}$, $\mathcal{C}_{\infty}$ corresponding to $y$-axis, $x$-axis, and the line at the infinity respectively.
Note the following three pairs of foci:
$F_1(\sqrt{a+b},0)$, $F_2(-\sqrt{a+b},0)$; 
$G_1(0,\sqrt{a+b})$, $G_2(0,-\sqrt{a+b})$; and 
$H_1(1:-1:0)$, $H_2(1:1:0)$ on the line at the infinity.

We notice four distinguished lines: 
\begin{align*}
&x+y=\sqrt{a+b},\quad x+y=-\sqrt{a+b},\\
&x-y=\sqrt{a+b},\quad x-y=-\sqrt{a+b}.
\end{align*}
These lines are common tangents to all conics from the confocal family.

Some geometric properties of conics in the Minkowski plane are analogous the Euclidean ones.
For example, for each point on conic $\mathcal{C}_{\lambda}$, either sum or difference of its Minkowski distances from the foci $F_1$ and $F_2$ is equal to $2\sqrt{a-\lambda}$;
either sum or difference of the distances from the other pair of foci $G_1$, $G_2$ is equal to $2\sqrt{-b-\lambda}$ \cite{DragRadn2012adv}.

We invite the reader to make further comparisons of confocal families of conics in the Minkowski and Euclidean planes (see Figures \ref{fig:confocal-e} and \ref{fig:confocal.conics}).

\subsubsection*{Relativistic conics}

Since a family of confocal conics in the Minkowski plane contains three geometric types of conics, it is natural to introduce relativistic conics, which are suggested in \cite{BirkM1962}.
In this section, we give a brief account of related analysis.

Consider points $F_1(\sqrt{a+b},0)$ and $F_2(-\sqrt{a+b},0)$ in the plane.

For a given constant $c\in\mathbf{R}^{+}\cup i\mathbf{R}^{+}$, \emph{a relativistic ellipse} is the set of points $X$ satisfying:
$$
\dist_{1,1}(F_1,X)+\dist_{1,1}(F_2,X)=2c,
$$
while \emph{a relativistic hyperbola} is the union of the sets given by the following equations:
\begin{gather*}
\dist_{1,1}(F_1,X)-\dist_{1,1}(F_2,X)=2c,\\ 
\dist_{1,1}(F_2,X)-\dist_{1,1}(F_1,X)=2c.
\end{gather*}

Relativistic conics can be described as follows.
\begin{description}
 \item[$0<c<\sqrt{a+b}$]
The corresponding relativistic conics lie on ellipse $\mathcal{C}_{a-c^2}$ from family (\ref{eq:confocal.conics}).
The ellipse $\mathcal{C}_{a-c^2}$ is split into four arcs by touching points with the four common tangent lines; thus, the relativistic ellipse is the union of the two arcs intersecting the $y$-axis, while the relativistic hyperbola is the union of the other two arcs.

\item[$c>\sqrt{a+b}$]
The relativistic conics lie on $\mathcal{C}_{a-c^2}$ -- a hyperbola with $x$-axis as the major one.
Each branch of the hyperbola is split into three arcs by touching points with the common tangents; thus, the relativistic ellipse is the union of the two finite arcs, while the relativistic hyperbola is the union of the four infinite ones.

\item[$c$ is imaginary]
The relativistic conics lie on hyperbola $\mathcal{C}_{a-c^2}$ -- a hyperbola with $y$-axis as the major one.
As in the previous case, the branches are split into six arcs in total by common points with the four tangents.
The relativistic ellipse is the union of the four infinite arcs, while the relativistic hyperbola is the union of the two finite ones. 
\end{description}

The conics are shown on Figure \ref{fig:relativistic.conics}.
\begin{figure}[h]
\input{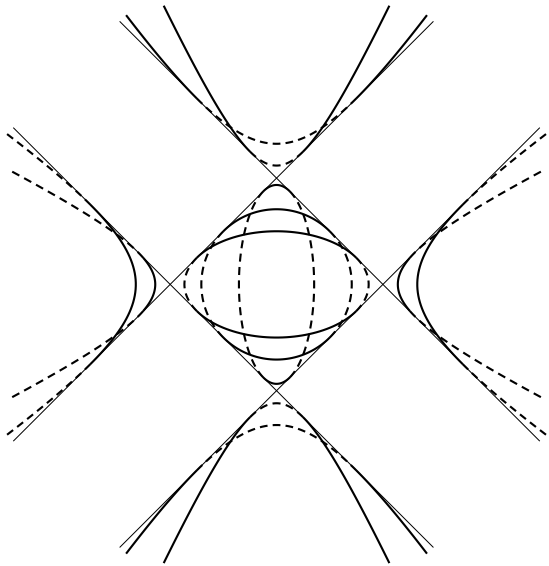}
\caption{Relativistic conics in the Minkowski plane: relativistic ellipses are represented by full lines, and hyperbolas by dashed ones.}\label{fig:relativistic.conics}
\end{figure}

Notice that all relativistic ellipses are disjoint with each other, as well as all relativistic hyperbolas.
Moreover, at the intersection point of a relativistic ellipse which is a part of the geometric conic $\mathcal{C}_{\lambda_1}$ from the confocal family (\ref{eq:confocal.conics}) and a relativistic hyperbola belonging to $\mathcal{C}_{\lambda_2}$, it is always $\lambda_1<\lambda_2$.

\subsubsection*{Periodic trajectories of elliptical billiard}

Analytic conditions for existence of closed polygonal lines inscribed in one conic and circumscribed about another one in the projective plane are derived by Cayley \cites{Cayley1854,Cayley1861}.
They can be applied to billiard trajectories within ellipses in the Minkowski plane as well, since each such trajectory has a caustic among confocal conics.
In this section, we shall analyze in more detail some particular properties related to the Minkowski geometry.

\begin{theorem}\label{th:cayley2}
In the Minkowski plane, consider a billiard trajectory $\mathcal{T}$ within ellipse $\mathcal{E}$ given by equation (\ref{eq:ellipse}).

The trajectory is periodic with period $n=2m$ if and only if the following condition is satisfied:
\begin{equation}\label{eq:cayley2even}
\det\left(
\begin{array}{llll}
B_3 & B_4 & \dots & B_{m+1}\\
B_4 & B_5 & \dots & B_{m+2}\\
\dots & \dots & \dots & \dots\\
B_{m+1} & B_{m+2} &\dots & B_{2m-1}
\end{array}
\right)=0.
\end{equation}
Trajectory $\mathcal{T}$ is periodic with period $n=2m+1$ if and only if $\mathcal{C}_{\alpha}$ is an ellipse and the following condition is satisfied:
\begin{equation}\label{eq:cayley2odd}
\det\left(
\begin{array}{llll}
B_3 & B_4 & \dots & B_{m+2}\\
\dots & \dots & \dots & \dots\\
B_{m+1} & B_{m+2} &\dots & B_{2m}\\
C_{m+1} & C_{m+2} &\dots & C_{2m}
\end{array}
\right)=0.
\end{equation}
Here:
\begin{gather*}
\sqrt{(a-t)(b+t)(\alpha-t)}=B_0+B_1t+B_2t^2+\dots,\\
\sqrt{\frac{(a-t)(b+t)}{\alpha-t}}=C_0+C_1t+C_2t^2+\dots
\end{gather*}
are the Taylor expansions around $\lambda=0$.
\end{theorem}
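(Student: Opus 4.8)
The plan is to reduce the periodicity of $\mathcal{T}$ to a torsion condition on the elliptic curve
$$
\Gamma:\quad y^2=(a-t)(b+t)(\alpha-t)
$$
attached to the caustic $\mathcal{C}_\alpha$ of $\mathcal{T}$, and then to unwind that torsion condition into the two determinants by the argument that underlies Cayley's classical theorem. First, by the Minkowski form of Chasles' theorem -- the planar case of Proposition \ref{prop:kaustike}, see also Theorem \ref{th:parametri.kaustike} -- every side of $\mathcal{T}$ is tangent to one fixed conic $\mathcal{C}_\alpha$ of the confocal family (\ref{eq:confocal.conics}), so $\mathcal{T}$ is a Poncelet polygon inscribed in $\mathcal{E}=\mathcal{C}_0$ and circumscribed about $\mathcal{C}_\alpha$. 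Following the algebro-geometric integration of confocal billiards (the Minkowski version is carried out in \cite{DragRadn2012adv}, parallel to the Euclidean treatment in \cite{DragRadn2011book}, to which it reduces upon changing $b$ to $-b$ in the branch cubic), the Jacobi (confocal) coordinates of $\mathcal{T}$ evolve on $\Gamma$, the Abel map of $\Gamma$ linearizes the dynamics, and passing from one side of $\mathcal{T}$ to the next becomes translation by a fixed class $\mathbf{s}\in\mathrm{Jac}(\Gamma)\cong\Gamma$ built from the points of $\Gamma$ over $t=0$ (the boundary $\mathcal{C}_0$), over the branch point $t=\alpha$ (the caustic), and over $t=\infty$; in general $\mathbf{s}$ has infinite order, and $\mathcal{T}$ is periodic of period $n$ if and only if $n\mathbf{s}=0$.

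Second, I would convert ``$n\mathbf{s}=0$'' into a determinant as in the Griffiths--Harris derivation \cite{GrifHar1978} of Cayley's condition. On the genus-one curve $\Gamma$ the relation $n\mathbf{s}=0$ is equivalent to the principality of a specific divisor of degree $n$ supported at the distinguished points above, hence to the existence of a nonzero rational function on $\Gamma$ with poles only over $t=\infty$, of total order $\le n$, vanishing to order $\ge n$ at the point $P_0$ over $t=0$. Before imposing the zero this space of functions has dimension $n$ and a monomial basis $1,t,t^2,\dots$ together with $y,ty,t^2y,\dots$; taking $t$ itself as a local parameter at $P_0$ -- legitimate for a non-degenerate caustic, where $\alpha\notin\{a,-b,\infty\}$ -- the vanishing conditions read off the Taylor coefficients of $y=\sqrt{(a-t)(b+t)(\alpha-t)}$ at $t=0$, that is the $B_k$. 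The conditions contributed by the pure powers of $t$ in the basis are solvable for free and drop out, and for $n=2m$ what remains is a square Hankel system in the $B_k$ whose singularity, after reordering rows and columns, is precisely (\ref{eq:cayley2even}).

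Third, the odd case $n=2m+1$ differs because then the realizing divisor must involve the branch point of $\Gamma$ over $t=\alpha$ with odd multiplicity, so the rational function one seeks has to be taken in the slightly larger space also containing $y/(\alpha-t)$, whose only pole sits over $t=\alpha$; expanding $y/(\alpha-t)=\sqrt{(a-t)(b+t)/(\alpha-t)}$ at $P_0$ contributes the row of coefficients $C_k$, and the same elimination of the trivial rows leaves the mixed $B$--$C$ determinant (\ref{eq:cayley2odd}). The extra requirement that $\mathcal{C}_\alpha$ be an ellipse enters at this point: the odd-torsion class is realized by a genuine real, non-degenerate billiard trajectory only when $\alpha$ lies in the interval $(-b,a)$ of (\ref{eq:confocal.conics}), equivalently -- by a symmetry argument that keeps track of the type of the trajectory line (space-like, time-like, or light-like; cf.\ Remark \ref{remark:type}) and of reflections in the two coordinate axes -- a hyperbola caustic forces an even period. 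I expect this reality and parity bookkeeping to be the main obstacle: unlike in the Euclidean plane, the confocal family here contains three geometric types of conics and the trajectory itself carries a type, so one must follow both carefully when moving between the real locus of $\Gamma$, its Jacobian, and the actual billiard in order to decide which torsion conditions correspond to real, non-degenerate periodic trajectories; the degenerate trajectories running along the coordinate axes are excluded separately, and when $\alpha\notin(-b,a)$ some $B_k$ become imaginary but (\ref{eq:cayley2even}) and (\ref{eq:cayley2odd}) remain meaningful as formal identities.
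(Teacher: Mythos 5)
Your overall strategy coincides with the paper's: reduce periodicity to a divisor relation on the cubic $s^2=(a-t)(b+t)(\alpha-t)$ and extract the determinants (\ref{eq:cayley2even}), (\ref{eq:cayley2odd}) from bases of the Riemann--Roch spaces $\mathcal{L}(nP_\infty)$ and $\mathcal{L}((n-1)P_\infty+P_\alpha)$, with the $B_k$ and $C_k$ arising as Taylor coefficients at the point over $t=0$. That part of your argument is sound and is exactly the mechanism the paper invokes.

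The genuine gap is in the step you yourself flag as ``the main obstacle'' and then dispatch with an unspecified symmetry argument: why a hyperbola caustic forces an even period and, more importantly, which divisor relation actually encodes closure. The paper settles both by tracking the generalized Jacobi coordinates $(\lambda_1,\lambda_2)$ along the trajectory. When $\mathcal{C}_\alpha$ is a hyperbola, $\lambda_1$ ranges over $[-b,0]$ and $\lambda_2$ over $[0,a]$ with extrema only at the endpoints: $\lambda_1=-b$ at crossings of the $x$-axis, $\lambda_2=a$ at crossings of the $y$-axis, and one coordinate equal to $0$ at each reflection point. Hence the number of reflections equals the total number of axis crossings, which is even for a closed curve --- that is the actual proof that an odd period requires an ellipse caustic. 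The same count yields the closure relation $n_1P_a+n_2P_{-b}=nP_0$ (resp.\ $n_1P_a+n_2P_\alpha=nP_0$ for an ellipse caustic) with the parities of $n_1,n_2$ determined, and only after using that $P_a$, $P_{-b}$, $P_\alpha$ are branch points, so that $2P_a=2P_{-b}=2P_\alpha=2P_\infty$, does this collapse to $nP_0=nP_\infty$ for even $n$ and to $nP_0=(n-1)P_\infty+P_\alpha$ for odd $n$. Without this bookkeeping one cannot justify that the translation class is $P_0-P_\infty$ in the even case but acquires the extra summand $P_\alpha$ exactly when $n$ is odd, nor rule out odd periods for hyperbolic caustics. (A minor slip: $s/(\alpha-t)$ has a simple pole at $P_\infty$ as well as at $P_\alpha$, not a pole only over $t=\alpha$; it still lies in $\mathcal{L}((n-1)P_\infty+P_\alpha)$, so the basis you use is unaffected.)
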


\begin{proof}
Each point inside $\mathcal{E}$ is the intersection of exactly two ellipses $\mathcal{C}_{\lambda_1}$ and $\mathcal{C}_{\lambda_2}$ from (\ref{eq:confocal.conics}).
Parameters $\lambda_1$, $\lambda_2$ are generalized Jacobi coordinates.
Take $\lambda_1<\lambda_2$.

Consider first the case when $\mathcal{C}_{\alpha}$ is a hyperbola.
Then along $\mathcal{T}$ these coordinates will take values in segments $[-b,0]$ and $[0,a]$ respectively with the endpoints of the segments as the only local extrema.
$\lambda_1$ achieves value $-b$ at the intersections of $\mathcal{T}$ with the $x$-axis, while $\lambda_2$ achieves $a$ at the intersections with $y$-axis.
At each reflection point, one of the coordinates achieves value $0$.
They can both be equal to $0$ only at the points where $\mathcal{E}$ has a light-like tangent, and there reflection is counted twice.

This means that on a closed trajectory the number of reflections is equal to the number of intersection points with the coordinate axes.
Since a periodic trajectory crosses each of the coordinate axes even number of times, the first part of the theorem is proved.

The condition on $\mathcal{T}$ to become closed after $n$ reflections on $\mathcal{E}$, $n_1$ crossings over $x$-axis, and  $n_2$ over $y$-axis is that the following equality 
$$
n_1P_{a}+n_2P_{-b}=nP_{0}
$$ 
holds on the elliptic curve:
$$
s^2=(a-t)(b+t)(\alpha-t),
$$ 
where by $P_{\beta}$ we denoted a point on the curve correspondig to $t=\beta$, and $P_{\infty}$ is taken to be the neutral for the elliptic curve group.

From the previous discussion, $n_1+n_2=n$ and all three numbers are even.
$P_{a}$ and $P_{-b}$ are branching points of the curve, thus $2P_{a}=2P_{-b}=2P_{\infty}$, so the condition becomes $nP_{0}=nP_{\infty}$, which is equivalent to (\ref{eq:cayley2even}).

Now suppose $\mathcal{C}_{\alpha}$ is an ellipse.
The generalized Jacobi coordinates take values in segments $[-b,0]$, $[0,\alpha]$ or in $[\alpha,0]$, $[0,a]$, depending on the sign of $\alpha$.
Since both cases are proceeded in a similar way, we assume $\alpha<0$.

Coordinate $\lambda_1$ has extrema on $\mathcal{T}$ at the touching points with the caustic and some of the reflection points, while $\lambda_2$ has extrema at the crossing points with $y$-axis and some of the reflection points.

The condition on $\mathcal{T}$ to become closed after $n$ reflections on $\mathcal{E}$, with $n_1$ crossings over $y$-axis, and $n_2$ touching points with the caustic is: 
$$
n_1P_{a}+n_2P_{\alpha}=nP_{0},
$$
with $n_1+n_2=n$ and $n_1$ even.

Thus, for $n$ even we get (\ref{eq:cayley2even}) in the same manner as for a hyperbola as a caustic.

For $n$ odd, the condition is equivalent to $nP_{0}=(n-1)P_{\infty}+P_{\alpha}$.
Notice that one basis of the space $\mathcal{L}((n-1)P_{\infty}+P_{\alpha})$ is:
$$
1,t,\dots,t^m,s,ts,\dots, t^{m-2}s, \dfrac{s}{t-\alpha}.
$$
Using this basis, as it is shown in \cites{DragRadn2006jms,DragRadn2011book}, we obtain (\ref{eq:cayley2odd}).
\end{proof}

\begin{example}[$3$-periodic trajectories]
Let us find all $3$-periodic trajectories within ellipse $\mathcal{E}$ given by
(\ref{eq:ellipse}) in the Minkowski plane, i.e.~ all conics $\mathcal{C}_{\alpha}$ from the confocal family (\ref{eq:confocal.conics}) corresponding to such trajectories.

The condition is:
$$
C_2=\frac{3a^2 b^2 + 2 a^2 b\alpha - 2 a b^2 \alpha - a^2 \alpha^2 - 
 2 a b \alpha^2 - b^2 \alpha^2}{8 (ab)^{3/2} \alpha^{5/2}}=0,
$$
which gives the following solutions for the parameter $\alpha$ of the caustic:
\begin{gather*}
\alpha_1=\frac{ab}{(a+b)^2}(a-b - 2\sqrt{a^2 + ab + b^2}),
\\
\alpha_2=\frac{ab}{(a+b)^2}(a-b + 2\sqrt{a^2 + ab + b^2}).
\end{gather*}
Notice that
$-b<\alpha_1<0<\alpha_2<a$
so both caustics $\mathcal{C}_{\alpha_1}$, $\mathcal{C}_{\alpha_2}$ are ellipses.
\end{example}

\begin{example}[$4$-periodic trajectories]
By Theorem \ref{th:cayley2}, the condition is $B_3=0$.
Since
$$
B_3=\frac{(-ab-a\alpha+b\alpha)(-ab+a\alpha+b\alpha)(ab+a\alpha+b\alpha)}
{16(ab\alpha)^{5/2}},
$$
we obtain the following solutions:
$$
\alpha_1=\frac{ab}{b-a},\quad
\alpha_2=\frac{ab}{a+b},\quad
\alpha_3=-\frac{ab}{a+b}.
$$
Since $\alpha_1\not\in(-b,a)$ and $\alpha_2,\alpha_3\in(-b,a)$, conic
$\mathcal{C}_{\alpha_1}$ is a hyperbola, while $\mathcal{C}_{\alpha_2}$,
$\mathcal{C}_{\alpha_3}$ are ellipses.
\end{example}

\begin{example}[$5$-periodic trajectories]
The condition is:
$$
\det\left(
\begin{array}{ll}
 B_3 & B_4\\
 C_3 & C_4
\end{array}
\right)=0.
$$
Taking $a=b=1$, we get that this is equivalent to $64\alpha^6-16\alpha^4-52\alpha^2+5=0$.
This equation has four solutions in $\mathbf{R}$, all four contained in $(-1,1)$, and two conjugated solutions in $\mathbf{C}$.
\end{example}

\subsubsection*{Light-like trajectories of elliptical billiard}
\label{sec:light-like}

In this section we consider in more detail light-like trajectories of elliptical billiard, see Figure \ref{fig:traj} for an example of such a trajectory.
\begin{figure}[h]
\input{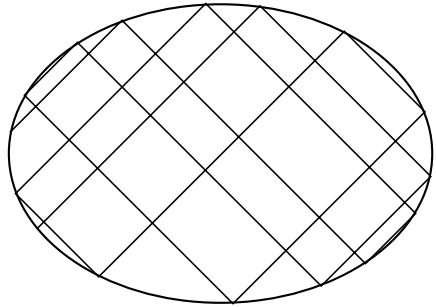}
\caption{Light-like billiard trajectory.}\label{fig:traj}
\end{figure}
We are going to review results from \cite{DragRadn2012adv} and illustrate them by some examples.

\subsubsection*{Periodic light-like trajectories}

Let us first notice that successive segments of light-like billiard trajectories are orthogonal to each other (see Figure \ref{fig:traj}), thus a trajectory can close only after even number of reflections.

The analytic condition for $n$-periodicity of light-like billiard trajectory within the ellipse $\mathcal{E}$ given by equation (\ref{eq:ellipse}) can be derived as in Theorem \ref{th:cayley2}.
We get the condition stated in (\ref{eq:cayley2even}), with $\alpha=\infty$, i.e.~ $(B_i)$ are coefficients in the Taylor expansion around $t=0$ of
$\sqrt{(a-t)(b+t)} = B_0 + B_1t + B_2t^2 + \dots$.

Now, we are going to derive analytic condition for periodic light-like trajectories in another way, which will lead to a more compact form of (\ref{eq:cayley2even}).

By applying affine transformations, one can transform an ellipse into a circle, and the billiard map on the light-like lines becomes conjugated to a rotation of the circle.
Computing the angle of the rotation gives the following: 

\begin{theorem}\label{th:arctan}
Light-like billiard trajectory within ellipse $\mathcal{E}$ is
periodic with period $n$, where $n$ is an even integer if and only
if
\begin{equation}\label{eq:cayley_arctan}
\arctan\sqrt{\frac ab}\in\left\{\ \frac{k\pi}n\ \left|\ 1\le k<\frac
n2,\ \left(k,\frac{n}{2}\right)=1\right.\ \right\}.
\end{equation}
\end{theorem}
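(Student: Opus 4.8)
The plan is to transport the problem to the unit circle, where the light-like billiard map becomes a rotation, and then read off periodicity arithmetically. First I would apply the affine map $x=\sqrt{a}\,u$, $y=\sqrt{b}\,v$, which sends $\mathcal{E}$ to the unit circle $u^2+v^2=1$ and sends the two light-like directions $(1,\pm1)$ to the directions of slope $\pm\sqrt{a/b}$, that is, to the angles $\pm\theta_0$ with $\theta_0=\arctan\sqrt{a/b}\in\left(0,\frac{\pi}{2}\right)$. By Remark~\ref{remark:type} every segment of a light-like trajectory within $\mathcal{E}$ is again light-like, hence of slope $+1$ or $-1$, and (since consecutive segments are orthogonal) the slopes alternate; one checks from the Minkowski reflection law that the outgoing segment at a generic reflection point is indeed the unique light-like chord of the opposite slope issuing from that point. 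Consequently, under the affine map the trajectory becomes a zigzag of chords of the unit circle alternating between the fixed directions $\pm\theta_0$, consecutive chords sharing an endpoint on the circle; and since a line of a prescribed direction through a point of the circle meets it again in a unique second point, such a zigzag is completely determined by its initial point and initial direction. The conceptual point is that although an affine map need not preserve the Minkowski reflection law, it preserves incidence and the alternation of the two directions, and $n$-periodicity is a purely combinatorial/incidence property; hence the period of the original trajectory equals the period of its image.

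Second, I would compute the chord-traversal map in the angular coordinate $\phi\mapsto(\cos\phi,\sin\phi)$. A short sum-to-product identity shows that the chord joining $\phi_1$ and $\phi_2$ has direction angle $\frac{\phi_1+\phi_2}{2}+\frac{\pi}{2}\pmod{\pi}$, so that ``move along the chord of direction $\theta$'' acts by $\phi\mapsto 2\theta-\pi-\phi\pmod{2\pi}$. Composing a step in direction $+\theta_0$ with a step in direction $-\theta_0$ therefore yields $\phi\mapsto\phi-4\theta_0\pmod{2\pi}$, so after $2j$ reflections the reflection point is $\phi_{2j}=\phi_0-4j\theta_0\pmod{2\pi}$, and the odd-indexed ones are $\phi_{2j+1}=(4j+2)\theta_0-\pi-\phi_0\pmod{2\pi}$.

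Third, I would extract the periodicity condition. The number of segments of a closed trajectory is even, say $n=2m$ (this is the hypothesis), and for even $n$ the direction automatically matches after $n$ steps, so closing up amounts to $\phi_{2m}=\phi_0$, i.e.\ $4m\theta_0\equiv0\pmod{2\pi}$, i.e.\ $\theta_0=\frac{k\pi}{n}$ for some integer $k$; since $\theta_0\in\left(0,\frac{\pi}{2}\right)$ this forces $1\le k<\frac{n}{2}$. Finally, the least even $n'=2m'$ with $\phi_{n'}=\phi_0$ is the least $m'$ with $m\mid m'k$, namely $m'=m/\gcd(m,k)$, so the minimal period is $n/\gcd\!\left(k,\frac n2\right)$, which equals $n$ precisely when $\left(k,\frac n2\right)=1$. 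Combining these gives exactly condition (\ref{eq:cayley_arctan}).

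I expect the main obstacles to be the two ``bookkeeping'' verifications rather than the computations: (a) checking, via the Minkowski reflection law together with Remark~\ref{remark:type}, that a light-like chord really reflects off $\mathcal{E}$ into the chord of the opposite light-like slope, so that the alternation used above is legitimate, and the parallel check that the affine map transports $n$-periodicity faithfully; and (b) handling the finitely many degenerate light-like trajectories that pass through a point where $\mathcal{E}$ has a light-like tangent line (a touching point of one of the four common tangents), where a reflection is counted twice — these must either be excluded or absorbed into the doubling convention, and one verifies the stated formula is unaffected. The trigonometric identity in the second step and the $\gcd$ computation in the third are routine.
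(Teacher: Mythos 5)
Your proposal is correct and follows essentially the same route the paper takes (and fleshes out the details the paper only sketches): the affine normalization to the unit circle, the observation that the light-like dynamics is the composition of two direction-involutions and hence conjugate to the rotation by $4\arctan\sqrt{a/b}$, and the $\gcd$ analysis giving exact period $n$. The chord-angle identity, the step map $\phi\mapsto 2\theta-\pi-\phi$, and the resulting minimal-period computation all check out, and you correctly flag the only delicate points (alternation of the two light-like slopes under the Minkowski reflection law and the doubling convention at the four points with light-like tangents).
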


As an immediate consequence, we get
\begin{corollary}\label{cor:cayley.euler}
For a given even integer $n$, the number of different ratios of the
axes of ellipses having $n$-periodic light-like billiard
trajectories is equal to:
$$
\begin{cases}
\varphi(n)/2 & \text{if}\ \ n\ \text{is not divisible by}\ 4,
 \\
\varphi(n)/4 & \text{if}\ \ n\ \text{is divisible by}\ 4.
\end{cases}
$$
$\varphi$ is the Euler's totient function, i.e.\ the number of
positive integers not exceeding $n$ that are relatively prime to
$n$.
\end{corollary}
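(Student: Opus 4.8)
The plan is to read the corollary off Theorem~\ref{th:arctan} by a counting argument. By that theorem, an ellipse $\mathcal{E}\colon x^2/a+y^2/b=1$ carries an $n$-periodic light-like trajectory precisely when $\arctan\sqrt{a/b}=k\pi/n$ for some integer $k$ with $1\le k<n/2$ and $\gcd(k,n/2)=1$; equivalently, when $a/b=(\tan(k\pi/n))^{2}$. Since $\tan$ is a bijection of $(0,\pi/2)$ onto $(0,\infty)$, distinct admissible indices $k$ give pairwise distinct values of the ratio $a/b$. Thus the number of \emph{ordered} ratios $a/b$ realized equals the number of integers in the interval $[1,n/2)$ that are coprime to $n/2$; for $n>2$ this equals $\varphi(n/2)$, because $n/2$ itself is the only integer of $\{1,\dots,n/2\}$ lying outside $[1,n/2)$ and it is not coprime to $n/2$.

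Next I would pass from ordered ratios to genuine \emph{ratios of the axes}: interchanging the roles of the two semi-axes replaces $a/b$ by $b/a$, hence replaces $\arctan\sqrt{a/b}=k\pi/n$ by $\arctan\sqrt{b/a}=\pi/2-k\pi/n=(n/2-k)\pi/n$. So this identification is induced by the involution $k\mapsto n/2-k$ on the admissible index set, which is well defined because $\gcd(n/2-k,\,n/2)=\gcd(k,n/2)$. The number of distinct axis ratios is therefore the number of orbits of this involution. For $n>4$ the involution has no fixed admissible index: a fixed point would force $k=n/4$, hence $4\mid n$, and then $\gcd(n/4,n/2)=n/4>1$, contradicting admissibility. (For $n=2$ there are no admissible indices at all, and for $n=4$ the single admissible index $k=1$ is a fixed point, corresponding to the circle $a=b$; these small cases are degenerate.) Consequently, for $n>4$ the admissible indices fall into $\varphi(n/2)/2$ two-element orbits, with distinct orbits giving distinct axis ratios, so the number of axis ratios is exactly $\varphi(n/2)/2$.

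It then remains to express $\varphi(n/2)/2$ through $\varphi(n)$, a standard computation with the totient. If $4\nmid n$ then $n/2$ is odd, so $\varphi(n)=\varphi(2)\varphi(n/2)=\varphi(n/2)$ and the count is $\varphi(n)/2$. If $4\mid n$, write $n=2^{a}r$ with $r$ odd and $a\ge 2$; then $\varphi(n)=2^{a-1}\varphi(r)$ while $\varphi(n/2)=\varphi(2^{a-1})\varphi(r)=2^{a-2}\varphi(r)=\varphi(n)/2$, so the count is $\varphi(n)/4$. This is the asserted formula.

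There is no deep obstacle here; the one point that needs care is the bookkeeping of the symmetry $a\leftrightarrow b$ — it is this identification that supplies the extra factor of $2$ beyond $\varphi(n/2)$ — together with the verification that the induced involution on admissible indices is fixed-point free for $n>4$, so that every orbit has exactly two elements and no ratio is counted twice or missed.
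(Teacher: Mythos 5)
Your proof is correct and follows the route the paper intends: the corollary is stated there as an immediate consequence of Theorem~\ref{th:arctan}, and your count of admissible indices $k$ (namely $\varphi(n/2)$ of them) together with the fixed-point-free involution $k\mapsto n/2-k$ induced by the symmetry $a\leftrightarrow b$, plus the identity $\varphi(n/2)=\varphi(n)$ or $\varphi(n)/2$ according as $4\nmid n$ or $4\mid n$, is exactly the bookkeeping being left implicit. Your aside that $n=4$ is genuinely exceptional (one ratio, the circle, whereas $\varphi(4)/4=1/2$) points to a defect of the stated formula itself rather than of your argument.
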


\begin{remark}\label{rem:nk}
There are $4$ points on $\mathcal{E}$ where the tangents are
light-like. Those points cut $4$ arcs on $\mathcal{E}$. An
$n$-periodic trajectory within $\mathcal{E}$ hits each one of a pair
of opposite arcs exactly $k$ times, and $\dfrac{n}2-k$ times the
arcs from the other pair.
\end{remark}

\begin{example}[$10$-perodic light-like trajectories]
For $n=10$, the condition (\ref{eq:cayley2even}) is:
$$
\det\left(
\begin{array}{llll}
 B_3 & B_4 & B_5 & B_6\\
 B_4 & B_5 & B_6 & B_7\\
 B_5 & B_6 & B_7 & B_8\\
 B_6 & B_7 & B_8 & B_9
\end{array}\right)
 =
\frac{(a + b)^{20} (5 a^2 - 10 a b + b^2) (a^2 - 10 a b +
   5 b^2)}{(4ab)^{22}}=0.
$$
From here, we get that light-like billiard trajectories are
$10$-periodic in ellipses with the ratio of the axes equal to either
$\sqrt{1+\dfrac{2}{\sqrt{5}}}$ or $\sqrt{5+2\sqrt{5}}$.

From condition (\ref{eq:cayley_arctan}), we get:
$$
\sqrt{\frac{a}{b}}\in
 \left\{
\tan\frac{\pi}{10}, \tan\frac{2\pi}{10}, \tan\frac{3\pi}{10},
\tan\frac{4\pi}{10}
 \right\}.
$$
Since
\begin{align*}
\tan\frac{\pi}{10}=\sqrt{1-\frac2{\sqrt5}}&=\frac1{\sqrt{5+2\sqrt5}}=\frac1{\tan\frac{4\pi}{10}}
 \\
\tan\frac{3\pi}{10}=\sqrt{1+\frac2{\sqrt5}}&=\frac1{\sqrt{5-2\sqrt5}}=\frac1{\tan\frac{2\pi}{10}},
\end{align*}
both conditions give the same result.

A few of such trajectories are shown on Figures \ref{fig:period10}
and \ref{fig:period10b}.
\end{example}
\begin{figure}[h]
\input{0-figures/period10}
\caption{Light-like billiard trajectories with period 10 in the
ellipse satisfying $a=(1+2/\sqrt{5})b$.} \label{fig:period10}
\end{figure}

\begin{figure}[h]
\input{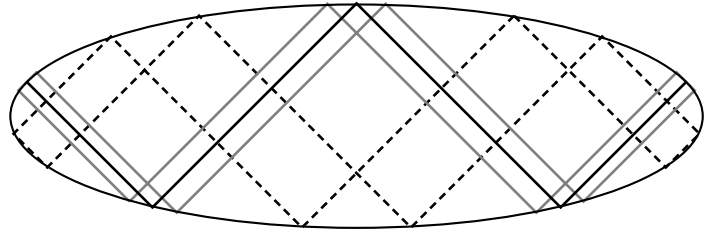}
\caption{Light-like billiard trajectories with period 10 in the
ellipse satisfying $a=(5+2\sqrt{5})b$.} \label{fig:period10b}
\end{figure}

\subsubsection*{Light-like trajectories in ellipses and rectangular billiards}

\begin{theorem}\label{th:rectangle}
The flow of light-like billiard trajectories within ellipse
$\mathcal{E}$ is trajectorially equivalent to the flow of those
billiard trajectories within a rectangle whose angle with the sides
is $\dfrac{\pi}4$. The ratio of the sides of the rectangle is equal
to:
$$
\frac{\pi}{2\arctan\sqrt{\dfrac{a}{b}}}-1.
$$
\end{theorem}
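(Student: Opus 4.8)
The idea is to realize both flows as suspensions of the same circle rotation and to read off the ratio of the rectangle's sides from the suspension data attached to the four sides of the two tables.

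\emph{Step 1 (zigzag structure).} In the Minkowski plane there are exactly two light-like directions $d_1,d_2$ (the two slopes $\pm1$). By Remark~\ref{remark:type} the billiard reflection preserves the type, so every segment of a light-like trajectory inside $\mathcal{E}$ is again light-like, i.e.\ parallel to $d_1$ or to $d_2$; an incoming segment reflects onto a parallel one exactly when the tangent of $\mathcal{E}$ at the bounce point is itself light-like, which happens only at the four exceptional points of Remark~\ref{rem:nk} (where we keep the convention of counting the reflection twice). Hence a light-like trajectory is an inscribed polygon in $\mathcal{E}$ whose consecutive sides alternate in direction between $d_1$ and $d_2$.

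\emph{Step 2 (affine model on the unit circle).} Affine maps do not preserve the reflection law in general, but the property just extracted --- ``an inscribed polygon whose consecutive sides are alternately parallel to two fixed directions and meet on the conic'' --- is affine-invariant. Applying $\Phi\colon(x,y)\mapsto(x/\sqrt a,\,y/\sqrt b)$, which carries $\mathcal{E}$ onto the unit circle $\mathbf{S}$ and the directions $d_1,d_2$ onto the two directions making angles $\pm\psi$ with the $x$-axis, where $\psi:=\arctan\sqrt{a/b}\in\bigl(0,\tfrac{\pi}{2}\bigr)$, we conclude that the light-like billiard in $\mathcal{E}$ is conjugate under $\Phi$ to the \emph{alternating-chord dynamics} on $\mathbf{S}$: from a point of $\mathbf{S}$ follow the chord making angle $+\psi$ with $Ox$ to its second endpoint, then the chord making angle $-\psi$, and so on; the light-like segments in $\mathcal{E}$ correspond to these chords.

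\emph{Step 3 (reduction to a rotation; the four arcs).} On $\mathbf{S}$, passing to the second endpoint of a chord of a fixed direction is the restriction to $\mathbf{S}$ of the reflection of the plane across the diameter perpendicular to that direction; the two relevant diameters make the angle $2\psi$, so the composition of two consecutive steps is the rotation of $\mathbf{S}$ by $4\psi$ --- thus the square of the billiard map is this rotation, and the whole light-like flow in $\mathcal{E}$ is the suspension of the corresponding (two-state) return map. Moreover the four points of $\mathbf{S}$ with light-like tangents sit at the angles $\tfrac{\pi}{2}\pm\psi$ and $-\tfrac{\pi}{2}\pm\psi$, hence cut $\mathbf{S}$ into four arcs whose angular lengths, in cyclic order, are $2\psi,\ \pi-2\psi,\ 2\psi,\ \pi-2\psi$; these arcs are the $\Phi$-images of the four arcs of Remark~\ref{rem:nk} and play the role of the sides of the table.

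\emph{Step 4 (the rectangular model and the matching).} On the other hand, the billiard in a rectangle $[0,p]\times[0,q]$ at angle $\pi/4$ to the sides unfolds, in the usual way, to the linear flow of slope $1$ on the flat torus $\mathbf{R}^2/(2p\mathbf{Z}\times 2q\mathbf{Z})$; it too is a suspension of a circle rotation, and its four sides are visited with ``return lengths'' proportional to $p$ (the two vertical sides) and $q$ (the two horizontal sides). Matching the two suspensions --- equivalently, comparing the cutting sequences of the respective orbits relative to the two pairs of opposite arcs/sides --- identifies the light-like flow in $\mathcal{E}$ with the $\pi/4$-flow in the rectangle whose sides are in the ratio $2\psi:(\pi-2\psi)$, that is, the ratio of the sides equals
$$
\frac{\pi-2\psi}{2\psi}=\frac{\pi}{2\arctan\sqrt{a/b}}-1 .
$$
As a consistency check: a light-like trajectory of combinatorial length $n=2m$ exists precisely when $\psi=k\pi/n$ with $(k,m)=1$ (Theorem~\ref{th:arctan}); under the identification this is the rectangle with side ratio $(m-k):k$, whose $\pi/4$-trajectory closes after $2(k+(m-k))=2m=n$ reflections and hits the two pairs of opposite sides $k$ and $m-k$ times, in agreement with Remark~\ref{rem:nk} and Corollary~\ref{cor:cayley.euler}.

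\emph{Main obstacle.} The delicate part is Step~4: one must upgrade the coincidence of the underlying circle rotations to a genuine trajectorial (orbit) equivalence of the two \emph{flows}, by checking that the full suspension data match --- i.e.\ that the two pairs of opposite arcs of $\mathbf{S}$ correspond, through the conjugacy and the two-state rotation structure, to the two pairs of opposite sides of the rectangle in the correct way (and with the correct cyclic order). This is also what fixes the orientation conventions and ensures that the side ratio comes out as $\dfrac{\pi}{2\arctan\sqrt{a/b}}-1$ rather than its reciprocal. The remaining points --- the four exceptional light-like-tangent points together with the ``count the reflection twice'' convention, and the degenerate case $a=b$ (which gives a square) --- are routine.
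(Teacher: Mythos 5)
The paper itself states this theorem without proof (it is reviewed from \cite{DragRadn2012adv}), but the route it indicates --- the affine conjugation to a circle used for Theorem \ref{th:arctan}, the four light-like tangency points of Remark \ref{rem:nk}, and the invariant measure with $m(AB)=m(CD)$, $m(BC)=m(AD)$ in the remark following the theorem --- is exactly the one you follow. Your Steps 1--3 are correct, and the final formula and the consistency check against Theorem \ref{th:arctan} and Corollary \ref{cor:cayley.euler} both work out.

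The only reservation concerns Step 4, and it cuts in your favor: as written, the frequency/cutting-sequence comparison only shows that \emph{if} the two flows are trajectorially equivalent \emph{then} the side ratio is forced to be $(\pi-2\psi)/(2\psi)$; it does not by itself produce the equivalence. But the verification you defer is a short computation, not a delicate matching of suspension data. On the unit circle, each of the two chord maps of directions $\pm\psi$ is the Euclidean reflection $\phi\mapsto(\pm2\psi-\pi)-\phi$, an angle-preserving involution whose two fixed points are the tangency points of that direction; the four fixed points cut the circle into arcs of lengths $2\psi,\ \pi-2\psi,\ 2\psi,\ \pi-2\psi$. On the boundary of the rectangle $[0,\pi-2\psi]\times[0,2\psi]$ (perimeter $2\pi$), parametrized by arc length $s$, the two chord maps of slopes $\pm1$ are likewise the reflections $s\mapsto -s$ and $s\mapsto 4\psi-s$, each fixing one pair of opposite corners, and the corners cut the boundary into arcs of the same four lengths in the same cyclic order. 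Hence an explicit isometry $\phi\mapsto\pm\phi+c$ of the two parametrizing circles conjugates one pair of involutions to the other, so it conjugates the two alternating-chord boundary maps; extending it affinely along each chord gives the trajectorial equivalence of the flows, and the side ratio $(\pi-2\psi)/(2\psi)=\dfrac{\pi}{2\arctan\sqrt{a/b}}-1$ follows. With that paragraph added, your proof is complete.
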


\begin{remark}
The flow of light-light billiard trajectories within a given oval in the Minkowski plane will be trajectorially equivalent to the flow of certain trajectories within a rectangle whenever invariant measure $m$ on the oval exists such that $m(AB)=m(CD)$ and $m(BC)=m(AD)$, where $A$, $B$, $C$, $D$ are points on the oval where the tangents are light-like.
\end{remark}

\subsection{Relativistic quadrics}
\label{sec:relativistic}
The aim of this Section is to present relativistic quadrics, a very recent object, which has been a main tool in our study of billiard dynamics, see \cite{DragRadn2012adv}.
First, we study  geometrical types of quadrics in a confocal family in the three-dimensional Minkowski space; then we analyze tropic curves on quadrics in the three-dimensional case and we introduce an important notion of discriminant sets $\Sigma^{\pm}$ corresponding to a confocal family.
The main facts about discriminant sets we present in Propositions \ref{prop:tropic.surface}, \ref{prop:tropic.light}, \ref{prop:tetra.hyp}, \ref{prop:hyp.y.presek}.
After, we study curved tetrahedra $\mathcal{T}^{\pm}$, which represent singularity sets of $\Sigma^{\pm}$ and we collect related results in Proposition \ref{prop:tetra}.
As the next important step, we introduce decorated Jacobi coordinates for the three-dimensional Minkowski space, and we give a detailed description of the colouring into three colours. 
Each colour corresponds to a relativistic type, and we describe decomposition of a geometric quadric of each of the four  geometric types into relativistic quadrics. 
This appears to be a rather involved combinatorial-geometric problem, and we solve it by using previous analysis of discriminant surfaces. 
We give a complete description of all three relativistic types of quadrics.
At the end, we generalize definition of decorated Jacobi coordinates in arbitrary dimensions, and, finally,
in Proposition \ref{prop:svojstva12} we prove properties PE1 and PE2.

\subsubsection*{Confocal quadrics in the three-dimensional Minkowski space and their geometrical types}
Let us start with the three-dimensional Minkowski space $\mathbf{E}^{2,1}$.
A general confocal family of quadrics in this space is given by:
\begin{equation}\label{eq:confocal.quadrics3}
\mathcal{Q}_{\lambda}\ :\ \frac{x^2}{a-\lambda}+\frac{y^2}{b-\lambda}+\frac{z^2}{c+\lambda}=1,
\quad
\lambda\in\mathbf{R},
\end{equation}
with $a>b>0$, $c>0$, see Figure \ref{fig:konfM3}.
\begin{figure}[h]
\includegraphics[width=5.87cm, height=8.38cm]{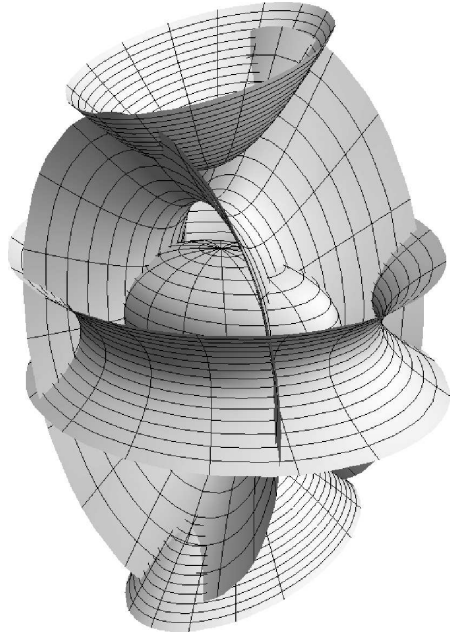}
\caption{Confocal quadrics in the three-dimensional Minkowski space.}\label{fig:konfM3}
\end{figure}

The family (\ref{eq:confocal.quadrics3}) contains four geometrical types of quadrics:
\begin{itemize}
\item
$1$-sheeted hyperboloids oriented along $z$-axis, for $\lambda\in(-\infty,-c)$;

\item
ellipsoids, corresponding to $\lambda\in(-c,b)$;

\item
$1$-sheeted hyperboloids oriented along $y$-axis, for $\lambda\in(b,a)$;

\item
$2$-sheeted hyperboloids, for $\lambda\in(a,+\infty)$ -- these hyperboloids are oriented along $z$-axis.
\end{itemize}
In addition, there are four degenerated quadrics: $\mathcal{Q}_{a}$, $\mathcal{Q}_{b}$, $\mathcal{Q}_{-c}$, $\mathcal{Q}_{\infty}$, that is planes $x=0$, $y=0$, $z=0$, and the plane at the infinity respectively.
In the coordinate planes, we single out the following conics: 
\begin{itemize}
\item
hyperbola $\mathcal{C}^{yz}_{a}\ :\ -\dfrac{y^2}{a-b}+\dfrac{z^2}{c+a}=1$ in the plane $x=0$;

\item
ellipse $\mathcal{C}^{xz}_{b}\ :\ \dfrac{x^2}{a-b}+\dfrac{z^2}{c+b}=1$ in the plane $y=0$;

\item
ellipse $\mathcal{C}^{xy}_{-c}\ :\ \dfrac{x^2}{a+c}+\dfrac{y^2}{b+c}=1$ in the plane $z=0$.
\end{itemize}

Notice that a confocal family of quadrics in the three-dimensional Euclidean space contains only $3$ types of quadrics, see Figure \ref{fig:konfE3}.

\subsubsection*{Tropic curves on quadrics in the three-dimensional Minkowski space and discriminant sets $\Sigma^{\pm}$}
On each quadric, notice the \emph{tropic curves} -- the set of points where the induced metrics on the tangent plane is degenerate.

Since the tangent plane at point $(x_0,y_0,z_0)$ of $\mathcal{Q}_{\lambda}$ is given by the equation:
$$
\frac{xx_0}{a-\lambda}+\frac{yy_0}{b-\lambda}+\frac{zz_0}{c+\lambda}=1,
$$
and the induced metric is degenerate if and only if the parallel plane that contains the origin is tangent to the light-like cone $x^2+y^2-z^2=0$, i.e.:
$$
\frac{x_0^2}{(a-\lambda)^2}+\frac{y_0^2}{(b-\lambda)^2}-\frac{z_0^2}{(c+\lambda)^2}=0,
$$
we get that the tropic curves on $\mathcal{Q}_{\lambda}$ are the intersection of the quadric with the cone:
\begin{equation*}
\frac{x^2}{(a-\lambda)^2}+\frac{y^2}{(b-\lambda)^2}-\frac{z^2}{(c+\lambda)^2}=0,
\end{equation*}
see \cite{KhTab2009}.

Now, consider the set of the tropic curves on all quadrics of the family (\ref{eq:confocal.quadrics3}).

\begin{proposition}\label{prop:tropic.surface}
The union of the tropic curves on all quadrics of (\ref{eq:confocal.quadrics3}) is a union of two ruled surfaces $\Sigma^+$ and $\Sigma^-$ which can be parametrically represented as:
\begin{align*}
&\Sigma^+\ :\quad
x = \frac{a-\lambda}{\sqrt{a+c}}\cos t,\quad
y = \frac{b-\lambda}{\sqrt{b+c}}\sin t,\quad
z = (c+\lambda)\sqrt{\frac{\cos^2t}{a+c}+\frac{\sin^2t}{b+c}},\\
&\Sigma^-\ :\quad
x = \frac{a-\lambda}{\sqrt{a+c}}\cos t,\quad
y = \frac{b-\lambda}{\sqrt{b+c}}\sin t,\quad
z = -(c+\lambda)\sqrt{\frac{\cos^2t}{a+c}+\frac{\sin^2t}{b+c}},\\
&\text{with}\ \lambda\in\mathbf{R},\quad t\in[0,2\pi).
\end{align*}
The intersection of these two surfaces is an ellipse in the $xy$-plane:
$$
\Sigma^+\cap\Sigma^-\ : \ \frac{x^2}{a+c}+\frac{y^2}{b+c}=1,\ z=0.
$$
The two surfaces $\Sigma^+$, $\Sigma^-$ are developable as embedded into Euclidean space.
Moreover, their generatrices are all light-like.
\end{proposition}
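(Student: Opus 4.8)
The plan is to verify the explicit parametrizations of $\Sigma^{\pm}$ by direct substitution, and then to read off developability, the light-like property, and the intersection from the ruled structure this makes manifest. Throughout, write $A=a+c>0$, $B=b+c>0$, and recall from the excerpt that the tropic curve of $\mathcal{Q}_{\lambda}$ is cut out by the two equations $\frac{x^2}{a-\lambda}+\frac{y^2}{b-\lambda}+\frac{z^2}{c+\lambda}=1$ and $\frac{x^2}{(a-\lambda)^2}+\frac{y^2}{(b-\lambda)^2}-\frac{z^2}{(c+\lambda)^2}=0$.

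\emph{The parametrization.} First I would substitute $x=\frac{a-\lambda}{\sqrt{A}}\cos t$, $y=\frac{b-\lambda}{\sqrt{B}}\sin t$ into the cone equation: the first two terms become $\frac{\cos^2 t}{A}+\frac{\sin^2 t}{B}$, which is exactly the radicand in the formula for $z$, so the cone equation holds identically for both signs of $z$. Substituting into $\mathcal{Q}_{\lambda}$ then gives $\cos^2 t\cdot\frac{(a-\lambda)+(c+\lambda)}{A}+\sin^2 t\cdot\frac{(b-\lambda)+(c+\lambda)}{B}=\cos^2 t+\sin^2 t=1$, so the points of $\Sigma^{\pm}$ indeed lie on tropic curves. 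For the converse, given a point on the tropic curve of some non-degenerate $\mathcal{Q}_{\lambda}$, I would eliminate $z$: using the cone equation to replace $\frac{z^2}{c+\lambda}$ by $(c+\lambda)\bigl(\frac{x^2}{(a-\lambda)^2}+\frac{y^2}{(b-\lambda)^2}\bigr)$ inside the equation of $\mathcal{Q}_{\lambda}$ collapses it to
\begin{equation*}
\frac{A x^2}{(a-\lambda)^2}+\frac{B y^2}{(b-\lambda)^2}=1 .
\end{equation*}
Since the two summands are nonnegative and add to $1$, there is a unique $t\in[0,2\pi)$ with $\cos t=\frac{\sqrt{A}\,x}{a-\lambda}$, $\sin t=\frac{\sqrt{B}\,y}{b-\lambda}$, and then the cone equation forces $z=\pm(c+\lambda)\sqrt{\tfrac{\cos^2 t}{A}+\tfrac{\sin^2 t}{B}}$; the two signs recover $\Sigma^{+}$ and $\Sigma^{-}$. (The degenerate members $\lambda\in\{a,b,-c,\infty\}$ are treated by continuity.) This proves the set-equality together with the stated parametrizations.

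\emph{Ruled structure, developability, light-like generatrices.} Putting $s=c+\lambda$ turns the $\Sigma^{+}$ parametrization into $X^{+}(s,t)=\gamma(t)+s\,\delta^{+}(t)$, where $\gamma(t)=\bigl(\sqrt{A}\cos t,\ \sqrt{B}\sin t,\ 0\bigr)$ and $\delta^{+}(t)=\bigl(-\tfrac{\cos t}{\sqrt{A}},\ -\tfrac{\sin t}{\sqrt{B}},\ g(t)\bigr)$ with $g(t)=\sqrt{\tfrac{\cos^2 t}{A}+\tfrac{\sin^2 t}{B}}>0$; $\Sigma^{-}$ is the same with $g$ replaced by $-g$, so $\Sigma^{-}$ is the reflection of $\Sigma^{+}$ in the plane $z=0$. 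This exhibits both surfaces as ruled with common base curve $\gamma$, which is precisely the ellipse $\frac{x^2}{A}+\frac{y^2}{B}=1$, $z=0$ (the locus $s=0$). For developability in the Euclidean sense I would compute the scalar triple product $\det\!\bigl[\gamma'(t),\,\delta^{+}(t),\,(\delta^{+})'(t)\bigr]$; using $g(t)g'(t)=\cos t\sin t\cdot\frac{A-B}{AB}$ and $AB\,g(t)^2=B\cos^2 t+A\sin^2 t$, the two terms of its expansion cancel identically, so the triple product vanishes, and likewise for $\Sigma^{-}$. Finally $\langle\delta^{+}(t),\delta^{+}(t)\rangle_{2,1}=\frac{\cos^2 t}{A}+\frac{\sin^2 t}{B}-g(t)^2=0$, and the same for $\delta^{-}$, so every generatrix is light-like.

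\emph{The intersection.} The inclusion $E\subseteq\Sigma^{+}\cap\Sigma^{-}$ is immediate (take $\lambda=-c$). For the reverse inclusion, on either surface $z=s\,g(t)$ with $g(t)>0$, so any common point with $z=0$ has $s=0$ and lies on $\gamma$, i.e.\ on $E$; it remains to exclude common points with $z\neq 0$. Here I would use the ruling: such a point lies on a unique generatrix $\ell^{+}$ of $\Sigma^{+}$, meeting $E$ at some $\gamma(t_1)$, and on a unique generatrix $\ell^{-}$ of $\Sigma^{-}$, meeting $E$ at some $\gamma(t_2)$; if $t_1=t_2$ the two lines share only $\gamma(t_1)$ (their directions differ in the third coordinate, $+g$ versus $-g$), so the common point would lie in $E$, contrary to assumption. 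The remaining case $t_1\neq t_2$ is \textbf{the main obstacle}. Equating the three coordinates along $\ell^{+}$ and $\ell^{-}$ and eliminating the two line parameters shows that a genuine intersection requires
\begin{equation*}
A(\cos t_1-\cos t_2)\bigl(g(t_2)\sin t_1+g(t_1)\sin t_2\bigr)=B(\sin t_1-\sin t_2)\bigl(g(t_2)\cos t_1+g(t_1)\cos t_2\bigr)
\end{equation*}
together with non-vanishing of the relevant denominator; I would show this is impossible by a direct expansion using $AB\,g(t_i)^2=B\cos^2 t_i+A\sin^2 t_i$ (checking separately the degenerate subcase $t_2=t_1+\pi$, where $\ell^{+}$ and $\ell^{-}$ become parallel and hence disjoint), or alternatively by combining the sign relation $\sign z=\sign(c+\lambda)$ valid on $\Sigma^{+}$ with $\sign z=-\sign(c+\lambda)$ valid on $\Sigma^{-}$ and the uniqueness of the caustic parameter of a tropic point furnished by Theorem \ref{th:parametri.kaustike}. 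Once this is ruled out, $\Sigma^{+}\cap\Sigma^{-}=E$, completing the proof.
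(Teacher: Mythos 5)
The paper states this proposition without proof (it is imported from \cite{DragRadn2012adv}), so there is nothing to compare against; judged on its own, your argument is sound in structure and almost complete. The verification of the parametrization in both directions is correct (the key identity being that the cone equation is precisely $\partial_\lambda$ of the quadric equation, so eliminating $z$ yields $\frac{Ax^2}{(a-\lambda)^2}+\frac{By^2}{(b-\lambda)^2}=1$), the rewriting $X^{\pm}(s,t)=\gamma(t)+s\,\delta^{\pm}(t)$ with $s=c+\lambda$ is right, and I checked your developability computation: the triple product reduces to $\frac{g}{\sqrt{AB}}\bigl(AB\,gg'-(A-B)\sin t\cos t\bigr)$, which vanishes by your identity $gg'=\sin t\cos t\,\frac{A-B}{AB}$. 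The light-like property of $\delta^{\pm}$ is immediate. The one genuinely unfinished step is the one you flag: excluding points of $\Sigma^{+}\cap\Sigma^{-}$ with $z\neq0$. Your first strategy (the displayed trigonometric condition) is left unverified and would be painful; your second strategy is the right one but is miscited --- the relevant fact is not Theorem \ref{th:parametri.kaustike} (which concerns caustics of lines), but the multiple-root characterization of tropic curves stated in the paper just before Proposition \ref{prop:tropic.light}.

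To close the gap cleanly: clearing denominators in (\ref{eq:jednacina}) gives a polynomial $P(\lambda)$ of degree exactly $3$ with leading coefficient $-1$, and since the cone equation is $F'(\lambda_0)=0$ where $F$ is the left side of (\ref{eq:jednacina}) minus $1$, every point of $\Sigma^{\pm}$ with parameter $\lambda_0$ makes $\lambda_0$ a multiple root of $P$ (this persists for the degenerate values $\lambda_0\in\{a,b,-c\}$, where the corresponding points land on the focal conics). A cubic cannot have two distinct multiple roots, so a common point of $\Sigma^{+}$ and $\Sigma^{-}$ must carry the same parameter $\lambda$ on both surfaces; then either $\lambda\notin\{a,b\}$, in which case $t$ is determined by $(x,y)$ and $z=(c+\lambda)g(t)=-(c+\lambda)g(t)$ forces $z=0$, or $\lambda\in\{a,b\}$, in which case $g(t)=g(u)$ still holds (it depends only on $\cos^2t,\sin^2t$) and $z=+(c+\lambda)g(t)$ cannot equal $-(c+\lambda)g(u)$ unless $z=0$. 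Since $z=0$ forces $s=0$ (as $g>0$), the intersection is exactly the ellipse $\gamma$. With this substitution for your final paragraph the proof is complete.
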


Surfaces $\Sigma^+$ and $\Sigma^-$ from Proposition \ref{prop:tropic.surface} are represented on Figure \ref{fig:tropic.surface}.
\begin{figure}[h]
\includegraphics[width=5.6cm, height=7.5cm]{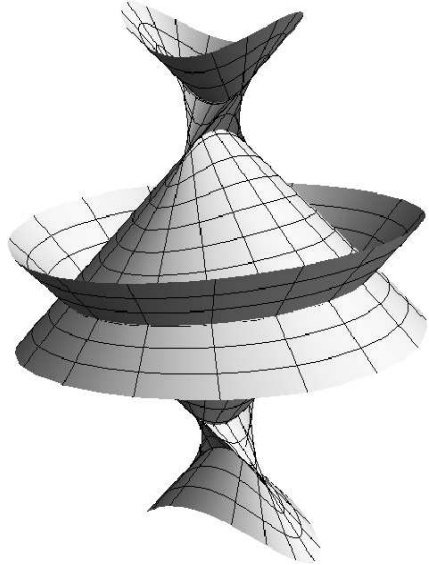}
\caption{The union of all tropic curves of a confocal family.}\label{fig:tropic.surface}
\end{figure}

In \cite{Pei1999}, a definition of generalization of Gauss map to surfaces in the three-dimensional Minkowski space is suggested.
Namely, the \emph{pseudo vector product} is introduced as:
$$
\mathbf{x}\wedge\mathbf{y}=
\left|\begin{array}{rrr}
x_1 & x_2 & x_3\\
y_1 & y_2 & y_3\\
e_1 & e_2 & -e_3
\end{array}\right|=(x_2y_3-x_3y_2,\ x_3y_1-x_1y_3,\ -(x_1y_2-x_2y_1))=E_{2,1}(\mathbf{x}\times\mathbf{y}).
$$
It is easy to check that
$\left<\mathbf{x}\wedge\mathbf{y},\mathbf{x}\right>_{2,1}=\left<\mathbf{x}\wedge\mathbf{y},\mathbf{y}\right>_{2,1}=0$.

Then, for surface $S:U\to\mathbf{E}^{2,1}$, with $U\subset\mathbf{R}^2$,
\emph{the Minkowski Gauss map} is defined as:
$$
\mathcal{G}\ :\ U\to\mathbf{RP}^2,
\quad
\mathcal{G}(x_1,x_2)= \mathbf{P}\left(\frac{\partial S}{\partial x_1}\wedge\frac{\partial S}{\partial x_2}\right),
$$
where $\mathbf{P}:\mathbf{R}^3\setminus\{(0,0,0)\}\to\mathbf{RP}^2$ is the usual projectivization.

Since $\mathbf{r}_{\lambda}\wedge\mathbf{r}_t$ is light-like for all $\lambda$ and $t$, the Minkowski Gauss map of surfaces $\Sigma^{\pm}$ is singular at all points.

\emph{The pseudo-normal vectors} to $\Sigma^{\pm}$ are all light-like, thus these surfaces are \emph{light-like developable}, as defined in \cite{CI2010}.
There, a classification of such surfaces is given -- each is one part-by-part contained in the following:
\begin{itemize}
 \item
a light-like plane;
 \item
a light-like cone;
 \item
a tangent surface of a light-like curve.
\end{itemize}
Since $\Sigma^{+}$ and $\Sigma^{-}$ are contained neither in a plane nor in a cone, we expect that they will be tangent surfaces of some light-like curve, which is going to be shown in the sequel, see Corollary \ref{cor:tangent.surface} later in this section. 

On each of the surfaces $\Sigma^{+}$, $\Sigma^{-}$, we can notice that tropic lines corresponding to $1$-sheeted hyperboloids oriented along $y$-axies form one curved tetrahedron, see Figure \ref{fig:tropic.surface}.
Denote the tetrahedra by $\mathcal{T}^{+}$ and $\mathcal{T}^{-}$ respectively: they are symmetric with respect to the $xy$-plane.
On Figure \ref{fig:tetra}, tetrahedron $\mathcal{T}^{+}\subset\Sigma^{+}$ is shown.
\begin{figure}[h]
\includegraphics[width=6cm, height=6cm]{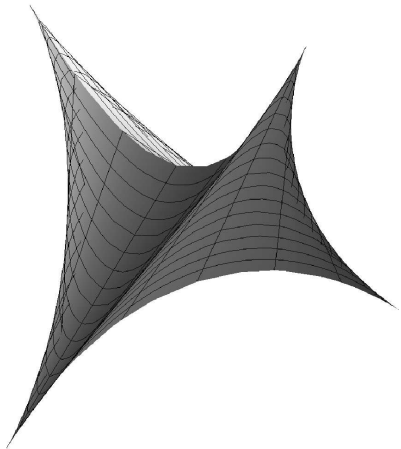}
\caption{Curved thetrahedron $\mathcal{T}^{+}$: the union of all tropic curves on $\Sigma^{+}$ corresponding to $\lambda\in(b,a)$.}\label{fig:tetra}
\end{figure}

Let us summarize the properties of these tetrahedra.
\begin{proposition}\label{prop:tetra}
Consider the subset $\mathcal{T}^{+}$ of $\Sigma^{+}$ determined by the condition $\lambda\in[b,a]$.
This set is a curved tetrahedron, with the following properties:
\begin{itemize}
 \item 
its verteces are:
\begin{align*}
&V_1\left(\frac{a-b}{\sqrt{a+c}},\ 0,\ \frac{b+c}{\sqrt{a+c}}\right),\quad
V_2\left(-\frac{a-b}{\sqrt{a+c}},\ 0,\ \frac{b+c}{\sqrt{a+c}}\right),\\
&V_3\left(0,\ \frac{a-b}{\sqrt{b+c}},\ \frac{a+c}{\sqrt{b+c}}\right),\quad
V_4\left(0,\ -\frac{a-b}{\sqrt{b+c}},\ \frac{a+c}{\sqrt{b+c}}\right);
\end{align*}
 \item
the shorter arcs of conics $\mathcal{C}^{xz}_{b}$ and $\mathcal{C}^{yz}_{a}$ determined by $V_1$, $V_2$ and $V_3$, $V_4$ respectively are two edges of the tetrahedron;
 \item
those two edges represent self-intersection of $\Sigma^{+}$;
 \item
other four edges are determined by the relation:
\begin{equation}\label{eq:cusp}
-a-b+2\lambda+(a-b)\cos2t=0,
\end{equation}
 \item
those four edges are cuspidal edges of $\Sigma^{+}$;
 \item
thus, at each vertex of the tetrahedron, a swallowtail singularity of $\Sigma^+$ occurs.
\end{itemize}
\end{proposition}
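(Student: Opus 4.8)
The plan is to work throughout with the explicit parametrization from Proposition~\ref{prop:tropic.surface}, writing $S(\lambda,t)=P(t)+\lambda Q(t)$ for $(\lambda,t)\in[b,a]\times[0,2\pi)$, where $P,Q\colon[0,2\pi)\to\mathbf{R}^{3}$ are the real-analytic vector functions read off from that parametrization (analyticity is clear since the radicand $\frac{\cos^{2}t}{a+c}+\frac{\sin^{2}t}{b+c}$ is bounded below by $1/(a+c)>0$). Since $S$ is affine in $\lambda$, $\mathcal{T}^{+}$ is a ruled surface with light-like generatrices $\ell_{t}$, and all the claims will be extracted from its structure as a developable. First I would record the four vertices by direct substitution: $V_{1}=S(b,0)$, $V_{2}=S(b,\pi)$, $V_{4}=S(a,\pi/2)$, $V_{3}=S(a,3\pi/2)$; these are exactly the four pairs $(\lambda,t)$ for which $\lambda\in\{b,a\}$ and the cuspidal relation \eqref{eq:cusp} hold simultaneously, so they are the only points at which the two kinds of edges below can meet.

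Second, I would identify the conic edges. Putting $\lambda=b$ forces $y\equiv0$, and eliminating $t$ from $x=\tfrac{a-b}{\sqrt{a+c}}\cos t$ and $z=(c+b)\sqrt{\tfrac{\cos^{2}t}{a+c}+\tfrac{\sin^{2}t}{b+c}}$ gives exactly $\tfrac{x^{2}}{a-b}+\tfrac{z^{2}}{c+b}=1$, i.e.\ the conic $\mathcal{C}^{xz}_{b}$; moreover $z\ge\tfrac{b+c}{\sqrt{a+c}}>0$ with equality precisely at $t\in\{0,\pi\}$, so the image is precisely the shorter arc of $\mathcal{C}^{xz}_{b}$ joining $V_{1}$ and $V_{2}$. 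The symmetric computation with $\lambda=a$ (so $x\equiv0$) gives the shorter arc of the hyperbola $\mathcal{C}^{yz}_{a}$ between $V_{3}$ and $V_{4}$. That these two arcs are curves of self-intersection of $\Sigma^{+}$ follows from the involutions $S(b,t)=S(b,-t)$ and $S(a,t)=S(a,\pi-t)$, which for $t\notin\{0,\pi\}$ produce two distinct parameters with the same image; conversely a point lying on the tropic curves of two distinct $\mathcal{Q}_{\lambda_{1}},\mathcal{Q}_{\lambda_{2}}$ with $\lambda_{1}\ne\lambda_{2}\in(b,a)$ would have to lie on two distinct $1$-sheeted hyperboloids of the confocal family oriented along the $y$-axis, which by the count of confocal quadrics through a point (cf.\ Theorem~\ref{th:parametri.kaustike}) forces $\{\lambda_{1},\lambda_{2}\}\cap\{b,a\}\ne\emptyset$, so these arcs carry all the self-intersections that are curves.

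Third, for the remaining four edges I would compute the singular locus of $S$. Developability of $\Sigma^{+}$ (Proposition~\ref{prop:tropic.surface}) means $Q(t),P'(t),Q'(t)$ are coplanar for each $t$, so on each ruling $\ell_{t}$ there is a unique $\lambda=\sigma(t)$ at which $\partial_{t}S\parallel\partial_{\lambda}S$; solving $\bigl(P'(t)+\lambda Q'(t)\bigr)\parallel Q(t)$ yields $\sigma(t)=b+(a-b)\sin^{2}t$, which is precisely \eqref{eq:cusp}. The curve $\gamma(t)=S(\sigma(t),t)$ is then the edge of regression of $\Sigma^{+}$, and $\Sigma^{+}$ is its tangent developable; writing $P'+\sigma Q'=\mu Q$ for the resulting scalar function $\mu$, one gets $\gamma'=(\mu+\sigma')Q$. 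Along each of the four open arcs $t\in(0,\tfrac\pi2),(\tfrac\pi2,\pi),(\pi,\tfrac{3\pi}2),(\tfrac{3\pi}2,2\pi)$ the curve $\gamma$ is biregular, and the tangent developable of a biregular space curve has an ordinary cuspidal edge along it (local model $(t,t^{2},t^{3})$); this gives the four cuspidal edges and incidentally exhibits $\Sigma^{+}$ as the tangent surface of a light-like curve. At $t\in\{0,\tfrac\pi2,\pi,\tfrac{3\pi}2\}$ one checks, using $P'+\sigma Q'=\mu Q$ and $\sigma'=(a-b)\sin2t$, that $\mu(t)+\sigma'(t)=0$ (both summands vanish), so $\gamma$ is singular there; a jet computation shows it is an ordinary cusp of $\gamma$, and the tangent developable of a curve at an ordinary cusp is a swallowtail (local model: the tangent developable of $(t^{2},t^{3},t^{4})$). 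Hence $\Sigma^{+}$ has a swallowtail at each $V_{i}$, where exactly two cuspidal edges and one self-intersection curve come together --- the combinatorics of a curved tetrahedron.

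The step I expect to be the main obstacle is the swallowtail identification: one must verify that at the four special parameter values the edge of regression $\gamma$ has a genuine first-order ($\mathcal{A}_{2}$) cusp --- equivalently $\gamma''\ne0$ there, i.e.\ $\mu'\ne-2(a-b)$, together with the appropriate non-degeneracy of $\gamma'''$ --- and that the family of ruling directions extends smoothly through it, so that one is really in the $(t^{2},t^{3},t^{4})$ situation rather than a more degenerate one. This is a finite but somewhat delicate jet computation; the remaining steps are routine manipulation of the parametrization.
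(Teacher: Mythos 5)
Your proposal is essentially correct, and the approach --- reading $\Sigma^{+}$ as a ruled surface $S(\lambda,t)=P(t)+\lambda Q(t)$ and locating the edges as (i) the two double-point arcs at $\lambda\in\{a,b\}$ and (ii) the edge of regression --- is the natural one; the paper itself gives no proof of this proposition (it is quoted from the reference \cite{DragRadn2012adv}), so there is nothing to compare line by line. I checked your key computations and they are right: the singular locus of the parametrization is $\lambda=b+(a-b)\sin^{2}t$, which is exactly \eqref{eq:cusp}, the third (developability) component is automatically consistent, and $\gamma'(t)=\tfrac{3(a-b)}{2}\sin 2t\,Q(t)$, so the regression curve degenerates precisely at $t\in\{0,\tfrac{\pi}{2},\pi,\tfrac{3\pi}{2}\}$, i.e.\ at the four vertices. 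The jet computation you flag as the main obstacle does close, and cleanly: writing $A=(a+c)^{-1/2}$, $B=(b+c)^{-1/2}$, $g=\sqrt{A^{2}\cos^{2}t+B^{2}\sin^{2}t}$, one finds that $Q,Q'$ are everywhere independent (the $2\times2$ minor of the first two components equals $AB\neq0$) and $\det(Q,Q',Q'')=(g+g'')AB=A^{3}B^{3}/g^{3}>0$ for all $t$; hence the torsion of $\gamma$ never vanishes on the open arcs (genuine cuspidal edges), and at the four vertices $h=\tfrac{3(a-b)}{2}\sin2t$ has a simple zero, so $\gamma$ has an ordinary $(2,3,4)$-cusp and its tangent developable is a swallowtail. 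The only inaccuracy is in your (unnecessary) converse for the self-intersection locus: Theorem \ref{th:parametri.kaustike} concerns points \emph{inside} the ellipsoid and does not directly apply to points of $\mathcal{T}^{+}$. The clean argument is the one the paper states immediately after the proposition: the tropic curve of $\mathcal{Q}_{\lambda_0}$ is the locus where $\lambda_0$ is a multiple root of the cubic \eqref{eq:jednacina}, and a cubic has at most one multiple root, so a self-intersection point forces $\lambda_1=\lambda_2$; then $S(\lambda,t_1)=S(\lambda,t_2)$ with $t_1\neq t_2$ forces $\lambda\in\{a,b\}$ from the $x$- and $y$-components. Since the proposition only asserts that the two conic arcs \emph{are} self-intersections (not that they are the only ones), this does not affect the validity of your proof.
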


It can be proved that the tropic curves of the quadric $\mathcal{Q}_{\lambda_0}$ represent exactly the locus of points $(x,y,z)$ where equation
\begin{equation}\label{eq:jednacina}
\frac{x^2}{a-\lambda}+\frac{y^2}{b-\lambda}+\frac{z^2}{c+\lambda}=1
\end{equation}
has $\lambda_0$ as a multiple root.

\begin{proposition}\label{prop:tropic.light}
A tangent line to the tropic curve of a non-degenarate quadric of the family (\ref{eq:confocal.quadrics3}) is always space-like, except on a $1$-sheeted hyperboloid oriented along $y$-axis.

Tangent lines of a tropic on $1$-sheeted hyperboloids oriented along $y$-axis are light-like exactely at four points, while at other points of the tropic curve, the tangents are space-like.

Moreover, a tangent line to the tropic of a quadric from (\ref{eq:confocal.quadrics3}) belongs to the quadric if and only if it is light-like.
\end{proposition}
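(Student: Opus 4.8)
The plan is to work directly from the explicit parametrization of $\Sigma^{\pm}$ in Proposition \ref{prop:tropic.surface}. Since the confocal family (\ref{eq:confocal.quadrics3}) and all its tropic curves are invariant under $(x,y,z)\mapsto(x,y,-z)$, it suffices to argue on $\Sigma^{+}$. Fix $\lambda$ and let $\mathbf{r}(t)=(x(t),y(t),z(t))$ be the tropic curve of the non-degenerate quadric $\mathcal{Q}_{\lambda}$ sitting inside $\Sigma^{+}$; its tangent line at $\mathbf{r}(t)$ is the line through $\mathbf{r}(t)$ in the direction $\mathbf{r}'(t)$, and this vector never vanishes (its first coordinate is $-\tfrac{a-\lambda}{\sqrt{a+c}}\sin t$, which is nonzero whenever $\sin t\neq 0$, and the cases $\sin t=0$ are handled separately). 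So everything reduces to computing the Minkowski square $\big\langle \mathbf{r}'(t),\mathbf{r}'(t)\big\rangle_{2,1}$ along $\mathbf{r}$, and — for the last assertion — a second, closely related quadratic form.

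First I would differentiate and simplify. The whole computation collapses thanks to the identity $(b+c)(a-\lambda)-(a+c)(b-\lambda)=(c+\lambda)(a-b)$, and one is left with
\[
\big\langle \mathbf{r}'(t),\mathbf{r}'(t)\big\rangle_{2,1}
=\frac{\big((a-\lambda)\sin^2 t+(b-\lambda)\cos^2 t\big)^2}{(b+c)\cos^2 t+(a+c)\sin^2 t}.
\]
(As a sanity check one also verifies that $\mathbf{r}_{\lambda}$, the light-like ruling direction of $\Sigma^{+}$, is $\langle\cdot,\cdot\rangle_{2,1}$-orthogonal to $\mathbf{r}'(t)$; being null, $\mathbf{r}_{\lambda}$ then lies in the radical of the induced form on $T\Sigma^{+}$, which is therefore positive semidefinite — a conceptual reason why the numerator above is a perfect square.) The denominator is strictly positive, so the tropic tangent is never time-like; it is light-like exactly when $(a-\lambda)\sin^2 t+(b-\lambda)\cos^2 t=0$, i.e.\ $\sin^2 t=\tfrac{\lambda-b}{a-b}$, which is solvable in $t\in[0,2\pi)$ only for $\lambda\in[b,a]$ and, for $\lambda\in(b,a)$, has exactly four solutions. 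Since $\lambda\in(b,a)$ is precisely the parameter range of $1$-sheeted hyperboloids oriented along the $y$-axis, this already gives the first two assertions: on ellipsoids, on $1$-sheeted hyperboloids oriented along $z$, and on $2$-sheeted hyperboloids the tangent is space-like everywhere, while on a $1$-sheeted hyperboloid oriented along $y$ it is light-like at exactly four points and space-like elsewhere. Using $\cos 2t=1-2\sin^2 t$ one sees that these four points are exactly the points where the tropic meets the cuspidal edges (\ref{eq:cusp}) of $\Sigma^{+}$, reconciling this with Proposition \ref{prop:tetra}.

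For the third assertion I would use the elementary fact that a line through $P\in\mathcal{Q}_{\lambda}$ in a tangent direction $v$ lies on $\mathcal{Q}_{\lambda}$ if and only if $\tfrac{v_1^2}{a-\lambda}+\tfrac{v_2^2}{b-\lambda}+\tfrac{v_3^2}{c+\lambda}=0$: writing $F$ for the defining quadratic polynomial of $\mathcal{Q}_{\lambda}$ and restricting $F$ to $P+sv$, the linear term vanishes by tangency and the quadratic term is $\tfrac{s^2}{2}$ times that expression, so $F$ is identically zero along the line precisely when it vanishes. Running the same cancellation as before for $v=\mathbf{r}'(t)$ gives
\[
\frac{x'(t)^2}{a-\lambda}+\frac{y'(t)^2}{b-\lambda}+\frac{z'(t)^2}{c+\lambda}
=\frac{(a-\lambda)\sin^2 t+(b-\lambda)\cos^2 t}{(b+c)\cos^2 t+(a+c)\sin^2 t},
\]
hence $\big\langle \mathbf{r}'(t),\mathbf{r}'(t)\big\rangle_{2,1}$ equals $\big((a-\lambda)\sin^2 t+(b-\lambda)\cos^2 t\big)$ times the right-hand side above. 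As the common positive denominator shows, the two quantities vanish for the same $t$; therefore the tropic tangent line lies on $\mathcal{Q}_{\lambda}$ exactly when it is light-like. (At such a point $\mathbf{r}'(t)$ is, moreover, proportional to $\mathbf{r}_{\lambda}$, so the tropic tangent there is the light-like generatrix of $\Sigma^{+}$ through that point, which recovers that a $1$-sheeted hyperboloid oriented along $y$ is doubly ruled by light-like lines, each tangent to its tropic.)

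The only real obstacle is the bookkeeping in the two norm computations: they are elementary but somewhat long, and the entire argument hinges on spotting the single cancellation produced by $(b+c)(a-\lambda)-(a+c)(b-\lambda)=(c+\lambda)(a-b)$; once both closed forms are in hand, all three statements follow by inspection of signs and zeros.
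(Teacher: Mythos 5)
Your proof is correct: the key identity $(b+c)(a-\lambda)-(a+c)(b-\lambda)=(c+\lambda)(a-b)$ does produce both closed forms as claimed, the light-like condition $(a-\lambda)\sin^2t+(b-\lambda)\cos^2t=0$ is solvable precisely for $\lambda\in(b,a)$ with four solutions per tropic, and it coincides with the cuspidal-edge equation (\ref{eq:cusp}) — which is exactly what the paper says its own (omitted) proof yields. So this is essentially the intended argument, a direct computation from the parametrization of Proposition \ref{prop:tropic.surface}, carried out correctly.
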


\begin{figure}[h]
\includegraphics[width=6cm, height=8.4cm]{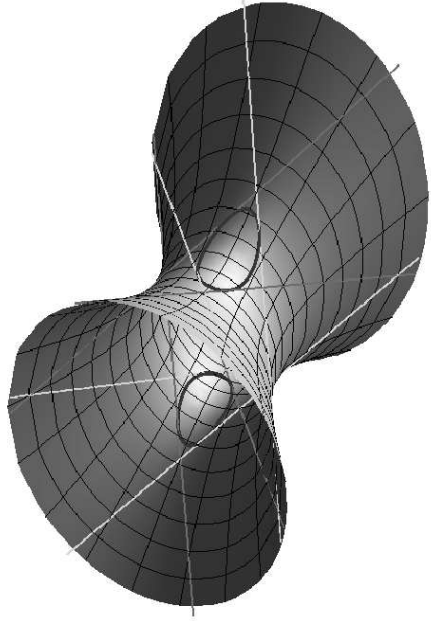}
\caption{The tropic curves and its light-like tangents on a hyperboloid.}\label{fig:hip-tropic-light}
\end{figure}

\begin{remark}
In other words,
the only quadrics of the family (\ref{eq:confocal.quadrics3}) that may contain a tangent to its tropic curve are  $1$-sheeted hyperboloids oriented along $y$-axis,
and those tangents are always light-like.
The tropic curves and their light-like tangents on such an hyperboloid are shown on Figure \ref{fig:hip-tropic-light}.
\end{remark}

Notice that equations obtained in the proof of Proposion \ref{prop:tropic.light} are equivalent to equation (\ref{eq:cusp}) of Proposition \ref{prop:tetra}, which leads to the following:

\begin{proposition}\label{prop:tetra.hyp}
Each generatrix of $\Sigma^{+}$ and $\Sigma^{-}$ is contained in one $1$-sheeted hyperboloid oriented along $y$-axis from (\ref{eq:confocal.quadrics3}).
Moreover, such a generatrix is touching at the same point one of the tropic curves of the hyperboloid and one of the cusp-like edges of the corresponding curved tetrahedron. 
\end{proposition}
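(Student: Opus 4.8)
The plan is to work entirely with the explicit parametrization of $\Sigma^{\pm}$ given in Proposition \ref{prop:tropic.surface}. In the coordinates $(\lambda,t)$ used there, all three coordinate functions are affine in $\lambda$, so for each fixed $t=t_0$ the curve $\lambda\mapsto(x,y,z)$ is a straight line; these lines are precisely the generatrices, and they are light-like by Proposition \ref{prop:tropic.surface}. I treat $\Sigma^{+}$; the case of $\Sigma^{-}$ follows by the symmetry $z\mapsto-z$.

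First I would show that the generatrix $\{t=t_0\}$ of $\Sigma^{+}$ is contained in the confocal quadric $\mathcal{Q}_{\mu}$ with $\mu=b\cos^2 t_0+a\sin^2 t_0=b+(a-b)\sin^2 t_0$. Substituting the parametrization into $\dfrac{x^2}{a-\mu}+\dfrac{y^2}{b-\mu}+\dfrac{z^2}{c+\mu}$, collecting the terms carrying $\cos^2 t_0$ and $\sin^2 t_0$, and then setting $s=\mu-\lambda$ (so that $a-\lambda=(a-\mu)+s$, $b-\lambda=(b-\mu)+s$, $c+\lambda=(c+\mu)-s$), the parts linear in $s$ cancel and the whole expression collapses to
$$1+\frac{s^2}{c+\mu}\left(\frac{\cos^2 t_0}{a-\mu}+\frac{\sin^2 t_0}{b-\mu}\right).$$
This is identically $1$ in $\lambda$ exactly when $\dfrac{\cos^2 t_0}{a-\mu}+\dfrac{\sin^2 t_0}{b-\mu}=0$, i.e.\ precisely for the stated value of $\mu$. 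When $\sin^2 t_0\in(0,1)$ one gets $\mu\in(b,a)$, which by the classification in Section \ref{sec:relativistic} is a one-sheeted hyperboloid oriented along the $y$-axis; the finitely many exceptional generatrices ($\sin^2 t_0\in\{0,1\}$, hence $\mu\in\{a,b\}$) lie in the degenerate quadrics $x=0$ and $y=0$.

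Next I would verify the two \emph{touching} assertions and that they occur at one and the same point, namely the point $P$ of the generatrix corresponding to $\lambda=\mu$. By construction in Proposition \ref{prop:tropic.surface}, the coordinate curve $\{\lambda=\mu\}$ of $\Sigma^{+}$ is a tropic curve of $\mathcal{Q}_{\mu}$, and it meets the generatrix $\{t=t_0\}$ at $P$. To upgrade this incidence to tangency I would compare the tropic curve's tangent $\partial_t(x,y,z)$ at $P$ with the generatrix direction $\partial_\lambda(x,y,z)$: both vectors lie in $T_P\mathcal{Q}_{\mu}$, on which the projection to the $xy$-plane is injective because the $z$-coordinate of $P$ is nonzero, and a short determinant computation shows that the $xy$-projections are proportional exactly when $\mu=b\cos^2 t_0+a\sin^2 t_0$ — the same value as before. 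For the cuspidal edge I would note that the relation $\lambda=b+(a-b)\sin^2 t_0$ locating $P$ on the generatrix becomes, after $\cos 2t=1-2\sin^2 t$, identical with equation (\ref{eq:cusp}) of Proposition \ref{prop:tetra}; hence $P$ lies on a cuspidal edge of $\Sigma^{+}$, and, $\Sigma^{+}$ being developable, its rulings are tangent to the edge of regression (equivalently, the parametrization Jacobian drops rank along (\ref{eq:cusp}), so $\partial_\lambda$ and $\partial_t$ are collinear there). This is the consequence announced just before the statement, where the equations from the proof of Proposition \ref{prop:tropic.light} were seen to coincide with (\ref{eq:cusp}).

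The individual computations are routine, so I do not expect a genuine obstacle; the one point deserving care is establishing actual tangency, rather than mere incidence, at $P$ — handled above by the $xy$-projection argument for the tropic curve and by the developable structure for the cuspidal edge. The conceptual core is the single identity $\mu=b\cos^2 t_0+a\sin^2 t_0$, which simultaneously pins down the hyperboloid carrying the generatrix, the contact point with its tropic curve, and the contact point with the cuspidal edge.
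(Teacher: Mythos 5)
Your proof is correct, and all the computations check out: the collapse of
$\frac{x^2}{a-\mu}+\frac{y^2}{b-\mu}+\frac{z^2}{c+\mu}$ to
$1+\frac{s^2}{c+\mu}\bigl(\frac{\cos^2 t_0}{a-\mu}+\frac{\sin^2 t_0}{b-\mu}\bigr)$ is exactly right, the determinant of the $xy$-projections of $\partial_\lambda$ and $\partial_t$ at $\lambda=\mu$ is a positive multiple of $(b-\mu)\cos^2t_0+(a-\mu)\sin^2t_0$ and so vanishes precisely for $\mu=b\cos^2t_0+a\sin^2t_0$, and one can check that the remaining $2\times2$ minors of the Jacobian also vanish along (\ref{eq:cusp}) (using $w^2=\frac{c+\lambda}{(a+c)(b+c)}$ there), so the rank-drop argument for tangency to the cuspidal edge is sound and avoids any circularity with Corollary \ref{cor:tangent.surface}. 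Your route is, however, organized differently from the paper's. The paper works backwards: it first establishes (Proposition \ref{prop:tropic.light}) that the light-like tangent lines to the tropic curves of a $1$-sheeted hyperboloid oriented along the $y$-axis lie entirely on that hyperboloid, observes that the equations characterizing those tangency points coincide with equation (\ref{eq:cusp}) defining the cuspidal edges, and reads off the proposition from that coincidence. You instead start from a generatrix of $\Sigma^{\pm}$ via the explicit ruled parametrization and verify directly that it sits on $\mathcal{Q}_{\mu}$ and is tangent to both curves at the single point $\lambda=\mu$. The two arguments hinge on the same identity $\mu=b\cos^2t_0+a\sin^2t_0$, but yours is self-contained (the survey does not reproduce the proof of Proposition \ref{prop:tropic.light}), while the paper's ordering makes the proposition an essentially free corollary of the light-like-tangent analysis it needs anyway. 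Two cosmetic remarks: the four generatrices with $\sin^2t_0\in\{0,1\}$ land in the degenerate quadrics $\mathcal{Q}_a$, $\mathcal{Q}_b$ rather than in hyperboloids, so they are genuine (measure-zero) exceptions to the literal statement, which you rightly flag; and for completeness one should note that the cuspidal edge is a regular space curve at $P$ (its velocity is $(\lambda'(t)+c(t))\partial_\lambda$ with $\lambda'+c=3(a-b)\sin t\cos t\neq0$ away from the swallowtail vertices), so that tangency to it is well defined.
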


\begin{corollary}\label{cor:tangent.surface}
Surfaces $\Sigma^{+}$ and $\Sigma^{-}$ are tangent surfaces of the cuspidal edges of thetrahedra $\mathcal{T}^+$ and $\mathcal{T}^-$ respectively.
\end{corollary}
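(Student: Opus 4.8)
The plan is to read off the statement from the explicit parametrization of $\Sigma^{+}$ in Proposition \ref{prop:tropic.surface} together with the description of the cuspidal edges in Propositions \ref{prop:tetra} and \ref{prop:tetra.hyp}. Writing $\mathbf{r}(\lambda,t)$ for that parametrization and noting that all three coordinates are affine in $\lambda$, we have $\mathbf{r}(\lambda,t)=\mathbf{r}(0,t)+\lambda\,\mathbf{v}(t)$ with
$$
\mathbf{v}(t)=\partial_\lambda\mathbf{r}=\left(-\frac{\cos t}{\sqrt{a+c}},\ -\frac{\sin t}{\sqrt{b+c}},\ \sqrt{\frac{\cos^2 t}{a+c}+\frac{\sin^2 t}{b+c}}\right),
$$
a light-like vector which is the direction of the generatrix of $\Sigma^{+}$ attached to the value $t$. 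Since by Proposition \ref{prop:tropic.surface} these generatrices sweep out all of $\Sigma^{+}$, it is enough to show that each of them coincides with the tangent line, at their common point, of the curve $\gamma$ formed by the union of the cuspidal edges of $\mathcal{T}^{+}$.

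First I would parametrize $\gamma$. Solving (\ref{eq:cusp}) for $\lambda$ gives $\lambda(t)=\tfrac{a+b}{2}-\tfrac{a-b}{2}\cos 2t$, so $\gamma(t):=\mathbf{r}(\lambda(t),t)$ is a closed curve; as $t$ runs over $[0,2\pi)$ it runs over the four cuspidal edges of $\mathcal{T}^{+}$ in the cyclic order $V_1,V_4,V_2,V_3$, the vertices being hit exactly when $\lambda(t)\in\{a,b\}$, where $\Sigma^{+}$ has the swallowtail singularities of Proposition \ref{prop:tetra}. By construction the generatrix attached to $t$ passes through $\gamma(t)=\mathbf{r}(\lambda(t),t)$, so the whole claim reduces to the parallelism $\gamma'(t)\parallel\mathbf{v}(t)$ for every $t$ with $\gamma'(t)\neq 0$, i.e.\ on the open edges.

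This parallelism is the heart of the matter, and it is essentially a restatement of Propositions \ref{prop:tetra} and \ref{prop:tetra.hyp}. Differentiating $\gamma(t)=\mathbf{r}(0,t)+\lambda(t)\mathbf{v}(t)$ yields $\gamma'(t)=\lambda'(t)\mathbf{v}(t)+\partial_t\mathbf{r}(\lambda(t),t)$, so $\gamma'(t)$ is proportional to $\mathbf{v}(t)=\partial_\lambda\mathbf{r}(\lambda(t),t)$ precisely when the Jacobian of $\mathbf{r}$, i.e.\ the matrix with columns $\partial_\lambda\mathbf{r}$ and $\partial_t\mathbf{r}$, drops rank at $(\lambda(t),t)$. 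By Propositions \ref{prop:tropic.light} and \ref{prop:tetra} that rank-drop locus --- inside the range $\lambda\in(b,a)$ of the $1$-sheeted hyperboloids oriented along the $y$-axis, the only members of (\ref{eq:confocal.quadrics3}) whose tropic carries light-like tangents --- is exactly the curve (\ref{eq:cusp}), which $\lambda(t)$ satisfies by definition; equivalently, Proposition \ref{prop:tetra.hyp} already asserts that each generatrix touches a cuspidal edge at their common point. Hence the generatrix attached to $t$ is the tangent line of $\gamma$ at $\gamma(t)$, so $\Sigma^{+}=\bigcup_{t}(\text{tangent line of }\gamma\text{ at }\gamma(t))$ is the tangent surface of $\gamma$, and $\gamma$ is a light-like curve since $\gamma'\parallel\mathbf{v}$ with $\mathbf{v}$ light-like. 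Applying the reflection $z\mapsto -z$, which interchanges $\Sigma^{+}$ with $\Sigma^{-}$ and $\mathcal{T}^{+}$ with $\mathcal{T}^{-}$, gives the statement for $\Sigma^{-}$.

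The main obstacle is the computational identification of the rank-drop locus of $\mathbf{r}$ with equation (\ref{eq:cusp}); however this determinant/derivative computation has effectively already been performed in Propositions \ref{prop:tetra} and \ref{prop:tetra.hyp}, so in the end the corollary is an assembly of earlier facts rather than fresh work. The one point demanding care is the behavior at the four vertices, where $\gamma$ is singular (the swallowtails of $\Sigma^{+}$): there the phrase ``tangent surface'' must be understood in the part-by-part sense of the classification of light-like developable surfaces of \cite{CI2010} recalled above, whereas on each of the four open cuspidal edges $\gamma$ is a regular light-like curve and the tangent-surface description is literal.
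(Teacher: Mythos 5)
Your proof is correct and takes essentially the same route as the paper, which states the corollary without a separate proof as an immediate consequence of Propositions \ref{prop:tropic.surface}, \ref{prop:tetra} and \ref{prop:tetra.hyp}: the surface is ruled by its light-like generatrices, and each generatrix touches a cuspidal edge at their common point, so the surface is the union of the tangent lines of those edges. Your explicit verification that $\gamma'(t)\parallel\partial_\lambda\mathbf{r}$ along the locus (\ref{eq:cusp}), and your care at the four swallowtail vertices where $\gamma'$ vanishes, simply make explicit what the paper leaves implicit.
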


In next propositions, we give further analysis the light-like tangents to the tropic curves on an $1$-sheeted hyperboloid oriented along $y$-axis.

\begin{proposition}\label{prop:hyp.y.presek}
For a fixed $\lambda_0\in(b,a)$,
consider a hyperboloid $\mathcal{Q}_{\lambda_0}$ from (\ref{eq:confocal.quadrics3}) and an arbitrary point $(x,y,z)$ on $\mathcal{Q}_{\lambda_0}$.
Equation (\ref{eq:jednacina}) has, along with $\lambda_0$, two other roots in $\mathbf{C}$: denote them by $\lambda_1$ and $\lambda_2$.
Then $\lambda_1$=$\lambda_2$ if and only if $(x,y,z)$ is placed on a light-like tangent to a tropic curve of $\mathcal{Q}_{\lambda_0}$.
\end{proposition}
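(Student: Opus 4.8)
The plan is to reformulate everything in terms of the cubic obtained by clearing denominators in (\ref{eq:jednacina}). Put
$$\Phi(\lambda)=(a-\lambda)(b-\lambda)(c+\lambda)-x^2(b-\lambda)(c+\lambda)-y^2(a-\lambda)(c+\lambda)-z^2(a-\lambda)(b-\lambda),$$
a monic cubic in $\lambda$ whose three roots, counted with multiplicity, are precisely the parameters of the quadrics of (\ref{eq:confocal.quadrics3}) passing through $(x,y,z)$. Since $(x,y,z)\in\mathcal{Q}_{\lambda_0}$, one root is $\lambda_0$ and $\Phi(\lambda)=(\lambda-\lambda_0)g(\lambda)$ with $g$ a monic quadratic having $\lambda_1,\lambda_2$ as roots. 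Unwinding the convention that $\lambda_1,\lambda_2$ denote ``the other two'' roots, one checks at once that $\lambda_1=\lambda_2$ if and only if $\Phi$ possesses a multiple root $\mu$ which is either different from $\lambda_0$, or equal to $\lambda_0$ with multiplicity three. Next I would invoke the description of tropic curves: by the characterization stated before Proposition \ref{prop:tropic.light}, the tropic curve of $\mathcal{Q}_{\mu}$ is exactly the set of points at which $\mu$ is a multiple root of $\Phi$; hence, by Proposition \ref{prop:tropic.surface}, $\Sigma^{+}\cup\Sigma^{-}$ is the locus where $\Phi$ has \emph{some} multiple root (a real cubic cannot acquire a non-real multiple root), and by Proposition \ref{prop:tetra} the cuspidal edges of $\Sigma^{\pm}$ form the sublocus where that root is triple.

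The crux is then the geometric statement: for $\lambda_0\in(b,a)$ and $(x,y,z)\in\mathcal{Q}_{\lambda_0}$, the point lies on a light-like tangent to the tropic curve $\gamma_{\lambda_0}$ of $\mathcal{Q}_{\lambda_0}$ if and only if $\Phi$ has a multiple root of the kind just described. I would prove this from three ingredients. (i) By Proposition \ref{prop:tropic.light}, a light-like tangent line to $\gamma_{\lambda_0}$ actually lies on $\mathcal{Q}_{\lambda_0}$; and by Propositions \ref{prop:tropic.surface}, \ref{prop:tetra.hyp} and Corollary \ref{cor:tangent.surface}, the generatrices of $\Sigma^{\pm}$ are exactly such lines: each generatrix is light-like, is contained in a single $1$-sheeted hyperboloid of (\ref{eq:confocal.quadrics3}) oriented along the $y$-axis, and is tangent to that hyperboloid's tropic curve at a point lying on a cuspidal edge, i.e.\ at a triple-root point. (ii) A point of the space lies on at most one quadric of (\ref{eq:confocal.quadrics3}) with parameter in $(b,a)$: were it on $\mathcal{Q}_{\lambda_0}$ and $\mathcal{Q}_{\nu}$ with $\lambda_0\neq\nu$ both in $(b,a)$, subtracting the defining equations would give
$$\frac{x^2}{(a-\lambda_0)(a-\nu)}+\frac{y^2}{(b-\lambda_0)(b-\nu)}+\frac{z^2}{(c+\lambda_0)(c+\nu)}=0$$
with all three denominators positive, forcing $x=y=z=0$. (iii) Since $\gamma_{\lambda_0}$ is cut out on $\mathcal{Q}_{\lambda_0}$ by the light-like cone, any line lying in $\mathcal{Q}_{\lambda_0}$ and not in that cone meets $\gamma_{\lambda_0}$ in at most two points, so a line lying in $\mathcal{Q}_{\lambda_0}$ and tangent to $\gamma_{\lambda_0}$ touches it at exactly one point and meets it nowhere else. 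Combining these: if $(x,y,z)$ lies on a light-like tangent $\ell$ to $\gamma_{\lambda_0}$ at a point $P$, then $\ell\subset\mathcal{Q}_{\lambda_0}$ by (i), the generatrix of $\Sigma^{+}$ through $P$ lies in $\mathcal{Q}_{\lambda_0}$ by (ii) and, by (iii), touches $\gamma_{\lambda_0}$ only at $P$, hence coincides with $\ell$; thus $(x,y,z)\in\Sigma^{+}\cup\Sigma^{-}$, and the type of multiple root matches — triple when $(x,y,z)=P$ (which then lies on a cuspidal edge), simple-plus-double otherwise (then $(x,y,z)\notin\gamma_{\lambda_0}$, so $\lambda_0$ is a simple root). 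Conversely, if $\Phi$ has a multiple root $\mu\neq\lambda_0$, then $(x,y,z)\in\gamma_{\mu}\subset\Sigma^{+}\cup\Sigma^{-}$; the generatrix through it is light-like, lies in the unique $y$-oriented hyperboloid through $(x,y,z)$, which by (ii) must be $\mathcal{Q}_{\lambda_0}$, and is tangent to $\gamma_{\lambda_0}$, so $(x,y,z)$ sits on a light-like tangent to $\gamma_{\lambda_0}$; the triple-root case is the same, with $(x,y,z)$ a touching point on a cuspidal edge.

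I expect the main obstacle to be exactly this geometric lemma, and within it the identification in (i)--(iii) that the light-like tangent to $\gamma_{\lambda_0}$ through a point and the generatrix of $\Sigma^{\pm}$ through the corresponding tropic point are one and the same line, together with the careful bookkeeping of the degenerate configurations (the endpoints $\lambda_0\in\{a,b\}$, the vertices $V_1,\dots,V_4$ of the tetrahedra where swallowtail singularities occur, and the labeling of $\lambda_1,\lambda_2$ when $(x,y,z)$ already lies on $\gamma_{\lambda_0}$). A more computational alternative, which I would keep in reserve, bypasses the lemma: parametrize $\gamma_{\lambda_0}$ by the formula of Proposition \ref{prop:tropic.surface} with $\lambda$ frozen at $\lambda_0$, take the four light-like tangents as the lines through $\gamma(t_0)$ in direction $\gamma'(t_0)$ for the four values of $t_0$ solving (\ref{eq:cusp}) at $\lambda=\lambda_0$, substitute a general point of such a line into (\ref{eq:jednacina}), and check that the resulting cubic in $\lambda$ equals $(\lambda-\lambda_0)$ times a perfect square, and conversely — a finite but distinctly more tedious computation.
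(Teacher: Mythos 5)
Your overall strategy is the same as the paper's (the printed proof is a one-line reduction to Propositions \ref{prop:tropic.surface} and \ref{prop:tetra.hyp}: the discriminant locus of the cubic is $\Sigma^{+}\cup\Sigma^{-}$, and its generatrices are exactly the light-like tangents to the tropics of the $y$-oriented hyperboloids). But in filling in the details you introduce a false lemma, namely your step (ii). The denominator of the $z^2$ term in (\ref{eq:jednacina}) is $c+\lambda$, not $c-\lambda$, so subtracting the two equations and dividing by $\lambda_0-\nu$ gives
$$
\frac{x^2}{(a-\lambda_0)(a-\nu)}+\frac{y^2}{(b-\lambda_0)(b-\nu)}-\frac{z^2}{(c+\lambda_0)(c+\nu)}=0,
$$
with a \emph{minus} sign on the last term. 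For $\lambda_0,\nu\in(b,a)$ all three products are positive, so this is the equation of a cone, not of the origin, and it has plenty of nonzero solutions lying on both quadrics. Concretely, for $a=2$, $b=1$, $c=1$ the point with $y=0$, $x^2=4/75$, $z^2=154/75$ lies on both $\mathcal{Q}_{6/5}$ and $\mathcal{Q}_{9/5}$, and in fact the two hyperboloids meet along a whole curve. This is not a fluke of the pseudo-Euclidean setting but one of its defining features (compare the paper's observation that each point inside an ellipsoid lies on two further ellipsoids of the confocal family): distinct confocal quadrics of the same geometric type \emph{can} share points, so "the unique $y$-oriented hyperboloid through $(x,y,z)$" does not exist in general. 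Since (ii) is the device you use in both directions to identify the hyperboloid containing the relevant generatrix with $\mathcal{Q}_{\lambda_0}$, the proof as written has a genuine gap.

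The architecture is salvageable without (ii). In the forward direction you do not need it at all: by Proposition \ref{prop:tropic.light} and the remark identifying its equations with (\ref{eq:cusp}), the point $P$ at which $\gamma_{\lambda_0}$ has a light-like tangent lies on a cuspidal edge, so Proposition \ref{prop:tetra.hyp} already says that the generatrix of $\Sigma^{\pm}$ through the parameter point $(\lambda_0,t_P)$ \emph{is} that tangent line and lies in $\mathcal{Q}_{\lambda_0}$. In the converse direction, replace (ii) by root bookkeeping: the generatrix through $(x,y,z)$ lies in some $\mathcal{Q}_{\lambda^*}$ with $\lambda^*\in(b,a)$, hence $\lambda^*$ is a root of your cubic $\Phi$ at $(x,y,z)$, i.e.\ $\lambda^*\in\{\lambda_0,\mu\}$; if $\lambda^*=\mu$ then by your (iii) the point would be the tangency point of the generatrix with $\gamma_\mu$, hence on a cuspidal edge, forcing $\mu$ to be a triple root and contradicting $\lambda_0\neq\mu$. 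You should also treat separately the degenerate values $\mu\in\{a,b,-c\}$ of the double root (e.g.\ the four points of $\mathcal{Q}_{\lambda_0}\cap\mathcal{C}^{yz}_{a}$, where the double root is $a$ and "the tropic curve of $\mathcal{Q}_\mu$" is not defined), and justify your unproved assertion that the cuspidal edges of Proposition \ref{prop:tetra} are exactly the triple-root locus.
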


\begin{proof}
Follows from the fact that the light-like tangents are contained in the $\Sigma^{+}\cup\Sigma^{-}$, see Propositions \ref{prop:tropic.surface} and \ref{prop:tetra.hyp}.
\end{proof}

\begin{proposition}
Two light-like lines on a one-sheeted hyperboloid oriented along $y$-axis from (\ref{eq:confocal.quadrics3}) are either skew or intersect each other on a degenerate quadric from (\ref{eq:confocal.quadrics3}).
\end{proposition}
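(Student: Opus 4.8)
The plan is to reduce the assertion to an elementary symmetry argument, using the description of the light-like lines of a one-sheeted hyperboloid oriented along the $y$-axis furnished by Propositions \ref{prop:tropic.surface}, \ref{prop:tropic.light} and \ref{prop:tetra.hyp}. Fix $\lambda_0\in(b,a)$ and write $\mathcal Q_0=\mathcal Q_{\lambda_0}$. By Proposition \ref{prop:tropic.light} a line lying on $\mathcal Q_0$ is light-like precisely when it is tangent to one of the tropic curves of $\mathcal Q_0$, and by Proposition \ref{prop:tetra.hyp} and Corollary \ref{cor:tangent.surface} every such line is a generatrix of $\Sigma^{+}$ or of $\Sigma^{-}$. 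Conversely, fixing $t$ in the parametrization of Proposition \ref{prop:tropic.surface} and letting $\lambda$ vary produces a generatrix $g_t^{+}$ of $\Sigma^{+}$ (and likewise $g_t^{-}$ of $\Sigma^{-}$); substituting this into (\ref{eq:confocal.quadrics3}) one checks, after a short computation, that $g_t^{\pm}\subset\mathcal Q_0$ exactly when
\[
\lambda_0=a\sin^2 t+b\cos^2 t,\qquad\text{that is}\qquad \sin^2 t=\frac{\lambda_0-b}{a-b}\in(0,1),
\]
which agrees with the cuspidal-edge relation (\ref{eq:cusp}). Thus the light-like lines of $\mathcal Q_0$ are exactly the eight lines $g_t^{\pm}$, $t\in\{t_1,\pi-t_1,\pi+t_1,2\pi-t_1\}$, where $t_1=\arcsin\sqrt{(\lambda_0-b)/(a-b)}\in(0,\tfrac{\pi}{2})$.

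Next I would observe that $\mathcal Q_0$, the light cone $x^2+y^2-z^2=0$ and the whole confocal family are invariant under the group $G=\{\diag(\varepsilon_1,\varepsilon_2,\varepsilon_3):\varepsilon_i=\pm1\}$ of reflections in the coordinate planes, so $G$ permutes the eight light-like lines of $\mathcal Q_0$. Since $\cos t_1$, $\sin t_1$ and $\sqrt{\cos^2 t_1/(a+c)+\sin^2 t_1/(b+c)}$ are all nonzero, the eight lines $g_t^{\pm}$ realize the eight distinct sign patterns of $(\cos t,\sin t,\pm1)$, so $G$ acts on them freely and transitively. Hence for two distinct light-like lines $\ell,\ell'$ of $\mathcal Q_0$ there is a unique $g\in G\setminus\{\mathrm{id}\}$ with $\ell'=g\ell$; being an involution, $g$ interchanges $\ell$ and $\ell'$ and therefore fixes every common point of $\ell$ and $\ell'$, finite or at infinity.

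I would then split into cases according to how $\ell$ and $\ell'$ meet. If they are disjoint and not parallel they are skew, the first alternative of the statement. If they are parallel, their projective closures meet at a point of the plane at infinity, which is the degenerate quadric $\mathcal Q_\infty$ of (\ref{eq:confocal.quadrics3}). If they meet at a finite point $P$, then $gP=P$; here $g\neq-\mathrm{id}$, since otherwise $P$ would be the origin, which is not on $\mathcal Q_0$. So $g=\diag(\varepsilon_1,\varepsilon_2,\varepsilon_3)$ with at least one, but not all, of the $\varepsilon_i$ equal to $-1$, and its fixed locus is a proper nonzero coordinate subspace, contained in one of the coordinate planes $\{x=0\}=\mathcal Q_a$, $\{y=0\}=\mathcal Q_b$, $\{z=0\}=\mathcal Q_{-c}$; hence $P$ lies on a degenerate quadric. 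This exhausts the possibilities. (For orientation one may verify directly from the parametrizations which pairs fall into which case: e.g.\ $g_{t_1}^{+}$ and $g_{\pi-t_1}^{+}$ meet at the parameter value $\lambda=a$, a point with $x=0$, while $g_{t_1}^{+}$ and the antipodal line $-g_{t_1}^{+}$ are parallel.)

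The step I expect to be the main obstacle is the first one: showing that the light-like lines on $\mathcal Q_0$ are \emph{exactly} the generatrices $g_t^{\pm}$ singled out above, i.e.\ that a light-like line lying on $\mathcal Q_0$ is necessarily a light-like tangent to a tropic curve of $\mathcal Q_0$, and that the parameter constraint is as stated; this is where the preceding structural propositions must be applied with care. The attendant bookkeeping — the free transitive $G$-action, the involution fixed-point argument, and the identity $\lambda_0=a\sin^2 t+b\cos^2 t$ — is routine.
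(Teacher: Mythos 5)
Your argument is correct and is essentially a careful elaboration of the paper's own proof, which consists of the single sentence that the hyperboloid is symmetric with respect to the coordinate planes; your identification of the eight light-like generatrices with the eight sign patterns, the resulting simply transitive action of the group $\diag(\pm1,\pm1,\pm1)$, and the involution fixed-point argument make that remark precise. The one point you flag as delicate --- that the lines $g_t^{\pm}$ exhaust the light-like lines on $\mathcal{Q}_{\lambda_0}$ --- closes by an elementary count: the light cone and the asymptotic cone of $\mathcal{Q}_{\lambda_0}$ are distinct conics in $\mathbf{P}^2$ meeting in exactly the four sign-symmetric directions you exhibit, and each such direction carries exactly two generatrices (the tangent plane at the corresponding point at infinity cuts the quadric in that pair), in agreement with the eight lines $a_i$, $b_i$ listed in the paper's example.
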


\begin{proof}
Follows from the fact that the hyperboloid is symmetric with respect to the coordinate planes.
\end{proof}

\begin{lemma}\label{lemma:tropic.non.y}
Consider a non-degenerate quadric $\mathcal{Q}_{\lambda_0}$, which is not a hyperboloid oriented along $y$-axis,
i.e.\ $\lambda_0\not\in[b,a]\cup\{-c\}$.
Then each point of $\mathcal{Q}_{\lambda_0}$ which is not on one of the tropic curves is contained in two additional distinct quadrics from the family (\ref{eq:confocal.quadrics3}).

Consider two points $A$, $B$ of $\mathcal{Q}_{\lambda_0}$, which are placed in the same connected component bounded by the tropic curves, and denote by $\lambda'_A$, $\lambda''_A$ and $\lambda'_B$, $\lambda''_B$ the solutions, different than $\lambda_0$, of equation (\ref{eq:jednacina}) corresponding to $A$ and $B$ respectively.
Then, if $\lambda_0$ is smaller (resp.\ bigger, between) than $\lambda'_A$, $\lambda''_A$, it is also smaller (resp.\ bigger, between) than $\lambda'_B$, $\lambda''_B$.
\end{lemma}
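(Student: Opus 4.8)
The plan is to analyse, at a fixed point $p=(x_0,y_0,z_0)$ of $\mathcal{Q}_{\lambda_0}$, the rational function
$$
F(\lambda)=\frac{x_0^2}{a-\lambda}+\frac{y_0^2}{b-\lambda}+\frac{z_0^2}{c+\lambda}-1,
$$
whose real zeros are exactly the parameters of the quadrics of (\ref{eq:confocal.quadrics3}) passing through $p$, i.e.\ the roots of equation (\ref{eq:jednacina}) at $p$; clearing denominators turns $F(\lambda)=0$ into a polynomial equation $P(\lambda)=0$ of degree $3$. First I would set aside the points of $\mathcal{Q}_{\lambda_0}$ lying in a coordinate plane: they form a one-dimensional (conic) subset, and off it one has $x_0y_0z_0\neq0$, so that none of the poles $-c,b,a$ of $F$ is cancelled and $P$ genuinely has degree three. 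These exceptional points will be recovered at the end by continuity (there one of the two further quadrics through $p$ degenerates to a coordinate plane, but the conclusion persists in the limit).

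For the first assertion, everything is read off the behaviour of $F$ near its poles $-c<b<a$. A one-line computation gives $F\to-1$ as $\lambda\to\pm\infty$, while $F\to-\infty$ as $\lambda\to(-c)^{-}$ and $F\to+\infty$ as $\lambda\to(-c)^{+}$; $F\to+\infty$ as $\lambda\to b^{-}$ and $F\to-\infty$ as $\lambda\to b^{+}$; $F\to+\infty$ as $\lambda\to a^{-}$ and $F\to-\infty$ as $\lambda\to a^{+}$. Two consequences follow. On $(b,a)$ the function $F$ changes sign, hence has a zero there --- the caustic that is a one-sheeted hyperboloid along the $y$-axis, always present. On each of the three remaining pole-intervals $(-\infty,-c)$, $(-c,b)$, $(a,+\infty)$ the function $F$ has one and the same sign near both endpoints; and by hypothesis the interval $I_0$ containing $\lambda_0$ is one of these three. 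Since $p$ does not lie on a tropic curve of $\mathcal{Q}_{\lambda_0}$, the parameter $\lambda_0$ is a \emph{simple} zero of $P$ (this is the description of tropic curves recalled just before Proposition \ref{prop:tropic.light}), so $F$ changes sign at $\lambda_0\in I_0$; combined with the coincidence of the two endpoint signs of $F$ on $I_0$, the intermediate value theorem yields a second zero of $F$ in $I_0$, necessarily distinct from $\lambda_0$. Together with $\lambda_0$ and the zero in $(b,a)$ this exhibits three distinct real zeros of the cubic $P$, so $P$ has exactly three simple real roots; hence $p$ lies on precisely two further, non-degenerate, quadrics of the family. I expect this to be the main obstacle: in the Euclidean case the analogous count follows cheaply from the monotonicity of $F$, whereas here monotonicity fails, and it is exactly the coincidence of the two endpoint signs on $I_0$ that rules out a complex-conjugate pair of roots.

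The second assertion is then a connectedness argument. Fix a connected component $W$ of $\mathcal{Q}_{\lambda_0}$ with its tropic curves deleted, and work first on the dense open part of $W$ where $x_0y_0z_0\neq0$. By the first part, for every $p\in W$ the roots of $P$ are three distinct reals $\lambda_0,\lambda_1(p),\lambda_2(p)$ with $\lambda_1(p)\in I_0$ and $\lambda_2(p)\in(b,a)$; being simple and confined to fixed disjoint intervals, they depend continuously on the coefficients of $P$, hence on $p$, and $\lambda_1(p)\neq\lambda_0$ throughout $W$. Thus $\lambda_1-\lambda_0$ is continuous and nowhere zero on the connected set $W$, so its sign is constant there. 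It remains to read off the ordering. If $I_0$ is $(-\infty,-c)$ or $(-c,b)$, then $\lambda_2(p)>b>\lambda_0$ for all $p$, so $\lambda_0$ lies below both of $\lambda_1(p),\lambda_2(p)$ or strictly between them, according to the constant sign of $\lambda_1(p)-\lambda_0$; if $I_0=(a,+\infty)$, then $\lambda_2(p)<a<\lambda_0$ for all $p$, so $\lambda_0$ lies above both or strictly between, again according to the constant sign. In every case the position-type of $\lambda_0$ relative to the other two roots is the same at all points of $W$, which is exactly the stated claim for $A$ and $B$ in one component; the coordinate-plane points of $W$ follow by passing to the limit in the continuous functions $\lambda_1,\lambda_2$.
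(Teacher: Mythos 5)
Your argument is correct in substance and is considerably more detailed than the paper's own proof, which disposes of Lemmas \ref{lemma:tropic.non.y} and \ref{lemma:tropic.y} in a single sentence: the solutions of (\ref{eq:jednacina}) vary continuously with the point, and two of them can coincide only on the tropic curves. Your second part is exactly this continuity-plus-connectedness argument, made precise by confining the two extra roots to the disjoint intervals $\overline{I_0}$ and $[b,a]$ and observing that $\lambda_1-\lambda_0$ is continuous and nowhere zero on a connected component, so its sign is constant there. Your first part --- the sign analysis of $F$ at its three poles, the observation that $F$ has equal signs at the two ends of each pole-interval other than $(b,a)$, and the use of the simplicity of $\lambda_0$ (that is, of the hypothesis that $p$ is off the tropic curves) to extract a second real root in $I_0$ by the intermediate value theorem --- is genuine added content: the paper takes the count of real roots for granted, and your remark that the Euclidean monotonicity argument fails here and must be replaced by the equal-endpoint-sign observation correctly identifies where the actual work lies.

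The one soft spot is the coordinate-plane points, which you defer to ``continuity.'' For a point with, say, $x_0=0$ the cubic factors as $(a-\lambda)Q(\lambda)$ with $Q$ quadratic, so $\lambda=a$ (the degenerate quadric $x=0$) is always among the roots. Generically the three roots remain distinct and your limit argument goes through, but at the points of $\mathcal{Q}_{\lambda_0}\cap\mathcal{C}^{yz}_{a}$ --- which are nonempty for a two-sheeted hyperboloid; e.g.\ for $a=2$, $b=c=1$, $\lambda_0=3$ the point $(0,1,\sqrt6)$ lies on $\mathcal{Q}_3$, the roots of (\ref{eq:jednacina}) there are $3,2,2$, and the point is not on the tropic cone of $\mathcal{Q}_3$ --- the parameter $a$ is a \emph{double} root while $\lambda_0$ stays simple. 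Such a point is off the tropic curves of $\mathcal{Q}_{\lambda_0}$ yet meets only one additional quadric, so ``the conclusion persists in the limit'' is not literally true everywhere, and in fact the first assertion of the lemma fails there under the literal reading of ``tropic curves'' as those of $\mathcal{Q}_{\lambda_0}$. This is really an imprecision of the lemma itself: the paper's one-line proof tacitly excludes the whole discriminant set $\Sigma^{\pm}$, on which \emph{any} two roots coincide, not only those curves where $\lambda_0$ is the repeated root. Your second assertion survives even at these exceptional points, since $\lambda_1$ and $\lambda_2$ extend continuously and $\lambda_1-\lambda_0$ never vanishes; but you should either exclude $\Sigma^{\pm}\cap\mathcal{Q}_{\lambda_0}$ explicitly or note that on this measure-zero set one of the two additional quadrics may degenerate and the two may coincide.
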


\begin{lemma}\label{lemma:tropic.y}
Let $\mathcal{Q}_{\lambda_0}$ be a hyperboloid oriented along $y$-axis, $\lambda_0\in(b,a)$, and $A$, $B$ two points of $\mathcal{Q}_{\lambda_0}$, which are placed in the same connected component bounded by the tropic curves and light-like tangents. 
Then, if $A$ is contained in two more quadrics from the family (\ref{eq:confocal.quadrics3}), the same is true for $B$.

In this case, denote by $\lambda'_A$, $\lambda''_A$ and $\lambda'_B$, $\lambda''_B$ the real solutions, different than $\lambda_0$, of equation (\ref{eq:jednacina}) corresponding to $A$ and $B$ respectively.
Then, if $\lambda_0$ is smaller (resp.\ bigger, between) than $\lambda'_A$, $\lambda''_A$, it is also smaller (resp.\ bigger, between) than $\lambda'_B$, $\lambda''_B$.

On the other hand, if $A$ is not contained in any other quadric from (\ref{eq:confocal.quadrics3}), then the same is true for all points of its connected component.
\end{lemma}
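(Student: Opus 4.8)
The idea is to turn the three numbers $\lambda_0,\lambda',\lambda''$ into (piecewise) continuous functions on the surface $\mathcal{Q}_{\lambda_0}$ and then argue by connectedness, first on the sign of a discriminant and then on the relative order of the roots.

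First I would clear denominators in (\ref{eq:jednacina}) to obtain, for each point $(x,y,z)$, the cubic polynomial
$$
P_{x,y,z}(\lambda)=(a-\lambda)(b-\lambda)(c+\lambda)-x^2(b-\lambda)(c+\lambda)-y^2(a-\lambda)(c+\lambda)-z^2(a-\lambda)(b-\lambda).
$$
Its leading coefficient is a nonzero constant, independent of $(x,y,z)$, so $P_{x,y,z}$ always has degree exactly three, hence exactly three roots in $\mathbf{C}$ counted with multiplicity; for a point of $\mathcal{Q}_{\lambda_0}$ one of them is $\lambda_0$, the other two being the $\lambda',\lambda''$ of the statement. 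The coefficients of $P_{x,y,z}$ are polynomials in $(x,y,z)$, hence so is its discriminant $\Delta(x,y,z)$, which is therefore a continuous function on $\mathcal{Q}_{\lambda_0}$.

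Next I would identify the zero set of $\Delta$ on $\mathcal{Q}_{\lambda_0}$. A point lies in $\{\Delta=0\}$ exactly when two of the three roots coincide; this happens either when $\lambda_0$ is itself a multiple root, which by the characterization of tropic curves recalled just before Proposition \ref{prop:tropic.light} means that the point lies on a tropic curve of $\mathcal{Q}_{\lambda_0}$, or when $\lambda'=\lambda''$, which by Proposition \ref{prop:hyp.y.presek} means that the point lies on a light-like tangent to a tropic curve. Hence $\{\Delta=0\}\cap\mathcal{Q}_{\lambda_0}$ is precisely the union of the tropic curves and the light-like tangents, so the connected components of the complement are exactly the sets referred to in the lemma, and $\Delta$ has constant sign on each such component $U$. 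Since $\lambda_0$ is real, $\Delta>0$ on $U$ means $\lambda',\lambda''$ are real and distinct (and, since no point of $U$ lies on a tropic curve, $\lambda_0$ is a simple root there, so $\lambda',\lambda''\neq\lambda_0$ and they give two genuine further quadrics of the family), whereas $\Delta<0$ means $\lambda',\lambda''$ form a non-real complex conjugate pair, so $\lambda_0$ is the only real root and the point lies on no other quadric of (\ref{eq:confocal.quadrics3}). This already gives the first and the last assertions: the property ``$A$ lies on two more quadrics of (\ref{eq:confocal.quadrics3})'' is equivalent to $\Delta>0$ on the whole component of $A$, hence holds for $B$ too, and its negation propagates over the component in the same way. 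For the ordering statement I would restrict to a component $U$ with $\Delta>0$: there $\lambda_0,\lambda',\lambda''$ are pairwise distinct at every point, and choosing $\lambda',\lambda''$ to be the smaller and the larger of the two non-$\lambda_0$ roots makes them continuous real functions on $U$; exactly one of $\lambda_0<\lambda'$, $\lambda'<\lambda_0<\lambda''$, $\lambda_0>\lambda''$ holds at each point, this cannot change without $\lambda_0$ meeting $\lambda'$ or $\lambda''$, which never happens on $U$, so it is locally constant, hence constant on the connected set $U$ -- which is exactly the claim that ``smaller / between / bigger'' agrees at $A$ and at $B$.

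The main obstacle I anticipate is the bookkeeping in the third step: one must make sure that $\{\Delta=0\}$ on $\mathcal{Q}_{\lambda_0}$ contains nothing beyond the tropic curves and the light-like tangents -- in particular no spurious coincidences arising when the point lies on a degenerate coordinate quadric $\mathcal{Q}_a,\mathcal{Q}_b,\mathcal{Q}_{-c}$ or ``at infinity'', and that $P_{x,y,z}$ never drops degree. These are precisely handled by the constancy of the leading coefficient of $P_{x,y,z}$ together with Proposition \ref{prop:hyp.y.presek} and the tropic-curve characterization; once this is pinned down, the remainder is a routine continuity-and-connectedness argument.
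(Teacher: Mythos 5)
Your proposal is correct and is essentially the paper's own argument: the paper's (one-sentence) proof says precisely that the roots of (\ref{eq:jednacina}) vary continuously and that two of them coincide exactly on the tropic curves and their light-like tangents, and your discriminant $\Delta$ is just a clean way of packaging that continuity-and-coincidence-locus argument. The only difference is that you supply the details (constant leading coefficient of the cleared-denominator cubic, sign of $\Delta$ distinguishing the real/complex cases, local constancy of the root ordering) that the paper leaves implicit.
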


\begin{proof}
The proof of both Lemmae \ref{lemma:tropic.non.y} and \ref{lemma:tropic.y} follows from the fact that the solutions of (\ref{eq:jednacina}) are continuously changed through the space and that two of the solutions coincide exactly on tropic curves and their light-like tangents.
\end{proof}

\subsubsection*{Generalized Jacobi coordinates and relativistic quadrics in the three-dimensional Minkowski space}
\begin{definition}
\emph{Generalized Jacobi coordinates} of point $(x,y,z)$ in the three-di\-men\-sio\-nal Minkowski space $\mathbf{E}^{2,1}$ is the unordered triplet of solutions of equation (\ref{eq:jednacina}).
\end{definition}

Note that any of the following cases may take place:
\begin{itemize}
 \item
generalized Jacobi coordinates are real and different;
 \item
only one generalized Jacobi coordinate is real;
 \item
generalized Jacobi coordinates are real, but two of them coincide;
 \item
all three generalized Jacobi coordinates are equal. 
\end{itemize}

Lemmae \ref{lemma:tropic.non.y} and \ref{lemma:tropic.y} will help us to define relativistic types of quadrics in the $3$-dimensional Minkowski space.
Consider connected components of quadrics from (\ref{eq:confocal.quadrics3}) bounded by tropic curves and, for $1$-sheeted hyperboloids oriented along $y$-axis, their light-light tangent lines.
Each connected component will represent \emph{a relativistic quadric}.

\begin{definition}
A component of quadric $\mathcal{Q}_{\lambda_0}$ is \emph{of relativistic type $E$} if, at each of its points, $\lambda_0$ is smaller than the other two generalized Jacobi coordinates.

A component of quadric $\mathcal{Q}_{\lambda_0}$ is \emph{of relativistic type $H^1$} if, at each of its points, $\lambda_0$ is between the other two generalized Jacobi coordinates.

A component of quadric $\mathcal{Q}_{\lambda_0}$ is \emph{of relativistic type $H^2$} if, at each of its points, $\lambda_0$ is bigger than the other two generalized Jacobi coordinates.

A component of quadric $\mathcal{Q}_{\lambda_0}$ is \emph{of relativistic type $0$} if, at each of its points, $\lambda_0$ is the only real generalized Jacobi coordinate.
\end{definition}

Lemmae \ref{lemma:tropic.non.y} and \ref{lemma:tropic.y} guarantee that types of relativistic quadrics are well-defined, i.e.\ that to each such a quadric a unique type $E$, $H^1$, $H^2$, or $0$ can be assigned.

\begin{definition}
Suppose $(x,y,z)$ is a point of the three-di\-men\-sio\-nal Minkowski space $\mathbf{E}^{2,1}$ where equation \ref{eq:jednacina} has real and different solutions.
\emph{Decorated Jacobi coordinates} of that point is the ordered triplet of pairs:
$$
(E,\lambda_1),\quad
(H^1,\lambda_2),\quad
(H^2,\lambda_3),
$$
of generalized Jacobi coordinates and the corresponding types of relativistic quadrics.
\end{definition}

Now, we are going to analyze the arrangement of the relativistic quadrics.
Let us start with their intersections with the coordinate planes.

\subsubsection*{Intersection with the $xy$-plane}

In the $xy$-plane, the Minkowski metrics is reduced to the Euclidean one.
The family (\ref{eq:confocal.quadrics3}) is intersecting this plane by the following family of confocal conics:
\begin{equation}\label{eq:conf.xy}
\mathcal{C}^{xy}_{\lambda}\ :\ \frac{x^2}{a-\lambda}+\frac{y^2}{b-\lambda}=1,
\end{equation}
see Figure \ref{fig:relativistic_xy}.
\begin{figure}[h]
\input{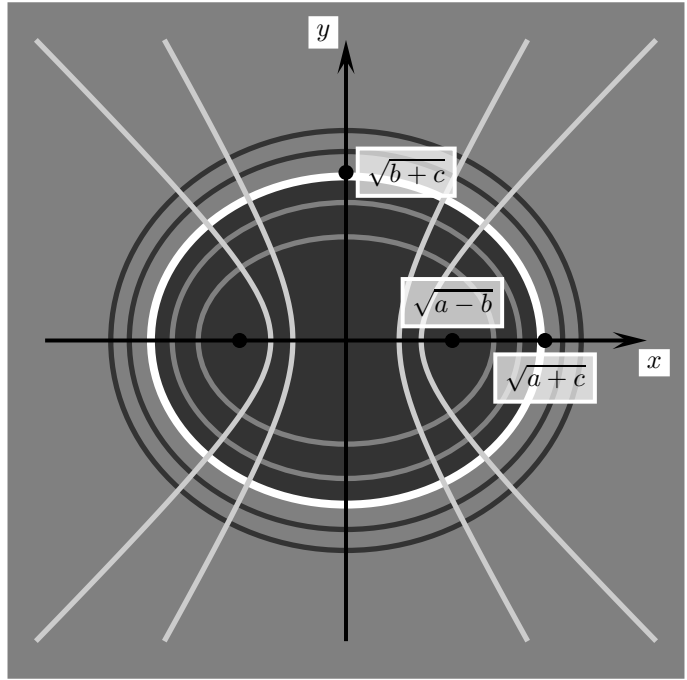}
\caption{Intersection of relativistic quadrics with the $xy$-plane.}\label{fig:relativistic_xy}
\end{figure}

We conclude that the $xy$-plane is divided by ellipse $\mathcal{C}^{xy}_{-c}$ into two relativistic quadrics:
\begin{itemize}
 \item 
the region within $\mathcal{C}^{xy}_{-c}$ is a relativistic quadric of $E$-type;
 \item
the region outside this ellipse is of $H^1$-type.
\end{itemize}

Moreover, the types of relativic quadrics intersecting the $xy$-plane are:
\begin{itemize}
 \item
the components of ellipsoids are of $H^1$-type;
 \item
the components of $1$-sheeted hyperboloids oriented along $y$-axis of $H^2$-type;
 \item
the components of $1$-sheeted hyperboloids oriented along $z$-axis of $E$-type.
\end{itemize}

On Figure \ref{fig:relativistic_xy}, type $E$ quadrics are coloured in dark gray, type $H^1$ medium gray, and type $H^2$ light gray.
The same colouring rule is applied on Figures \ref{fig:relativistic_xz}-\ref{fig:relativistic.3d}.

\subsubsection*{Intersection with the $xz$-plane}

In the $xz$-plane, the reduced metrics is the Minkowski one.
The intersection of family (\ref{eq:confocal.quadrics3}) with this plane is the following family of confocal conics:
\begin{equation}\label{eq:conf.xz}
\mathcal{C}^{xz}_{\lambda}\ :\ \frac{x^2}{a-\lambda}+\frac{z^2}{c+\lambda}=1,
\end{equation}
see Figure \ref{fig:relativistic_xz}.
\begin{figure}[h]
\input{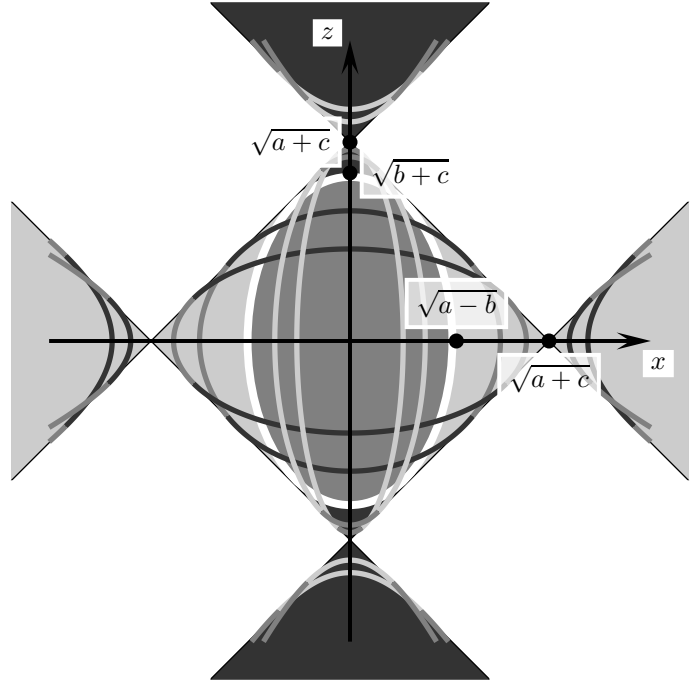}
\caption{Intersection of relativistic quadrics with the $xz$-plane.}\label{fig:relativistic_xz}
\end{figure}

The plane is divided by ellipse $\mathcal{C}^{xz}_{b}$ and the four joint tangents of (\ref{eq:conf.xz}) into $13$ parts:
\begin{itemize}
\item
the part within $\mathcal{C}^{xz}_{b}$ is a relativistic quadric of $H^1$-type;

\item
four parts placed outside of $\mathcal{C}^{xz}_{b}$ that have non-empty intersection with the $x$-axis are of $H^2$-type;

\item
four parts placed outside of $\mathcal{C}^{xz}_{b}$ that have non-empty intersection with the $z$-axis are of $E$-type;

\item
the four remaining parts are of $0$-type and no quadric from the family (\ref{eq:confocal.quadrics3}), except the degenerated $\mathcal{Q}_{b}$, is passing through any of their points.
\end{itemize}

\subsubsection*{Intersection with the $yz$-plane}

As in the previous case, in the $yz$-plane, the reduced metrics is the Minkowski one.
The intersection of family (\ref{eq:confocal.quadrics3}) with this plane is the following family of confocal conics:
\begin{equation}\label{eq:conf.yz}
\mathcal{C}^{yz}_{\lambda}\ :\ \frac{y^2}{b-\lambda}+\frac{z^2}{c+\lambda}=1,
\end{equation}
see Figure \ref{fig:relativistic_yz}.
\begin{figure}[h]
\input{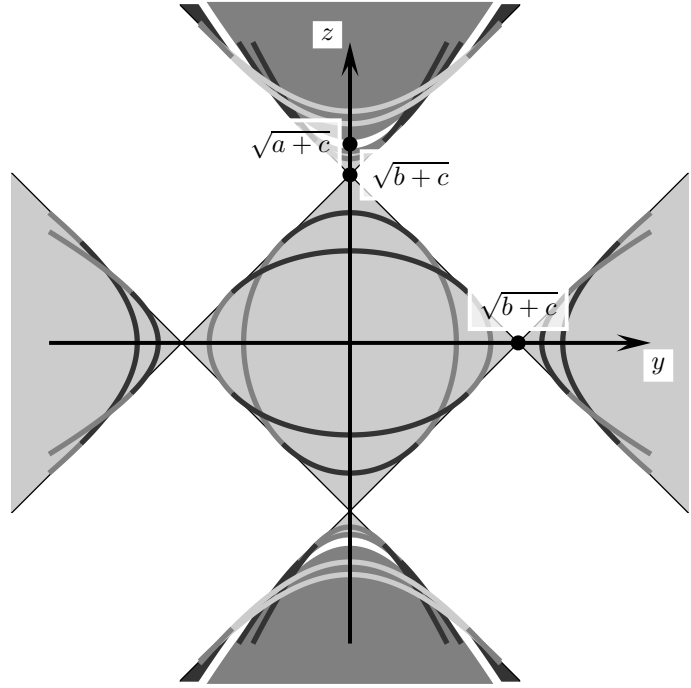}
\caption{Intersection of relativistic quadrics with the $yz$-plane.}\label{fig:relativistic_yz}
\end{figure}

The plane is divided by hyperbola $\mathcal{C}^{yz}_{a}$ and joint tangents of (\ref{eq:conf.yz}) into $15$ parts:
\begin{itemize}
\item
the two convex parts determined by $\mathcal{C}^{yz}_{a}$ are relativistic quadric of $H^1$-type;

\item
five parts placed outside of $\mathcal{C}^{yz}_{a}$ that have non-empty intersection with the coordinate axes are of $H^2$-type;

\item
four parts, each one placed between $\mathcal{C}^{yz}_{a}$ and one of the joint tangents of (\ref{eq:conf.yz}) are of $E$-type;

\item
through points of the four remaining parts no quadric from the family (\ref{eq:confocal.quadrics3}), 
except the degenerated $\mathcal{Q}_{a}$, is passing.
\end{itemize}

Intersection of relativistic quadrics with the coordinate planes is shown in Figure \ref{fig:relativistic.3d}.

\begin{figure}[h]
\input{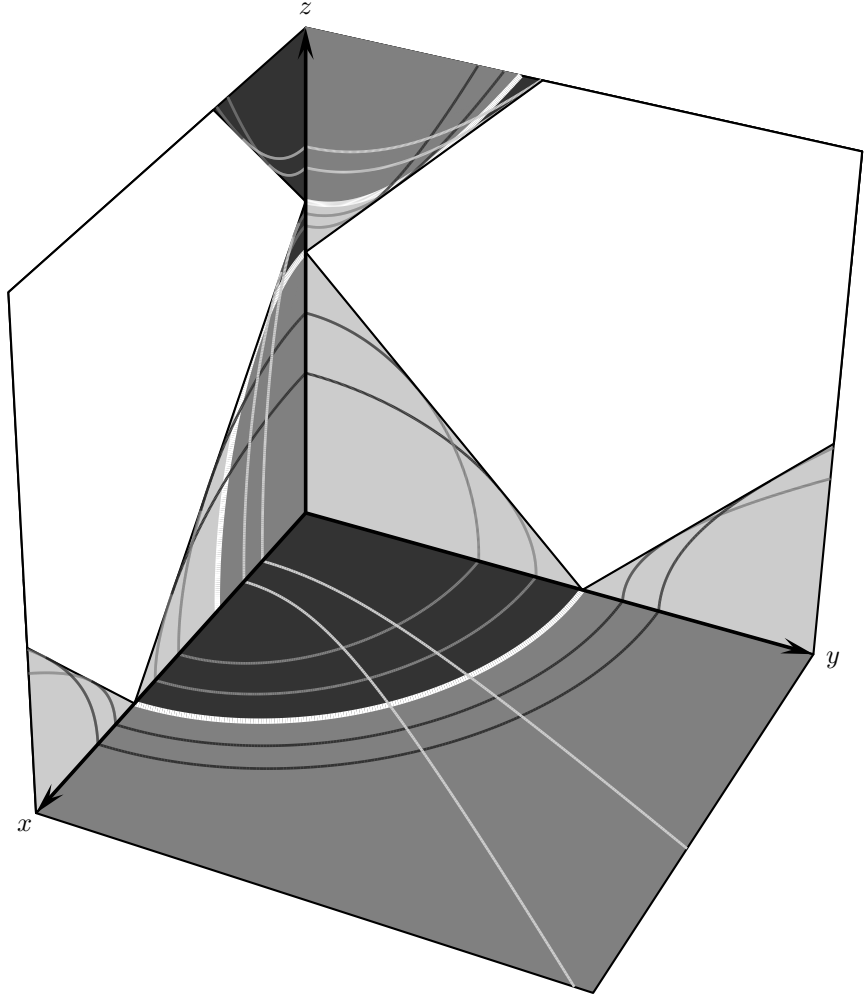}
\caption{Intersection of relativistic quadrics with coordinate planes.}\label{fig:relativistic.3d}
\end{figure}

Let us notice that from the above analysis, using Lemmae \ref{lemma:tropic.non.y} and \ref{lemma:tropic.y}, we can determine the type of each relativistic quadric with a non-empty intersection with some of the coordinate hyper-planes.

\subsubsection*{$1$-sheeted hyperboloids oriented along $z$-axis: $\lambda\in(-\infty,-c)$}
Such a hyperboloid is divided by its tropic curves into three connected components -- two of them are unbounded and mutually symmetric with respect to the $xy$-plane, while the third one is the bounded annulus placed between them.
The two symmetric ones are of $H^1$-type, while the third one is of $E$-type.

\subsubsection*{Ellipsoids: $\lambda\in(-c,b)$}
An ellipsoid is divided by the tropic curves into three bounded connected components -- two of them are mutually symmetric with respect to the $xy$-plane, while the third one is the annulus placed between them.
In this case, the symmetric components represent relativistic quadrics of $E$-type.
The annulus is of $H^1$-type.

\subsubsection*{$1$-sheeted hyperboloids oriented along $y$-axis: $\lambda\in(b,a)$}
The decomposition of those hyperboloids into relativistic quadrics is more complicated and interesting than for the other types of quadrics from (\ref{eq:confocal.quadrics3}).
By its two tropic curves and their eight light-like tangent lines, such a hyperboloid is divided into $28$ connected components:
\begin{itemize}
 \item
two bounded components placed inside the tropic curves are of $H^1$-type;

 \item
four bounded components placed between the tropic curves and light-like tangents, such that they have non-empty intersections with $xz$-plane are of $H^2$-type;

 \item
four bounded components placed between the tropic curves and light-like tangents, such that they have non-empty intersections with $yz$-plane are of $E$-type;

 \item
two bounded components, each limited by four light-like tangents, are of $H^2$-type;

 \item
four unbounded components, each limited by two light-like tangents, such that they have non-empty intersections with the $xy$-plane, are of $H^2$-type;

 \item
four unbounded components, each limited by two light-like tangents, such that they have non-empty intersections with the $yz$-plane, are of $E$-type;

 \item
eight unbounded components, each limited by four light-like tangents, are sets of points not contained in any other quadric from (\ref{eq:confocal.quadrics3}).
\end{itemize}

\subsubsection*{$2$-sheeted hyperboloids: $\lambda\in(a,+\infty)$}
Such a hyperboloid is by its tropic curves divided into four connected components:
two bounded ones are of $H^2$-type, while the two unbounded are of $H^1$-type.

\subsubsection*{Decorated Jacobi coordinates and relativistic quadrics in $d$-di\-men\-sio\-nal pse\-udo-Euclidean space}
Now we are going to introduce relativistic quadrics and their types in confocal family (\ref{eq:confocal-pseudo}) in the $d$-dimensional pseudo-Euclidean space $\mathbf{E}^{k,l}$.

\begin{definition}
\emph{Generalized Jacobi coordinates} of point $x$ in the $d$-di\-men\-sio\-nal pseudo-Euclidean space $\mathbf{E}^{k,l}$ is the unordered $d$-tuple of solutions $\lambda$ of equation:
\begin{equation}\label{eq:jednacinad}
\frac{x_1^2}{a_1-\lambda} +\dots+ \frac{x_k^2}{a_k-\lambda} +
\frac{x_{k+1}^2}{a_{k+1}+\lambda} +\dots+
\frac{x_d^2}{a_d+\lambda}=1.
\end{equation} 
\end{definition}

As already mentioned in Section \ref{sec:pseudo.confocal}, this equation has either $d$ or $d-2$ real solutions.
Besides, some of the solutions may be multiple.

The set $\Sigma_d$ of points $x$ in $\mathbf{R}^d$ where equation (\ref{eq:jednacinad}) has multiple solutions is an algebraic hyper-suface.
$\Sigma_d$ divides each quadric from (\ref{eq:confocal-pseudo}) into several connected components.
We call these components \emph{relativistic quadrics}.

Since the generalized Jacobi coordinates depend continuosly on $x$, the following definition can be made:
\begin{definition}
We say that a relativistic quadric placed on $\mathcal{Q}_{\lambda_0}$ is \emph{of type $E$} if, at each of its points, $\lambda_0$ is smaller than the other $d-1$ generalized Jacobi coordinates.

We say that a relativistic quadric placed on $\mathcal{Q}_{\lambda_0}$ is \emph{of type $H^i$} $(1<i<d-1)$ if, at each of its points, $\lambda_0$ is greater than other $i$ generalized Jacobi coordinates, and smaller than $d-i-1$ of them.

We say that a relativistic quadric placed on $\mathcal{Q}_{\lambda_0}$ is \emph{of type $0^i$} $(0<i<d-2)$ if, at each of its points, $\lambda_0$ is greater than other $i$ real generalized Jacobi coordinates, and smaller than $d-i-2$ of them.
\end{definition}

It would be interesting to analyze properties of the discriminant manifold $\Sigma_d$, as well as the combinatorial structure of the arrangement of relativistic quadrics, as it is done for $d=3$.
Remark that this description would have $[d/2]$ substantially different cases in each dimension, depending on choice of $k$ and $l$.

\begin{definition}
Suppose $(x_1,\dots,x_d)$ is a point of the $d$-dimensional Minkowski space $\mathbf{E}^{k,l}$ where equation (\ref{eq:jednacinad}) has real and different solutions.
\emph{Decorated Jacobi coordinates} of that point is the ordered $d$-tuplet of pairs:
$$
(E,\lambda_1),\quad
(H^1,\lambda_2),\quad
\dots,\quad
(H^{d-1},\lambda_d),
$$
of generalized Jacobi coordinates and the corresponding types of relativistic quadrics.
\end{definition}

Since we will consider billiard system within ellipsoids in the pseudo-Euclidean space, it is of interest to analyze behaviour of decorated Jacobi coordinates inside an ellipsoid.

\begin{proposition}\label{prop:svojstva12}
Let $\mathcal{E}$ be ellipsoid in $\mathbf{E}^{k,l}$ given by (\ref{eq:ellipsoid}).
We have:
\begin{itemize}
 \item[PE1]
each point inside $\mathcal{E}$ is the intersection of exactly $d$ quadrics from (\ref{eq:confocal-pseudo});
moreover, all these quadrics are of different relativistic types;
 \item[PE2]
the types of these quadrics are $E$, $H^1$, \dots, $H^{d-1}$ -- each type corresponds to one of the disjoint intervals of the parameter $\lambda$:
$$(-a_d,-a_{d-1}),\ (-a_{d-1},-a_{d-2}),\ \dots,\ (-a_{k+1},0),\ (0,a_k),\ (a_k,a_{k-1}),\ \dots,\ (a_2,a_1).$$
\end{itemize}
\end{proposition}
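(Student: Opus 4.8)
The plan is to read off both properties directly from equation (\ref{eq:jednacinad}), regarded as an equation for $\lambda$ with the point $x=(x_1,\dots,x_d)$ fixed; this is the pseudo-Euclidean counterpart of the argument used in the Euclidean case \cite{Audin1994} and of Theorem \ref{th:parametri.kaustike}. Fix $x$ strictly inside $\mathcal{E}$ with all coordinates nonzero (points on the coordinate hyperplanes, where a Jacobi coordinate degenerates onto a coordinate plane $\mathcal{Q}_{\varepsilon_i a_i}$, will be recovered at the end by continuity), and put
$$
f(\lambda)=\sum_{i=1}^{k}\frac{x_i^2}{a_i-\lambda}+\sum_{i=k+1}^{d}\frac{x_i^2}{a_i+\lambda}-1 .
$$
The generalized Jacobi coordinates of $x$ are exactly the zeros of $f$. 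Multiplying by $\prod_{i\le k}(a_i-\lambda)\prod_{i>k}(a_i+\lambda)$ turns $f(\lambda)=0$ into a polynomial equation of degree exactly $d$: the leading term comes only from $-1$ times that product, with nonzero coefficient $(-1)^{k+1}$, while all remaining summands have degree $d-1$. Hence $f$ has at most $d$ real zeros counted with multiplicity.

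Next I would carry out the sign analysis of $f$. Its poles, in increasing order, are
$$
-a_d<-a_{d-1}<\dots<-a_{k+1}<0<a_k<a_{k-1}<\dots<a_1 ;
$$
near a positive pole $a_i$ one has $f\to+\infty$ from the left and $f\to-\infty$ from the right, near a negative pole $-a_i$ these are reversed, and $f(\lambda)\to-1$ as $\lambda\to\pm\infty$. The hypothesis $\sum_i x_i^2/a_i<1$ (the point is inside $\mathcal{E}$) gives $f<0$ on the two unbounded intervals $(-\infty,-a_d)$ and $(a_1,+\infty)$: on $(a_1,\infty)$ the terms with $i\le k$ are nonpositive and each term with $i>k$ is smaller than $x_i^2/a_i$, and the estimate on $(-\infty,-a_d)$ is symmetric. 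On every bounded interval between two consecutive poles the endpoint limits of $f$ are $+\infty$ and $-\infty$, which forces at least one zero there, with one exception: on $(-a_{k+1},a_k)$ both endpoint limits are $+\infty$, but $f(0)=\sum_i x_i^2/a_i-1<0$, so there is at least one zero in $(-a_{k+1},0)$ and at least one in $(0,a_k)$. Summing the lower bounds gives at least $(l-1)+2+(k-1)=d$ real zeros, and the degree bound forces equality: $f$ has exactly $d$ simple real zeros, one in each of the $d$ intervals
$$
(-a_d,-a_{d-1}),\ \dots,\ (-a_{k+1},0),\ (0,a_k),\ \dots,\ (a_2,a_1),
$$
and no zero in $(-\infty,-a_d)\cup(a_1,+\infty)$. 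This establishes PE1 and the interval part of PE2 (and, as a special case, the corollary that an interior point lies on exactly two confocal ellipsoids, namely those with parameter in $(-a_{k+1},0)$ and in $(0,a_k)$).

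It then remains to pin down the relativistic types. Order the Jacobi coordinates of $x$ as $\lambda^{(1)}<\dots<\lambda^{(d)}$; they are real and distinct by the previous step. At the point $x$, exactly $j-1$ of them are smaller than $\lambda^{(j)}$ and $d-j$ are larger, so by the definition of relativistic type the connected component of $\mathcal{Q}_{\lambda^{(j)}}$ through $x$ has type $H^{j-1}$, with the convention $H^{0}=E$. This type is constant on the component because the ordering of the Jacobi coordinates can change only on the discriminant set $\Sigma_d$, and the preceding paragraph shows that $\Sigma_d$ meets the interior of $\mathcal{E}$ only inside the coordinate hyperplanes. Therefore the $d$ quadrics through $x$ carry the pairwise distinct types $E,H^1,\dots,H^{d-1}$, ordered by increasing parameter, which is PE2. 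The coordinate-hyperplane case follows by a limiting argument, a degenerate quadric $\mathcal{Q}_{\varepsilon_i a_i}$ taking over the slot whose parameter has run into a pole. The one genuinely delicate point is the sign bookkeeping in the middle step: one must handle the special double behaviour at $\lambda=0$ and the two one-sided estimates on the unbounded intervals correctly so that the count of sign changes lands exactly on $d$ — once it does, everything else is forced.
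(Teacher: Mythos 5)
Your proof is correct, and it follows the same basic route as the paper: both analyze the rational function $\lambda\mapsto\sum_i x_i^2/(a_i-\varepsilon_i\lambda)$ on the intervals between its poles, with the special two-root treatment of the middle interval $(-a_{k+1},a_k)$ coming from the interior-point condition at $\lambda=0$. The difference is in how "exactly one root per interval" is justified. The paper asserts that the function is strictly monotone on each interval other than the middle one and has a single minimum there; your argument instead clears denominators to get a polynomial of degree exactly $d$, counts at least $d$ sign changes by the intermediate value theorem, and lets the degree bound force equality. Your version is actually the more robust one: strict monotonicity of the left-hand side of (\ref{eq:jednacinad}) on, say, $(a_2,a_1)$ is not obvious (the derivative is a difference of two positive sums and can change sign when the coordinates $x_1,\dots,x_k$ are small compared with $x_{k+1},\dots,x_d$), whereas the global count $\le d$ from the degree bound closes that gap cleanly. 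You also spell out the assignment of relativistic types (the $j$-th root in increasing order sits on a component of type $H^{j-1}$, with $H^0=E$, and the simplicity of the roots off the coordinate hyperplanes keeps the ordering locally constant), which the paper's proof leaves implicit in the definitions; that addition is welcome. The only soft spot is the final "limiting argument" for points on coordinate hyperplanes, which you (like the paper) leave as a remark rather than carrying out, but that is a standard degeneration and does not affect the substance.
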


\begin{proof}
The function given by the left-hand side of (\ref{eq:jednacinad}) is continous and strictly monotonous in each interval $(-a_d,-a_{d-1})$, $(-a_{d-1},-a_{d-2})$, \dots, $(-a_{k+2},-a_{k+1})$, $(a_k,a_{k-1})$, \dots, $(a_2,a_1)$ with infinite values at their endpoints.
Thus, equation (\ref{eq:jednacinad}) has one solution in each of them.
On the other hand, in $(-a_{k+1},a_k)$, the function is tending to $+\infty$ at the endpoints, and has only one extreme value -- the minimum.
Since the value of the function for $\lambda=0$ is less than $1$ for a point inside $\mathcal{E}$, it follows that equation (\ref{eq:jednacinad}) will have two solutions in $(-a_{k+1},a_k)$ -- one positive and one negative.
\end{proof}


\subsection{Billiards within quadrics and their periodic trajectories}
\label{sec:periodic}
In this section, we are going to derive first further properties of ellipsoidal billiards in the pseudo-Euclidean spaces.
First we find in Theorem \ref{th:type} a simple and effective criterion for determining the type of a billiard trajectory, knowing its caustics.
Then we derive properties PE3--PE5 in Propostion \ref{prop:PE3-PE5}.
After that, we prove the generalization of Poncelet theorem for ellipsoidal billiards in pseudo-Euclidean spaces and derive the corresponding Cayley-type conditions, giving a complete analytical description of periodic billiard trajectories in arbitrary dimension.
These results are contained in Theorems \ref{th:cayley-pseudo} and \ref{th:poncelet-pseudo}.

\subsubsection*{Ellipsoidal billiards}
\subsubsection*{Ellipsoidal billiard}

Billiard motion within an ellipsoid in the pseudo-Euclidean space is a motion
which is uniformly straightforward inside the ellipsoid, and obeys
the reflection law on the boundary.
Further, we will consider billiard motion within ellipsoid $\mathcal{E}$, given by equation
(\ref{eq:ellipsoid}) in $\mathbf{E}^{k,l}$.
The family of quadrics confocal with $\mathcal{E}$ is (\ref{eq:confocal-pseudo}).

Since functions $F_i$ given by (\ref{eq:integralsF}) are integrals
of the billiard motion (see \cites{Mo1980,Audin1994,KhTab2009}), we
have that for each zero $\lambda$  of the equation (\ref{eq:discr}),
the corresponding quadric $\mathcal{Q}_{\lambda}$ is a caustic of
the billiard motion, i.e.\ it is tangent to each segment of the
billiard trajectory passing through the point $x$ with the velocity
vector $v$.

Note that, according to Theorem \ref{th:parametri.kaustike}, for a point placed inside $\mathcal{E}$, there are $d$ real solutions of equation (\ref{eq:jednacinad}). In other words, there are $d$ quadrics from the family (\ref{eq:confocal-pseudo}) containing such a point, although some of them may be multiple. Also, by Proposition \ref{prop:kaustike} and Theorem \ref{th:parametri.kaustike}, a billiard trajectory within an ellipsoid will always have $d-1$ caustics.

According to Remark \ref{remark:type}, all segments of a billiard trajectory within $\mathcal{E}$ will be of the same type.
Now, we can apply the reasoning from Section \ref{sec:pseudo.confocal} to billiard trajectories:

\begin{theorem}\label{th:type}
In the $d$-dimensional pseudo-Euclidean space $\mathbf{E}^{k,l}$, consider a billiard trajectory within ellipsoid $\mathcal{E}=\mathcal{Q}_0$, and let quadrics
$\mathcal{Q}_{\alpha_1}$, \dots, $\mathcal{Q}_{\alpha_{d-1}}$ from the family (\ref{eq:confocal-pseudo}) be its caustics.
Then all billiard trajectories within $\mathcal{E}$ sharing the same caustics are of the same type: space-like, time-like, or light-like, as the initial trajectory.
Moreover, the type is determined as follows:
\begin{itemize}
\item
if $\infty\in\{\alpha_1,\dots,\alpha_{d-1}\}$, the trajectories are light-like;
\item
if $(-1)^l\cdot\alpha_1\cdot\dotsc\cdot\alpha_{d-1}>0$, the trajectories are space-like;
\item
if $(-1)^l\cdot\alpha_1\cdot\dotsc\cdot\alpha_{d-1}<0$, the trajectories are time-like.
\end{itemize}
\end{theorem}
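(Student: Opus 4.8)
The plan is to read the type of the trajectory off its caustics by a sign computation with the caustic polynomial $\mathcal{P}(\lambda)$ from (\ref{eq:polinom.P}), whose roots are the caustic parameters $\alpha_1,\dots,\alpha_{d-1}$ (a root at infinity accounting for the caustic $\mathcal{Q}_\infty$ of a light-like segment). By Remark \ref{remark:type} the type is constant along a trajectory, so it suffices to determine $\sign\langle v,v\rangle_{k,l}$ for the velocity $v$ of one segment, and to exhibit this sign as a function of $\alpha_1,\dots,\alpha_{d-1}$ alone. The starting point is the identity obtained in passing from (\ref{eq:diskriminanta}) to (\ref{eq:polinom.P}),
$$
(A_{\lambda}x\circ v)^2-(A_{\lambda}v\circ v)(A_{\lambda}x\circ x-1)=\frac{\mathcal{P}(\lambda)}{\prod_{i=1}^{d}(a_i-\varepsilon_i\lambda)},\qquad \mathcal{P}(\lambda)=\sum_{i=1}^{d}\varepsilon_iF_i(x,v)\prod_{j\neq i}(a_j-\varepsilon_j\lambda),
$$
together with the fact that the off-diagonal parts of the $F_i$ from (\ref{eq:integralsF}) cancel in pairs, so that $\sum_iF_i=\langle v,v\rangle_{k,l}$, whence the leading coefficient of $\mathcal{P}$ equals $(-1)^{d-1}\big(\prod_i\varepsilon_i\big)\langle v,v\rangle_{k,l}=(-1)^{d-1+l}\langle v,v\rangle_{k,l}$.

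First I would evaluate the identity at $\lambda=0$. Taking $x$ on a segment of the trajectory and strictly inside $\mathcal{E}$ we have $A_0x\circ x<1$; since $A_0=\diag(1/a_1,\dots,1/a_d)$ is positive definite in the Euclidean sense and $v\neq0$, it follows that $A_0v\circ v>0$ and therefore the left-hand side is strictly positive. As $\prod_ia_i>0$, this gives $\mathcal{P}(0)>0$; in particular no $\alpha_i$ vanishes, so the product $\alpha_1\cdots\alpha_{d-1}$ has a well-defined sign.

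Next, in the non-light-like case $\mathcal{P}$ has degree $d-1$, so I would use the factored form $\mathcal{P}(\lambda)=(-1)^{d-1+l}\langle v,v\rangle_{k,l}\prod_{i=1}^{d-1}(\lambda-\alpha_i)$ and evaluate at $0$:
$$
0<\mathcal{P}(0)=(-1)^{d-1+l}\langle v,v\rangle_{k,l}\,(-1)^{d-1}\prod_{i=1}^{d-1}\alpha_i=(-1)^{l}\langle v,v\rangle_{k,l}\prod_{i=1}^{d-1}\alpha_i,
$$
so $\langle v,v\rangle_{k,l}$ and $(-1)^{l}\alpha_1\cdots\alpha_{d-1}$ have the same sign, which yields the space-like and time-like cases. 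For the light-like case, $\langle v,v\rangle_{k,l}=0$ forces the leading coefficient of $\mathcal{P}$ to vanish, so $\deg\mathcal{P}\le d-2$; since by Theorem \ref{th:parametri.kaustike} a line meeting $\mathcal{E}$ has $d-1$ caustics (with $\mathcal{Q}_\infty$ admitted), at least one $\alpha_i$ equals $\infty$. Conversely, a caustic $\mathcal{Q}_\infty$ lowers the degree of $\mathcal{P}$ and hence forces $\langle v,v\rangle_{k,l}=0$. In every case the type is determined by $\alpha_1,\dots,\alpha_{d-1}$ alone, so any two trajectories within $\mathcal{E}$ sharing the caustics share the type.

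The delicate point is the sign bookkeeping: the factor $(-1)^{d-1}$ from writing a degree $d-1$ polynomial through its roots must be combined correctly with the factor $\prod_i\varepsilon_i=(-1)^{l}$ relating the leading coefficient of $\mathcal{P}$ to $\langle v,v\rangle_{k,l}$, and the evaluation $\mathcal{P}(0)>0$ must be carried out with a consistent normalization of $\mathcal{P}$. I would also note that the argument is insensitive to degeneracies such as coinciding caustics or segments lying in a coordinate hyperplane, since the displayed identity is a polynomial identity whose denominator does not vanish at $\lambda=0$.
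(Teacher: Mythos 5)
Your proof is correct, but it takes a genuinely different route from the paper's. You evaluate the exact rational identity behind (\ref{eq:diskriminanta})--(\ref{eq:polinom.P}) at $\lambda=0$: for a point $x$ strictly inside $\mathcal{E}$ the Euclidean positive definiteness of $A_0$ gives $\mathcal{P}(0)>0$, and combining this with the factorization of $\mathcal{P}$ over its roots $\alpha_1,\dots,\alpha_{d-1}$ and the leading coefficient $(-1)^{d-1+l}\langle v,v\rangle_{k,l}$ yields $\sign\langle v,v\rangle_{k,l}=\sign\bigl((-1)^{l}\alpha_1\cdots\alpha_{d-1}\bigr)$ directly, with the light-like case read off from the degree drop. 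The paper instead argues by continuity: it observes that the sign of $(-1)^{l}\prod\alpha_i$ can only change when a caustic parameter passes through infinity, i.e.\ exactly when the line crosses the subvariety of light-like lines, and then pins down the correspondence by checking the single family of (space-like) lines lying in the Euclidean coordinate subspace $\mathbf{E}^k\times\mathbf{0}^l$. Your computation is more local and self-contained --- it needs no connectedness argument and no normalizing example, and as a bonus it shows $\mathcal{P}(0)>0$, hence that no chord of $\mathcal{E}$ has $\mathcal{E}$ itself as a caustic. The one point to make explicit in a write-up is the normalization of $\mathcal{P}$: with $\mathcal{P}$ taken as the numerator of $\sum_i\varepsilon_iF_i/(a_i-\varepsilon_i\lambda)$ over $\prod_i(a_i-\varepsilon_i\lambda)$, the coefficient of $\lambda^{d-1}$ is indeed $(-1)^{d-1+l}\langle v,v\rangle_{k,l}$ as you compute (not $\langle v,v\rangle_{k,l}$ as stated after (\ref{eq:polinom.P})); since you derive both the leading coefficient and the inequality $\mathcal{P}(0)>0$ from the same normalization, your signs are consistent, and a quick check in the Minkowski plane ($d=2$, $l=1$: nearly horizontal chords are space-like and have caustic parameter near $-b$, so $(-1)^{l}\alpha>0$) confirms the conclusion.
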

\begin{proof}
Since values of functions $F_i$ given by (\ref{eq:integralsF}) are preserved by the billiard reflection and
$$
\sum_{i=1}^dF_i(x,v) = \langle{v,v}\rangle_{k,l},
$$
the type of the billiard trajectory depends on the sign of the sum
$\sum_{i=1}^dF_i(x,v)$.
From the equivalence of relations (\ref{eq:discr}) and (\ref{eq:polinom.P}), it follows that the sum depends only of the roots of $\mathcal{P}$, i.e.\ of parameters $\alpha_1$, \dots, $\alpha_{d-1}$ of the caustics.

Notice that the product $\alpha_1\cdot\dotsc\cdot\alpha_{d-1}$ is changed continuously on the variety of lines in $\mathbf{E}^{k,l}$ that intersect $\mathcal{E}$, with infinite singularities at light-like lines.
Besides, the subvariety of light-like lines divides the variety of all lines into subsets of space-like and time-like ones.
When passing through light-like lines, one of parameters $\alpha_i$ will pass through the infinity from positive to the negative part of the reals or vice versa;
thus, a change of sign of the product occurs simultaneously with a change of the type of line.

Now, take $\alpha_{j}=-a_{k+j}$ for $1\le j\le l$, and notice that all lines placed in the $k$-dimensional coordinate subspace $\mathbf{E}^k\times\mathbf{0}^l$ will have the corresponding degenerate caustics.
The reduced metrics is Euclidean in this subspace, thus such lines are space-like.
Since $\alpha_1$, \dots, $\alpha_k$ are positive for those lines of $\mathbf{E}^k\times\mathbf{0}^l$ that intersect $\mathcal{E}$, the statement is proved.
\end{proof}

Let us note that, in general, for the fixed $d-1$ quadrics from the confocal family, there can be found joint tangents of different types, which makes Theorem \ref{th:type} in a way unexpected. 
However, it turns out that, with fixed caustics, only lines having one type may have intersection with a given ellipsoid --- and only these lines give rise to billiard trajectories.

Next, we are going to investigate the behaviour of decorated Jacobi coordinates along ellipsoidal billiard trajectories.

\begin{proposition}\label{prop:PE3-PE5}
Let $\mathcal{T}$ be a trajectory of the billiard within ellipsoid $\mathcal{E}$ in pseudo-Euclidean space $\mathbf{E}^{k,l}$.
Denote by $\alpha_1$, \dots, $\alpha_{d-1}$ the parameters of the caustics from the confocal family (\ref{eq:confocal-pseudo}) of $\mathcal{T}$, and take $b_1$, \dots, $b_p$, $c_1$, \dots, $c_q$ as in Theorem \ref{th:parametri.kaustike}.
Then we have:
\begin{itemize}
 \item[PE3]
along $\mathcal{T}$, each generalized Jacobi coordinate takes values in exactly one of the segments:
$$
[c_{2l-1},c_{2l-2}],\ \dots,\ [c_2,c_1],\ [c_1,0],\ [0,b_1],\ [b_2,b_3],\ \dots,\ [b_{2k-2},b_{2k-1}];
$$

 \item[PE4]
along $\mathcal{T}$, each generalized Jacobi coordinate can achieve local minima and maxima only at touching points with corresponding caustics, intersection points with corresponding coordinate hyper-planes, and at reflection points;

 \item[PE5]
values of generalized Jacobi coordinates at critical points are $0$, $b_1$, \dots, $b_{2k-1}$, $c_1$, \dots, $c_{2l-1}$;
between the critical points, the coordinates are changed monotonously.
\end{itemize}
\end{proposition}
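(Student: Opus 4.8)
The plan is to follow the same strategy that underlies the Euclidean properties E3--E5 recalled in Section~\ref{sec:confocal.euclid}, using Theorem~\ref{th:parametri.kaustike} and Proposition~\ref{prop:svojstva12} as the pseudo-Euclidean replacements for the Euclidean distribution results. First I would reduce the claim to a single linear piece of $\mathcal T$: write such a segment as $x(t)=x_0+tv$, with $t$ ranging over the closed interval between two consecutive reflection points. Since the functions $F_i$ of (\ref{eq:integralsF}) are integrals of the billiard flow, the caustic parameters $\alpha_1,\dots,\alpha_{d-1}$ (the roots of $\mathcal P$ in (\ref{eq:polinom.P}), with $\infty$ adjoined in the light-like case) are common to all segments of $\mathcal T$, and by Remark~\ref{remark:type} the type of $\mathcal T$ is fixed; so it suffices to analyse the functions $t\mapsto\lambda_i(t)$, where the $\lambda_i(t)$ are the roots in $\lambda$ of (\ref{eq:jednacinad}) evaluated at $x(t)$.

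The mechanism is the quadratic nature of the tangency condition. For fixed $\lambda$, the equation (\ref{eq:tangent}) saying $x(t)\in\mathcal Q_\lambda$ is a quadratic in $t$, and its discriminant is, up to a sign-definite factor away from the degenerate quadrics, equal to $\mathcal P(\lambda)/\prod_i(a_i-\varepsilon_i\lambda)$ by (\ref{eq:diskriminanta})--(\ref{eq:polinom.P}). Hence the line $x(t)$ is tangent to $\mathcal Q_\lambda$ exactly when $\lambda\in\{\alpha_1,\dots,\alpha_{d-1}\}$. Differentiating (\ref{eq:jednacinad}) along the line turns the condition $\lambda_i'(t_*)=0$ into $A_{\lambda_i(t_*)}\,x(t_*)\circ v=0$, which says precisely that $t_*$ is a double root of the quadratic (\ref{eq:tangent}) with $\lambda=\lambda_i(t_*)$; feeding this back into (\ref{eq:diskriminanta}) one gets that this can happen only in one of three ways: $\lambda_i(t_*)\in\{\alpha_1,\dots,\alpha_{d-1}\}$ (the line touches a caustic), $\mathcal Q_{\lambda_i(t_*)}$ is a coordinate hyper-plane $x_m=0$, in which case $\lambda_i(t_*)=\varepsilon_m a_m$ and $x_m(t_*)=0$, or $x(t_*)\in\mathcal E=\mathcal Q_0$ and $\lambda_i(t_*)=0$, i.e.\ $t_*$ is a reflection point. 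This is exactly PE4.

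For PE3 and PE5 I would then argue as follows. At every $t$ with $x(t)$ strictly inside $\mathcal E$, properties PE1--PE2 of Proposition~\ref{prop:svojstva12} together with the last part of Theorem~\ref{th:parametri.kaustike} give exactly one root of (\ref{eq:jednacinad}) in each of the $d$ open intervals listed in PE3, so we may label the branches $\lambda_i$ by these intervals. Since the $\lambda_i(t)$ move continuously and a branch can leave its interval only through an endpoint, while the set of attainable endpoint values is, by PE4, exactly $\{0,b_1,\dots,b_{2k-1},c_1,\dots,c_{2l-1}\}$, each $\lambda_i$ stays in the closure of its interval throughout $\mathcal T$; this is PE3. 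Finally, on any $t$-interval free of the critical points described in PE4, the discriminant of (\ref{eq:tangent}) does not vanish along $\lambda=\lambda_i(t)$, hence $\lambda_i'(t)\neq0$, so $\lambda_i$ is strictly monotone there; combined with PE3 this forces the critical values of $\lambda_i$ to be precisely the endpoints of its interval, namely $0,b_1,\dots,b_{2k-1},c_1,\dots,c_{2l-1}$, which is PE5.

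The step I expect to be the main obstacle is the bookkeeping at the reflection points and the coordinate-hyper-plane crossings: one must check that at such a $t_*$ the relevant branch $\lambda_i$ genuinely turns around --- attains a one-sided local extremum with the predicted boundary value --- rather than passing to a neighbouring interval. For a crossing of $x_m=0$ this follows from the sign change of $x_m^2$ and the constancy of the discriminant's sign on the adjacent intervals; for a reflection point it follows by invoking the One Reflection Theorem~\ref{th:ORT} (equivalently Corollary~\ref{cor:ORT}), so that the incoming and outgoing segments share the same caustics, together with the observation that the reflection law reverses the sign of $A_0\,x\circ v$ at a point of $\mathcal Q_0$ (this quantity being the component of $v$ along the pseudo-normal), so that $\lambda_i'$ changes sign across $t_*$ and the concatenated function has a local extremum there equal to $0$. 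Once this local picture and the matching of branches across reflections are in place, PE3--PE5 follow by gluing the segments along $\mathcal T$.
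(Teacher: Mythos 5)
Your proof is correct and follows essentially the same route as the paper: PE3 is read off from Theorem \ref{th:parametri.kaustike}, and PE4--PE5 come from the observation that a generalized Jacobi coordinate is monotone wherever the line meets the corresponding quadric transversally, so that critical points can occur only at tangencies with the caustics, at crossings of the degenerate coordinate-hyperplane quadrics, and at reflection points (where only the coordinate taking the value $0$ turns around). Your explicit discriminant computation and the sign reversal of $A_0x\circ v$ under reflection are simply a spelled-out version of the paper's much terser transversality argument.
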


\begin{proof}
Property PE3 follows from Theorem \ref{th:parametri.kaustike}.
Along each line, the generalized Jacobi coordinates are changed continuously.
Moreover, they are monotonous at all points where the line has a transversal intersection with a non-degenerate quadric.
Thus, critical points on a line are exactly touching points with corresponding caustics and intersection points with corresponding coordinate hyper-planes.

Note that reflection points of $\mathcal{T}$ are also points of transversal intersection with all quadrics containing those points, except with $\mathcal{E}$.
Thus, at such points, $0$ will be a critical value of the corresponding generalized Jacobi coordinate, and all other coordinates are monotonous.
This proves PE4 and PE5.
\end{proof}

The properties we obtained are pseudo-Euclidean analogs of properties E3--E5, which are true for ellipsoidal billiards in Euclidean spaces.

\subsubsection*{Analytic conditions for periodic trajectories}

Now, we are going to derive the corresponding analytic conditions of Cayley's type for periodic trajectories of the ellipsoidal billiard in the pseudo-Euclidean space, and therefore to obtain the generalization of the Poncelet theorem to pseudo-Euclidean spaces.

\begin{theorem}[Generalized Cayley-type conditions]\label{th:cayley-pseudo}
In the pseudo-Euclidean space $\mathbf{E}^{k,l}$ ($k+l=d$), consider a billiard trajectory $\mathcal{T}$ within ellipsoid $\mathcal{E}$ given by equation (\ref{eq:ellipsoid}).
Let $\mathcal{Q}_{\alpha_1}$, \dots, $\mathcal{Q}_{\alpha_{d-1}}$ from confocal family (\ref{eq:confocal-pseudo}) be caustics of $\mathcal{T}$.

Then $\mathcal{T}$ is periodic with period $n$ if and only if the following condition is satisfied:
\begin{itemize}
 \item
for $n=2m$:
\begin{gather*}
\rank\left(
\begin{array}{llll}
B_{d+1} & B_{d+2} & \dots & B_{d+m-1}\\
B_{d+2} & B_{d+3} & \dots & B_{d+m}\\
\dots & \dots & \dots & \dots\\
B_{m+1} & B_{m+2} & \dots & B_{2m-1}
\end{array}
\right) < m-d+1
\quad\text{or}\\
\rank\left(
\begin{array}{llll}
B_{d+1} & B_{d+2} & \dots & B_{d+m}\\
\dots & \dots & \dots & \dots \\
B_{m} & B_{m+1} & \dots & B_{2m-1}\\
C_{m} & C_{m+1} & \dots & C_{2m-1}\\
D_{m} & D_{m+1} & \dots & D_{2m-1}
\end{array}
\right) < m-d+2;
\end{gather*}
\item
for $n=2m+1$:
\begin{gather*}
\rank\left(
\begin{array}{llll}
B_{d+1} & B_{d+2} & \dots & B_{d+m} \\
B_{d+2} & B_{d+3} & \dots & B_{d+m+1}\\
\dots & \dots & \dots & \dots\\
B_{m+1} & B_{m+2} & \dots & B_{2m}\\
C_{m+1} & C_{m+2} & \dots & C_{2m}
\end{array}
\right) < m-d+2
\quad\text{or}\\
\rank\left(
\begin{array}{llll}
B_{d+1} & B_{d+2} & \dots & B_{d+m} \\
B_{d+2} & B_{d+3} & \dots & B_{d+m+1}\\
\dots & \dots & \dots & \dots\\
B_{m+1} & B_{m+2} & \dots & B_{2m}\\
D_{m+1} & D_{m+2} & \dots & D_{2m}
\end{array}
\right) < m-d+2.
\end{gather*}
\end{itemize}
Here,
$(B_i)$, $(C_i)$, $(D_i)$ are coefficients in the Taylor expansions around $\lambda=0$ of the functions
$f(\lambda)=\sqrt{(\alpha_1-\lambda)\cdot\ldots\cdot(\alpha_{d-1}-\lambda)\cdot(a_1-\varepsilon_1\lambda)\cdot\ldots\cdot(a_d-\varepsilon_d\lambda)}$,
$\dfrac{f(\lambda)}{b_1-\lambda}$, $\dfrac{f(\lambda)}{c_1-\lambda}$ respectively.
\end{theorem}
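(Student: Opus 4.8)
The plan is to transfer to the pseudo-Euclidean setting the algebro-geometric scheme worked out for Euclidean ellipsoidal billiards in \cites{DragRadn2006jms,DragRadn2011book}. First I would attach to the caustics the hyperelliptic curve
$$
\Gamma\ :\quad y^{2}=\prod_{i=1}^{d-1}(\alpha_i-\lambda)\cdot\prod_{i=1}^{d}(a_i-\varepsilon_i\lambda),
$$
whose right-hand side $\mathcal{P}(\lambda)$ has degree $2d-1$, so that $\Gamma$ has genus $g=d-1$ and a single branch point $E_\infty$ over $\lambda=\infty$. Write $P_0^{\pm}$ for the two points of $\Gamma$ over $\lambda=0$ (in contrast with the Euclidean case, $0$ is not a branch point here), and $P_{b_1}$, $P_{c_1}$ for the branch points over $b_1$, $c_1$. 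By Proposition \ref{prop:PE3-PE5} together with Theorem \ref{th:parametri.kaustike}, the ordered generalized Jacobi coordinates along $\mathcal{T}$ oscillate precisely in the $d$ consecutive intervals into which the $\lambda$-line is cut by $0$ and by the branch points of $\Gamma$; this is the geometric fact identifying $\Gamma$ as the curve on which the billiard flow linearizes.

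Next I would invoke the integrability of billiards within quadrics (\cites{Mo1980,Ves1988,MoVes1991,Audin1994,KhTab2009}): to each segment $\ell_j$ of $\mathcal{T}$ one associates a degree-$g$ divisor class on $\Gamma$ recording the tangencies of $\ell_j$ with the $d-1$ caustics, a billiard reflection off $\mathcal{E}=\mathcal{Q}_0$ shifts this class by a fixed element of $\Jac(\Gamma)$, and following the segments once around a closed trajectory shows that $\mathcal{T}$ is $n$-periodic if and only if
$$
n\bigl(P_0^{+}-E_\infty\bigr)\ \sim\ \sum_{\beta} n_{\beta}\bigl(P_{\beta}-E_\infty\bigr)\quad\text{in }\Jac(\Gamma),
$$
the sum being over the branch points $\beta$ of $\Gamma$ and $n_{\beta}$ the number of times a generalized Jacobi coordinate attains the critical value $\beta$ along $\mathcal{T}$; since $\mathrm{div}(\lambda)=P_0^{+}+P_0^{-}-2E_\infty$, replacing $P_0^{+}$ by $P_0^{-}$ only reverses orientation. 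Now Proposition \ref{prop:PE3-PE5} shows that the values $0$, $b_1$, $c_1$ are attained only by the two coordinates flanking $\lambda=0$, while every other critical value $\beta$ is attained by a single coordinate oscillating between $\beta$ and its interval partner; the monotonicity/alternation argument of the proof of Theorem \ref{th:cayley2}, together with the symmetry of $\mathcal{T}$ with respect to the coordinate hyperplanes, then gives $n=n_{b_1}+n_{c_1}$, forces every remaining $n_{\beta}$ to be even, and ties the parities of $n_{b_1}$, $n_{c_1}$ to that of $n$. Since $2P_{\beta}\sim 2E_\infty$, the branch points being Weierstrass points, the displayed relation collapses to $nP_0^{+}\sim nE_\infty$ or $nP_0^{+}\sim(n-2)E_\infty+P_{b_1}+P_{c_1}$ when $n$ is even, and to $nP_0^{+}\sim(n-1)E_\infty+P_{b_1}$ or $nP_0^{+}\sim(n-1)E_\infty+P_{c_1}$ when $n$ is odd --- exactly the four alternatives of the theorem.

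It remains to translate each divisor relation into the stated rank condition by Riemann--Roch. For $nP_0^{+}\sim nE_\infty$ with $n=2m$ this holds if and only if the space $\mathcal{L}(nE_\infty)$ --- which, using $\deg\mathcal{P}=2d-1$, has the basis $1,\lambda,\dots,\lambda^{m},\,y,\lambda y,\dots,\lambda^{m-d}y$ --- contains a nonzero function vanishing to order at least $n$ at $P_0^{+}$. Writing such a function as $\sum_{i=0}^{m}c_i\lambda^{i}+\sum_{j=0}^{m-d}e_j\lambda^{j}y$ and expanding it at $P_0^{+}$ in the local parameter $\lambda$ by means of the Taylor series $f(\lambda)=B_0+B_1\lambda+\cdots$, the coefficients $c_i$ absorb the first $m+1$ Taylor coefficients for free, and the vanishing of the next $m-1$ coefficients becomes a homogeneous linear system in the $e_j$ whose matrix is precisely the Hankel array $\bigl(B_{d+1+p+q}\bigr)$ of the theorem; it has a nontrivial solution exactly when its rank drops below the number $m-d+1$ of unknowns. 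The other three relations are handled identically, replacing $\mathcal{L}(nE_\infty)$ by $\mathcal{L}\bigl((n-2)E_\infty+P_{b_1}+P_{c_1}\bigr)$, $\mathcal{L}\bigl((n-1)E_\infty+P_{b_1}\bigr)$ or $\mathcal{L}\bigl((n-1)E_\infty+P_{c_1}\bigr)$ and enlarging the basis by the functions $f(\lambda)/(b_1-\lambda)$ and $f(\lambda)/(c_1-\lambda)$, whose Taylor coefficients about $\lambda=0$ are the $C_i$ and $D_i$; this reproduces the matrices built from $B$ and $C$, from $B$ and $D$, and from $B$, $C$ and $D$.

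The step I expect to be the real obstacle is the second one: establishing the precise per-reflection divisor shift in the pseudo-Euclidean family and, above all, the combinatorial bookkeeping of the $2$-torsion contributions, since it is exactly that count which decides whether $P_{b_1}$, $P_{c_1}$, both, or neither enters, and how this is correlated with the parity of $n$. One should also check that the Riemann--Roch dimension counts used above remain valid in the range of $n$ relevant to each case, treating small periods directly if necessary; and the light-like case $\infty\in\{\alpha_1,\dots,\alpha_{d-1}\}$ is subsumed by the convention that the corresponding factor is dropped from $\mathcal{P}$ while $E_\infty$ serves as one of the caustic points.
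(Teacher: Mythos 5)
Your proposal follows essentially the same route as the paper's proof: the same hyperelliptic curve $\mu^2=\mathcal{P}_1(\lambda)$ of genus $d-1$, the same use of properties PE3--PE5 to reduce periodicity to the Abel--Jacobi relation $n\mathcal{A}(P_0)\equiv n_1\mathcal{A}(P_{b_1})+n_2\mathcal{A}(P_{c_1})$ with $n_1+n_2=n$, and the same translation into Hankel rank conditions via bases of the spaces $\mathcal{L}(nE_\infty)$, $\mathcal{L}((n-2)E_\infty+P_{b_1}+P_{c_1})$, etc. The paper delegates the parity bookkeeping and the Riemann--Roch computation to its earlier reference \cite{DragRadn1998b}; your write-up supplies exactly those details, and they check out.
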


\begin{proof}
Denote:
$$
\mathcal{P}_1(\lambda)=(\alpha_1-\lambda)\cdot\ldots\cdot(\alpha_{d-1}-\lambda)\cdot(a_1-\varepsilon_1\lambda)\cdot\ldots\cdot(a_d-\varepsilon_d\lambda).
$$
Following Jacobi \cite{JacobiGW}, along a given billiard trajectory, we consider the integrals:
\begin{equation}\label{eq:int}
\sum_{s=1}^d\int\frac{d\lambda_s}{\sqrt{\mathcal{P}_1(\lambda_s)}},
\quad
\sum_{s=1}^d\int\frac{\lambda_{s}d\lambda_s}{\sqrt{\mathcal{P}_1(\lambda_s)}},
\quad\dots,\quad
\sum_{s=1}^d\int\frac{\lambda_{s}^{d-2}d\lambda_s}{\sqrt{\mathcal{P}_1(\lambda_s)}}.
\end{equation}
By PE3 of Proposition \ref{prop:PE3-PE5}, we may suppose that:
\begin{gather*}
\lambda_1\in[0,b_1],\ \lambda_i\in[b_{2i-2},b_{2i-1}]\ \text{for}\ 2\le i\le k;
 \\
\lambda_{k+1}\in[c_1,0],\ \lambda_{k+j}\in[c_{2j-1},c_{2j-2}]\ \text{for}\ 2\le j\le l.
\end{gather*}
Along a billiard trajectory, by PE4 and PE5 of Proposition \ref{prop:PE3-PE5}, each $\lambda_{s}$ will pass through the corresponding interval monotonously from one endpoint to another and vice versa alternately.
Notice also that values $b_1$, \dots, $b_{2k-1}$, $c_1$, \dots, $c_{2l-1}$ correspond to the branching points of hyper-elliptic curve: 
\begin{equation}\label{eq:curve}
\mu^2=\mathcal{P}_1(\lambda).
\end{equation}
Thus, calculating integrals (\ref{eq:int}), we get that the billiard trajectory is closed after $n$ reflections if and only if, for some $n_1$, $n_2$ such that $n_1+n_2=n$:
$$
n\mathcal{A}(P_0)\equiv n_1\mathcal{A}(P_{b_1})+n_2\mathcal{A}(P_{c_1}),
$$
on the Jacobian of curve (\ref{eq:curve}).
Here, $\mathcal{A}$ is the Abel-Jacobi map, and $P_t$ is a point on the curve corresponding to $\lambda=t$.
Further, in the same manner as in \cite{DragRadn1998b}, we obtain the conditions as stated in the theorem.
\end{proof}

As an immediate consequence, we get:

\begin{theorem}[Generalized Poncelet theorem]\label{th:poncelet-pseudo}
In pseudo-Euclidean space $\mathbf{E}^{k,l}$ ($k+l=d$), consider a billiard trajectory $\mathcal{T}$ within ellipsoid $\mathcal{E}$.

If $\mathcal{T}$ is periodic and become closed after $n$ reflections on the ellipsoid, then any other trajectory within $\mathcal{E}$ having the same caustics as $\mathcal{T}$ is also periodic with period $n$.
\end{theorem}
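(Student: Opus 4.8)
The plan is to obtain this statement exactly as announced, namely as an immediate consequence of the Cayley-type criterion of Theorem \ref{th:cayley-pseudo}. The decisive observation is that the matrices occurring there are assembled solely from the Taylor coefficients $(B_i)$, $(C_i)$, $(D_i)$ of the three functions $f(\lambda)=\sqrt{(\alpha_1-\lambda)\cdot\ldots\cdot(\alpha_{d-1}-\lambda)\cdot(a_1-\varepsilon_1\lambda)\cdot\ldots\cdot(a_d-\varepsilon_d\lambda)}$, $f(\lambda)/(b_1-\lambda)$ and $f(\lambda)/(c_1-\lambda)$, and that each of these depends only on the caustic parameters $\alpha_1,\dots,\alpha_{d-1}$: the numbers $a_1,\dots,a_d$ and the signs $\varepsilon_i$ are fixed once the ellipsoid $\mathcal{E}$ and the space $\mathbf{E}^{k,l}$ are chosen, and the auxiliary numbers $b_1,c_1$ of Theorem \ref{th:parametri.kaustike} are themselves determined by the $a_i$ and the $\alpha_j$. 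Thus the analytic condition for $n$-periodicity involves no feature of the trajectory $\mathcal{T}$ beyond its caustics.

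With this in hand the argument is short. First I would recall that, by Chasles' theorem together with Proposition \ref{prop:kaustike} and Theorem \ref{th:parametri.kaustike}, every billiard trajectory within $\mathcal{E}$ possesses a well-defined collection of $d-1$ caustics from the confocal family (\ref{eq:confocal-pseudo}), and that two such trajectories have the same caustics precisely when they share the parameters $\alpha_1,\dots,\alpha_{d-1}$. Now suppose $\mathcal{T}$ is periodic with period $n$. By Theorem \ref{th:cayley-pseudo} the corresponding rank condition is satisfied by the coefficients attached to $(\alpha_1,\dots,\alpha_{d-1})$; but that very same condition is, again by Theorem \ref{th:cayley-pseudo}, the necessary and sufficient condition for $n$-periodicity of \emph{any} trajectory within $\mathcal{E}$ whose caustics are $\mathcal{Q}_{\alpha_1},\dots,\mathcal{Q}_{\alpha_{d-1}}$. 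Hence all such trajectories are periodic, and with the same period $n$, which is the assertion.

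It is worth isolating where the real content sits, since the step above only redistributes it. Inside the proof of Theorem \ref{th:cayley-pseudo} one linearizes the billiard dynamics on a fixed-caustics level set through the Abel--Jacobi map $\mathcal{A}$ of the hyper-elliptic curve $\mu^2=\mathcal{P}_1(\lambda)$ of (\ref{eq:curve}), so that closing up after $n$ reflections becomes the relation $n\mathcal{A}(P_0)\equiv n_1\mathcal{A}(P_{b_1})+n_2\mathcal{A}(P_{c_1})$ on the Jacobian, with $n_1+n_2=n$; since the curve and the points $P_0$, $P_{b_1}$, $P_{c_1}$ depend only on the caustics, this relation holds for one trajectory in the family exactly when it holds for all. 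The one point I would treat with genuine care — essentially the only obstacle — is that the winding numbers $n_1$, $n_2$, together with the parity information that enters the passage to the stated rank conditions, can be chosen uniformly over the whole family. This is precisely what properties PE3--PE5 of Proposition \ref{prop:PE3-PE5} provide: on a fixed-caustics level set each generalized Jacobi coordinate oscillates monotonically between the same two endpoints for every orbit, so every trajectory with the given caustics meets the relevant coordinate hyper-planes and caustics with the same combinatorics, and the counting that produces $n_1$, $n_2$ is therefore common to all of them. Once this uniformity is in place, the reduction to Theorem \ref{th:cayley-pseudo} is verbatim and the Poncelet statement follows.
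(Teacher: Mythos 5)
Your proposal is correct and coincides with the paper's own argument: the paper derives Theorem \ref{th:poncelet-pseudo} as an immediate consequence of Theorem \ref{th:cayley-pseudo}, precisely because the rank conditions there depend only on the caustic parameters and the fixed data of the confocal family, not on the individual trajectory. Your additional remarks on the Abel--Jacobi linearization and the uniformity of $n_1$, $n_2$ via PE3--PE5 correctly locate where the substantive work lives, namely inside the proof of the Cayley-type theorem.
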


\begin{remark}
The generalization of the Full Poncelet theorem from \cite{CCS1993} to pseudo-Euclidean spaces is obtained in \cite{WFSWZZ2009}.
However, only space-like and time-like trajectories were discussed there.

A Poncelet-type theorem for light-like geodesics on the ellipsoid in the three-dimensional Minkowski space is proved in \cite{GKT2007}.
\end{remark}

\begin{remark}
Theorems \ref{th:cayley-pseudo} and \ref{th:poncelet-pseudo} will also hold in symmetric and degenerated cases, that is when some of the parameters $\varepsilon_i a_i$, $\alpha_{j}$ concide, or in the case of light-like trajectories, when $\infty\in\{\alpha_{j}\mid1\le j\le d-1\}$.
In such cases, we need to apply the desingularisation of the corresponding curve, as explained in detail in our works \cites{DragRadn2006jmpa,DragRadn2008}.

When we consider light-like trajectories, then the factor containing the infinite parameter is ommited from polynomial $\mathcal{P}_1$.
\end{remark}

\begin{example}
Let us find all $4$-periodic trajectories within ellipse $\mathcal{E}$ given by
(\ref{eq:ellipse}) in the Minkowski plane, i.e.~ all conics $\mathcal{C}_{\alpha}$ from the confocal family (\ref{eq:confocal.conics}) corresponding to such trajectories.

By Theorem \ref{th:cayley-pseudo}, the condition is $B_3=0$, with
$$
\sqrt{(a-\lambda)(b+\lambda)(\alpha-\lambda)}=B_0+B_1\lambda+B_2\lambda^2+B_3\lambda^3+\dots
$$
being the Taylor expansion around $\lambda=0$.
Since
$$
B_3=\frac{(-ab-a\alpha+b\alpha)(-ab+a\alpha+b\alpha)(ab+a\alpha+b\alpha)}
{16(ab\alpha)^{5/2}},
$$
we obtain the following solutions:
$$
\alpha_1=\frac{ab}{b-a},\quad
\alpha_2=\frac{ab}{a+b},\quad
\alpha_3=-\frac{ab}{a+b}.
$$
Since $\alpha_1\not\in(-b,a)$ and $\alpha_2,\alpha_3\in(-b,a)$, conic
$\mathcal{C}_{\alpha_1}$ is a hyperbola, while $\mathcal{C}_{\alpha_2}$,
$\mathcal{C}_{\alpha_3}$ are ellipses.
\end{example}

\section{Integrable line congruences and double reflection nets}
\label{sec:cong}
As a modern scientific discipline, the Discrete Differential Geometry emerged quite recently (see \cite{BS2008book}), within a study of the lattice geometry.
So-called integrability conditions for quad-graphs have a fundamental role there.
On the other hand, the geodesics on an ellipsoid are one of the most important and exciting  examples of the classical Differential Geometry.
Billiard systems within quadrics are known to be  seen as natural discretizations of the systems of geodesics on ellipsoids.
In the sequel we are going to present them as a part of the building blocks of the foundations of the Discrete Differential Geometry.

The main elements of the systems on quad-graphs are the equations of the form $Q(x,x_1,x_2,x_{12})=0$ on quadrilaterals, where $Q$ is a multiaffine polynomial, that is a polynomial of degree one in each argument.
Such equations  are called \emph{quad-equations}.
The field variables $x_i$  are assigned to four vertices of a quadrilateral as in Figure \ref{fig:quad}.
\begin{figure}[h]
\centering
\input{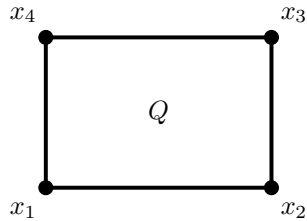}
\caption{Quad-equation $Q(x_1,x_2,x_3,x_4)=0$.}\label{fig:quad}
\end{figure}
The quad equation  can be solved for each variable, and the solution is a rational function of the other three variables.
Following \cite{ABS2009}, we consider the idea of integrability as consistency, see Figure \ref{fig:cube}.
\begin{figure}[h]
\centering
\input{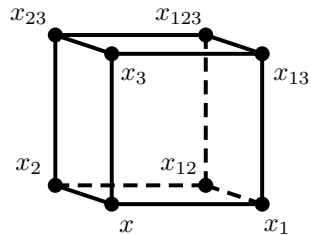}
 \caption{3D-consistency.}\label{fig:cube}
\end{figure}
We assign six quad-equations to the faces of a coordinate cube.
The system is said to be \emph{$3D$-consistent} if the three values for $x_{123}$  obtained from the equations on right, back, and top faces coincide for arbitrary initial data.

We will be interested  here in a geometric version of the integrable quad graphs, with lines in $P^d$ playing a role of the vertex fields.
We will denote by $\mathcal L^d$  the Grassmannian $\Gr(2,d+1)$ of two-dimensional vector subspaces of the $(d+1)$-dimensional vector space, $d\ge2$.

\subsection{Billiard algebra and quad-graphs}
\label{sec:algebra.quad}

This section is devoted to a quad-graph interpretation from \cite{DragRadn2012jnmp} of some results obtained previously using billiard algebra.

Let us start from a theorem on confocal families of quadrics from \cite{DragRadn2008}:

\begin{theorem}[Six-pointed star theorem]\label{th:zvezda}
Let $\mathcal F$ be a family of confocal quadrics in $\mathbf P^3$.
There exist configurations consisting of twelve planes in $\mathbf{P}^3$ with the following properties:
\begin{itemize}
\item
The planes may be organized in eight triplets, such that each plane in a triplet is tangent to a different quadric from $\mathcal F$ and the three touching points are collinear.
Every plane in the configuration is a member of two triplets.

\item
The planes may be organized in six quadruplets, such that the planes in each quadruplet belong to a pencil and are tangent to two different quadrics from $\mathcal F$.
Every plane in the configuration is a member of two quadruplets.
\end{itemize}
Moreover, such a configuration is determined by three planes tangent to three different quadrics from $\mathcal F$, with collinear touching points.
\end{theorem}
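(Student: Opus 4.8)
The plan is to realize the asserted configuration as the edge set of a combinatorial cube: the twelve planes are the twelve edges, the eight triplets are the eight vertices, and the six quadruplets are the six faces, so that the incidences ``each plane lies in two triplets and two quadruplets'' become the familiar ``each edge lies in two vertices and two faces''. The statement then becomes a \emph{$3D$-consistency} assertion of the type discussed in Section~\ref{sec:cong}, for the quad-equation furnished by reflections off confocal quadrics, and its genuine content is that the associated quad-graph closes up around the cube. I would build the configuration by the recursion that governs consistent quad-graphs, with the double reflection configuration of Definition~\ref{def:VRC} as the elementary cell.

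First I would isolate the two local relations. The \emph{face} relation is: four planes in a pencil, two of them tangent to a quadric $\mathcal Q'\in\mathcal F$ and two tangent to a quadric $\mathcal Q''\in\mathcal F$; by Theorem~\ref{th:DRT} and Theorem~\ref{th:virt.refl} such a quadruplet is exactly the set of tangent planes at the four reflection points of a double reflection configuration for the pair $\mathcal Q',\mathcal Q''$, and by Proposition~\ref{prop:drc.quad} it is determined by any three of its members; equivalently, given a plane $\alpha$ tangent to $\mathcal Q'$ and a plane $\beta$ tangent to $\mathcal Q''$, the pencil through $\alpha\cap\beta$ contains a unique further plane tangent to $\mathcal Q'$ and a unique further plane tangent to $\mathcal Q''$. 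The \emph{vertex} relation is: three planes tangent to three distinct members of $\mathcal F$, with collinear touching points. It is convenient to pass to the projective dual, where $\mathcal F$ becomes a pencil and a tangent plane becomes a point on a member of that pencil; there both relations acquire the same shape, and the face relation propagates the vertex relation along a shared edge because each elementary move is a reflection-type involution of the billiard algebra of~\cite{DragRadn2008}.

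Next I would run the recursion from the given triple $\alpha_1,\alpha_2,\alpha_3$, tangent to $\mathcal Q^{(1)},\mathcal Q^{(2)},\mathcal Q^{(3)}$ with collinear touching points: this is the vertex $V_\emptyset$ together with the three edges leaving it, in ``directions'' $1,2,3$. Applying the face construction to the pairs $\{1,2\}$, $\{1,3\}$, $\{2,3\}$ produces three quadruplets and six new planes, and determines the vertices $V_\emptyset,V_1,V_2,V_3$ completely and $V_{12},V_{13},V_{23}$ partially; already here one must verify that the triplets at $V_1,V_2,V_3$ again have collinear touching points. There remain three planes --- the edges incident to the opposite vertex $V_{123}$ --- each of which is forced in two ways, as it lies in two of the three not-yet-completed faces; one must check that the two determinations coincide and that the resulting three planes have collinear touching points, so that $V_{123}$ is a legitimate triplet. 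Counting then gives exactly twelve planes, eight triplets and six quadruplets with the stated double incidences.

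The step I expect to be the main obstacle is precisely this closing and consistency. My plan is to deduce it from the operational content of the Double reflection theorem, namely the commutativity of reflections off confocal quadrics: the passage from a tangent plane to the other tangent plane of the same quadric inside a prescribed pencil is, after dualizing, a reflection-type involution of the billiard algebra, and the three routes from $V_\emptyset$ to $V_{123}$ amount to composing such commuting involutions in the three different orders. Concretely, I would transfer the whole configuration to an elliptic curve $E$ canonically attached to the dual pencil of quadrics, on which --- by the billiard algebra of~\cite{DragRadn2008} and the classical addition theorem, exactly as in Jacobi's treatment of Poncelet --- each such reflection acts as translation by a fixed element of $E$; composition of translations is abelian and path-independent, so all competing determinations of the remaining edges agree, and every vertex relation reduces to the same linear relation on $E$ that holds at $V_\emptyset$. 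The last sentence of the theorem is then immediate, since the starting triple is exactly the initial data of the recursion; it only remains to record the mild genericity under which the twelve planes are distinct and $E$ is smooth, and to dispose of the symmetric and degenerate cases by desingularizing $E$, as in~\cite{DragRadn2008}.
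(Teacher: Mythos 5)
Your combinatorial skeleton is exactly the one the paper uses: the twelve planes are the edges of a cube, the eight triplets are its vertices (carried by the eight lines $x,x_1,\dots,x_{123}$), and the six quadruplets are its faces (double reflection configurations) --- this is precisely the dictionary spelled out in the discussion following the statement and in the proof of Theorem \ref{th:3d}. Note, however, that the paper does not prove Theorem \ref{th:zvezda} here at all: it imports it from \cite{DragRadn2008} and then \emph{deduces} the $3D$-consistency of Theorem \ref{th:3d} from it. You run the implication in the opposite direction, deriving the star configuration from $3D$-consistency, so the entire burden falls on your closing argument. That reversal is legitimate (it is essentially the route of the original reference), and one of your worries evaporates: the collinearity of the touching points of the triplet at each new vertex is automatic, since those three points are by construction the intersections of the associated line with the three quadrics and therefore lie on that line.

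The gaps are concentrated in the consistency step. First, for a confocal family in $\mathbf{P}^3$ the lines tangent to two fixed quadrics of the family are governed not by an elliptic curve but by the Jacobian of a genus-$2$ hyperelliptic curve (set $d=3$ in the curve of Theorem \ref{th:cayley-pseudo}: the polynomial under the square root has degree five); the elliptic curve is the planar Poncelet situation. Second, a single reflection off a confocal quadric is not a translation on this Jacobian: a line meets the quadric in two points, so reflection is a two-valued correspondence whose branches shift the Abel--Jacobi image by opposite amounts, and one must verify that the unique completion furnished by Proposition \ref{prop:drc.quad} always selects the compatible branch $x+v_i+v_j$ --- this is where Theorem \ref{th:DRT} genuinely enters, and it is not a formal consequence of commutativity of addition. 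Third, and most seriously, the map from such lines to the Jacobian is not obviously injective, so agreement of the Abel--Jacobi images of the three competing candidates for $x_{123}$ does not by itself force the three lines, hence the three remaining planes, to coincide. Establishing the precise correspondence between this family of lines and the Jacobian, with the double-reflection relation faithfully encoded, is exactly the heavy lifting done by the billiard algebra of \cite{DragRadn2008}; invoked as a black box it reduces your argument to a restatement of that machinery, and to be self-contained you would need either to build that correspondence or to close the cube by a synthetic projective argument independent of it.
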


Such a configuration of planes in the dual space $\mathbf{P}^{3*}$ is shown in Figure \ref{fig:zvezda}: each plane corresponds to a vertex of the polygonal line.
\begin{figure}[h]
\centering
\input{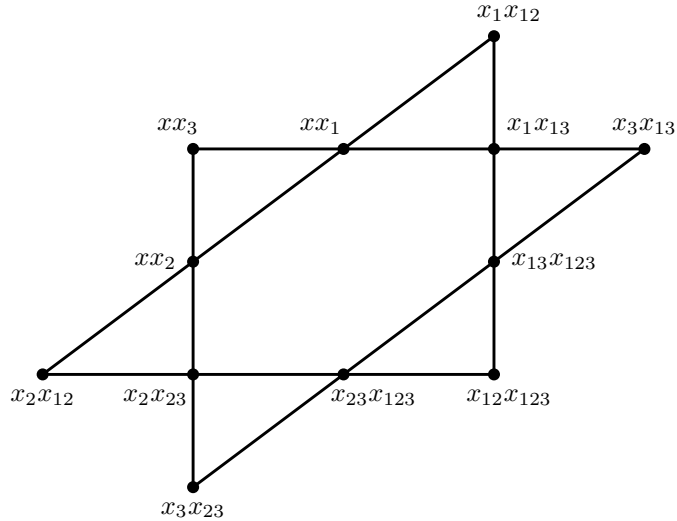}
\caption{A configuration of planes from Theorem {th:zvezda}}\label{fig:zvezda}
\end{figure}

To understand the notation used in Figure \ref{fig:zvezda}, let us recall the construction leading to configurations from Theorem \ref{th:zvezda}.
Take $\mathcal{Q}_1$, $\mathcal{Q}_2$, $\mathcal{Q}_3$ to be quadrics from 
$\mathcal{F}$, and $\alpha$, $\beta$, $\gamma$ respectively their tangent planes such that the touching points $A$, $B$, $C$ are collinear.
Denote by $x$ the line containing these three points, and by $x_1$, $x_2$, $x_3$ the lines obtained from $x$ by reflections off
$\mathcal{Q}_1$, $\mathcal{Q}_2$, $\mathcal{Q}_3$ at $A$, $B$, $C$ respectively.

Now, as in Proposition \ref{prop:drc.quad}, determine lines
$x_{12}$, $x_{13}$, $x_{23}$, $x_{123}$ such that they respectively complete triplets
$\{x,x_1,x_2\}$, $\{x,x_1,x_3\}$, $\{x,x_2,x_3\}$, $\{x_3,x_{13},x_{23}\}$
to double reflection configurations.\footnote{Let us note that in \cite{DragRadn2008}, the lines $x$, $x_1$, $x_2$, $x_3$, $x_{12}$, $x_{13}$, $x_{23}$, $x_{123}$ were respectively denoted by $\mathcal{O}$, $p$, $q$, $s$, $-x$, $p_1$, $q_1$, $x+s$, where the addition is defined in the billiard algebra introduced in that paper.}

Notice the following objects in Figure \ref{fig:zvezda}:
\begin{description}
\item[twelve verteces]
to each vertex, a plane tangent to one of the three quadrics 
$\mathcal{Q}_1$, $\mathcal{Q}_2$, $\mathcal{Q}_3$
and a pair of lines are assigned --- the lines of any pair are reflected to each other off the quadric at the touching point with the assigned plane;

\item[eight triangles]
in any triangle, the planes assigned to the verteces are touching the corresponding quadrics at three collinear points --- thus to each triangle, the line containing these points is naturaly assigned;

\item[six edges]
each edge containes four verteces --- four planes assigned to these verteces are in the same pencil; thus a double reflection configuration corresponds to each edge.
\end{description}

Now, we are ready to prove the $3D$-consistency of the quad-relation introduced via double reflection configurations.
The meaning of the following theorem is that reflections on three quadrics commute.

\begin{theorem}\label{th:3d}
Let $x$, $x_1$, $x_2$, $x_3$ be lines in the projective space, such that $x_1$, $x_2$, $x_3$ are obtained from $x$ by reflections off confocal quadrics
$\mathcal{Q}_1$, $\mathcal{Q}_2$, $\mathcal{Q}_3$ respectively.
Introduce lines $x_{12}$, $x_{13}$, $x_{23}$, $x_{123}$ such that the following quadruplets are double reflection configurations:
\begin{gather*}
\{x,x_1,x_{12},x_2\}, \quad
\{x,x_1,x_{13},x_3\}, \quad
\{x,x_2,x_{23},x_3\}, \quad
\{x_1,x_{12},x_{123},x_{13}\}.
\end{gather*} 
Then the following quadruplets are also double reflection configurations:
$$
\{x_2,x_{12},x_{123},x_{23}\},\quad
\{x_3,x_{13},x_{123},x_{23}\}.
$$ 
\end{theorem}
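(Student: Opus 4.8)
The plan is to deduce the 3D-consistency of the double-reflection quad-relation from the Six-pointed star theorem (Theorem \ref{th:zvezda}) together with Proposition \ref{prop:drc.quad}, which guarantees that a double reflection configuration is uniquely completed from any three of its lines. The key observation is that Theorem \ref{th:zvezda} already packages, in dual form, exactly the combinatorial data we need: its six quadruplets of coplanar (pencil) tangent planes correspond precisely to six double reflection configurations among the eight lines $x,x_1,x_2,x_3,x_{12},x_{13},x_{23},x_{123}$, and the construction preceding the statement of Theorem \ref{th:zvezda} builds the configuration of twelve planes starting from exactly the data in our hypotheses --- three confocal quadrics $\mathcal Q_1,\mathcal Q_2,\mathcal Q_3$ and a line $x$ reflecting off each of them.

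First I would fix the line $x$ and the three reflected lines $x_1,x_2,x_3$ obtained by reflecting $x$ off $\mathcal Q_1,\mathcal Q_2,\mathcal Q_3$ at the touching points $A,B,C$ of a triple of tangent planes with collinear contact points; this is the admissible starting datum for the six-pointed star construction. By Proposition \ref{prop:drc.quad}, the pairs $(x,x_1)$ and $(x,x_2)$ reflecting off $\mathcal Q_1,\mathcal Q_2$ determine a unique line $x_{12}$ completing $\{x,x_1,x_2\}$ to a double reflection configuration; similarly we get $x_{13}$ from $\{x,x_1,x_3\}$, $x_{23}$ from $\{x,x_2,x_3\}$, and $x_{123}$ from $\{x_1,x_{12},x_{13}\}$. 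These are the same lines denoted in the same way in the six-pointed star construction. Now Theorem \ref{th:zvezda} asserts that the resulting configuration of twelve tangent planes closes up into precisely eight triplets and six pencil-quadruplets; translating the six quadruplets back to lines in $\mathbf P^d$ via Definition \ref{def:VRC} and Corollary \ref{cor:refl}, four of these six double reflection configurations are the four we assumed as hypotheses, and the remaining two are exactly $\{x_2,x_{12},x_{123},x_{23}\}$ and $\{x_3,x_{13},x_{123},x_{23}\}$. One must check consistency of the definition of $x_{123}$: the star theorem builds it by completing $\{x_3,x_{13},x_{23}\}$, while we built it from $\{x_1,x_{12},x_{13}\}$; these must be shown to agree, which again follows from the uniqueness in Proposition \ref{prop:drc.quad} once one verifies that the line produced by the star construction does complete both triplets --- and this is precisely one of the statements of Theorem \ref{th:zvezda} (every plane lies in two triplets and two quadruplets). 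For the higher-dimensional case $d>3$, I would restrict to the three-dimensional subspace (or rather the appropriate linear space of lines) spanned by the relevant lines and tangent planes, invoking that reflections off confocal quadrics preserve all caustics (Corollary \ref{cor:refl}) so that the whole configuration lives in a projective 3-space to which Theorem \ref{th:zvezda} applies directly.

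The main obstacle I expect is the bookkeeping of the correspondence between the dual picture of Theorem \ref{th:zvezda} (twelve planes, eight triplets, six quadruplets) and the primal statement here (eight lines, double reflection configurations), in particular matching the labels so that the two "new" quadruplets in the star theorem are identified with the two asserted configurations rather than with some relabeling of the hypothesized ones. A clean way to organize this is to observe that the eight lines are naturally indexed by subsets of $\{1,2,3\}$ (with $x=x_\emptyset$), that each double reflection configuration corresponds to a pair of subsets differing in two coordinates together with their "meet" and "join", and that the six faces of the cube in Figure \ref{fig:cube} correspond bijectively to the six pencil-quadruplets of the star; the four hypothesized configurations are the bottom, front, left, and top-back faces, and the two conclusions are the remaining two faces. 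Once this dictionary is set up explicitly, the theorem is an immediate corollary of Theorem \ref{th:zvezda}, with Proposition \ref{prop:drc.quad} supplying the well-definedness of all the lines involved.
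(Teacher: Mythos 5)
Your proof follows essentially the same route as the paper's: the paper likewise observes that the eight lines and twelve tangent planes of the Six-pointed star configuration carry the combinatorics of a cube (lines at vertices, planes on edges, double reflection configurations on faces) and deduces the statement directly from Theorem \ref{th:zvezda} and the construction following it. Your explicit check that the two candidate definitions of $x_{123}$ agree, via the uniqueness in Proposition \ref{prop:drc.quad}, is a detail the paper leaves implicit but which you handle correctly.
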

\begin{proof}
Let us remark that the configuration described in Theorem \ref{th:zvezda} has obviously a combinatorial structure of a cube, with planes corresponding to the edges of the cube.
In this way, lines $x$, $x_1$, $x_2$, $x_3$, $x_{12}$, $x_{13}$, $x_{23}$, $x_{123}$ will correspond to the vertices of the cube as shown in Figure \ref{fig:cube}.
A pair of lines is represented by endpoints of an edge if they reflect to each other off the plane joined to this edge.
Faces of the cube represent double reflection configurations.
Notice also that planes joined to parallel edges of the cube are tangent to the same quadric.
The statement follows from Theorem \ref{th:zvezda} and the construction given after, see
Figure \ref{fig:zvezda}.
\end{proof}

\subsection{Double reflection nets}
\label{sec:nets}
Assume a family of confocal quadrics is given in $\mathbf{P}^d$.
Notice that, by the Chasles theorem \cite{Chasles}, every line in $\mathbf{P}^d$ touches $d-1$ quadrics from the family.

Moreover, by Corollary \ref{cor:refl}, these $d-1$ quadrics are preserved by the billiard reflection.
Confocal quadrics touched by a line are called \emph{caustics} of this line, or consequently, caustics of the billiard trajectory that contains the line.

Now, fix $d-1$ quadrics from the pencil an take $\mathcal{A}\subset\mathcal{L}^d$ to be the set of all lines touching these $d-1$ quadrics.

\begin{definition}\label{def:dr-net}
\emph{A double reflection net} is a map 
\begin{equation}\label{eq:drnet}
\varphi\ :\ \mathbf{Z}^m \to \mathcal{A},
\end{equation}
such that there exist $m$ quadrics $\mathcal{Q}_1$, \dots, $\mathcal{Q}_m$ from the confocal pencil, satisfying the following conditions:
\begin{enumerate}
\item
sequence $\{\varphi(\mathbf{n}_0+i\mathbf{e}_j)\}_{i\in\mathbf{Z}}$ represents a billiard trajectory within $\mathcal{Q}_j$, for each $j\in\{1,\dots,m\}$ and 
$\mathbf{n}_0\in\mathbf{Z}^m$;

\item
lines $\varphi(\mathbf{n}_0)$, $\varphi(\mathbf{n}_0+\mathbf{e}_i)$,
$\varphi(\mathbf{n}_0+\mathbf{e}_j)$, $\varphi(\mathbf{n}_0+\mathbf{e}_i+\mathbf{e}_j)$ form a double reflection configuration, for all $i,j\in\{1,\dots,m\}$, $i\neq j$ and $\mathbf{n}_0\in\mathbf{Z}^m$.
\end{enumerate}
\end{definition}

In other words, for each edge in $\mathbf{Z}^m$ of direction $\mathbf{e}_i$,
the lines corresponding to its vertices intersect at $\mathcal{Q}_i$, while
the four tangent planes at the intersection points, associated to an elementary quadrilateral, belong to a pencil.

In the following subsections, we describe some examples of double reflection nets.
After that, we construct $F$-transformations of double reflection nets and conclude this section by establishing connection with the Grassmanian Darboux nets from \cite{ABS2009}.

\subsubsection*{Example of a double reflection net in the Minkowski space}\label{ex:hip-tropic-light}
Consider three-dimensional Minkowski space $\mathbf{E}^{2,1}$.
In this space, let a general confocal family is given by (\ref{eq:confocal.quadrics3}).


Fix $\lambda_0\in(b,a)$, and consider hyperboloid $\mathcal{Q}_{\lambda_0}$.
Denote by $a_1$, $a_2$, $a_3$, $a_4$, $b_1$, $b_2$, $b_3$, $b_4$ the light-like generatrices of $\mathcal{Q}_{\lambda_0}$, in the following way (see Figure \ref{fig:tropic.tangents}):
\begin{figure}[h]
\centering
\input{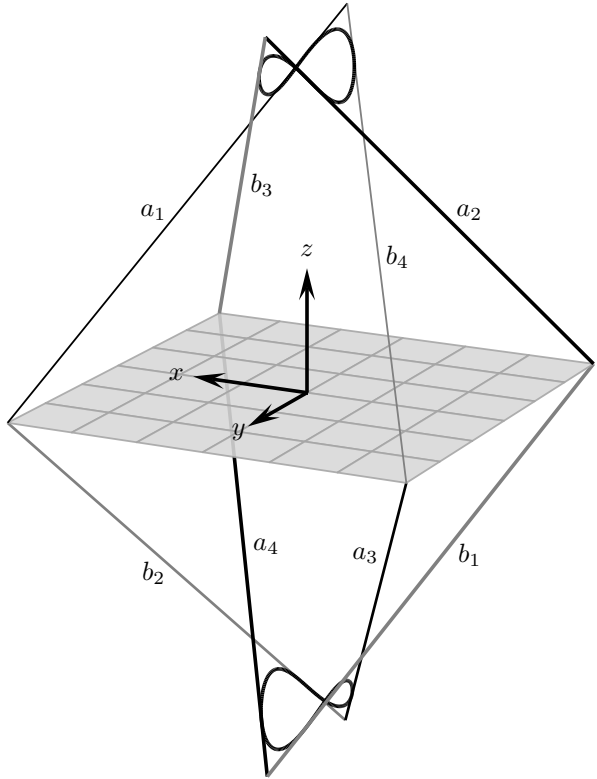}
\caption{The tropic curves of $\mathcal{Q}_{\lambda_0}$ and their light-like tangents.}\label{fig:tropic.tangents}
\end{figure}
\begin{itemize}
\item
lines $a_i$ belong to one, and $b_i$ to the other family of generatrices of $\mathcal{Q}_{\lambda_0}$; that is, $a_i$ and $a_j$ are always skew for $i\neq j$, while $a_i$ and $b_j$ are coplanar for all $i,j$;

\item
$a_1$, $a_2$, $b_3$, $b_4$ are tangent to the tropic curve contained in the half-space $z>0$, while $a_3$, $a_4$, $b_1$, $b_2$ are touching the other tropic curve;

\item
$a_i$ is parallel to $b_i$ for each $i$;

\item
pairs $(a_1,b_2)$, $(a_2,b_1)$, $(a_3,b_4)$, $(a_4,b_3)$ have intersection points in the $xy$-plane;

\item
pairs $(a_1,b_3)$, $(a_2,b_4)$, $(a_3,b_1)$, $(a_4,b_2)$ have intersection points in the $xz$-plane;

\item
pairs $(a_1,b_4)$, $(a_2,b_3)$, $(a_3,b_2)$, $(a_4,b_1)$ have intersection points in the $yz$-plane.
\end{itemize}

Take $\mathcal{A}$ to be the set of all generatrices of hyperboloid 
$\mathcal{Q}_{\lambda_0}$, i.e.~ the set of all lines having $\mathcal{Q}_{\lambda_0}$ as a double caustic.
In particular, $\mathcal{A}$ contains all lines $a_i$, $b_i$.

It is possible to define a map
$$
\varphi_M\ :\ \mathbf{Z}^4\to\mathcal{A},
$$
such that the image of $\varphi_M$ is the set $\{a_1,a_2,a_3,a_4,b_1,b_2,b_3,b_4\}$ and
for each $\mathbf{n}\in\mathbf{Z}^4$ lines $\varphi_{M}(\mathbf{n}+\mathbf{e}_1)$,
$\varphi_{M}(\mathbf{n}+\mathbf{e}_2)$, $\varphi_{M}(\mathbf{n}+\mathbf{e}_3)$,
$\varphi_{M}(\mathbf{n}+\mathbf{e}_4)$ are obtained from $\varphi_{M}(\mathbf{n})$ by reflection off $\mathcal{Q}_{a}$, $\mathcal{Q}_{b}$, $\mathcal{Q}_{-c}$, $\mathcal{Q}_{\infty}$ respectively.

More precisely, $\varphi_M$ will be periodic with period $2$ in each coordinate and:
\begin{gather*}
\varphi_M(0,0,0,0)=\varphi_M(1,1,1,1)=a_1,\quad \varphi_M(1,1,0,0)=\varphi_M(0,0,1,1)=a_2,
\\
\varphi_M(1,0,1,0)=\varphi_M(0,1,0,1)=a_3,\quad \varphi_M(0,1,1,0)=\varphi_M(1,0,0,1)=a_4,
\\
\varphi_M(0,0,0,1)=\varphi_M(1,1,1,0)=b_1,\quad \varphi_M(1,1,0,1)=\varphi_M(0,0,1,0)=b_2,
\\
\varphi_M(1,0,1,1)=\varphi_M(0,1,0,0)=a_3,\quad \varphi_M(0,1,1,1)=\varphi_M(1,0,0,0)=b_4,
\end{gather*}

It is shown in Figure \ref{fig:hiper-kocka} how vertices of the unit tesseract in $\mathbf{Z}^4$ are mapped by $\varphi_M$.

\begin{figure}[h]
\centering
\input{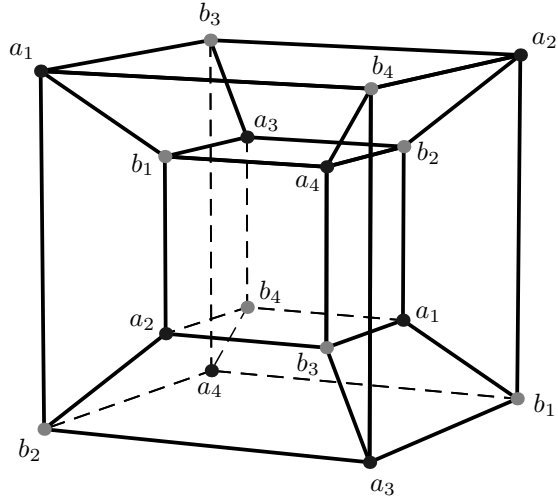}
\caption{Mapping $\varphi_M$ on the unit tesseract.}\label{fig:hiper-kocka}
\end{figure}

It is straightforward to prove the following
\begin{proposition}
$\varphi_M$ is a double reflection net.
\end{proposition}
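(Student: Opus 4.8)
The plan is to check the two requirements of Definition \ref{def:dr-net} directly, exploiting the fact that the four reflections involved are realised by four pairwise commuting projective involutions of $\mathbf{P}^3$ that preserve the hyperboloid $\mathcal{Q}_{\lambda_0}$. By the formula for pseudo-Euclidean reflection recalled in Section \ref{sec:pseudo.confocal}, reflection off the degenerate quadric $\mathcal{Q}_a$ (the plane $x=0$) is nothing but the mirror symmetry $\sigma_x$ across that plane — the orthogonal complement of a coordinate hyperplane is the corresponding coordinate axis, which is space-like or time-like but never light-like, so the reflection is everywhere defined; likewise reflections off $\mathcal{Q}_b$ and $\mathcal{Q}_{-c}$ are the mirror symmetries $\sigma_y$, $\sigma_z$. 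Reflection off the remaining degenerate member $\mathcal{Q}_\infty$ will be identified with the central symmetry $-\mathrm{id}$: applying the projective definition of reflection (Definition \ref{def:refl}) to $\mathcal{Q}_\infty$, whose pole with respect to any non-degenerate confocal quadric is the origin, shows that a line through a point at infinity is sent to the parallel line through the antipodal point; equivalently this is the limiting case of the reflections analysed in \cite{DragRadn2012adv}. Note that $\sigma_x,\sigma_y,\sigma_z,-\mathrm{id}$ pairwise commute and $-\mathrm{id}=\sigma_x\sigma_y\sigma_z$.

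Next I would check that $\varphi_M$ is well defined and that the explicit values on the unit tesseract are consistent with these reflections. Each of the eight lines $a_i$, $b_i$ is a light-like generatrix of $\mathcal{Q}_{\lambda_0}$ — a light-like tangent to a tropic curve, which lies on the quadric by Proposition \ref{prop:tropic.light} — hence meets each coordinate plane in a single point; through that point pass exactly two generatrices, $a_i$ itself and one $b$-line, and $\sigma_x$, being a symmetry of $\mathcal{Q}_{\lambda_0}$ that fixes the point but not $a_i$ pointwise, must interchange them. Matching this against the listed coplanarity properties — the pair $(a_i,b_j)$ has its intersection point in the $yz$-, $xz$-, or $xy$-plane — shows that $\sigma_x,\sigma_y,\sigma_z$ act on $\{a_1,\dots,b_4\}$ exactly as the shifts $\mathbf{e}_1,\mathbf{e}_2,\mathbf{e}_3$ prescribe, while $-\mathrm{id}=\sigma_x\sigma_y\sigma_z$ carries each $a_i$ to the parallel line $b_i$, matching $\mathbf{e}_4$. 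Thus $\varphi_M$ factors through the homomorphism $\mathbf{Z}^4\to(\mathbf{Z}/2)^4\to\langle\sigma_x,\sigma_y,\sigma_z\rangle$, is $2$-periodic in each coordinate, and along every coordinate line alternates between a line $\ell$ and its reflection off $\mathcal{Q}_j$; this is exactly requirement (1) of Definition \ref{def:dr-net}, each coordinate sequence being a ($2$-periodic) billiard trajectory within $\mathcal{Q}_j$.

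For requirement (2), the $2$-periodicity leaves only finitely many elementary quadrilaterals to inspect. For one in directions $\mathbf{e}_i$, $\mathbf{e}_j$ the four associated lines are $\ell$, $\rho(\mathbf{e}_i)\ell$, $\rho(\mathbf{e}_j)\ell$, $\rho(\mathbf{e}_i+\mathbf{e}_j)\ell$, where $\rho$ denotes the action above; since $\rho(\mathbf{e}_i)$ and $\rho(\mathbf{e}_j)$ commute, the fourth line is simultaneously the reflection of $\rho(\mathbf{e}_i)\ell$ off $\mathcal{Q}_j$ and of $\rho(\mathbf{e}_j)\ell$ off $\mathcal{Q}_i$, so the reflection laws at all four vertices hold — this is precisely the commutativity of reflections off confocal quadrics, i.e. the content of the Double Reflection Theorem (Theorem \ref{th:DRT}) and Theorem \ref{th:3d}, here for degenerate members of the family, which the involution picture renders transparent. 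For the pencil condition of Definition \ref{def:VRC}, observe that the tangent plane of each degenerate quadric $\mathcal{Q}_a$, $\mathcal{Q}_b$, $\mathcal{Q}_{-c}$, $\mathcal{Q}_\infty$ at any of its points is that plane itself; hence the four tangent planes attached to an elementary quadrilateral reduce to at most two distinct planes and trivially lie in a pencil. Combining, every elementary quadrilateral is a double reflection configuration and $\varphi_M$ is a double reflection net.

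The step I expect to be the main obstacle is the treatment of reflection off the plane at infinity $\mathcal{Q}_\infty$: one must make the projective reflection off this degenerate quadric precise, verify that on our light-like generatrices (which are tangent to $\mathcal{Q}_\infty$) it coincides with $-\mathrm{id}$, and confirm that Definition \ref{def:dr-net} is to be read with this convention. Once that is pinned down, everything else is the bookkeeping sketched above, resting entirely on the explicit coplanarity and parallelism properties of the eight light-like generatrices $a_i$, $b_i$.
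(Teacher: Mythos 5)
The paper offers no proof of this proposition (it is introduced with ``It is straightforward to prove the following''), so there is nothing to compare against; your argument is correct and supplies exactly the verification the authors leave to the reader. Realising the four reflections as the commuting involutions $\sigma_x,\sigma_y,\sigma_z,-\mathrm{id}$ acting on the eight light-like generatrices, checking the action against the listed incidence data ($(a_i,b_j)$ meeting in the prescribed coordinate plane, $a_i\parallel b_i$), and then getting both requirements of Definition \ref{def:dr-net} from commutativity plus the degeneracy of the tangent planes is the natural — and surely the intended — route. The two spots a referee would ask you to expand are the ones you already flag: (i) that Definition \ref{def:refl} applied to $\mathcal{Q}_\infty$, whose pole with respect to any central member of the family is the origin, forces the harmonic-conjugate line to be $-\ell$, so that the $\mathbf{e}_4$-reflection is indeed $-\mathrm{id}$ and sends $a_i$ to the unique parallel generatrix $b_i$ of the opposite ruling; and (ii) that the ``tangent plane'' of a degenerate member of the pencil at any of its points is the coordinate plane (respectively the plane at infinity) itself — justified as the limit of tangent planes of nearby $\mathcal{Q}_\lambda$ — which is what reduces the four tangent planes of each elementary quadrilateral to two distinct planes and makes the pencil condition of Definition \ref{def:VRC} automatic. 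With those two sentences added, the proof is complete.
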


\subsubsection*{Poncelet-Darboux grids and double reflection nets}
Let $\mathcal{E}$ be an ellipse in the Euclidean plane:
$$
\mathcal{E}\ :\ \frac{x^2}{a}+\frac{y^2}{b},\quad a>b>0,
$$
and $(a_i)_{i\in\mathbf{Z}}$ a billiard trajectory within $\mathcal{E}$.

As it is well known, all lines $a_i$ are touching the same conic $\mathcal{C}$ confocal with $\mathcal{E}$.
Here, we will additionally suppose that $\mathcal{C}$ is an ellipse.
Denote by $\mathcal{A}$ the set of tangents of $\mathcal{C}$.

Fix $m$ positive integers $k_1$, \dots, $k_{m}$ and define the mapping:
$$
\varphi_D\ :\ \mathbf{Z}^{m}\to\mathcal{A},
\quad
\varphi_D(n_1,\dots,n_m)=a_{n_1k_1+\dots+n_mk_{m}}.
$$

\begin{proposition}\label{prop:phiD}
Map $\varphi_D$ is a double reflection net.
\end{proposition}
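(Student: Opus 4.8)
The plan is to take the $m$ quadrics required by Definition~\ref{def:dr-net} to be the Poncelet--Darboux grid conics of the trajectory $(a_i)$, and then to verify the two conditions of that definition directly.

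\emph{Step 1: the grid conics and the reflections off them.} For each $k\ge 1$ I would first show that the intersection point $a_i\cap a_{i+k}$ lies on one conic $\mathcal{Q}^{(k)}$ confocal with $\mathcal{E}$, depending on $k$ but not on $i$. This is the Darboux theorem on grids applied to pairs of sides of a non-closed trajectory (see \cites{DragRadn2006jmpa,DragRadn2008,DragRadn2011book}), and it also follows at once from Proposition~\ref{prop:rotation}: the two sides of the trajectory through a vertex lying on the conic of parameter $\lambda$ subtend on $\mathcal{C}$ an arc of fixed $\mu$-length $\rho(\lambda)$; hence the tangency points on $\mathcal{C}$ of $a_i$ and $a_{i+1}$ are at a fixed $\mu$-distance $\rho_0$ (the rotation number of $(a_i)$), those of $a_i$ and $a_{i+k}$ are at a fixed $\mu$-distance (namely $k\rho_0$ measured modulo $1$), and, reading Proposition~\ref{prop:rotation} backwards and using that $\rho$ is strictly monotone, the point $a_i\cap a_{i+k}$ lies on one fixed confocal conic $\mathcal{Q}^{(k)}$. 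Moreover $a_{i+k}$ is the billiard reflection of $a_i$ off $\mathcal{Q}^{(k)}$ at that point: by Corollary~\ref{cor:refl} reflection off $\mathcal{Q}^{(k)}$ preserves the caustic $\mathcal{C}$, so it sends $a_i$ to a tangent of $\mathcal{C}$ through $a_i\cap a_{i+k}$, and this tangent cannot be $a_i$ itself (reflection off a conic at a point that is not a point of tangency is not the identity, by Definition~\ref{def:refl}), hence it is $a_{i+k}$. I then set $\mathcal{Q}_l:=\mathcal{Q}^{(k_l)}$. Condition~(1) of Definition~\ref{def:dr-net} is immediate: for fixed $\mathbf{n}_0$ the sequence $\{\varphi_D(\mathbf{n}_0+i\mathbf{e}_l)\}_i$ equals $\{a_{M+ik_l}\}_i$ with $M=\sum_r n_{0,r}k_r$, whose consecutive members reflect off $\mathcal{Q}_l$; and every $a_n$ is tangent to $\mathcal{C}$, so $\varphi_D$ does map into $\mathcal{A}$.

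\emph{Step 2: the elementary quadrilaterals are double reflection configurations.} Fix $\mathbf{n}_0$, set $M=\sum_r n_{0,r}k_r$ and $i\ne j$, and put $X_1=a_M\cap a_{M+k_i}\in\mathcal{Q}_i$, $Y_1=a_M\cap a_{M+k_j}\in\mathcal{Q}_j$. I would apply the Double Reflection Theorem~\ref{th:DRT} with $x_1=X_1$, $y_1=Y_1$: it produces points $x_2\in\mathcal{Q}_i$, $y_2\in\mathcal{Q}_j$ and lines $\ell_1=X_1Y_1$, $\ell_2$, $\ell_1'$, $\ell_2'$ such that $\ell_1,\ell_2$ reflect off $\mathcal{Q}_i$ at $X_1$, $\ell_1,\ell_1'$ off $\mathcal{Q}_j$ at $Y_1$, $\ell_1',\ell_2'$ off $\mathcal{Q}_i$ at $x_2$, $\ell_2,\ell_2'$ off $\mathcal{Q}_j$ at $y_2$, and (by the remark following Theorem~\ref{th:DRT}) the four tangent planes at $X_1,x_2,Y_1,y_2$ belong to a pencil. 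By Step~1, $\ell_1=X_1Y_1=a_M$, the reflection $\ell_2$ of $a_M$ off $\mathcal{Q}_i$ at $X_1$ is $a_{M+k_i}$, and the reflection $\ell_1'$ of $a_M$ off $\mathcal{Q}_j$ at $Y_1$ is $a_{M+k_j}$. Finally $\ell_2'$, being a reflection of $a_{M+k_j}$ off $\mathcal{Q}_i$, lies in $\{a_{M+k_j-k_i},\,a_{M+k_j+k_i}\}$, and, being a reflection of $a_{M+k_i}$ off $\mathcal{Q}_j$, lies in $\{a_{M+k_i-k_j},\,a_{M+k_i+k_j}\}$; the unique common member (when $k_i\ne k_j$) is $a_{M+k_i+k_j}$, so $\ell_2'=a_{M+k_i+k_j}$. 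Hence $X_1,x_2,Y_1,y_2$ constitute a double reflection configuration whose quadrilateral of sides is $\{a_M,\,a_{M+k_i},\,a_{M+k_j},\,a_{M+k_i+k_j}\}$, that is, the four lines $\varphi_D(\mathbf{n}_0)$, $\varphi_D(\mathbf{n}_0+\mathbf{e}_i)$, $\varphi_D(\mathbf{n}_0+\mathbf{e}_j)$, $\varphi_D(\mathbf{n}_0+\mathbf{e}_i+\mathbf{e}_j)$ form a double reflection configuration. This is condition~(2), and the proof is complete.

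The step that needs the most care is the identification $\ell_2'=a_{M+k_i+k_j}$, together with the degenerate situations that accompany it: repeated values among the $k_l$ (so that $\mathcal{Q}_i=\mathcal{Q}_j$ and two of the four lines coincide), a grid conic $\mathcal{Q}^{(k_l)}$ coinciding with $\mathcal{E}$ or with a coordinate axis, a closed subsequence, or a side tangent rather than secant to some $\mathcal{Q}_l$. In all of these the configuration degenerates and condition~(2) must be read off either by continuity or vacuously; away from them everything else is a routine bookkeeping of which line reflects to which off which confocal conic.
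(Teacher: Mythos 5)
Your proof is correct in substance, but it reaches condition (2) of Definition \ref{def:dr-net} by a different route than the paper. The paper's proof first cites the Poncelet--Darboux grid result (\cite[Theorem 18]{DragRadn2011book}, essentially your Step 1) to get that each subsequence $(a_{M+ik_j})_i$ is a billiard trajectory within a confocal \emph{ellipse} $\mathcal{E}_j$; once that is in place, all four reflection relations among $a_M$, $a_{M+k_i}$, $a_{M+k_j}$, $a_{M+k_i+k_j}$ hold automatically, so the quadruple is a virtual reflection configuration directly from Definition \ref{def:VRC}, and the pencil condition is then supplied in one stroke by Proposition \ref{prop:drc-ellipsoids} (the Euclidean converse statement for confocal ellipses). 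You instead invoke the Double Reflection Theorem \ref{th:DRT} to \emph{construct} the fourth line of a double reflection configuration and then identify it with $a_{M+k_i+k_j}$ by intersecting the two candidate sets $\{a_{M+k_j\pm k_i}\}$ and $\{a_{M+k_i\pm k_j}\}$. That identification is valid (for $k_i\neq k_j$ and away from the degeneracies you list), but it is exactly the delicate step the paper's argument sidesteps: since the grid structure already tells you that $a_{M+k_i+k_j}$ reflects to $a_{M+k_i}$ off $\mathcal{Q}_j$ and to $a_{M+k_j}$ off $\mathcal{Q}_i$, there is no need to prove that the DRT output coincides with it --- one only needs to upgrade the virtual reflection configuration to a double one, which is what Proposition \ref{prop:drc-ellipsoids} does, using crucially that the grid conics are ellipses. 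Your route buys independence from that converse proposition (it would survive in a purely projective setting where Proposition \ref{prop:drc-ellipsoids} is unavailable), at the cost of the extra bookkeeping and the genericity caveats; your derivation of Step 1 from Proposition \ref{prop:rotation} and Theorem \ref{th:darboux.grids} is a reasonable self-contained substitute for the citation the paper relies on, though you should note explicitly that the arc length $k\rho_0 \bmod 1$ must be replaced by its complement when it exceeds $\tfrac12$ so as to land in the image of $\rho$ and produce an ellipse.
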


\begin{proof}
Since $\varphi_D(\mathbf{n}+i\mathbf{e}_j)=a_{n_1k_1+\dots+n_mk_{m}+ik_j}$,
($\mathbf{n}=(n_1,\dots,n_m)$), it follows by \cite[Theorem 18]{DragRadn2011book} that sequence $(\varphi_D(\mathbf{n}+i\mathbf{e}_j))_{i\in\mathbf{Z}}$ represents a billiard trajectory within some ellipse $\mathcal{E}_j$, confocal with $\mathcal{E}$ and $\mathcal{C}$.

Immediately, by Definition \ref{def:VRC}, lines $\varphi(\mathbf{n}_0)$, $\varphi(\mathbf{n}_0+\mathbf{e}_i)$, $\varphi(\mathbf{n}_0+\mathbf{e}_j)$, $\varphi(\mathbf{n}_0+\mathbf{e}_i+\mathbf{e}_j)$ form a virtual reflection configuration for each $\mathbf{n}_0\in\mathbf{Z}^m$, $i,j\in\{1,\dots,m\}$.

Moreover, by Propositon \ref{prop:drc-ellipsoids}, they also form a double reflection configuration.
\end{proof}

\begin{remark}
It is interesting to consider only nets where $m$ ellipses $\mathcal{E}_j$ appearing in the proof of Proposition \ref{prop:phiD} are distinct.
If some of them coincide, then we may consider a corresponding subnet.

Suppose that $(a_i)$ is a non-periodic trajectory.
Then, choosing any $m$, and any set of distinct positive numbers $k_1$, \dots, $k_m$, we get substantially different double reflection nets.

For $(a_i)$ being $n$-periodic, it is enough to consider the case $k_i=i$, $i\in\{1,\dots,[n/2]\}$, $(m=[n/2])$.
\end{remark}

\begin{example}
Suppose $(a_i)$ is a $5$-perodic billiard trajectory within $\mathcal{E}$, see Figure \ref{fig:petougao}.
\begin{figure}[h]
\centering
\input{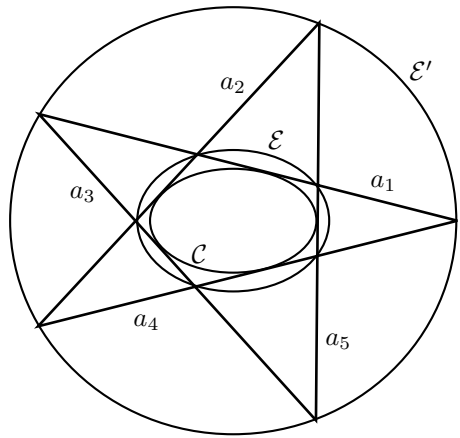}
\caption{A Poncelet pentagon.}\label{fig:petougao}
\end{figure}

The corresponding double reflection net is:
$$
\varphi_D\ :\ \mathbf{Z}^{2}\to\mathcal{A},
\quad
\varphi_D(n_1,n_2)=a_{n_1+2n_2}.
$$
\end{example}

\subsubsection*{$s$-skew lines and double reflection nets}
\label{sec:skew}
Now, let us consider a family of confocal quadrics in $\mathbf{E}^d$ ($d\ge3$) and fix its $d-1$ quadrics.
As usually, $\mathcal{A}$ is the set of all lines tangent to the fixed quadrics.

It is shown in \cite{DragRadn2008} that, from a line in $\mathcal{A}$, we can obtain any other line from that set in at most $d-1$ reflections on quadrics from the confocal family.
We called lines $a$, $b$ from $\mathcal{A}$ \emph{$s$-skew} if $s$ is the smallest number such that they can be obtained by $s+1$ such reflections.

Now, suppose lines $a$, $b$ are $s$-skew ($s\ge1$), and let
$\mathcal{Q}_1$, \dots, $\mathcal{Q}_{s+1}$
be the corresponding quadrics from the confocal family.

\begin{theorem}
There is a unique double reflection net
$$
\varphi_s\ :\ \mathbf{Z}^{s+1}\to\mathcal{A}
$$
which satisfies the following:
\begin{itemize}
\item
$\varphi_s(0,\dots,0)=a$;

\item
$\varphi_s(1,\dots,1)=b$;

\item
$\{\varphi(\mathbf{n}_0+i\mathbf{e}_j)\}_{i\in\mathbf{Z}}$ represents a billiard trajectory within $\mathcal{Q}_j$, for each $j\in\{1,\dots,s+1\}$ and $\mathbf{n}_0\in\mathbf{Z}^{s+1}$.
\end{itemize}
\end{theorem}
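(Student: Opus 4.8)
\emph{Proof strategy.}
The plan is to recognize $\varphi_s$ as the unique solution of the system of quad-equations carried by the double reflection configuration, with Cauchy data prescribed along the coordinate axes and the last discrete freedom pinned down by the endpoint condition $\varphi_s(1,\dots,1)=b$. First I would record that the double reflection configuration is a genuine quad-equation: by Proposition \ref{prop:drc.quad} any three of its four lines determine the fourth (and likewise for the variants obtained by permuting the four roles), and by Theorem \ref{th:3d} this quad-equation is $3D$-consistent. Hence, by the standard propagation scheme for integrable quad-graphs (\cite{ABS2009}), a map $\mathbf{Z}^{s+1}\to\mathcal{A}$ satisfying condition (2) of Definition \ref{def:dr-net} on every elementary square is uniquely determined by its restriction to the $s+1$ coordinate axes: one extends the map cube by cube, each new vertex being produced by a single application of Proposition \ref{prop:drc.quad}, and independence of the order of extension is furnished, one elementary cube at a time, by Theorem \ref{th:3d}. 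Moreover condition (1) of Definition \ref{def:dr-net} forces the axis data: for each $j$ the sequence $\{\varphi_s(i\mathbf{e}_j)\}_{i\in\mathbf{Z}}$ must be the billiard trajectory within $\mathcal{Q}_j$ through the single line $\varphi_s(\mathbf 0)$, and after imposing $\varphi_s(\mathbf 0)=a$ this trajectory is determined up to choosing, for each $j$, which of the two neighbours of $a$ in it is placed at $+\mathbf{e}_j$ rather than at $-\mathbf{e}_j$. So the candidate nets with $\varphi_s(\mathbf 0)=a$ form a finite family indexed by $\{+,-\}^{s+1}$.

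For existence I would fix a minimal realization $a=c_0\to c_1\to\dots\to c_{s+1}=b$ witnessing that $a,b$ are $s$-skew, with $c_t$ obtained from $c_{t-1}$ by reflection off $\mathcal{Q}_t$ (relabelling the quadrics so the reflections occur in this order). Declaring $\varphi_s(\mathbf{e}_1+\dots+\mathbf{e}_t)=c_t$ along the main diagonal selects one of the $2^{s+1}$ orientation vectors — it fixes the orientation of every axis through the quad-equations applied on the squares spanned by consecutive diagonal edges — and propagation as above yields a map $\varphi_s$. Then I would check: $\varphi_s$ takes values in $\mathcal{A}$ because billiard reflections preserve the $d-1$ fixed caustics (Corollaries \ref{cor:ORT} and \ref{cor:refl}); condition (2) of Definition \ref{def:dr-net} holds by construction; and condition (1) holds because the defining property of a double reflection configuration carries the relation ``the two lines on an $\mathbf{e}_j$-edge reflect off $\mathcal{Q}_j$'' across adjacent squares, while two consecutive reflections off one and the same quadric automatically constitute a billiard trajectory, so along every line of direction $\mathbf{e}_j$ one obtains a trajectory within $\mathcal{Q}_j$. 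By construction $\varphi_s(\mathbf 0)=a$ and $\varphi_s(\mathbf 1)=c_{s+1}=b$.

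For uniqueness I would argue that $\varphi_s(\mathbf 1)=b$ selects a single orientation vector. Reading any admissible net $\psi$ along its main diagonal $v_t=\mathbf{e}_1+\dots+\mathbf{e}_t$ exhibits a length-$(s+1)$ realization $a=\psi(v_0)\to\dots\to\psi(v_{s+1})=b$ off $\mathcal{Q}_1,\dots,\mathcal{Q}_{s+1}$ in this order, and conversely $\psi$ is recovered from this diagonal realization by Proposition \ref{prop:drc.quad}; so uniqueness of the net is equivalent to the rigidity statement that, with $\mathcal{Q}_1,\dots,\mathcal{Q}_{s+1}$ fixed and $s$ minimal, such a realization is unique. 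This I expect to be the main obstacle: it cannot be extracted from pure combinatorial minimality alone (billiard maps on $\mathcal{A}$ are genuinely two-valued), and the natural tool is the billiard algebra of \cite{DragRadn2008}, on the set $\mathcal{A}$ of lines with the common $d-1$ caustics the reflections off confocal quadrics become translations in an associated abelian group, the passage from $a$ to $b$ corresponds to a single group element, and its decomposition into the prescribed elementary steps is rigid exactly when no shorter decomposition exists, i.e. precisely under the $s$-skewness hypothesis.

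The remaining verifications are of general-position type — the intersection points appearing throughout the net are pairwise distinct, so that no reflection degenerates — and can be secured by a small perturbation within the confocal family; everything else is the now-standard machinery for $3D$-consistent quad-graphs, combined with the geometric input of Theorem \ref{th:3d} and Proposition \ref{prop:drc.quad} established above.
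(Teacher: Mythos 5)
Your proposal is correct and follows essentially the same route as the paper: the paper also builds $\varphi_s$ on the unit cube from the chains of reflections $a=\ell_0\to\dots\to\ell_{s+1}=b$ through $\mathcal{Q}_{p_1},\dots,\mathcal{Q}_{p_{s+1}}$ (asserting, as you do, that each such chain exists and is unique by the $s$-skew theory of \cite{DragRadn2008}), and then extends to $\mathbf{Z}^{s+1}$ along billiard trajectories, with correctness and uniqueness secured by the $3D$-consistency of Theorem \ref{th:3d} and Proposition \ref{prop:drc.quad}. The only cosmetic difference is that the paper defines the map on all of $\{0,1\}^{s+1}$ via all permutation chains simultaneously, whereas you start from a single diagonal chain and propagate — equivalent by the same consistency argument — and you correctly flag the rigidity of the realization as the one genuinely nontrivial input, which the paper likewise delegates to the billiard algebra.
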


\begin{proof}
First, we are going to define mapping $\varphi_s$ on $\{0,1\}^{s+1}$.

For a permutation $\mathbf{p}=(p_1,\dots,p_{s+1})$ of the set $\{1,\dots,s+1\}$, we take a sequence of lines $(\ell_0^{\mathbf{p}},\dots,\ell_{s+1}^{\mathbf{p}})$ such that $\ell_0^{\mathbf{p}}=a$, $\ell_s^{\mathbf{p}}=b$, and $\ell_{i-1}^{\mathbf{p}}$, $\ell_{i}^{\mathbf{p}}$ satisfy the reflection law off $\mathcal{Q}_{p_i}$ for each $i\in\{1,\dots,s+1\}$.
Such a sequence exists and it is unique.
Moreover, if $k\in\{1,\dots,s+1\}$ is given, and permutations $\mathbf{p}$, $\mathbf{p}'$ coincide in the first $k$ coordinates, then $\ell_{i}^{\mathbf{p}}=\ell_{i}^{\mathbf{p}'}$ for $i\le k$.
Take $\{i_1,\dots,i_k\}$ to be a subset of $\{1,\dots,s+1\}$, and $\mathbf{p}$ any permutation of set $\{1,\dots,s+1\}$ with $p_1=i_1$, \dots, $p_k=i_k$.
We define:
$$
\varphi_s(\chi(1),\dots,\chi(s+1))=\ell_{k}^{\mathbf{p}},
$$
where $\chi=\chi_{\{i_1,\dots,i_k\}}$ is the corresponding characteristic function on 
$\{1,\dots,s+1\}$:
$$
\chi\ :\ \{1,\dots,s+1\}\to\{0,1\},
\quad
\chi(j)=
\begin{cases}
1, &j\in\{i_1,\dots,i_k\};\\
0, &j\not\in\{i_1,\dots,i_k\}.
\end{cases}
$$
In this way, we constructed $\varphi_s$ on $\{0,1\}^{s+1}$.

Subsequently, $\varphi_s$ can be extended to the rest of $\mathbf{Z}^{s+1}$, so that 
$\{\varphi_s(\mathbf{n}_0+i\mathbf{e}_j)\}_{i\in\mathbf{Z}}$ 
will represent billiard trajectories within $\mathcal{Q}_j$.

This construction is correct and unique due to Theorem \ref{th:3d}.
\end{proof}

\subsubsection*{Construction of double reflection nets}
\label{sec:const}
Let $\mathcal{Q}_1$, \dots, $\mathcal{Q}_m$ be distinct quadrics belonging to a confocal family and $\ell$ a line in $\mathbf{P}^d$.
Let us choose lines $\ell_i$ satisfying with $\ell$ the reflection law off
$\mathcal{Q}_i$, $1\le i\le m$.

\begin{theorem}\label{th:const}
There is a unique double reflection net $\varphi\ :\ \mathbf{Z}^m\to\mathcal{A}_{\ell}$, with the following properties:
\begin{itemize}
\item
$\varphi(0,\dots,0)=\ell$;

\item
$\varphi(\mathbf{e}_i)=\ell_i$, for each $i\in\{1,\dots,m\}$.
\end{itemize}
By $\mathcal{A}_{\ell}$, we denoted the set of all lines in $\mathbf{P}^d$ touching the same $d-1$ quadrics from the confocal family as $\ell$.
\end{theorem}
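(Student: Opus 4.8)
The plan is to construct $\varphi$ by propagation along the coordinate directions of $\mathbf{Z}^m$, using Proposition \ref{prop:drc.quad} as the quad-equation that fills in the line at the fourth vertex of an elementary square from the lines at the other three, and then to invoke the $3D$-consistency established in Theorem \ref{th:3d} to guarantee that the procedure is globally well defined. First I would fix the assignment $\mathbf{e}_i\mapsto\mathcal{Q}_i$ of a quadric to each coordinate direction. Along the axis $\mathbf{Z}\mathbf{e}_i$ the values $\varphi(0)=\ell$ and $\varphi(\mathbf{e}_i)=\ell_i$ are prescribed; since the billiard reflection off $\mathcal{Q}_i$ is uniquely determined and is an involution on lines (Definition \ref{def:refl}, Theorem \ref{th:ORT}), these two values extend in a unique way to a two-sided billiard trajectory $\{\varphi(n\mathbf{e}_i)\}_{n\in\mathbf{Z}}$ within $\mathcal{Q}_i$. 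By Corollary \ref{cor:refl} every line produced this way touches the same $d-1$ confocal quadrics as $\ell$, so it lies in $\mathcal{A}_\ell$.

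Next I would extend $\varphi$ to the $2$-dimensional coordinate planes. On the plane spanned by $\mathbf{e}_i,\mathbf{e}_j$ the values on the two axes are already defined, and moving outward (say in order of increasing $\|\cdot\|_\infty$-distance from the origin) each new lattice point is the fourth vertex of exactly one elementary square whose other three vertices carry already-defined lines, two of which reflect off $\mathcal{Q}_i$ and two off $\mathcal{Q}_j$ relative to the common vertex. Proposition \ref{prop:drc.quad} then produces a unique line completing the double reflection configuration, and Corollary \ref{cor:ORT} shows that the rows and columns of the plane are again billiard trajectories within $\mathcal{Q}_i$ and $\mathcal{Q}_j$, still inside $\mathcal{A}_\ell$. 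Iterating dimension by dimension, the value at a generic lattice point $\mathbf{n}$ is obtained from the elementary cube with top vertex $\mathbf{n}$; the crucial point is that the three candidate values for $\mathbf{n}$ coming from the three facets of that cube containing $\mathbf{n}$ coincide. For a triple of directions $\mathbf{e}_i,\mathbf{e}_j,\mathbf{e}_k$ this is precisely the content of Theorem \ref{th:3d} (reflections off three confocal quadrics commute, equivalently the six faces of an elementary cube close up), and the routine argument that $3D$-consistency propagates to all dimensions, following \cite{ABS2009}, then shows the extension is consistent on all of $\mathbf{Z}^m$. The result is a map $\varphi:\mathbf{Z}^m\to\mathcal{A}_\ell$ satisfying properties (1) and (2) of Definition \ref{def:dr-net}, i.e.\ a double reflection net, with $\varphi(0,\dots,0)=\ell$ and $\varphi(\mathbf{e}_i)=\ell_i$.

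Uniqueness is read off from the same construction. In any double reflection net with the two prescribed values, property (1) of Definition \ref{def:dr-net} forces each coordinate line to be the uniquely determined billiard trajectory through $\ell$ and $\ell_i$ off $\mathcal{Q}_i$, and property (2) together with the uniqueness clause of Proposition \ref{prop:drc.quad} forces the line at every elementary square, hence inductively at every lattice point. Thus $\varphi$ is determined by $\varphi(0,\dots,0)$ and the $\varphi(\mathbf{e}_i)$.

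The main obstacle is exactly the global well-definedness: that the line obtained at a lattice point does not depend on the order in which the squares are filled. This is where Theorem \ref{th:3d} does the essential work, reducing the issue to the commutativity of reflections off three confocal quadrics, after which the passage from $3D$- to $mD$-consistency is standard. I would also take a moment to treat the degenerate situations, namely when some $\mathcal{Q}_i$ happens to be one of the $d-1$ caustics of $\ell$, so that reflection off $\mathcal{Q}_i$ is a tangency-preserving map; Corollaries \ref{cor:ORT} and \ref{cor:refl} keep the construction inside $\mathcal{A}_\ell$ in those cases as well, so the argument goes through unchanged.
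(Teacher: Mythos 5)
Your proposal is correct and follows essentially the same route as the paper's proof: propagate the values using Proposition \ref{prop:drc.quad} as the quad-equation (which also gives uniqueness) and invoke the $3D$-consistency of Theorem \ref{th:3d} for global well-definedness, extending along coordinate directions by the billiard trajectories. Your version merely spells out the order of propagation and the degenerate cases in more detail than the paper does.
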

\begin{proof}
First, we define $\varphi$ on $\{0,1\}^m$, from the condition that lines corresponding to each $2$-face of the unit cube need to form a double reflection configuration.
This construction is unique because of Proposition \ref{prop:drc.quad} and correct, due to the $3D$-consistency property proved in Theorem \ref{th:3d}.

At all other points of $\mathbf{Z}^m$, $\varphi$ is uniquely defined from the request that $\{\varphi(\mathbf{n}_0+i\mathbf{e}_j)\}_{i\in\mathbf{Z}}$ will be billiard trajectories within $\mathcal{Q}_j$.

Consistency of the construction follows again from Theorem \ref{th:3d}.
\end{proof}

\subsubsection*{Focal nets and F-transformations of double reflection nets}
\label{sec:F}
Let
$\varphi\ :\ \mathbf{Z}^m \to \mathcal{A}$
be a double reflection net.

For given $\mathbf{n}_0\in\mathbf{Z}^m$ and distinct indices $i,j,k\in\{1,\dots m\}$, consider the following points of its $i$-th focal net:
\begin{gather*}
F_{i}=F^{(i)}(\mathbf{n}_0)=\varphi(\mathbf{n}_0)\cap\varphi(\mathbf{n}_0+\mathbf{e}_i),
   \\
F_{ij}=F^{(i)}(\mathbf{n}_0+\mathbf{e}_j)=
\varphi(\mathbf{n}_0+\mathbf{e}_j)\cap\varphi(\mathbf{n}_0+\mathbf{e}_j+\mathbf{e}_i),
   \\
F_{ik}=F^{(i)}(\mathbf{n}_0+\mathbf{e}_k)=
\varphi(\mathbf{n}_0+\mathbf{e}_k)\cap\varphi(\mathbf{n}_0+\mathbf{e}_k+\mathbf{e}_i),
	\\
F_{ijk}=F^{(i)}(\mathbf{n}_0+\mathbf{e}_j+\mathbf{e}_k)=
\varphi(\mathbf{n}_0+\mathbf{e}_j+\mathbf{e}_k)\cap\varphi(\mathbf{n}_0+\mathbf{e}_j+\mathbf{e}_k+\mathbf{e}_i).
\end{gather*}

\begin{proposition}\label{prop:coplanar}
Points 
$F_i$, $F_{ij}$, $F_{ik}$, $F_{ijk}$ are coplanar.
\end{proposition}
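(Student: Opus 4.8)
The plan is to reduce the coplanarity of the four focal points $F_i, F_{ij}, F_{ik}, F_{ijk}$ to the fact that $\varphi$ is a double reflection net, i.e.\ to the pencil condition on tangent planes built into Definition~\ref{def:dr-net}. First I would observe that the point $F^{(i)}(\mathbf{n})=\varphi(\mathbf{n})\cap\varphi(\mathbf{n}+\mathbf{e}_i)$ always exists: by condition (1) of Definition~\ref{def:dr-net} the lines $\varphi(\mathbf{n})$ and $\varphi(\mathbf{n}+\mathbf{e}_i)$ are consecutive segments of a billiard trajectory within $\mathcal{Q}_i$, so they meet at their common reflection point, which lies on $\mathcal{Q}_i$. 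Thus all four points $F_i, F_{ij}, F_{ik}, F_{ijk}$ lie on the quadric $\mathcal{Q}_i$, and to each of them is attached the tangent plane to $\mathcal{Q}_i$ at that point; I will call these planes $u_i, u_{ij}, u_{ik}, u_{ijk}$ respectively.

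Next I would extract, from the definition of a double reflection net, the pencil relations among these tangent planes. Applying condition (2) to the elementary quadrilateral with vertices $\mathbf{n}_0, \mathbf{n}_0+\mathbf{e}_i, \mathbf{n}_0+\mathbf{e}_j, \mathbf{n}_0+\mathbf{e}_i+\mathbf{e}_j$ gives a double reflection configuration; by the very definition of such a configuration (Definition~\ref{def:VRC} together with Theorem~\ref{th:DRT}), the four tangent planes at the reflection points off $\mathcal{Q}_i$ and off $\mathcal{Q}_j$ belong to a pencil. The two reflection points off $\mathcal{Q}_i$ in this quadrilateral are exactly $F_i$ and $F_{ij}$, so $u_i$ and $u_{ij}$ lie in a pencil together with the two $\mathcal{Q}_j$-tangent planes; in particular the line of that pencil, $u_i\cap u_{ij}$, is the pole line / axis of the configuration. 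Doing the same with the quadrilateral in directions $\mathbf{e}_i,\mathbf{e}_k$ based at $\mathbf{n}_0$ relates $u_i$ and $u_{ik}$; with directions $\mathbf{e}_i,\mathbf{e}_j$ based at $\mathbf{n}_0+\mathbf{e}_k$ relates $u_{ik}$ and $u_{ijk}$; and with directions $\mathbf{e}_i,\mathbf{e}_k$ based at $\mathbf{n}_0+\mathbf{e}_j$ relates $u_{ij}$ and $u_{ijk}$. The key point to exploit is that in a double reflection configuration all four tangent planes lie in a \emph{single} pencil, hence share a common line; so along each of these four quadrilaterals the corresponding pair of $\mathcal{Q}_i$-tangent planes meets in a line lying in the common plane of that quadrilateral's configuration.

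Finally I would run the consistency argument. The four points $F_i, F_{ij}, F_{ik}, F_{ijk}$ are the poles, with respect to $\mathcal{Q}_i$, of the planes $u_i, u_{ij}, u_{ik}, u_{ijk}$; the polar map with respect to $\mathcal{Q}_i$ is a projective correlation, so four points are coplanar if and only if their four polar planes pass through a common point. Thus it suffices to show $u_i\cap u_{ij}\cap u_{ik}\cap u_{ijk}\neq\varnothing$. By the pencil relations just established, $u_i$ and $u_{ij}$ meet in a line $m_j$, $u_i$ and $u_{ik}$ meet in a line $m_k$, and these two lines both lie on $u_i$, hence meet in a point $z$. One then has to check that $u_{ijk}$ also passes through $z$; this is where I would invoke the $3D$-consistency of the double reflection quad-relation (Theorem~\ref{th:3d}), which guarantees that the configuration built on the cube $\{0,1\}^3$ in directions $\mathbf{e}_i,\mathbf{e}_j,\mathbf{e}_k$ is coherent, so that $u_{ijk}$ is determined consistently by the three faces meeting at the opposite vertex and is forced to contain $z$. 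Alternatively, one can phrase the whole thing dually using the Six-pointed star theorem (Theorem~\ref{th:zvezda}): the four $\mathcal{Q}_i$-tangent planes of the configuration-cube sit among the twelve planes of a six-pointed star configuration, and the theorem's assertion that planes attached to parallel edges are tangent to the same quadric and that the relevant planes lie in pencils directly yields the common point. I expect the main obstacle to be the bookkeeping: correctly matching each elementary quadrilateral of the net to the right double reflection configuration, tracking which two of its four tangent planes are the $\mathcal{Q}_i$-planes, and verifying that the four resulting pencil-lines on $u_i$ are concurrent rather than merely pairwise intersecting — that concurrency is precisely the content of $3D$-consistency, so the proof will hinge on citing Theorem~\ref{th:3d} at exactly the right place.
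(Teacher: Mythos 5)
Your proposal is correct and follows essentially the same route as the paper: the paper's proof likewise works on the quadric $\mathcal{Q}_i$, pairs the four tangent planes at $F_i,F_{ij},F_{ik},F_{ijk}$ into two pencils, invokes Theorem \ref{th:zvezda} to conclude that these two pencils are coplanar in the dual space (equivalently, that the four tangent planes have a common point), and then transfers this back through the polarity with respect to $\mathcal{Q}_i$ to obtain coplanarity of the four points. The only differences are cosmetic: the paper phrases the polarity step as ``the two lines of poles intersect'' rather than ``the four polar planes are concurrent,'' and the correct citation for the concurrency is Theorem \ref{th:zvezda} (your alternative formulation) rather than Theorem \ref{th:3d} alone, since the four face-pencils by themselves would only give a possibly skew closed quadrilateral of dual points.
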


\begin{proof}
This is a consequence of the theorem of focal nets from \cite{BS2008book}.
However, we will show the direct proof, from configurations considered in
Section \ref{sec:algebra.quad}.

The four points belong to quadric $\mathcal{Q}_{i}$.
The tangent planes to $\mathcal{Q}_{i}$ at these points, divided into two pairs, determine two pencils of planes.
According to Theorem \ref{th:zvezda}, the two pencils are coplanar; thus they intersect.
As a consequence, the lines of poles with respect to the quadric
$\mathcal{Q}_{i}$, which correspond to these two pencils of planes, also intersect.
It follows that the four points are coplanar.
\end{proof}

We are going to construct an $F$-transformation of the double reflection net.

First, we select a quadric $\mathcal{Q}_{\delta}$ from the confocal family and introduce line $\ell'$ which satisfies with $\varphi(\mathbf{n}_0)$ the reflection law on
$\mathcal{Q}_{\delta}$.

By Theorem \ref{th:const}, it is possible to construct a double reflection net
$\bar{\varphi}\ :\ \mathbf{Z}^{m+1}\to\mathcal{A}$,
such that:
\begin{itemize}
\item
$\bar{\varphi}(\mathbf{n},0)=\varphi(\mathbf{n})$;

\item
$\bar{\varphi}(\mathbf{n}_0,1)=\ell'$.
\end{itemize}

Now, we define:
$$
\varphi^{+}\ :\  \mathbf{Z}^m\to\mathcal{A}_{\ell},
\quad
\varphi^{+}(\mathbf{n})=\bar{\varphi}(\mathbf{n},1).
$$

\begin{proposition}
Map $\varphi^{+}$ is an $F$-transformation of $\varphi$. 
\end{proposition}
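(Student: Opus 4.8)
The plan is to deduce the statement directly from the $(m+1)$-dimensional net $\bar{\varphi}$ constructed just above. Recall that, following \cite{BS2008book}, an \emph{$F$-transformation} of a discrete line congruence $\varphi\ :\ \mathbf{Z}^m\to\mathcal{L}^d$ is a second discrete line congruence $\varphi^{+}\ :\ \mathbf{Z}^m\to\mathcal{L}^d$ such that, placed on the slices $x_{m+1}=0$ and $x_{m+1}=1$, the two congruences extend to a single discrete line congruence on $\mathbf{Z}^{m+1}$; equivalently, for every $\mathbf{n}_0$ and every $i\in\{1,\dots,m\}$ the lines $\varphi(\mathbf{n}_0)$, $\varphi(\mathbf{n}_0+\mathbf{e}_i)$, $\varphi^{+}(\mathbf{n}_0+\mathbf{e}_i)$, $\varphi^{+}(\mathbf{n}_0)$ form a planar quadrilateral of a line congruence, so that the associated focal quadrilaterals are planar (a $Q$-net). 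The proof will consist of matching our construction to this definition.

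First I would record that $\varphi^{+}$ is itself a double reflection net. Indeed $\varphi^{+}(\mathbf{n})=\bar{\varphi}(\mathbf{n},1)$ is the restriction of the double reflection net $\bar{\varphi}$ to the affine hyperplane $x_{m+1}=1$ of $\mathbf{Z}^{m+1}$, and conditions (1) and (2) of Definition \ref{def:dr-net} for $\varphi^{+}$ (with the quadrics $\mathcal{Q}_1,\dots,\mathcal{Q}_m$) are exactly the restrictions to the directions $\mathbf{e}_1,\dots,\mathbf{e}_m$ of the corresponding conditions for $\bar{\varphi}$; hence they hold. In particular $\varphi$ and $\varphi^{+}$ are both discrete line congruences: the two lines at the endpoints of an edge of direction $\mathbf{e}_j$ meet at the reflection point on $\mathcal{Q}_j$, by condition (1).

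Next I would check the transformation condition using the mixed $2$-faces of $\bar{\varphi}$. Fix $\mathbf{n}_0\in\mathbf{Z}^m$ and $i\in\{1,\dots,m\}$, and apply condition (2) of Definition \ref{def:dr-net} for $\bar{\varphi}$ to the quadrilateral spanned by $\mathbf{e}_i$ and $\mathbf{e}_{m+1}$ at $(\mathbf{n}_0,0)$: the lines $\bar{\varphi}(\mathbf{n}_0,0)=\varphi(\mathbf{n}_0)$, $\bar{\varphi}(\mathbf{n}_0+\mathbf{e}_i,0)=\varphi(\mathbf{n}_0+\mathbf{e}_i)$, $\bar{\varphi}(\mathbf{n}_0+\mathbf{e}_i,1)=\varphi^{+}(\mathbf{n}_0+\mathbf{e}_i)$, $\bar{\varphi}(\mathbf{n}_0,1)=\varphi^{+}(\mathbf{n}_0)$ form a double reflection configuration; in particular consecutive ones meet, at the reflection points on $\mathcal{Q}_i$ and on $\mathcal{Q}_{\delta}$, so their four intersection points form a planar quadrilateral. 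Thus $(\varphi,\varphi^{+})$ extends to the discrete line congruence $\bar{\varphi}$ on $\mathbf{Z}^{m+1}$, which is precisely the definition of $\varphi^{+}$ being an $F$-transformation of $\varphi$; that the relevant focal nets are genuine $Q$-nets is the content of Proposition \ref{prop:coplanar} (the theorem of focal nets) applied to $\bar{\varphi}$.

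The main thing to be careful about is the bookkeeping between the two equivalent descriptions of an $F$-transformation — via the planar ``prisms'' joining $\varphi$ to $\varphi^{+}$ on one hand, and via planarity of the focal nets on the other — together with the verification that the quadrics $\mathcal{Q}_1,\dots,\mathcal{Q}_m,\mathcal{Q}_{\delta}$ attached to $\bar{\varphi}$ are distributed over the edges of $\mathbf{Z}^{m+1}$ with parallel edges carrying the same quadric. Both of these are guaranteed by the construction of $\bar{\varphi}$ through Theorem \ref{th:const} and by the $3D$-consistency established in Theorem \ref{th:3d}, so no genuinely new computation is required.
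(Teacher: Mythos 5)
Your argument is correct and follows essentially the same route as the paper, whose entire proof is the single observation that $\varphi^{+}(\mathbf{n})$ and $\varphi(\mathbf{n})$ intersect because they satisfy the reflection law off $\mathcal{Q}_{\delta}$ (this is exactly your use of condition (1) of Definition \ref{def:dr-net} for $\bar{\varphi}$ in the direction $\mathbf{e}_{m+1}$). The additional bookkeeping you supply — that $\varphi^{+}$ is itself a double reflection net as the slice $x_{m+1}=1$ of $\bar{\varphi}$, and the planarity of the mixed focal quadrilaterals via Proposition \ref{prop:coplanar} — is left implicit in the paper but is a harmless and accurate elaboration.
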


\begin{proof}
Lines $\varphi^{+}(\mathbf{n})$ and $\varphi(\mathbf{n})$ intersect, since they satisfy the reflection law off $\mathcal{Q}_{\delta}$.
\end{proof}

\subsubsection*{Double reflection nets and Grassmannian Darboux nets}
\label{sec:grass}
Let us recall the definition of a Grassmannian Darboux net from \cite{ABS2009}:
a map from the edges of a regular square lattice $\mathbf{Z}^m$ to the Grassmannian
$\mathbf{G}^d_r$ of $r$-dimensional projective subspaces of the $d$-dimensional projective space is \emph{a Grassmanian Darboux net} if the four $r$-spaces of an elementary quadrilateral belong to a $(2r+1)$-space.
For $r=0$, the ordinary Darboux nets from \cite{Schief2003} are obtained, where the four points of intersection associated to a quadrilateral, belong to a line.

Now, consider a general double reflection net (\ref{eq:drnet}).

To each edge $(\mathbf{n_0},\mathbf{n_0}+\mathbf{e}_i)$ of $\mathbf{Z}^m$, we can associate the plane which is tangent to $\mathcal{Q}_i$ at point 
$\varphi(\mathbf{n_0})\cap\varphi(\mathbf{n_0}+\mathbf{e}_i)$.
Since the lines corresponding to the vertices of a face form a double reflection configuration, the four planes associated to the edges belong to a pencil.

In this way, we see that a double reflection net induces a map:
$$
E(\mathbf{Z}^m)\rightarrow \mathbf{G}^d_{d-1},
$$
where $E(\mathbf{Z}^m)$ is the set of all edges of the integer lattice $\mathbf{Z}^m$.

In this way, double reflection nets induce a subclass of dual Darboux nets.

It was shown in \cite{Schief2003} how to associate discrete integrable hierarchies to the Darboux nets.

\subsection{Yang-Baxter map}
\label{sec:yb}
\emph{A Yang-Baxter map} is a map
$R:\mathcal{X}\times\mathcal{X}\to\mathcal{X}\times\mathcal{X}$,
satisfying the Yang-Baxter equation:
$$
R_{23}\circ R_{13}\circ R_{12}=R_{12}\circ R_{13}\circ R_{23},
$$
where $R_{ij}:\mathcal{X}\times\mathcal{X}\times\mathcal{X}\to\mathcal{X}\times\mathcal{X}\times\mathcal{X}$
acts as $R$ on the $i$-th and $j$-th factor in the product, and as the identity on the remaining one, see \cite{ABS2004} and references therein.

Here, we are going to construct an example of Yang-Baxter map associated to confocal families of quadrics.
To begin, we fix a family of confocal quadrics in $\mathbf{CP}^{n}$:
\begin{equation}\label{eq:confocal.family}
\mathcal{Q}_{\lambda}\ :\ 
\frac{z_1^2}{a_1-\lambda}+\dots+\frac{z_d^2}{a_d-\lambda}
=
z_{n+1}^2,
\end{equation}
where $a_1$, \dots, $a_d$ are constants in $\mathbf{C}$, and $[z_1:z_2:\dots:z_{n+1}]$ are homogeneous coordinates in $\mathbf{CP}^{n}$.

Take $\mathcal{X}$ to be the space $\mathbf{CP}^{n*}$ dual to the $n$-dimensional projective space, i.e.\ the variety of all hyper-planes in $\mathbf{CP}^{n}$.
Note that a general hyper-plane in the space is tangent to exactly one quadric from family (\ref{eq:confocal.family}).
Besides, in a general pencil of hyper-planes, there are exactly two of them tangent to a fixed general quadric.

Now, consider a pair $x$, $y$ of hyper-planes.
They are touching respectively unique quadrics $\mathcal{Q}_{\alpha}$, $\mathcal{Q}_{\beta}$ from (\ref{eq:confocal.family}).
Besides, these two hyper-planes determine a pencil of hyper-planes.
This pencil contains unique hyper-planes $x'$, $y'$, other than $x$, $y$, that are tangent to $\mathcal{Q}_{\alpha}$, $\mathcal{Q}_{\beta}$ respectively.

We define
$R : \mathbf{CP}^{n*}\times\mathbf{CP}^{n*} \to \mathbf{CP}^{n*}\times\mathbf{CP}^{n*}$, in such a way that $R(x,y)=(x',y')$ if $(x',y')$ are obtained from $(x,y)$ as just described.

Maps
$$
R_{12},\ R_{13},\ R_{23}\ :\
\mathbf{CP}^{n*}\times\mathbf{CP}^{n*}\times\mathbf{CP}^{n*}
\to
\mathbf{CP}^{n*}\times\mathbf{CP}^{n*}\times\mathbf{CP}^{n*}
$$
are then defined as follows:
\begin{align*}
&R_{12}(x,y,z)=(x',y',z)\quad\text{for}\quad (x',y')=R(x,y);\\
&R_{13}(x,y,z)=(x',y,z')\quad\text{for}\quad (x',z')=R(x,z);\\
&R_{23}(x,y,z)=(x,y',z')\quad\text{for}\quad (y',z')=R(y,z).
\end{align*}

To prove the Yang-Baxter equation for map $R$, we will need the following

\begin{lemma}\label{lemma:YBE}
Let $\mathcal{Q}_{\alpha}$, $\mathcal{Q}_{\beta}$, $\mathcal{Q}_{\gamma}$
be three non-degenerate quadrics from family (\ref{eq:confocal.family}) and
$x$, $y$, $z$ respectively their tangent hyper-planes.
Take:
$$
(x_2,y_1)=R(x,y),\quad
(x_3,z_1)=R(x,z),\quad
(y_3,z_2)=R(y,z).
$$
Let $x_{23}$, $y_{13}$, $z_{12}$ be the joint hyper-planes of pencils determined by pairs $(x_3,y_3)$ and $(x_2,z_2)$, $(x_3,y_3)$ and $(y_1,z_1)$, $(y_1,z_1)$ and $(x_2,z_2)$ respectively.

Then $x_{23}$, $y_{13}$, $z_{12}$ touch quadrics $\mathcal{Q}_{\alpha}$, $\mathcal{Q}_{\beta}$, $\mathcal{Q}_{\gamma}$ respectively.
\end{lemma}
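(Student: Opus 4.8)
The plan is to cut the configuration down to a plane and then to recognise the basic move $R$ as a billiard reflection, so that the consistency results already proved in the paper apply.

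\emph{Step 1: planarization.} Every hyper-plane occurring in the statement contains the subspace $L=x\cap y\cap z$, which generically has codimension $3$ in $\mathbf{CP}^{n}$: indeed $x,y,z$ do by definition, and each hyper-plane produced by an $R$-move lies in the pencil spanned by two of $x,y,z$ — for instance $x_2,y_1$ lie in the pencil $\langle x,y\rangle$, being the residual intersections of that pencil with $\mathcal{Q}_\alpha,\mathcal{Q}_\beta$ — hence also contains $L$. Therefore all of $x,x_2,x_3,y,y_1,y_3,z,z_1,z_2$ belong to the net $\mathcal{N}$ of hyper-planes through $L$, which is a $\mathbf{P}^{2}$ inside $\mathcal{X}=\mathbf{CP}^{n*}$. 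Consequently the pencils $\langle x_3,y_3\rangle$, $\langle x_2,z_2\rangle$, $\langle y_1,z_1\rangle$ are lines in the projective plane $\mathcal{N}$; two lines of a projective plane always meet, so $x_{23}$, $y_{13}$, $z_{12}$ are well-defined members of $\mathcal{N}$.

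\emph{Step 2: reduction to a pencil of conics.} The dual quadrics $\mathcal{Q}_\alpha^{*},\mathcal{Q}_\beta^{*},\mathcal{Q}_\gamma^{*}$ are members of a pencil of quadrics in $\mathcal{X}$ (the projective dual of the confocal family), and they cut $\mathcal{N}$ along three conics $A,B,C$ that belong to a pencil $\Lambda$ of conics in $\mathcal{N}\cong\mathbf{P}^{2}$ with four common base points (for $L$, hence $\mathcal{N}$, in general position). Under this restriction ``$x$ touches $\mathcal{Q}_\alpha$'' reads ``$x\in A$'' and ``$x_2$ is the residual tangent'' reads ``$\{x,x_2\}=\langle x,y\rangle\cap A$'', and similarly for the others. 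Thus the lemma reduces to the following planar statement: \emph{for a pencil $\Lambda$ of conics in $\mathbf{P}^{2}$ with distinct members $A,B,C$, points $x\in A$, $y\in B$, $z\in C$, and the residual intersection points of the sides of the triangle $xyz$ with $A,B,C$ defined as in the lemma, the lines $\langle x_3,y_3\rangle$, $\langle x_2,z_2\rangle$, $\langle y_1,z_1\rangle$ meet pairwise on $A$, on $B$ and on $C$ respectively.}

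\emph{Step 3: the planar statement.} I would dualize inside $\mathcal{N}$: then $\Lambda$ becomes a confocal family of conics, $x,y,z$ become tangent lines to three of its members, and — by the classical Chasles property that the two tangents from a point $P$ to a conic make equal angles with the confocal conic through $P$ — the residual-tangent construction becomes precisely billiard reflection: $x$ and $x_2$ are billiard reflections of each other off the confocal conic through the vertex $x\cap y$, and likewise for the remaining pairs. The resulting configuration is of the type treated by the Six-pointed star theorem (Theorem~\ref{th:zvezda}): the lines $x,y,z,x_2,x_3,y_1,y_3,z_1,z_2$ together with $\langle x_3,y_3\rangle$, $\langle x_2,z_2\rangle$, $\langle y_1,z_1\rangle$ organise into the combinatorial cube of Theorem~\ref{th:3d}, and the asserted tangencies of $x_{23}$, $y_{13}$, $z_{12}$ are exactly the statement that the three remaining faces of that cube are double reflection configurations, which follows from Theorem~\ref{th:3d} together with Proposition~\ref{prop:drc.quad}. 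As a self-contained alternative one can keep $y,z$ fixed and observe that ``$x_{23}\in A$ for all $x$'' is equivalent to an equality of two explicit projectivities of $A\cong\mathbf{P}^{1}$, each assembled from point-projections of $A$ (from $y,z$ and from $y_3,z_2$); this is an identity in $\mathrm{PGL}_{2}$, verifiable on three special positions of $x$ — for instance the base points of $\Lambda$ lying on $A$ — where the shared base points of the pencil make the computation collapse. A Cayley--Bacharach argument with the reducible cubics $A\cup\langle y,z\rangle$, $B\cup\langle x,z\rangle$, $C\cup\langle x,y\rangle$, whose pairwise intersections all contain the four base points of $\Lambda$, is a third possibility.

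\emph{Conclusion and main obstacle.} The tangencies of $y_{13}$ and $z_{12}$ follow from that of $x_{23}$ by the cyclic symmetry of the construction under simultaneously permuting $(x,y,z)$ and $(A,B,C)$, and the Yang--Baxter equation for $R$ is then extracted from the lemma in the standard way. The serious work is Step~3 — making precise the dictionary ``$R$-move $=$ reflection'', identifying the produced configuration with the cube of Theorem~\ref{th:3d}, or, on the Cayley--Bacharach route, bookkeeping the intersection multiplicities at the base points — together with disposing of the degenerate cases ($x,y,z$ collinear, $L$ or $\mathcal{N}$ in special position with respect to the $\mathcal{Q}$'s, coincidences among the produced hyper-planes), all of which are recovered from the generic situation by specialization.
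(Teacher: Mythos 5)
Your Steps 1--2 are, up to a dualization of language, exactly the paper's own reduction: the net $\mathcal{N}$ of hyper-planes through $L=x\cap y\cap z$ is the plane in $\mathbf{CP}^{n*}$ spanned by the dual points $x^*,y^*,z^*$, and restricting the dual pencil to it yields a pencil of conics. At that point the paper stops: it simply invokes \cite[Theorem 5]{ABS2004} for the planar incidence statement. So the real question is whether your Step 3 supplies a proof of that planar statement, and there the proposal has a genuine gap. Your primary route --- dualize inside $\mathcal{N}$, read the $R$-move as billiard reflection, and invoke Theorem~\ref{th:3d} and Proposition~\ref{prop:drc.quad} --- rests on a misidentification. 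The cube of Theorem~\ref{th:3d} is a configuration of \emph{eight lines sharing a common caustic}, reflecting off three auxiliary confocal quadrics; after you dualize inside $\mathcal{N}$, the twelve lines $x^\vee,\dots,z_{12}^\vee$ are tangent to three \emph{different} members of the confocal family (four to each) and share no common caustic, so they correspond to the twelve tangent \emph{planes} (the edges) of the six-pointed star configuration, not to its eight cube-vertex lines, and Proposition~\ref{prop:drc.quad} does not apply to them. Worse, the star configuration of Theorem~\ref{th:zvezda} is built from three tangent planes with \emph{collinear} touching points --- dually, a codimension-one constraint on the triple $(x,y,z)$ that your data does not satisfy in general --- so even the weaker claim that the produced configuration ``is of the type treated by the Six-pointed star theorem'' fails for generic input.

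Of your fallback routes, the projectivity argument (b) is the one I would push through: writing $\sigma_P$ for the involution of $A\cong\mathbf{P}^1$ induced by projection from a point $P\notin A$, the claim $x_{23}\in A$ is the identity $\sigma_{y_3}\circ\sigma_{z}=\sigma_{z_2}\circ\sigma_{y}$ of projectivities of $A$, to be checked at three points. But this is not yet done, and your suggested test points --- the base points of $\Lambda$ lying on $A$ --- are precisely the positions where $x$ lies on all three conics and several of the constructed points collapse ($y_1=x$, $z_1=x$, etc.), so the verification there requires a careful limit or multiplicity argument rather than a ``collapse''. The Cayley--Bacharach route (c) likewise needs the right choice of cubics actually passing through $x_{23}$, which your candidates do not obviously do. In short: the reduction is correct and coincides with the paper's, but the planar core --- the only part the paper outsources to \cite{ABS2004} --- is not proved here, and the mechanism you lean on hardest would not prove it.
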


\begin{proof}
This statement, formulated for the dual space in dimension $n=2$ is proved as \cite[Theorem 5]{ABS2004}.

Consider the dual situation in an arbitrary dimension $n$.
The dual quadrics $\mathcal{Q}_{\alpha}^*$, $\mathcal{Q}_{\beta}^*$, $\mathcal{Q}_{\gamma}^*$ belong to a linear pencil, and points $x^*$, $y^*$, $z^*$, dual to hyper-planes $x$, $y$, $z$, are respectively placed on these quadrics.
Take the two-dimensional plane containing these three points.
The intersection of the pencil of quadrics with this, and any other plane as well, represents a pencil of conics.
Thus, Theorem 5 from \cite{ABS2004} will remain true in any dimension.

This lemma is dual to this statement, thus the proof is complete. 
\end{proof}

\begin{theorem}\label{th:YBE}
Map $R$ satisfies the Yang-Baxter equation.
\end{theorem}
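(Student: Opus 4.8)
The plan is to apply both sides of the Yang--Baxter equation to an arbitrary \emph{admissible} triple $(x,y,z)$ --- meaning that $x$, $y$, $z$ are tangent respectively to three distinct non-degenerate quadrics $\mathcal{Q}_{\alpha}$, $\mathcal{Q}_{\beta}$, $\mathcal{Q}_{\gamma}$ from the confocal family (\ref{eq:confocal.family}) and are in general position --- and to show that the two resulting triples coincide. Since $R$ and both compositions $R_{23}\circ R_{13}\circ R_{12}$ and $R_{12}\circ R_{13}\circ R_{23}$ are rational maps, it suffices to verify the identity on this Zariski-dense set; the degenerate cases then follow by continuity.

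First I would record two structural facts that organize the computation. (i) The map $R$ preserves, slot by slot, the quadric to which a hyper-plane is tangent: by its very definition, if $R(u,v)=(u',v')$ then $u'$ touches the same quadric as $u$ and $v'$ the same as $v$. Hence at every stage of either composition the first, second and third slots carry hyper-planes tangent to $\mathcal{Q}_{\alpha}$, $\mathcal{Q}_{\beta}$, $\mathcal{Q}_{\gamma}$ respectively, so it is enough to compare the two outputs coordinate by coordinate. (ii) Each application of $R$ to a pair $(u,v)$ produces hyper-planes lying in the pencil spanned by $u$ and $v$; therefore, starting from $(x,y,z)$, every hyper-plane occurring in either composition lies in the projective plane $\Pi=\langle x,y,z\rangle\subset\mathbf{CP}^{n*}$ (a plane, for $x,y,z$ in general position). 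Dualising, all the relevant dual points lie in the plane $\langle x^{*},y^{*},z^{*}\rangle\subset\mathbf{CP}^{n}$, on which the pencil of dual quadrics $\{\mathcal{Q}_{\lambda}^{*}\}$ restricts to a pencil of conics, with $x^{*},y^{*},z^{*}$ lying on three of these conics. In this way the whole statement reduces to the two-dimensional situation of a pencil of conics, on which $R$ becomes the quadrirational map associated with the pencil.

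In that two-dimensional setting I would carry out the diagram chase with the help of Lemma~\ref{lemma:YBE}. Set $(x_2,y_1)=R(x,y)$, $(x_3,z_1)=R(x,z)$, $(y_3,z_2)=R(y,z)$, and let $x_{23}$, $y_{13}$, $z_{12}$ be the joint hyper-planes of the three pairs of pencils as in the Lemma; by the Lemma they are tangent to $\mathcal{Q}_{\alpha}$, $\mathcal{Q}_{\beta}$, $\mathcal{Q}_{\gamma}$ respectively, hence form an admissible triple. Unwinding $R_{23}\circ R_{13}\circ R_{12}(x,y,z)$ and $R_{12}\circ R_{13}\circ R_{23}(x,y,z)$ one step at a time, and repeatedly using that $R$ exchanges the two hyper-planes of a given pencil that are tangent to a fixed quadric --- equivalently, on the pencil of conics, that it exchanges the pair of points cut on a conic by a line --- one identifies each of the two output triples with $(x_{23},y_{13},z_{12})$; since both equal the same triple, the Yang--Baxter equation follows. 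Alternatively one may invoke directly the Yang--Baxter property of quadrirational maps attached to pencils of conics established in \cite{ABS2004}, of which Lemma~\ref{lemma:YBE} is precisely the $n$-dimensional incarnation.

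I expect the main obstacle to lie in this last step: the bookkeeping that matches the iterated applications of $R$ along each of the two orders with the six pencils and three joint hyper-planes of the configuration in Lemma~\ref{lemma:YBE}, and in particular the verification that the two paths, which pass through genuinely different intermediate pencils, nonetheless converge to the common triple $(x_{23},y_{13},z_{12})$. The geometric heart of this --- that the relevant ``second tangent hyper-plane'' produced in those different pencils actually coincides --- is exactly the content of Lemma~\ref{lemma:YBE} (respectively of the theorem on pencils of conics behind \cite[Theorem~5]{ABS2004}); once that is granted, the rest of the argument is purely combinatorial.
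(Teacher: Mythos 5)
Your overall strategy --- reduce to the plane spanned by $x$, $y$, $z$ in the dual space, where the confocal family restricts to a pencil of conics, and then chase the two compositions through Lemma \ref{lemma:YBE} --- is the same as the paper's, and your preliminary observations (i) and (ii) are both correct. But the central identification is wrong as stated, and the failure sits exactly at the point you dismissed as ``purely combinatorial'' bookkeeping.

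You apply Lemma \ref{lemma:YBE} to the original triple $(x,y,z)$ and claim that both compositions return the resulting triple $(x_{23},y_{13},z_{12})$. They do not. The Lemma's configuration for $(x,y,z)$ is built from the three pencils $\langle x,y\rangle$, $\langle x,z\rangle$, $\langle y,z\rangle$; in particular $x_3,z_1$ live in $\langle x,z\rangle$ and $x_{23}$ lies in the pencil $\langle x_2,z_2\rangle$. But the pencil $\langle x,z\rangle$ never occurs in either composition: after $R_{12}$ the first and third slots hold $x_2$ and $z$, so $R_{13}$ acts in the pencil $\langle x_2,z\rangle$, while in the other order $R_{13}$ acts in $\langle x,z_2\rangle$. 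Generically $\langle x_2,z\rangle\ne\langle x_2,z_2\rangle$ (two distinct lines through the dual point of $x_2$), so the first output of $R_{23}\circ R_{13}\circ R_{12}$ --- the second tangent to $\mathcal{Q}_{\alpha}$ in $\langle x_2,z\rangle$ --- cannot equal $x_{23}$, the second tangent to $\mathcal{Q}_{\alpha}$ in $\langle x_2,z_2\rangle$: if they coincided, that common hyper-plane would lie in both pencils and hence be $x_2$ itself. So the common value of the two sides of the Yang--Baxter equation is \emph{not} the Lemma's triple for $(x,y,z)$, and your chase cannot close.

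The missing idea --- which is the actual content of the paper's proof --- is to apply Lemma \ref{lemma:YBE} not to $(x,y,z)$ but to the shifted triple $(x,y,z_2)$, where $(y_3,z_2)=R(y,z)$. For that triple the three ``input'' pencils are $\langle x,y\rangle$, $\langle x,z_2\rangle$ and $\langle y,z_2\rangle=\langle y,z\rangle$, i.e.\ exactly the pencils arising in the first two steps of the two compositions, and the Lemma's three concurrency statements then yield precisely $x_{23}=x_{23}'$, $z_{12}=z_{12}'$, $y_{13}=y_{13}'$. Your fallback option --- quoting the Yang--Baxter property of the quadrirational maps attached to pencils of conics from \cite{ABS2004} after your (correct) reduction to the two-dimensional dual picture --- would give a legitimate proof, but the primary argument as you have written it fails.
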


\begin{proof}
Let $x$, $y$, $z$ be hyper-planes in $\mathbf{CP}^{n}$.
We want to prove that
$$
R_{23}\circ R_{13}\circ R_{12}(x,y,z)=R_{12}\circ R_{13}\circ R_{23}(x,y,z).
$$

Denote by $\mathcal{Q}_{\alpha}$, $\mathcal{Q}_{\beta}$, $\mathcal{Q}_{\gamma}$ the quadrics from (\ref{eq:confocal.family}) touching $x$, $y$, $z$ respectively.

Let:
\begin{align*}
&(x,y,z)
 \xrightarrow{R_{12}}
(x_2,y_1,z)
 \xrightarrow{R_{13}}
(x_{23},y_1,z_1)
 \xrightarrow{R_{23}}
(x_{23},y_{13},z_{12}),
 \\
&(x,y,z)
 \xrightarrow{R_{23}}
(x,y_3,z_2)
 \xrightarrow{R_{13}}
(x_3,y_3,z_{12}')
 \xrightarrow{R_{12}}
(x_{23}',y_{13}',z_{12}').
\end{align*}
Now, apply Lemma \ref{lemma:YBE} to hyper-planes $x$, $y$, $z_2$.
Since:
$$
(x_2,y_1)=R(x,y),\quad
(x_3,z_{12}')=R(x,z_2),\quad
(y_3,z)=R(y,z_2),
$$
we have that the joint hyper-plane of pencils $(x_3,y_3)$ and $(x_2,z)$ is touching $\mathcal{Q}_{\alpha}$ -- therefore, this plane must coincide with $x_{23}$ and $x_{23}'$, i.e.~$x_{23}=x_{23}'$.
Also, the joint hyper-plane of pencils $(y_1,z_{12}')$ and $(x_2,z)$ is touching $\mathcal{Q}_{\gamma}$ -- therefore, this is $z_1$ and $z_{12}=z_{12}'$.
Finally, the joint hyper-plane of pencils $(x_3,y_3)$ and $(y_1,z_{12}')$ is tangent to $\mathcal{Q}_{\beta}$ -- it follows this is $y_{13}=y_{13}'$, which completes the proof.
\end{proof}

\begin{remark}
Instead of defining $R$ to act on the whole space $\mathbf{CP}^{n*}\times\mathbf{CP}^{n*}$, we can restrict it to the product of two non-degenerate quadrics from (\ref{eq:confocal.family}), namely:
$$
R(\alpha,\beta)\ :\ \mathcal{Q}_{\alpha}^*\times\mathcal{Q}_{\beta}^*\to\mathcal{Q}_{\alpha}^*\times\mathcal{Q}_{\beta}^*,
$$
where pair $(x,y)$ of tangent hyper-planes is mapped into pair $(x_1,y_1)$ in such a way that $x$, $y$, $x_1$, $y_1$ belong to the same pencil.

The corresponding Yang-Baxter equation is:
$$
R_{23}(\beta,\gamma)\circ R_{13}(\alpha,\gamma)\circ R_{12}(\alpha,\beta)=
R_{12}(\alpha,\beta)\circ R_{13}(\alpha,\gamma)\circ R_{23}(\alpha,\beta),
$$
where both sides of the equation represent maps from
$\mathcal{Q}_{\alpha}^*\times\mathcal{Q}_{\beta}^*\times\mathcal{Q}_{\gamma}^*$ to itself.
\end{remark}

In \cite{ABS2004}, for irreducible algebraic varieties
$\mathcal{X}_1$ and $\mathcal{X}_2$,
\emph{a quadrirational mapping}
$F\ :\ \mathcal{X}_1\times\mathcal{X}_2$
is defined.
For such a map $F$ and any fixed pair 
$(x,y)\in\mathcal{X}_1\times\mathcal{X}_2$,
except from some closed subvarieties of codimension at least $1$,
the graph
$\Gamma_F \subset
\mathcal{X}_1\times\mathcal{X}_2\times\mathcal{X}_1\times\mathcal{X}_2$
intersects each of the sets
$\{x\}\times\{y\}\times\mathcal{X}_1\times\mathcal{X}_2$,
$\mathcal{X}_1\times\mathcal{X}_2\times\{x\}\times\{y\}$,
$\mathcal{X}_1\times\{y\}\times\{x\}\times\mathcal{X}_2$,
$\{x\}\times\mathcal{X}_2\times\mathcal{X}_1\times\{y\}$
exactly at one point (see \cite[Definition 3]{ABS2004}).
In other words, $\Gamma_{F}$ is the graph of four rational maps:
$F$, $F^{-1}$, $\bar{F}$, $\bar{F}^{-1}$.

The following Proposition is a generalization of \cite[Proposition 4]{ABS2004}.

\begin{proposition}
Map
$
R(\alpha,\beta)\ :\ \mathcal{Q}_{\alpha}^*\times\mathcal{Q}_{\beta}^*\to\mathcal{Q}_{\alpha}^*\times\mathcal{Q}_{\beta}^*,
$
is quadrirational.
It is an involution and it concides with its companion $\bar{R}(\alpha,\beta)$.
\end{proposition}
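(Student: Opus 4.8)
The plan is to unwind the definition of $R(\alpha,\beta)$ and observe that it is a projective, symmetric construction attached to a pencil of hyperplanes. Recall the set-up: given $x \in \mathcal{Q}_{\alpha}^*$ and $y \in \mathcal{Q}_{\beta}^*$, i.e.\ two hyperplanes $x$, $y$ tangent respectively to $\mathcal{Q}_{\alpha}$ and $\mathcal{Q}_{\beta}$, they span a pencil $\Pi(x,y)$ of hyperplanes. A general quadric is tangent to exactly two members of a general pencil of hyperplanes (since tangency to a fixed quadric is a degree-two condition on the $\mathbf{P}^1$ that parametrizes the pencil). Thus $\Pi(x,y)$ contains exactly two hyperplanes tangent to $\mathcal{Q}_{\alpha}$, namely $x$ and some $x_1 \neq x$, and exactly two tangent to $\mathcal{Q}_{\beta}$, namely $y$ and some $y_1 \neq y$. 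We set $R(\alpha,\beta)(x,y) = (x_1, y_1)$.

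First I would prove quadrirationality. The key point is that the pair $(x_1,y_1)$ depends rationally on $(x,y)$: the pencil $\Pi(x,y)$ is cut out by the $2\times(n+1)$ matrix of coefficients of $x$ and $y$, the condition ``tangent to $\mathcal{Q}_{\alpha}$'' is a quadratic form in the pencil parameter whose two roots are $x$ (known) and $x_1$, so $x_1$ is recovered by Vieta from the known root — hence rationally — and likewise for $y_1$. So $R(\alpha,\beta)$ is a rational map; one checks it is defined away from a subvariety of codimension $\ge 1$ (the locus where the pencil is special, e.g.\ where $x=y$, or where the quadric degenerates along the pencil). The same construction read in the other order shows that from $(x_1,y_1)$ we recover the same pencil $\Pi(x_1,y_1)=\Pi(x,y)$ and then $x$ is the second $\mathcal{Q}_{\alpha}$-tangent, $y$ the second $\mathcal{Q}_{\beta}$-tangent. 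This already gives that $R(\alpha,\beta)$ is invertible with $R(\alpha,\beta)^{-1} = R(\alpha,\beta)$, i.e.\ an involution, and the graph $\Gamma_R$ meets the four slices $\{x\}\times\{y\}\times\mathcal{Q}_{\alpha}^*\times\mathcal{Q}_{\beta}^*$ etc.\ each in one point, since each of the four intersection conditions is, by the same pencil-and-Vieta argument, solved by exactly one point. This is precisely the definition of quadrirational (cf.\ \cite[Definition 3]{ABS2004}).

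For the companion map $\bar R(\alpha,\beta)$: following \cite{ABS2004}, the companion is obtained from the other pair of ``opposite'' slices of the graph $\Gamma_R$; concretely $\bar R(\alpha,\beta)$ sends $(x, y_1)$ to $(x_1, y)$, i.e.\ it keeps the pencil $\Pi$ fixed and swaps the role of which $\mathcal{Q}_{\alpha}$-tangent and which $\mathcal{Q}_{\beta}$-tangent is the ``input''. But in our construction the pencil $\Pi(x,y)$ is symmetric in its two $\mathcal{Q}_{\alpha}$-tangents and in its two $\mathcal{Q}_{\beta}$-tangents: there is no canonical labeling of $x$ versus $x_1$ beyond ``the given one'' versus ``the other one''. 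Hence feeding $(x,y_1)$ into the same recipe produces the pencil $\Pi(x,y_1) = \Pi(x,y)$, whose $\mathcal{Q}_{\alpha}$-tangents are $\{x,x_1\}$ and whose $\mathcal{Q}_{\beta}$-tangents are $\{y,y_1\}$, so the output is $(x_1,y)$ — which is exactly $\bar R(\alpha,\beta)(x,y_1)$. Thus $R(\alpha,\beta)$ and $\bar R(\alpha,\beta)$ are given by the identical geometric rule and therefore coincide as rational maps. The one place that needs care — the main obstacle — is the genericity bookkeeping: one must identify exactly the exceptional locus (coincident hyperplanes, pencils lying in the tangent cone of a quadric, the contact points becoming non-transverse) and verify that off this locus the ``second tangent'' is genuinely unique and varies rationally, so that all four branch maps of the quadrirational structure are honest; this is routine but is where the hypotheses ``non-degenerate quadrics'' and ``general pencil'' are really used. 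With that dictionary in place, everything reduces to the one-line symmetry observation above, and the proof is complete.
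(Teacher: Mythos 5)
Your proposal is correct. The paper states this proposition without proof, offering it only as a generalization of Proposition 4 of the cited work of Adler--Bobenko--Suris, and your argument is exactly the intended one: in the dual space the pencil spanned by $x$ and $y$ is a line, each dual quadric $\mathcal{Q}_{\alpha}^*$, $\mathcal{Q}_{\beta}^*$ meets that line in precisely two points (recoverable rationally from the known one by Vieta), which gives quadrirationality; and since all four hyperplanes $x$, $x_1$, $y$, $y_1$ determine the same pencil, feeding in $(x_1,y_1)$ or $(x,y_1)$ returns $(x,y)$ or $(x_1,y)$ respectively, yielding the involution property and $R=\bar R$. Your closing caveat about pinning down the exceptional locus is the right one to flag, and it is indeed routine for non-degenerate quadrics and generic pencils.
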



\section{Pseudo-integrable billiards and local Poncelet theorem}
\label{sec:nosonja}

\subsection{Billiards in domains bounded by a few confocal conics}
\label{sec:few.confocal}
In this section we analyze billiard dynamics in a domain bounded by arcs of a few confocal conics, such that there are reflex angles on the boundary.
In order to describe some phenomena appearing in such systems, let us consider the domain $\mathcal{D}_0$ bounded by two confocal ellipses from family 
(\ref{eq:confocal-in-plane}) and two segments placed on the smaller axis of theirs, as shown in Figure \ref{fig:billiard1}.
\begin{figure}[h]
\input{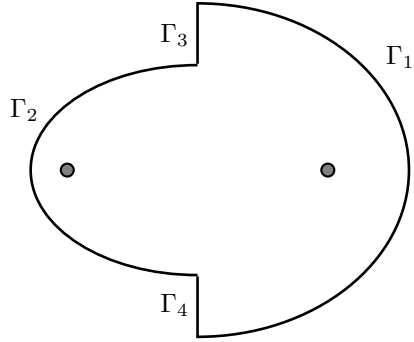}
\caption{Domain bounded by two confocal ellipses and two
segments on the $y$-axis.}\label{fig:billiard1}
\end{figure}
More precisely, we fix parameters $\beta_1$, $\beta_2$ such that
$\beta_1<\beta_2<b$,
and take the border of $\mathcal{D}_0$ to be:
\begin{gather*}
\partial\mathcal{D}_0=\Gamma_1\cup\Gamma_2\cup\Gamma_3\cup\Gamma_4,
\\
\Gamma_1=\{(x,y)\in\mathcal{C}_{\beta_1}\mid x\ge0\},\\
\Gamma_2=\{(x,y)\in\mathcal{C}_{\beta_2}\mid x\le0\},\\
\Gamma_3=\{(0,y)\mid\sqrt{b-\beta_2}\le y\le\sqrt{b-\beta_1}\},\\
\Gamma_4=\{(0,y)\mid-\sqrt{b-\beta_1}\le y\le-\sqrt{b-\beta_2}\}.
\end{gather*}
Notice that segments $\Gamma_3$, $\Gamma_4$ are lying on the the degenerate conic 
$\mathcal{C}_a$ of family (\ref{eq:confocal-in-plane}).

By Chasles'theorem \cite{Chasles}, as we have already mentioned in Section \ref{sec:billiards.confocal}, each line in the plane is touching exactly one conic from a given confocal family -- moreover, this conic remains the same after the reflection on any conic from the family.
Thus, each billiard trajectory in a domain bounded by arcs of several confocal conics has a caustic from the confocal family.

Consider billiard trajectories within domain $\mathcal{D}_0$ whose caustic is an ellipse 
$\mathcal{C}_{\alpha_0}$
completely placed inside the billiard table, i.e.~ $\beta_2<\alpha_0<b$.
An example of such a trajectory is shown in Figure \ref{fig:caustic1}.
\begin{figure}[h]
\input{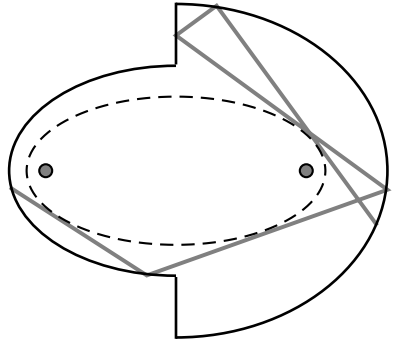}
\caption{A billiard trajectory in $\mathcal{D}_0$ with an ellipse as caustic.}\label{fig:caustic1}
\end{figure}

Such billiard trajectories fill out the ring $\mathcal R$ placed
between the billiard border and the caustic, see Figure
\ref{fig:ring1}.

\begin{figure}[h]
\input{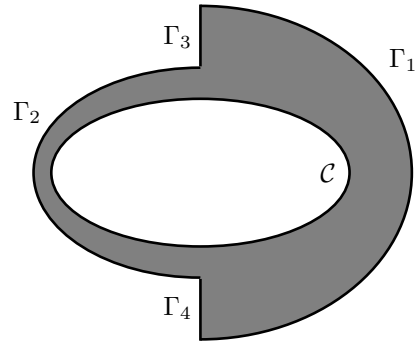}
\caption{Ring $\mathcal R$.}\label{fig:ring1}
\end{figure}

Let us examine the leaf of the phase space composed by these
trajectories.
This leaf is naturally decomposed into four rings
equal to $\mathcal R$, which are glued with each other along the
border segments. Let us describe this in detail:

\begin{itemize}
\item[$\mathcal R_1$]
This ring contains the points in the phase space that correspond
the billiard particle moving away from the caustic and the
clockwise direction around the ellipses center.

\item[$\mathcal R_2$]
Corresponds to the motion away from the caustic in the
counterclockwise direction.

\item[$\mathcal R_3$]
Corresponds to the motion towards the caustic in the
counterclockwise direction.

\item[$\mathcal R_4$]
Corresponds to the motion towards the caustic in the clockwise
direction.
\end{itemize}

Let us notice that the reflection off the two ellipse arcs
contained in the billiard boundary changes the direction of the
particle motion with respect to the caustic, but preserves the
direction of the motion around the foci. The same holds for
passing though tangency points with the caustic. On the other
hand, reflection on the axis changes the direction of motion
around the foci, but preserves the direction with respect to the
caustic. Thus, the four rings are connected to each other
according to the following scheme:
$$
 \xymatrix{
 &
\mathcal{R}_2 \ar@{-}[ld]_{\Gamma_3\Gamma_4} \ar@{-}[rd]^{\Gamma_1\Gamma_2\mathcal{C}}
 &
 \\
\mathcal{R}_1\ar@{-}[rd]_{\Gamma_1\Gamma_2\mathcal{C}}
 &
 &
\mathcal{R}_3
 \\
 &
\mathcal{R}_4 \ar@{-}[ru]_{\Gamma_3\Gamma_4} 
}
$$

Let us represent all the rings in Figures
\ref{fig:ring2} and \ref{fig:ring3}.

\begin{figure}[h]
\input{0-figures/ring2}
\caption{Rings $\mathcal R_1$, $\mathcal R_2$, $\mathcal R_3$,
$\mathcal R_4$.}\label{fig:ring2}
\end{figure}

\begin{figure}[h]
\input{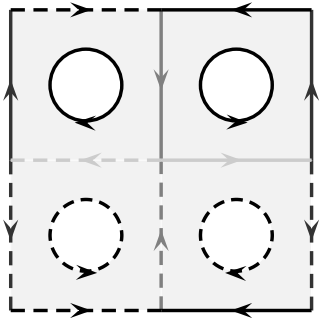}
\caption{Gluing rings $\mathcal R_1$, $\mathcal R_2$, $\mathcal
R_3$, $\mathcal R_4$.}\label{fig:ring3}
\end{figure}

Now, we have the following

\begin{proposition}\label{prop:genus3}
All billiard trajectories within domain $\mathcal{D}_0$ with a fixed elliptical caustic form an orientable surface of genus $3$.
\end{proposition}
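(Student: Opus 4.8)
The plan is to identify the topological type of the surface $L$ formed by all these trajectories directly from the gluing pattern of the four rings $\mathcal{R}_1,\mathcal{R}_2,\mathcal{R}_3,\mathcal{R}_4$ described above, and then to recover the genus from the Euler characteristic. Each $\mathcal{R}_i$ is a copy of $\mathcal{R}$: an annulus whose outer boundary circle is subdivided into the four arcs $\Gamma_1,\Gamma_2,\Gamma_3,\Gamma_4$ and whose inner boundary circle is the caustic $\mathcal{C}$ (the hypothesis $\beta_2<\alpha_0<b$ is precisely what guarantees that $\mathcal{C}$ lies in the interior of $\mathcal{D}_0$, so that $\mathcal{R}$ is genuinely an annulus). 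As recalled before the statement, reflections off the elliptic arcs $\Gamma_1,\Gamma_2$ and passages through tangency points with $\mathcal{C}$ reverse the sense of approach to the caustic but preserve the sense of winding around the foci, while reflections off the axial segments $\Gamma_3,\Gamma_4$ do the opposite; consequently the four rings are glued along a $4$-cycle $\mathcal{R}_1-\mathcal{R}_2-\mathcal{R}_3-\mathcal{R}_4-\mathcal{R}_1$, with $\mathcal{R}_1$ glued to $\mathcal{R}_4$ and $\mathcal{R}_2$ to $\mathcal{R}_3$ along $\Gamma_1\cup\Gamma_2\cup\mathcal{C}$, and $\mathcal{R}_1$ to $\mathcal{R}_2$ and $\mathcal{R}_3$ to $\mathcal{R}_4$ along $\Gamma_3\cup\Gamma_4$. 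Each boundary arc and each boundary circle of each $\mathcal{R}_i$ is then used exactly once, so $L$ is a closed connected surface, and it is orientable: giving $\mathcal{R}_1,\mathcal{R}_3$ the orientation induced from the plane and $\mathcal{R}_2,\mathcal{R}_4$ the opposite one makes every identification orientation preserving.

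The genus then follows from a computation of $\chi(L)$ that I would carry out in two stages. First glue $\mathcal{R}_1$ to $\mathcal{R}_4$ along $\Gamma_1\cup\Gamma_2\cup\mathcal{C}$ only; since $\Gamma_1,\Gamma_2$ are disjoint arcs and $\mathcal{C}$ a disjoint circle, $\chi(\mathcal{S}_{14})=\chi(\mathcal{R}_1)+\chi(\mathcal{R}_4)-\chi(\Gamma_1\cup\Gamma_2\cup\mathcal{C})=0+0-(1+1+0)=-2$, and the remaining boundary of $\mathcal{S}_{14}$ consists of the two circles $\Gamma_3^{(1)}\cup\Gamma_3^{(4)}$ and $\Gamma_4^{(1)}\cup\Gamma_4^{(4)}$; a connected orientable surface with $\chi=-2$ and two boundary circles is a genus-one surface with two boundary circles, and the same applies to $\mathcal{S}_{23}=\mathcal{R}_2\cup\mathcal{R}_3$. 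Finally $L$ is obtained by gluing $\mathcal{S}_{14}$ and $\mathcal{S}_{23}$ along their boundary circles in pairs (matching $\Gamma_3^{(1)}$ to $\Gamma_3^{(2)}$ and $\Gamma_3^{(4)}$ to $\Gamma_3^{(3)}$, and similarly on the $\Gamma_4$ side), so $\chi(L)=\chi(\mathcal{S}_{14})+\chi(\mathcal{S}_{23})-\chi(\text{two circles})=-2-2-0=-4$; being closed, connected and orientable, $L$ has genus $(2-\chi(L))/2=3$. As an independent check one may put a cell structure on $L$ with the four corner points of $\partial\mathcal{D}_0$ and one marked point on each of the two caustic circles as vertices: this yields $6$ vertices, $14$ edges (two images each of $\Gamma_1,\dots,\Gamma_4$, two caustic loops, and four auxiliary cuts opening the rings into discs) and $4$ faces, so $\chi=6-14+4=-4$ once more.

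The point that requires the most care — and which I expect to be the real obstacle — is confirming that $L$ is a genuine $2$-manifold at the four corner points of $\partial\mathcal{D}_0$, namely $\Gamma_1\cap\Gamma_3$, $\Gamma_3\cap\Gamma_2$, $\Gamma_2\cap\Gamma_4$, $\Gamma_4\cap\Gamma_1$, the last two being reflex angles of the domain. All four rings meet over such a corner, and one must verify that their local sectors are glued in a single cycle, so that they fit together into one disc and the corner becomes a smooth interior point of $L$ rather than a boundary point or a branch point; this is exactly what the $4$-cycle structure of the gluing forces, and it is also what makes the two Euler-characteristic computations return an even integer. Once this is verified, the identification of $L$ as the closed orientable surface of genus $3$ is complete.
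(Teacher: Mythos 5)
Your argument is correct and is essentially the paper's own: the proposition is justified there by exactly this decomposition into the four rings $\mathcal{R}_1,\dots,\mathcal{R}_4$ and the stated gluing scheme (Figures \ref{fig:ring2} and \ref{fig:ring3}), with the genus read off from the Euler characteristic, as also reflected in the general count $g(\mathcal{M}_{\lambda_0})=\tilde{k}+1$ with $\tilde{k}=2$ reflex angles in Section \ref{sec:topological}. Your explicit inclusion--exclusion computation, the cell-structure cross-check, and the verification that the four local sectors at each corner close up into a single cycle simply make precise the bookkeeping that the paper leaves to the figures.
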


In contrast, notice that the leaves for the billiard within an ellipse are tori, see Figures \ref{fig:ring-phase-ellipse} and \ref{fig:ring-phase-hyperbola} from Section \ref{sec:billiards}.


\subsection{Topological estimates}
\label{sec:topological}
Let $\mathcal{D}$ be a bounded domain in the plane such that its boundary $\Gamma=\partial\mathcal{D}$ is the union of finitely many arcs of confocal conics from the family (\ref{eq:confocal-in-plane}).

We consider the billiard system within $\mathcal{D}$.
Any trajectory of this billiard will have a caustic -- a conic from (\ref{eq:confocal-in-plane}) touching all lines containing segments of the trajectory.
Let us fix $\mathcal{C}_{\lambda_0}$ as caustic.

Notice that all tangent lines of a conic fill out an infinite domain in the plane: if the conic is an ellipse, the domain is its exterior; for a hyperbola, it is the part of the plane between its branches.

Denote by $\mathcal{D}_{\lambda_0}$ the intersection of $\mathcal{D}$ with the domain containing tangent lines of caustic $\mathcal{C}_{\lambda_0}$.
All billiard trajectories with caustic $\mathcal{C}_{\lambda_0}$ are placed in $\mathcal{D}_{\lambda_0}$.
$\mathcal{D}_{\lambda_0}$ is a bounded set whose boundary $\Gamma_{\lambda_0}=\partial\mathcal{D}_{\lambda_0}$
is the union of finitely many arcs of conics from (\ref{eq:confocal-in-plane}).
We assume that $\mathcal{D}_{\lambda_0}$ is connected as well, otherwise we consider its connected component.

All billiard trajectories in domain $\mathcal{D}$ with the caustic $\mathcal{C}_{\lambda_0}$ will correspond to a certain compact leaf $\mathcal{M}(\lambda_0)$ in the phase space.
$\mathcal{M}_{\lambda_0}$ is obtained by gluing four copies of $\mathcal{D}_{\lambda_0}$ along the corresponding arcs of the boundary $\Gamma_{\lambda_0}=\partial\mathcal{D}_{\lambda_0}$, similarly as it is explained in Section \ref{sec:few.confocal}.

On $\mathcal{M}_{\lambda_0}$, singular points of the billiard flow correspond to vertices of reflex angles on the boundary of $\mathcal{D}_{\lambda_0}$.
Since confocal conics are orthogonal to each other at points of intersection, only two types of such angles may appear: full angles and angles of $270^{\circ}$.
A vertex of a full angle is the projection of two singular points in the phase space.
Each of the singular points has four separatrices.
On the other hand, a vertex of a $270^{\circ}$ is a projection of only one singular point having six separatrices.

Using the Euler-Poincar\'e formula, as in \cite{Viana}, we get the following estimate for the total number $N=N(\mathcal{M}_{\lambda_0})$ of saddle-connections:

\begin{proposition}\label{prop:N(M)}
The total number $N=N(\mathcal{M}_{\lambda_0})$ of saddle-connections is bounded from above:
$$
N(\mathcal{M}_{\lambda_0})\le\frac{1}{2}\sum_{i=1}^ks_i=k-\chi(\mathcal{M}_{\lambda_0}),
$$
where $k$ is the number of singular points of the flow on $\mathcal{M}_{\lambda_0}$, and $s_1$, \dots, $s_k$ numbers of separatrices at each singular point.  
\end{proposition}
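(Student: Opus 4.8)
The plan is to realize $\mathcal{M}_{\lambda_0}$ as a closed oriented surface carrying the singular foliation defined by the billiard flow, and then to combine the classical Euler--Poincar\'e index formula with an elementary counting of separatrix prongs. First I would fix the topological model: on the regular part of $\mathcal{M}_{\lambda_0}$ the billiard flow is nonsingular and preserves the transverse invariant measure inherited from the restriction of the first integral $K_{\lambda}$ to the leaf (equivalently, from the metric $\mu$ of Proposition \ref{prop:rotation} pulled back through the four copies of $\mathcal{D}_{\lambda_0}$). The only singularities sit over the vertices of reflex angles of $\partial\mathcal{D}_{\lambda_0}$, which, because confocal conics meet orthogonally, are only of two kinds: a full angle, which lifts to two singular points of the flow each having four separatrices, and a $270^{\circ}$ angle, which lifts to a single singular point with six separatrices. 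In every case the local picture is that of an $s_i$-pronged saddle with $s_i$ even and $s_i\ge 4$, contributing index $1-s_i/2$.

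Next I would apply the Euler--Poincar\'e formula for a singular foliation with $s_i$-pronged saddles on the closed surface $\mathcal{M}_{\lambda_0}$:
\[
\sum_{i=1}^{k}\Bigl(1-\frac{s_i}{2}\Bigr)=\chi(\mathcal{M}_{\lambda_0}),
\]
which rearranges at once to $\tfrac12\sum_{i=1}^{k}s_i=k-\chi(\mathcal{M}_{\lambda_0})$, establishing the asserted equality between the two expressions on the right-hand side of the proposition.

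For the inequality, I would note that each saddle-connection is a single leaf segment of the flow joining two (possibly coinciding) singular points, and along it the segment occupies exactly one separatrix prong at each of its two endpoints; two distinct saddle-connections cannot share a prong, since a prong issues at most one leaf segment. Thus the map sending a saddle-connection to its pair of prong-ends is injective, and it accounts for $2N$ prongs among the $\sum_{i=1}^{k}s_i$ available ones; hence $2N\le\sum_{i=1}^{k}s_i$, i.e. $N\le\tfrac12\sum_{i=1}^{k}s_i$. Combining this with the identity above completes the argument.

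The main obstacle I expect is the very first step: the careful verification that the surface obtained by gluing four copies of $\mathcal{D}_{\lambda_0}$ is indeed a closed surface on which the billiard flow has exactly the saddle singularities described, with precisely the stated separatrix counts, and that no additional singularities arise along the gluing arcs of $\Gamma_{\lambda_0}$ or at tangency points with the caustic $\mathcal{C}_{\lambda_0}$. Once this local-to-global model is pinned down --- essentially the translation-surface description used for confocal billiards, as in \cite{Viana} --- the index computation and the prong-counting are routine.
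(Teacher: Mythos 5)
Your proposal is correct and follows essentially the same route as the paper, which simply invokes the Euler--Poincar\'e formula for the singular foliation on $\mathcal{M}_{\lambda_0}$ (following \cite{Viana}); your index computation $\sum_i(1-s_i/2)=\chi(\mathcal{M}_{\lambda_0})$ and the injective assignment of each saddle-connection to its two separatrix prongs are exactly the details the paper leaves implicit. The separatrix counts you use ($s_i=4$ at each of the two singular points over a full angle, $s_i=6$ over a $270^{\circ}$ angle) agree with the paper's description of the singularities.
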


As a corollary, we get the following

\begin{proposition}
Consider billiard within $\mathcal{D}$ with $\mathcal{C}_{\lambda_0}$ as a caustic.
If the corresponding subdomain $\mathcal{D}_{\lambda_0}$ has $\tilde{k}$ reflex angles on its boundary $\Gamma_{\lambda_0}$ then: 
\begin{itemize}
\item $N\le 3\tilde{k}$;
\item $g(\mathcal{M}_{\lambda_0})=\tilde{k}+1$.
\end{itemize}
\end{proposition}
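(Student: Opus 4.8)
The statement is an immediate corollary of Proposition \ref{prop:N(M)}, so the plan is to specialize that topological estimate to the situation described here. First I would recall that on $\mathcal{M}_{\lambda_0}$ there are exactly two kinds of singular points: projections of vertices of full ($360^\circ$) angles, each of which lifts to two singular points with four separatrices apiece, and projections of vertices of $270^\circ$ angles, each of which lifts to a single singular point with six separatrices. The $\tilde k$ reflex angles on $\Gamma_{\lambda_0}$ are precisely the $270^\circ$ angles (the full angles on $\Gamma_{\lambda_0}$ are ``cusp-like'' points coming from tangency with the caustic or the coordinate axes and do not count as reflex). So each reflex angle contributes one singular point with $s_i=6$ separatrices to $\mathcal{M}_{\lambda_0}$, giving $\tfrac12\sum_{i} s_i \geq \tfrac12\cdot 6\tilde k = 3\tilde k$ from the reflex-angle contributions alone; combined with the fact that the full-angle singularities have $s_i=4$ and contribute a net zero to $k-\chi$ only if $\chi$ is recomputed, the clean bound $N\le 3\tilde k$ follows by keeping track of which singular points actually occur. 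The cleanest route is: if $\mathcal{D}_{\lambda_0}$ has $\tilde k$ reflex vertices and, say, $r$ cusp-like (full-angle on the boundary) vertices, then $\mathcal{M}_{\lambda_0}$ has $k=\tilde k + 2r$ singular points with separatrix counts $6$ (occurring $\tilde k$ times) and $4$ (occurring $2r$ times), so $\tfrac12\sum s_i = 3\tilde k + 4r$ and $\chi(\mathcal{M}_{\lambda_0}) = k - \tfrac12\sum s_i = (\tilde k + 2r) - (3\tilde k + 4r) = -2\tilde k - 2r$; hence $N \le \tfrac12\sum s_i = 3\tilde k + 4r$, and when $r=0$ this is exactly $3\tilde k$.

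To get the genus formula I would then invoke that $\mathcal{M}_{\lambda_0}$ is a closed orientable surface (it is built by gluing four copies of the planar region $\mathcal{D}_{\lambda_0}$ along boundary arcs, orientation-consistently, exactly as in the genus-$3$ computation of Proposition \ref{prop:genus3}), so $\chi(\mathcal{M}_{\lambda_0}) = 2 - 2g(\mathcal{M}_{\lambda_0})$. But the Euler characteristic computed above through the flow counts the singular points as a single contribution; a careful accounting must note that away from the singular points the billiard flow on $\mathcal{M}_{\lambda_0}$ is nonsingular, so the standard index/Euler--Poincar\'e argument used in Proposition \ref{prop:N(M)} already encodes $\chi$ in terms of the separatrix data. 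Matching $2 - 2g = \chi(\mathcal{M}_{\lambda_0})$ with the value $-2\tilde k$ obtained in the basic case $r=0$ gives $2 - 2g = -2\tilde k$, i.e. $g(\mathcal{M}_{\lambda_0}) = \tilde k + 1$. As a sanity check, the model case of Section \ref{sec:few.confocal} has $\tilde k = 2$ (the two $270^\circ$ angles at the endpoints of the axis segments adjacent to the caustic) and indeed gives genus $3$, matching Proposition \ref{prop:genus3}.

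\textbf{Main obstacle.} The genuinely delicate point is bookkeeping of \emph{which} boundary vertices of $\mathcal{D}_{\lambda_0}$ become singular points of the flow on $\mathcal{M}_{\lambda_0}$, and with how many separatrices, once one passes from $\mathcal{D}$ to the caustic-restricted subdomain $\mathcal{D}_{\lambda_0}$: the boundary $\Gamma_{\lambda_0}$ may acquire new ``cusp-like'' vertices where $\mathcal{C}_{\lambda_0}$ is tangent to the billiard boundary or where it meets a coordinate axis, and these full-angle vertices contribute to the separatrix count without being reflex angles. I would handle this by arguing that every such extra full-angle vertex contributes a pair of $4$-separatrix singular points whose net contribution to $k - \chi$ cancels (as in the computation above, $r$ drops out of $2-2g$ when one correctly tracks both $k$ and $\sum s_i$), so the genus depends only on $\tilde k$; and that the bound $N \le 3\tilde k$ then holds because saddle-connections can only run between the reflex-type singularities in the relevant invariant components — or, more safely, one states $N \le 3\tilde k + 4r$ and observes $r=0$ in the cases of interest. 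The rest is routine application of the Euler--Poincar\'e formula exactly as in Proposition \ref{prop:N(M)} and \cite{Viana}.
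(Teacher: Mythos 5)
Your overall strategy --- specialize Proposition \ref{prop:N(M)} and convert the Euler characteristic into genus via $\chi=2-2g$ --- is the right one, but your treatment of the full-angle vertices contains a genuine error. In this paper a full ($360^{\circ}$) angle is not a ``cusp-like'' artifact of tangency with the caustic: it is a bona fide reflex angle of the domain, occurring at the tip of a wall protruding into the table (see the inner endpoint of the segment $\Gamma$ in the domain $\mathcal{D}_1$ of Section \ref{sec:unique.erg}, Figure \ref{fig:billiard2}). The quantity $\tilde k$ in the statement counts \emph{all} reflex angles, of both types. Your own bookkeeping then refutes your claimed cancellation: with $\tilde k$ six-separatrix singularities and $2r$ four-separatrix ones you correctly obtain $\chi=-2\tilde k-2r$, i.e.\ $g=\tilde k+r+1$, so the contribution of the full-angle vertices does \emph{not} drop out of $2-2g$ --- each such vertex raises the genus by one. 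Setting $r=0$ at the end therefore proves the proposition only in the special case where every reflex angle is of the $270^{\circ}$ type.

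The repair is simply to put the full angles back into $\tilde k$ where they belong. Writing $\tilde k=\tilde k_1+\tilde k_2$ for the numbers of full and $270^{\circ}$ angles, a full angle produces two singular points with four separatrices each (two simple saddles, total index $-2$), while a $270^{\circ}$ angle produces one singular point with six separatrices (index $-2$). Hence every reflex angle contributes exactly $-2$ to $\chi(\mathcal{M}_{\lambda_0})$ regardless of its type, so $\chi=-2\tilde k$ and $g=\tilde k+1$ with no case distinction --- this is precisely the content of the remark following the proposition that the genus depends only on the number of reflex angles and not on their types. (Your sanity check survives: $\mathcal{D}_0$ of Section \ref{sec:few.confocal} has $\tilde k=2$ angles of $270^{\circ}$ at $(0,\pm\sqrt{b-\beta_2})$ and genus $3$, consistent with Proposition \ref{prop:genus3}.) For the first assertion, the separatrix count gives $N\le\tfrac12\sum s_i=3\tilde k_2+4\tilde k_1$, which equals $3\tilde k$ when $\tilde k_1=0$ but exceeds it otherwise; so to obtain the bound exactly as stated you must either restrict to the case of $270^{\circ}$ angles or supply an additional argument for full-angle vertices, rather than relegating them to ``$r=0$ in the cases of interest''.
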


Notice that the genus of the surface $\mathcal{M}_{\lambda_0}$ depends only on the number of reflex angles on the boundary of $\mathcal{D}_{\lambda_0}$ and not of their types.
Also, $\tilde{k}\le k$.


\begin{example}
\begin{itemize}
\item
If there are no reflex angles on the boundary, i.e.~ $k=0$, then $\mathcal{M}_{\lambda_0}$ is a torus: $g=1$, $N=0$;

\item
if there is only one reflex angle on the boundary, independently if it is a $270^{\circ}$ angle or a full angle, we have that $g=2$.
\end{itemize}
\end{example}

We finish this section by formulating of an analogue of the Liuoville-Arnold theorem for pseudo-integrable billiard systems.
It is a consequence of the Maier theorem from the theory of measured foliations (see \cites{Maier1943,Viana}).
In our case, the measured foliation is defined by the kernel of an exact one-form $\beta_{\lambda}:=d K_{\lambda}$, where functions $K_{\lambda}$ are defined in Section \ref{sec:billiards.confocal}.

\begin{theorem}\label{th:maier}
There exist paiwise disjoint open domains $D_1$, \dots, $D_n$ on $\mathcal{M}_{\lambda_0}$, each of them being invariant under the billiard flow, such that their closures cover $\mathcal{M}_{\lambda_0}$ and for each $j\in\{1,\dots,n\}$:
\begin{itemize}
 \item
either $D_j$ consists of periodic billiard trajectories and is homeomorphic to a cylinder;
 \item
or $D_j$ consists of non-periodic trajectories all of which are dense in $D_j$.
\end{itemize}
The boundary of each $D_j$ consists of saddle-connections.
\end{theorem}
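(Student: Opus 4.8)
The plan is to reduce the statement to the classical Maier decomposition theorem for measured foliations on closed surfaces, following the remark made just before the theorem that the relevant foliation on $\mathcal{M}_{\lambda_0}$ is the one cut out by the kernel of the one-form $\beta_{\lambda}:=dK_{\lambda}$. The first step is to make the geometric picture precise. By the corollary in Section~\ref{sec:billiards.confocal}, each $K_{\lambda}$ is a first integral of the billiard motion in a domain bounded by arcs of confocal conics, so the connected components of the level curves of $K_{\lambda}$ on the leaf $\mathcal{M}_{\lambda_0}$ are exactly the billiard trajectories lying in that leaf; equivalently, $\mathcal{M}_{\lambda_0}$ carries a natural flat structure in which these trajectories are the leaves of a fixed directional foliation $\mathcal{F}$, and $|\beta_{\lambda}|$ furnishes a transverse invariant measure, so $\mathcal{F}$ is a measured foliation. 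I would recall from Sections~\ref{sec:few.confocal} and~\ref{sec:topological} that $\mathcal{M}_{\lambda_0}$ is obtained by gluing four copies of $\mathcal{D}_{\lambda_0}$ along the arcs of $\Gamma_{\lambda_0}$ after the standard unfolding of the reflections at the smooth boundary arcs, and check that the result is a \emph{closed} orientable surface --- its genus is already computed there and in Proposition~\ref{prop:genus3}.

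The second step is to describe the singularities of $\mathcal{F}$. The one-form $\beta_{\lambda}$ vanishes precisely at the points of $\mathcal{M}_{\lambda_0}$ lying over the vertices of reflex angles of $\mathcal{D}_{\lambda_0}$; since confocal conics meet orthogonally, only full angles and $270^{\circ}$ angles occur, and the local computation recalled in Section~\ref{sec:topological} shows that a full angle lifts to two singular points with four separatrices each and a $270^{\circ}$ angle to a single singular point with six separatrices. Thus $\mathcal{F}$ has only finitely many singular points, each of saddle type; moreover, because $\beta_{\lambda}$ is a globally defined closed one-form, $\mathcal{F}$ is co-orientable (hence orientable, the surface being orientable), so that Maier's theorem applies in its classical form without passing to an orientation cover.

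The third step is the invocation of Maier's theorem (\cite{Maier1943}; see also \cite{Viana}): a measured foliation with finitely many saddle singularities on a closed orientable surface decomposes into finitely many invariant open domains $D_1,\dots,D_n$ whose closures cover the surface, each $D_j$ being either \emph{periodic} --- foliated by closed leaves and homeomorphic to an open cylinder --- or \emph{minimal} --- every leaf dense in $D_j$ --- and the boundary of each $D_j$ being a union of saddle connections. Pulling this back through the projection $\mathcal{M}_{\lambda_0}\to\mathcal{D}_{\lambda_0}$ turns closed leaves into periodic billiard trajectories and dense leaves into dense non-periodic trajectories, and the saddle connections become the saddle-connections of the statement; the quantitative bound of Proposition~\ref{prop:N(M)} controls how many of the latter can occur. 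This yields the theorem.

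The main obstacle I expect is not the final appeal to Maier's theorem but the verification of its hypotheses: one has to check carefully that the four-fold gluing of $\mathcal{D}_{\lambda_0}$ really produces a \emph{closed} surface on which the billiard return dynamics extends to a smooth flow across the smooth boundary arcs and across tangency points with the caustic, that the only singular points of $\mathcal{F}$ are the saddles coming from reflex angles (with no spurious singular behaviour at convex corners or at the inner boundary of the ring), and that the transverse measure defined by $|\beta_{\lambda}|$ is finite on transversals and positive on open sub-arcs --- the analogue, near the singular leaves, of the non-atomicity and positivity properties established for the rotation metric in Proposition~\ref{prop:rotation}. Once these points are settled, the decomposition and the description of its pieces follow directly.
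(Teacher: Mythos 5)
Your proposal is correct and follows essentially the same route as the paper: the authors likewise obtain Theorem \ref{th:maier} as a direct consequence of Maier's decomposition theorem applied to the measured foliation defined by the kernel of the exact one-form $\beta_{\lambda}=dK_{\lambda}$ on the glued surface $\mathcal{M}_{\lambda_0}$. The paper in fact gives only this one-line reduction, so your additional verification of the hypotheses (closedness and orientability of the glued surface, saddle-type singularities at reflex-angle vertices, finiteness and positivity of the transverse measure) supplies detail the paper leaves implicit rather than departing from its argument.
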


We see, that in contrast to completely integrable Hamiltonian systems, compact leaves of our billiards could be of a genus greater than $1$.
Moreover, one leaf could contain regions with periodic trajectories with different periods for different regions and simultaneousely could contain regions with non-periodic motion.
Because of that, we call such systems \emph{pseudo-integrable}, taking into account the fact that they possess two independent commuting first integrals, as it has been shown in Section \ref{sec:billiards.confocal}.

\subsection{Poncelet theorem and Cayley-type conditions}
\label{sec:poncelet.nosonja}
For billiards within confocal conics without reflex angles on the boundary, it is well known that the famous Poncelet porism holds (see \cites{DragRadn2004,DragRadn2011book}):
\begin{itemize}
\item[(A)] 
if there is a periodic billiard trajectory with one initial point of the boundary, then there are infinitely many such periodic trajectories with the same period, sharing the same caustic; 
\item[(B)]
even more is true, if there is one periodic trajectory, then all trajectories sharing the same caustic are periodic with the same period.
\end{itemize}

However, when reflex angles exist, which is the case studied in the present paper, one can say that (A) is still generally true.
However, (B) is not true any more.
In other words, \emph{the Poncelet porism is true locally, but not globally}.

\begin{theorem}\label{th:poncelet-nosonja}
There exist subsets $\delta_1$, \dots, $\delta_n$ of the boundary $\Gamma_{\lambda_0}$, with the following properties:
\begin{itemize}
 \item
$\delta_1$, \dots, $\delta_n$ are invariant under the billiard map; 

 \item 
$\delta_1$, \dots, $\delta_n$ are pairwise disjoint;
 
 \item
each $\delta_i$ is a finite union of $d_i$ open subarcs of $\Gamma_{\lambda_0}$: 
$$
\delta_i=\bigcup_{j=1}^{d_i}\ell_{j}^i;
$$

 \item
closure of $\delta_1\cup\dots\cup \delta_N$ is $\Gamma$,
\end{itemize}
such that they satisfy:
\begin{itemize}
 \item
if one billiard trajectory with bouncing points within $\delta_i$ is periodic, then all such trajectories are periodic with the same period $n_i$.
Moreover, $n_i$ is a multiple of $d_i$ and every such a trajectory bounces the same number $\dfrac{n_i}{d_i}$ of times off each arc $\ell_{j}^i$;

\item
if billiard trajectories having vertices in $\delta_i$ are non-periodic, then the bouncing points of each trajectory are dense in $\delta_i$.
\end{itemize}
The boundary of each $\delta_i$ consists of bouncing points of saddle-connections.
\end{theorem}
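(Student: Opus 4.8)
The plan is to transfer the structure provided by the Maier-type theorem (Theorem \ref{th:maier}) from the compact leaf $\mathcal{M}_{\lambda_0}$ in the phase space down to the boundary $\Gamma_{\lambda_0}$ via the billiard map. Recall from Section \ref{sec:topological} that $\mathcal{M}_{\lambda_0}$ is built by gluing four copies of $\mathcal{D}_{\lambda_0}$, and that the billiard flow on it carries a measured foliation defined by $\ker\beta_\lambda$ with $\beta_\lambda=dK_\lambda$. First I would invoke Theorem \ref{th:maier} to obtain the pairwise disjoint open invariant domains $D_1,\dots,D_n$ on $\mathcal{M}_{\lambda_0}$, each either a cylinder of periodic trajectories or a region of dense non-periodic trajectories, with boundaries consisting of saddle-connections. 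The key observation is that the natural cross-section for the billiard flow on $\mathcal{M}_{\lambda_0}$ is precisely (four copies of) the boundary arcs $\Gamma_{\lambda_0}$: every non-singular trajectory on the leaf meets the ``reflection'' locus transversally and repeatedly. Restricting the invariant decomposition $\{D_j\}$ to this cross-section produces invariant subsets of the boundary, and after identifying the four copies one defines $\delta_i$ to be the image in $\Gamma_{\lambda_0}$ of $D_i$ under the projection. Disjointness of the $\delta_i$ and the fact that their closures cover $\Gamma$ follow directly from the corresponding properties of the $D_j$, together with the fact that the four-fold gluing only identifies boundary arcs, not interior points of the rings.

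Next I would analyze the fine structure of each $\delta_i$. Because a cylinder $D_j$ of periodic trajectories, when cut by the cross-section, yields a finite union of open arcs — one arc per ``strand'' of the cylinder as it winds through the four rings $\mathcal{R}_1,\dots,\mathcal{R}_4$ and bounces off the various boundary arcs $\Gamma_1,\Gamma_2,\Gamma_3,\Gamma_4$ and the caustic — we get $\delta_i=\bigcup_{j=1}^{d_i}\ell_j^i$ for some $d_i$, with the $\ell_j^i$ open subarcs of $\Gamma_{\lambda_0}$. The billiard map permutes these arcs cyclically (this is where the interval-exchange picture enters: the first-return map to $\delta_i$ is conjugate to an IET, and on a periodic cylinder it is a rotation by a rational number with denominator $d_i$). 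If one trajectory with bouncing points in $\delta_i$ is periodic, then, since on the cylinder $D_i$ the return map is the \emph{same} rigid rotation for every orbit, all such trajectories are periodic with the same period $n_i$; the relation $d_i\mid n_i$ and the equidistribution $n_i/d_i$ bounces per arc $\ell_j^i$ come from counting how the periodic orbit wraps around the cylinder relative to the $d_i$ strands. For a non-periodic $D_i$, the Maier theorem already asserts density of each trajectory in $D_i$, which pushes down to density of bouncing points in $\delta_i$. Finally, the boundary of each $\delta_i$ consists of the bouncing points of the saddle-connections bounding $D_i$, again by restricting to the cross-section.

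The main obstacle I expect is making precise the passage from the flow picture on $\mathcal{M}_{\lambda_0}$ to the map picture on $\Gamma_{\lambda_0}$: one must check that the boundary arcs genuinely form a \emph{global} cross-section (i.e.\ every non-singular orbit returns to them in bounded time, which could fail if some trajectory ``slides along'' the caustic or a degenerate arc without ever reflecting), and one must verify that the first-return map to $\delta_i$ is an interval exchange transformation so that the ``rotation number is well-defined'' step in the periodic case is legitimate. This requires the detailed analysis of the separatrix structure at full angles and $270^\circ$ angles already set up in Section \ref{sec:topological} (two singular points with four separatrices each, resp.\ one with six), together with the monotonicity properties of the Jacobi-type coordinate along trajectories analogous to E3--E5. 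Once the cross-section and the IET structure are established, the remaining verifications — cyclic permutation of the $\ell_j^i$, the divisibility $d_i\mid n_i$, and equidistribution of bounces — are combinatorial bookkeeping on the IET, and the density statement is immediate from Theorem \ref{th:maier}.
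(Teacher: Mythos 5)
Your proposal is correct and follows essentially the same route as the paper: the authors also derive the theorem directly from Theorem \ref{th:maier}, defining $\delta_i=\Gamma_{\lambda_0}\cap D_i$ and letting the invariance, disjointness, periodicity/density dichotomy, and saddle-connection boundaries all descend from the corresponding properties of the domains $D_i$. Your additional discussion of the cross-section and the interval-exchange structure of the first-return map fleshes out details the paper leaves implicit, but it is the same argument.
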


This theorem is a consequence of Theorem \ref{th:maier} from the previous section.
The proof follows from the fact that each of the domains $D_i$ intersects the boundary $\Gamma_{\lambda_0}$ and forms $\delta_i=\Gamma_{\lambda_0}\cap D_i$.


In \cite{DragRadn2004} conditions of Cayley's type for periodicity of billiards within several confocal quadrics in the Euclidean space of an arbitrary dimension were derived, see also \cite{DragRadn2006jms} for detailed examples.

We analyzed there billiards within domains bounded by arcs of several confocal quadrics and \emph{the billiad ordered game} within a few confocal ellipsoids.
Unlike in the present article, domains considered in \cites{DragRadn2004,DragRadn2006jms} did not contain reflex angles at the boundary.
However, the technique used there to describe periodic trajectories can be directly transferred to the present problems.

Before stating the Cayley-type conditions, recall that a point is being reflected off conic $\mathcal{C}_{\lambda_0}$ \emph{from outside} if the corresponding Jacobi elliptic coordinate achieves a local maximum at the reflection point, and \emph{from inside} if there the coordinate achieves a local minimum (see \cite{DragRadn2004}).

\begin{theorem}\label{th:cayley-nosonja}
Consider domain $\mathcal{D}$ bounded by arcs of $k$ ellipses $\mathcal{C}_{\beta_1}$, \dots, $\mathcal{C}_{\beta_k}$,  $l$ hyperbolas $\mathcal{C}_{\gamma_1}$, \dots, $\mathcal{C}_{\gamma_l}$, and several segments belonging to degenerate conics from the confocal family (\ref{eq:confocal-in-plane}):
$$
\beta_1,\dots,\beta_k\in(-\infty,b),\ k\ge1,\ 
\gamma_1,\dots,\gamma_l\in(b,a),\ l\ge0.
$$
Let $\mathcal{C}_{\alpha_0}$ be an ellipse contained within all ellipses $\mathcal{C}_{\beta_1}$, \dots, $\mathcal{C}_{\beta_k}$: $b>\alpha_0>\beta_i$ for all $i\in\{1,\dots,k\}$.
A necessary condition for the existence of a billiard trajectory within $\mathcal{D}$ with $\mathcal{C}_{\alpha_0}$ as a caustic which becomes closed after:
\begin{itemize}
\item 
$n_i'$ reflections from inside and $n_i''$ reflections from outside off $\mathcal{C}_{\beta_i}$, $1\le i\le k$;

\item
$m_j'$ reflections from inside and $m_j''$ reflections from outside off $\mathcal{C}_{\gamma_j}$, $1\le j\le l$;
 
\item
total number of $p$ intersections with the $x$-axis and reflections off the segments contained in the $x$-axis;

\item
total number of $q$ intersections with the $y$-axis and reflections off the segments contained in the $y$-axis;
\end{itemize}
is:
\begin{gather*}
\sum_{i=1}^k(n_i'-n_i'')(\mathcal{A}(P_{\beta_i})-\mathcal{A}(P_{\alpha_0})) +
\sum_{j=1}^l(m_j'-m_j'')\mathcal{A}(P_{\gamma_j}) +
p\mathcal{A}(P_a)-q\mathcal{A}(P_b)=0,
\\
m_j'-m_j''+p-q=0.
\end{gather*}
Here $\mathcal{A}$ is the Abel-Jacobi map of the ellitic curve:
$$
\Gamma\ :\ s^2=\mathcal{P}(t):=(a-t)(b-t)(\alpha_0-t),
$$
and $P_{\delta}$ denotes point $(\delta,\sqrt{\mathcal{P}(\delta)})$ on $\Gamma$.
\end{theorem}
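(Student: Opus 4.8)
The plan is to reduce the periodicity condition to a relation on the Jacobian of the elliptic curve $\Gamma$, exactly as in the setting of Theorem~\ref{th:cayley2} and in \cites{DragRadn2004,DragRadn2006jms}, and then to read off the stated equations as the requirement that a prescribed divisor class vanish. First I would set up the generalized Jacobi (elliptic) coordinates $(\lambda_1,\lambda_2)$ of a point inside $\mathcal{D}$ relative to the confocal family \eqref{eq:confocal-in-plane}, with $\mathcal{C}_{\alpha_0}$ the caustic. Along a billiard trajectory with this caustic the coordinates evolve monotonically between critical values, and the critical values are precisely the parameters of the quadrics that the trajectory is tangent to or meets transversally in a degenerate way: these are $\alpha_0$ at touching points with the caustic, $\beta_i$ at reflection points off $\mathcal{C}_{\beta_i}$, $\gamma_j$ at reflection points off $\mathcal{C}_{\gamma_j}$, and $a$, $b$ at crossings of the coordinate axes (equivalently, reflections off the degenerate segments). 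This is the analogue of properties E3--E5 and PE3--PE5, transported to the planar domain with reflex angles; the key point is that introducing reflex angles does not change the local picture of the coordinate flow, only the global combinatorics of the leaf $\mathcal{M}_{\lambda_0}$, so the monotonicity-between-critical-points statement still holds.

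Next I would translate the closing-up condition into an additive relation via the Abel--Jacobi map. Following Jacobi \cite{JacobiGW}, one considers the abelian differential $d t/\sqrt{\mathcal P(t)}$ on $\Gamma\colon s^2=\mathcal P(t)=(a-t)(b-t)(\alpha_0-t)$ and integrates it along the trajectory, splitting each monotone arc of $\lambda_1$ or $\lambda_2$ into an integral between consecutive branch-type values. Summing over a full period, the contributions telescope into integer multiples of the period of the differential plus a sum of terms $\mathcal A(P_{\beta_i})$, $\mathcal A(P_{\gamma_j})$, $\mathcal A(P_{\alpha_0})$, $\mathcal A(P_a)$, $\mathcal A(P_b)$, each weighted by the net number of times the corresponding critical value is attained. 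The subtle bookkeeping is the sign: a reflection ``from inside'' versus ``from outside'' off $\mathcal{C}_{\beta_i}$ corresponds to a local minimum versus maximum of $\lambda_1$, hence to the two sheets of $\Gamma$ over $t=\beta_i$, i.e.\ to $P_{\beta_i}$ versus $-P_{\beta_i}$ (and $P_a$, $P_b$ are branch points so $2P_a\equiv2P_b\equiv 2P_\infty$, which forces the second displayed equation $m_j'-m_j''+p-q=0$ as an integrality/parity constraint). Collecting everything, the condition that the trajectory closes after the prescribed numbers of reflections becomes exactly
$$
\sum_{i=1}^k(n_i'-n_i'')(\mathcal{A}(P_{\beta_i})-\mathcal{A}(P_{\alpha_0})) +
\sum_{j=1}^l(m_j'-m_j'')\mathcal{A}(P_{\gamma_j}) +
p\mathcal{A}(P_a)-q\mathcal{A}(P_b)=0
$$
on the Jacobian of $\Gamma$, together with $m_j'-m_j''+p-q=0$.

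The main obstacle I anticipate is precisely the correct identification of which divisor point (and which sign) each type of boundary interaction contributes, in the presence of the reflex angles and of several distinct conics playing the role of reflecting curves simultaneously. In the convex case treated in \cites{DragRadn2004,DragRadn2006jms} this is already delicate, and here one must also verify that a crossing of the $x$-axis (a transversal passage through the degenerate conic $\mathcal{C}_b$) and a reflection off a segment lying on that axis contribute the same divisor point $P_b$ — this is why $p$ counts both and only the combined number enters the condition; similarly for $P_a$ and the $y$-axis. I would handle this by a careful case analysis of the behaviour of $(\lambda_1,\lambda_2)$ near each boundary arc, using the description of critical values in the discussion preceding Proposition~\ref{prop:N(M)}, and then invoke the computation of \cite{DragRadn1998b} (as in the proof of Theorem~\ref{th:cayley-pseudo}) to pass from the additive relation on the Jacobian to rank conditions on Hankel-type matrices of Taylor coefficients whenever one wants the explicit Cayley form; for the present theorem it suffices to stop at the divisor relation, so no desingularization beyond the standard one is needed. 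Finally I would note that the condition is only necessary, not sufficient, exactly because — as emphasized in Section~\ref{sec:keane} — the leaf $\mathcal{M}_{\lambda_0}$ may decompose into several invariant pieces $\delta_i$ (Theorem~\ref{th:poncelet-nosonja}), and the divisor relation may be satisfied on one piece while the actual trajectory lives on another where it fails to close.
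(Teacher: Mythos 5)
Your proposal follows essentially the same route as the paper's (very terse) proof: Jacobi's Abelian sums $\sum_s\int d\lambda_s/\sqrt{\mathcal P(\lambda_s)}$ along the trajectory, monotonicity of the elliptic coordinates between the critical values $\alpha_0$, $\beta_i$, $\gamma_j$, $a$, $b$, and telescoping into a divisor relation on the Jacobian of $\Gamma$, with the second displayed equation arising as the balance condition for the oscillation of the hyperbolic coordinate. One remark: you attach $P_b$ to $x$-axis crossings and $P_a$ to $y$-axis crossings, which is the geometrically correct assignment (a point on the $x$-axis has $b$ among its Jacobi coordinates, a point on the $y$-axis has $a$), whereas the theorem's formula pairs them the other way; since both are branch points this only affects the relation modulo $2$-torsion, and the paper's own proof of Theorem \ref{th:cayley2} exhibits the same swap, so it does not constitute a gap in your argument.
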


\begin{proof}
Following Jacobi \cite{JacobiGW} and Darboux \cite{Darboux1870}, similarly as in \cite{DragRadn2004}, we consider sums
$$
\int\frac{d\lambda_1}{\sqrt{\mathcal{P}(\lambda_1)}}+\int\frac{d\lambda_2}{\sqrt{\mathcal{P}(\lambda_2)}}
\ \text{and}\ 
\int\frac{\lambda_1d\lambda_1}{\sqrt{\mathcal{P}(\lambda_1)}}+\int\frac{\lambda_2d\lambda_2}{\sqrt{\mathcal{P}(\lambda_2)}}
$$
over billiard trajectory $A_1\dots A_N$.
Here $(\lambda_1,\lambda_2)$ are Jacobi elliptic coordinates, $\lambda_1<\lambda_2$.
The second integral is equal to the length of the trajectory, while the first one is zero.

Notice that, along a trajectory, $\lambda_1$ achieves local extrema at points of reflection off ellipses and touching points with the caustic, and $\lambda_2$ at points of reflection off hyperbolas and intersection points with the coordinate axes, we obtain that $A_1=A_N$ is equivalent to the condition stated.
\end{proof}

We illustrate this theorem on the example when the billiard table is $\mathcal{D}_0$, as defined in Section \ref{sec:few.confocal}.

\begin{example}
A necessary condition for the existence of a billiard trajectory within $\mathcal{D}_0$ with $\mathcal{C}_{\alpha_0}$ as a caustic, such that it becomes closed after $n_1$ reflections off $\mathcal{C}_{\beta_1}$ and $n_2$ reflections off $\mathcal{C}_{\beta_2}$ is:
$$
n_1\mathcal{A}(P_{\beta_1})+n_2\mathcal{A}(P_{\beta_2})=(n_1+n_2)\mathcal{A}(P_{\alpha_0}).
$$
Notice that in this case number $p$ and $q$ are always even and equal to each other.
Since $2\mathcal{A}(P_a)=2\mathcal{A}(P_b)$, the corresponding summands are cancelled out.
\end{example}

\subsection{Interval exchange transformation}
\label{sec:interval}
In this section, we are going to establish a connection of the billiard dynamics within domain $\mathcal{D}_0$ defined in Section \ref{sec:few.confocal} with the inteval exchange transformation.

\subsubsection*{Interval exchange maps}
Let $I\subset\mathbf{R}$ be an interval, and $\{I_{\alpha}\mid\alpha\in\mathcal{A}\}$ its finite partition into subintervals.
Here $\mathcal{A}$ is a finite set of at least two elements.
We consider all intervals to be closed on the left and open on the right.

\emph{An interval exchange map} is a bijection of $I$ into itself, such that its restriction on each $I_{\alpha}$ is a translation.
Such a map $f$ is determined by the following data:
\begin{itemize}
 \item
a pair $(\pi_0,\pi_1)$ of bijections $\mathcal{A}\to\{1,\dots,d\}$ describing the order of the subintervals $\{I_{\alpha}\}$ in $I$ and $\{f(I_{\alpha})\}$ in $f(I)=I$.
We denote:
$$
\pi=\left(
\begin{array}{cccc}
\pi_0^{-1}(1) & \pi_0^{-1}(2) & \dots & \pi_0^{-1}(d)\\
\pi_1^{-1}(1) & \pi_1^{-1}(2) & \dots & \pi_1^{-1}(d)
\end{array}
\right).
$$

\item
a vector $\lambda=(\lambda_{\alpha})_{\alpha\in\mathcal{A}}$ of the lengths of $I_{\alpha}$.
\end{itemize}

\subsubsection*{Billiard dynamics}\label{sec:dynamics}
To each billiard trajectory, we join the sequence:
$$
\{(X_n,s_n)\},
\quad
X_n\in\mathcal{C}_{\alpha_0},
\quad
s_n\in\{+,-\}
$$
where $X_n$ are joint points of the trajectory with the caustic,
while $s_n=+$ if at $X_n$ the trajectory is winding counterclockwise and $s_n=-$ if it is winding clockwise about the caustic.

Introduce metric $\mu$ on the caustic $\mathcal{C}_{\alpha_0}$ as in Proposition \ref{prop:rotation}.
Then, we parametrize $\mathcal{C}_{\alpha_0}$ by parameters:
$$
p\ :\ \mathcal{C}_{\alpha_0}\to[0,1),
\quad
q\ :\ \mathcal{C}_{\alpha_0}\to[-1,0),
$$
which are natural with respect to $\mu$ such that $p$ is oriented counterclockwise and $q$ clockwise along $\mathcal{C}_{\alpha_0}$, and the values $p=0$ and $q=-1$ correspond to points $P_0$, $Q_0$ respectively, as shown in Figure \ref{fig:param}.
\begin{figure}[h]
\input{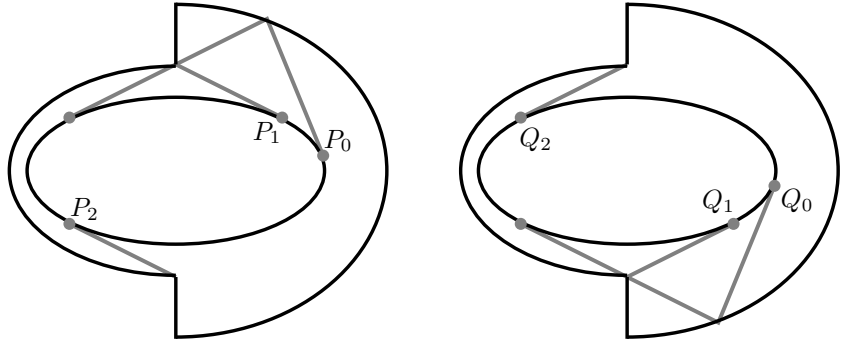}
\caption{Parametrizations of the caustic.}\label{fig:param}
\end{figure}

Consider one segment of a billiard trajectory, and let $X\in\mathcal{C}_{\alpha_0}$ be its touching point with the caustic.
Suppose that the particle is moving counterclockwise on that segment.
From Figure \ref{fig:param}, we conclude:
\begin{itemize}
\item
if $X$ is between points $P_1$ and $P_2$ then the particle is going to hit the arc
$\mathcal{C}_{\lambda_2}$;

\item
if $X$ is between $P_2$ and $P_0$, the particle is going to hit the arc
$\mathcal{C}_{\lambda_1}$;

\item
for $X$ between $P_0$ and $P_1$, the particle is going to hit $\mathcal{C}_{\lambda_1}$ and the upper segment before the next contact with the caustic and the direction of motion is changed to clockwise.
\end{itemize}
Similarly, if the particle is moving in clockwise direction, we have:
\begin{itemize}
\item
if $X$ is between points $Q_1$ and $Q_2$ then the particle is going to hit the arc
$\mathcal{C}_{\lambda_2}$;

\item
if $X$ is between $Q_2$ and $Q_0$, the particle is going to hit the arc
$\mathcal{C}_{\lambda_1}$;

\item
for $X$ between $Q_0$ and $Q_1$, the particle is going to hit $\mathcal{C}_{\lambda_1}$ and the lower segment before the next contact with the caustic and the direction of motion is changed to counterclockwise.
\end{itemize}

To see the billiard dynamics as an inteval exchange transformation, we make the following identification:
$$
(X,+)\sim p(X),
\quad
(X,-)\sim q(X).
$$
In other words:
\begin{itemize}
\item
we identify the joint point $X$ of a given trajectory with the caustic with 
$p(X)\in[0,1)$ if the particle is moving in the counterclockwise direction on the corresponding segment;

\item
for the motion in the clockwise direction, we identify $X$ with $q(X)\in[-1,0)$.
\end{itemize}

Denote the rotation numbers $r_1=\rho(\lambda_1)$, $r_2=\rho(\lambda_2)$ (see Proposition \ref{prop:rotation}).

The parametrizations values for points denoted in Figure \ref{fig:param} are:
\begin{gather*}
p(P_0)=0,
\quad
p(P_1)=r_1-r_2,
\quad
p(P_2)=r_1-r_2+\dfrac12,
\\
q(Q_0)=-1,
\quad
q(Q_1)=r_1-r_2-1,
\quad
q(Q_2)=r_1-r_2-\dfrac12.
\end{gather*}

Now, we distinguish three cases depending on the position of point $P_0$ with respect to the $x$-axis (see Figure \ref{fig:param}), i.e.~ on the sign of 
$\dfrac14+\dfrac{r_2}2-r_1$.

\subsubsection*{$P_0$ is on the $x$-axis: $\dfrac14+\dfrac{r_2}2-r_1=0$}
The interval exchange map is:
$$
\xi\mapsto
\begin{cases}
\xi+r_1+\frac32, &\xi\in[-1,-\frac12-r_1)\\
\xi+r_2, &\xi\in[-\frac12-r_1,-r_1)\\
\xi+r_1-1, &\xi\in[-r_1,0)\\
\xi+r_1-\frac12, & \xi\in[0,\frac12-r_1)\\
\xi+r_2, &\xi\in[\frac12-r_1,1-r_1)\\
\xi+r_1-1, &\xi\in[1-r_1,1),
\end{cases}
$$
as shown in Figure \ref{fig:interval0}.
\begin{figure}[h]
\input{0-figures/interval0}
\caption{Interval exchange transformation for the case $\frac14+\frac{r_2}2-r_1=0$.}\label{fig:interval0}
\end{figure}

To the map, pair $(\pi,\lambda)$ is joined:
\begin{gather*}
\pi=\left(
\begin{array}{cccccc}
A & B & C & D & E & F \\
C & B & D & F & E & A 
\end{array}
\right),
\\
\lambda=\left(\frac12-r_1,\ \frac12,\ r_1,\ \frac12-r_1,\ \frac12,\ r_1\right).
\end{gather*}

\subsubsection*{$P_0$ is above the $x$-axis: $\dfrac14+\dfrac{r_2}2-r_1>0$}
The interval exchange map in this case is shown in Figure \ref{fig:interval+} and given by:
$$
\xi\mapsto
\begin{cases}
\xi+r_1+\frac32, &\xi\in[-1,r_1-r_2-1)\\
\xi+r_2, &\xi\in[r_1-r_2-1,r_1-r_2-\frac12)\\
\xi+r_1, &\xi\in[r_1-r_2-\frac12,-r_1)\\
\xi+r_1-1, &\xi\in[-r_1,0)\\

\xi+r_1-\frac12, &\xi\in[0,r_1-r_2)\\
\xi+r_2, &\xi\in[r_1-r_2,r_1-r_2+\frac12)\\
\xi+r_1, &\xi\in[r_1-r_2+\frac12,1-r_1)\\
\xi+r_1-1, &\xi\in[1-r_1,1).
\end{cases}
$$
\begin{figure}[h]
\input{0-figures/interval+}
\caption{Interval exchange transformation for the case $\frac14+\frac{r_2}2-r_1>0$.}\label{fig:interval+}
\end{figure}

The map can be desribed by the pair $(\pi,\lambda)$:
\begin{gather*}
\pi=\left(
\begin{array}{cccccccc}
A & B & C & D & E & F & G & H\\
D & B & E & C & H & F & A & G 
\end{array}
\right),
\\
\lambda=\left(r_1-r_2,\ \frac12,\ r_2-2r_1+\frac12,\ r_1,\ r_1-r_2,\ \frac12,\ r_2-2r_1+\frac12,\ r_1\right).
\end{gather*}

\subsubsection*{$P_0$ is below the $x$-axis: $\dfrac14+\dfrac{r_2}2-r_1<0$}
The interval exchange map corresponding to the billiard dynamics is:
$$
\xi\mapsto
\begin{cases}
\xi+r_1+\frac32, &\xi\in[-1,-\frac12-r_1)\\
\xi+r_1-\frac12, &\xi\in[-\frac12-r_1,r_1-r_2-1)\\
\xi+r_2, &\xi\in[r_1-r_2-1,r_1-r_2-\frac12)\\
\xi+r_1-1, &\xi\in[r_1-r_2-\frac12,0)\\

\xi+r_1-\frac12, &\xi\in[0,\frac12-r_1)\\
\xi+r_1-\frac32, &\xi\in[\frac12-r_1,r_1-r_2)\\
\xi+r_2, &\xi\in[r_1-r_2,r_1-r_2+\frac12)\\
\xi+r_1-1, &\xi\in[r_1-r_2+\frac12,1),
\end{cases}
$$
see Figure \ref{fig:interval-}.
\begin{figure}[h]
\input{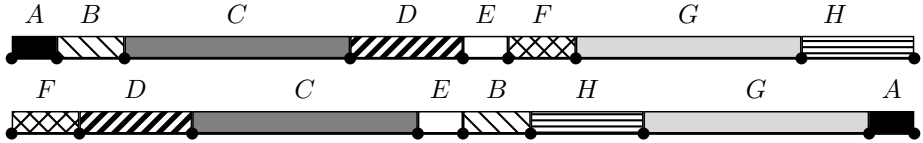}
\caption{Interval exchange transformation for the case $\frac14+\frac{r_2}2-r_1<0$.}\label{fig:interval-}
\end{figure}

To the map, pair $(\pi,\lambda)$ is joined:
\begin{gather*}
\pi=\left(
\begin{array}{cccccccc}
A & B & C & D & E & F & G & H\\
F & D & C & E & B & H & G & A 
\end{array}
\right),
\\
\lambda=\left(\frac12-r_1,\ 2r_1-r_2-\frac12,\ \frac12, r_2+\frac12-r_1,\ \frac12-r_1,\ 2r_1-r_2-\frac12,\ \frac12, r_2+\frac12-r_1\right).
\end{gather*}

Notice that in all three cases the interval exchange transformations depend only on the rotation numbers $r_1$, $r_2$.
Thus, we got

\begin{theorem}\label{th:nezavisnost}
The billiard dynamics inside the domain $\mathcal{D}_0$ with ellipse $\mathcal{C}_{\alpha_0}$ as the caustic, does not depend on the parameters $a$, $b$ of the confocal family but only on the rotation numbers $r_1$, $r_2$.
\end{theorem}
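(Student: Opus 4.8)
The plan is to exhibit an explicit trajectorial equivalence between the billiard flow inside $\mathcal{D}_0$ restricted to the leaf with caustic $\mathcal{C}_{\alpha_0}$ and an interval exchange transformation whose combinatorial and length data $(\pi,\lambda)$ depend only on the rotation numbers $r_1=\rho(\lambda_1)$ and $r_2=\rho(\lambda_2)$; once this is in place the statement is immediate, since two confocal families producing the same pair $(r_1,r_2)$ are conjugated to literally the same IET, and $a,b$ enter the construction only through the metric $\mu$, hence through $\rho$, hence through $r_1,r_2$.

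First I would use the metric $\mu$ of Proposition \ref{prop:rotation} to parametrize $\mathcal{C}_{\alpha_0}$ by $p\colon\mathcal{C}_{\alpha_0}\to[0,1)$ and $q\colon\mathcal{C}_{\alpha_0}\to[-1,0)$, natural with respect to $\mu$, with opposite orientations and base points $P_0,Q_0$ fixed as in Figure \ref{fig:param}. Then I would pass from the billiard flow to its first return map to the caustic: to a trajectory I associate the sequence $\{(X_n,s_n)\}$ of its tangency points with $\mathcal{C}_{\alpha_0}$ together with the sign $s_n\in\{+,-\}$ recording the winding direction around the foci, and I identify $(X,+)$ with $p(X)$ and $(X,-)$ with $q(X)$, placing the dynamics on $[-1,1)$. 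The key geometric input is the third property in Proposition \ref{prop:rotation}: the arc cut on $\mathcal{C}_{\alpha_0}$ by the two tangents from a point of $\mathcal{C}_{\lambda_i}$ has $\mu$-measure $\rho(\lambda_i)$, so sliding from $X_n$ to the tangency point $X_{n+1}$ produced by one reflection off $\Gamma_1$ (respectively $\Gamma_2$) advances the $\mu$-natural coordinate by exactly $r_1$ (respectively $r_2$); a reflection off the axis segments $\Gamma_3,\Gamma_4$ interchanges the two copies via the explicit identification of $p$ with $q$ at their common endpoints; and the discussion around Figure \ref{fig:param} tells, from the position of $X_n$ relative to $P_0,P_1,P_2$ (resp.\ $Q_0,Q_1,Q_2$), which arc of $\partial\mathcal{D}_0$ is hit next and whether the winding sign flips. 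Assembling these pieces shows the return map is piecewise a translation, i.e.\ an interval exchange transformation, with all breakpoints $p(P_1)=r_1-r_2$, $p(P_2)=r_1-r_2+\tfrac12$, $q(Q_1)=r_1-r_2-1$, $q(Q_2)=r_1-r_2-\tfrac12$ expressed through $r_1,r_2$ only.

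The remaining step is bookkeeping: the combinatorial type of the partition depends on whether $P_0$ lies above, on, or below the $x$-axis, i.e.\ on the sign of $\tfrac14+\tfrac{r_2}{2}-r_1$, and in each of the three cases one reads off the data $(\pi,\lambda)$ exactly as displayed before Figures \ref{fig:interval0}, \ref{fig:interval+}, \ref{fig:interval-}. In all three cases $\pi$ is a fixed permutation and every component of $\lambda$ is an explicit affine function of $r_1$ and $r_2$; since $a$ and $b$ do not appear otherwise, the billiard dynamics on the leaf is determined by $(r_1,r_2)$, which is the assertion.

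I expect the main obstacle to be the careful justification that the $\mu$-natural parametrization really turns all arc advances into honest translations simultaneously on both copies $[0,1)$ and $[-1,0)$ — in particular checking that the gluing induced by reflections off $\Gamma_3,\Gamma_4$ is a $\mu$-isometry consistent with the chosen base points, and that passages through tangency points with the caustic (which preserve the winding sign) are accounted for with the correct translation amount in each branch. Everything else is the routine verification that the listed intervals tile $[-1,1)$ and that their images do too.
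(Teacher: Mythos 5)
Your proposal is correct and follows essentially the same route as the paper: Section \ref{sec:interval} establishes the theorem precisely by parametrizing $\mathcal{C}_{\alpha_0}$ with the $\mu$-natural coordinates $p$ and $q$ from Proposition \ref{prop:rotation}, encoding the dynamics as an interval exchange on $[-1,1)$ whose breakpoints and translation amounts are the affine expressions in $r_1,r_2$ you list, and splitting into the three cases governed by the sign of $\tfrac14+\tfrac{r_2}{2}-r_1$. The ``obstacle'' you flag is exactly what the third property of Proposition \ref{prop:rotation} is designed to dispose of, so nothing further is needed.
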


\subsubsection*{Domain bounded by elipses with rotation numbers $\dfrac{5-\sqrt5}{10}$ and $\dfrac{\sqrt5}{10}$}
As an example, let us analyze the billiard dynamics in a domain bounded by ellipses with rotation numbers $\dfrac{5-\sqrt5}{10}$ and $\dfrac{\sqrt5}{10}$.
By theorem \ref{th:nezavisnost} it is enough to consider the case when the confocal family is degenerate, i.e.~ consists of concentric circles.

In this example, there exist six saddle-connections, represented in Figure 
\ref{fig:c5-sadd}.
Each polygonal line shown on the figure corresponds to two trajectories in the phase space, depending on direction of the motion. 
\begin{figure}[h]
\input{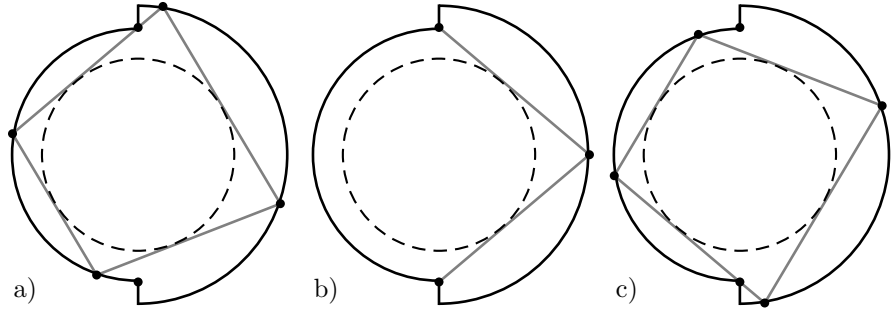}
\caption{Saddle-connections corresponding to circles with rotation numbers $\dfrac{5-\sqrt5}{10}$ and $\dfrac{\sqrt5}{10}$.}\label{fig:c5-sadd}
\end{figure}

Vertices of the saddle-connections divide the billiard border into eleven parts, see Figure \ref{fig:c5-parts}.
\begin{figure}[h]
\input{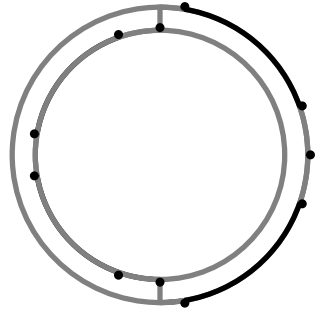}
\caption{Parts of the boundary corresponding to circles with rotation numbers $\dfrac{5-\sqrt5}{10}$ and $\dfrac{\sqrt5}{10}$.}\label{fig:c5-parts}
\end{figure}

All trajectories in this billiard domain corresponding to the fixed caustic are periodic:
\begin{itemize}
\item
either all bouncing points of a given trajectory are in gray parts -- in this case the billiard particle hits twice each gray part until the trajectory becomes closed and the trajectory is $14$-periodic,
see Figures \ref{fig:c5-periodic}a and \ref{fig:c5-periodic}c.
Notice that such a trajectory bounces six times on each of the circles and once on each of the segments; 

\item 
or all bouncing points are in black parts -- the particle will hit each part once until closure and the trajectory is $4$-periodic, see Figure \ref{fig:c5-periodic}b.
Such a trajectory reflects twice on each of the circular arcs.
\end{itemize}
\begin{figure}[h]
\input{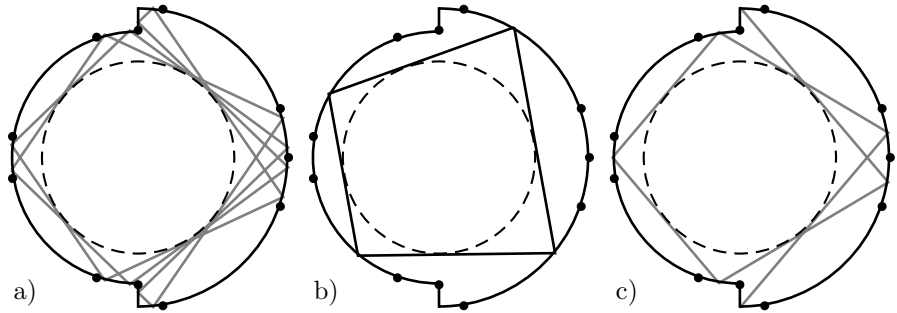}
\caption{Periodic trajectories corresponding to circles with rotation numbers $\dfrac{5-\sqrt5}{10}$ and $\dfrac{\sqrt5}{10}$.}\label{fig:c5-periodic}
\end{figure}

The corresponding level set in the phase space is divided by the saddle-connections into three parts:
\begin{itemize}
\item
the part containing all $14$-periodic trajectories: this part is bounded by four saddle-connections whose projections on the configuration space is shown on 
Figures \ref{fig:c5-sadd}a and \ref{fig:c5-sadd}c.
The saddle-connections corresponding to Figure \ref{fig:c5-sadd}b are lying inside this part;

\item
two parts containing all $4$-periodic trajectories winding about the caustic in the clockwise and counterclockwise direction: these parts are bounded by saddle-connections winding in the same direction whose projections on the configuration space is shown in Figures \ref{fig:c5-sadd}a and \ref{fig:c5-sadd}c.
\end{itemize}

\subsection{Keane condition and minimality}
\label{sec:keane}
An interval exchange transformation is called \emph{minimal} if every orbit is dense in the whole domain.
When considering pseudo-billiards, minimal interval exchange transformations will correspond to the cases when all orbits are dense in the domain between the billiard border and the caustic.

Following \cite{Viana, ZorichFLAT}, we are going to formulate a sufficient condition for minimality.
Let $f$ be an interval exchange transformation of $I$, given by pair $(\pi,\lambda)$.
Denote by $p_{\alpha}$ the left endpoint of $I_{\alpha}$.
Then the transformation satisfies \emph{the Keane condition} if:
$$
f^m(p_{\alpha})\neq p_{\beta}
\ \text{for all}\ m\ge1,\ \alpha\in\mathcal{A},\ \beta\in\mathcal{A}\setminus\{\pi_0^{-1}(1)\}. 
$$
Obviously, none of the transformations from Section \ref{sec:interval} satisfies the Keane condition:
namely, the midpoint of the interval is the left endpoint of one of $I_{\alpha}$, and it is the image of another endpoint in the corresponding interval exchange map.

The goal of this section is to find an analogue of the Keane condition for interval exhange transformations appearing in the billiard dynamics.

\subsubsection*{Billiard-like transformations and modified Keane condition}
Analysis of the examples from Section \ref{sec:interval} motivates the following definitions.

\begin{definition}\label{def:billiard-like}
An interval exchange transformation $f$ of $I=[-1,1)$ is \emph{billiard-like} if the partition into subintervals satisfies the following:
\begin{itemize}
 \item
for each $\alpha$, $I_{\alpha}$ is contained either in $[-1,0)$ or $[0,1)$;

 \item
for each $\alpha$, $f(I_{\alpha})$ is contained either in $[-1,0)$ or $[0,1)$;

 \item
both $[-1,0)$ and $[0,1)$ contain at least two intervals of the partition.
\end{itemize}
\end{definition}

\begin{definition}\label{def:keane}
We will say that a billiard-like interval exchange transformation $f$ satisfies \emph{the modified Keane condition} if
$$
f^m(p_{\alpha})\neq p_{\beta}
\ \text{for all}\ m\ge1,\ \alpha\in\mathcal{A},
\ \text{and}\ \beta\in\mathcal{B}\ \text{such that}\ p_{\beta}\not\in\{-1,0\}. 
$$
\end{definition}

\begin{lemma}\label{lema:keane.periodic}
If a billiard-like interval exchange transformation satisfies the modified Keane condition, then the transformation has no periodic points.
\end{lemma}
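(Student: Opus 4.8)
The plan is to argue by contradiction: assume $f$ has a periodic point $x$, say $f^n(x)=x$ with $n\ge 1$ minimal, and derive a contradiction from the modified Keane condition together with the billiard-like structure. First I would pass to the maximal interval $J$ on which the iterate $f^n$ acts as a single translation (the continuity interval of the interval exchange $f^n$ through $x$); since this translation fixes $x$ it must be the identity, so $f^n|_J=\mathrm{id}$ and every point of $J$ is periodic of period dividing $n$. The possibility $J=[-1,1)$ is excluded at once: it would make every left endpoint $p_\alpha$ of the partition periodic, and choosing some $p_\alpha\notin\{-1,0\}$ — which exists because $[-1,0)$ already contains at least two subintervals — would give $f^{d}(p_\alpha)=p_\alpha$ for the period $d\ge 1$ of $p_\alpha$, contradicting the modified Keane condition. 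Hence $J$ is a proper subinterval, and in particular $f^n\neq\mathrm{id}$.

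Next I would reduce everything to the statement that $-1$ or $0$ is a periodic point. Since $f^n\neq\mathrm{id}$, the fixed set $\mathrm{Fix}(f^n)$ is a proper, non-empty, finite union of half-open intervals. Either its left-most component begins at $-1$ — in which case $f^n(-1)=-1$ and $-1$ is periodic — or there is a component $[a,b)$ whose left neighbour among the continuity intervals of $f^n$ has a non-zero translation, so that $a\in(-1,1)$, $f^n(a)=a$, and $a$ is a breakpoint of $f^n$. In the latter case the standard description of the breakpoints of an iterate gives some $k\in\{0,\dots,n-1\}$ with $f^k(a)$ equal to a left endpoint $p_\alpha$ of the partition; then $p_\alpha$ lies on the periodic orbit of $a$, so $f^{d}(p_\alpha)=p_\alpha$ for its period $d\ge 1$, and the modified Keane condition forces $p_\alpha\in\{-1,0\}$. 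Either way, $-1$ or $0$ is a periodic point of $f$.

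The hard part will be the last step: showing that the ``fake'' singularities $-1$ and $0$ cannot be periodic, and this is exactly where the billiard-like hypotheses are essential. The key structural observation is that, because every image interval $f(I_\alpha)$ lies wholly inside $[-1,0)$ or wholly inside $[0,1)$, the points $-1$ and $0$ are themselves left endpoints of image intervals; hence $f^{-1}(-1)$ and $f^{-1}(0)$ are left endpoints of domain intervals. Suppose $0$ is periodic of minimal period $d$. If $d\ge 2$, walking its orbit one step backwards gives $f^{d-1}(0)=f^{-1}(0)=p_\delta$, a partition endpoint; since $0$ is itself a partition endpoint, the modified Keane condition applied along this orbit forces $p_\delta\in\{-1,0\}$, and minimality rules out $p_\delta=0$, so $p_\delta=-1$, i.e.\ $f(-1)=0$; thus $-1$ is periodic of the same period, and running the argument again from $-1$ forces $d=2$ and $f$ to interchange $-1$ and $0$. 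Combining this with the case $d=1$, and with the symmetric discussion starting from $-1$, one is left with precisely three situations: $f(0)=0$, or $f(-1)=-1$, or $f(0)=-1$ and $f(-1)=0$.

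To finish I would rule out each of these three configurations by a short counting argument. In every case the left-most subinterval of $[0,1)$ (respectively of $[-1,0)$) is forced to map onto the image interval whose left endpoint is the relevant point of $\{-1,0\}$; then, looking at the \emph{next} subinterval $I_{\alpha'}$ in the same half of $[-1,1)$ — which exists because each half contains at least two subintervals — the modified Keane condition forces $f(p_{\alpha'})$ again to be one of the two points $-1,0$. But the image interval starting at each of those two points has already been used up, so $I_{\alpha'}$ would have to coincide with a subinterval lying in the opposite half of $[-1,1)$, which is impossible. This contradiction completes the argument. I expect the bookkeeping in this final step — keeping track of which half of $[-1,1)$ each domain and image interval lives in — to be the only genuinely delicate point; everything else is routine interval-exchange machinery (continuity intervals of iterates, breakpoints, and fixed sets).
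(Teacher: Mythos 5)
Your reduction is sound up to and including the classification into the three configurations $f(0)=0$, $f(-1)=-1$, or $f(-1)=0,\ f(0)=-1$: passing to the continuity interval of $f^n$ through the periodic point, observing that its left endpoint is either $-1$ or has some $f^k$-image equal to a partition endpoint, which is then itself periodic and hence forced into $\{-1,0\}$ by the modified Keane condition, and the backward-orbit argument confining the orbit of a periodic point of $\{-1,0\}$ to $\{-1,0\}$ --- all of this is correct. It correctly exploits the fact (worth stating explicitly) that since every $I_\alpha$ and every $f(I_\alpha)$ lies wholly in one half of $[-1,1)$, the domain intervals contained in each half tile that half, and likewise for the image intervals; hence $-1$ and $0$ are left endpoints of both domain and image intervals.

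The gap is in the final step. You assert that ``the modified Keane condition forces $f(p_{\alpha'})$ again to be one of the two points $-1,0$.'' It does not: the condition only forbids $f(p_{\alpha'})$ from \emph{coinciding with a partition endpoint} outside $\{-1,0\}$; it imposes nothing when $f(p_{\alpha'})$ is not a partition endpoint at all, which is the generic situation. So the counting contradiction you draw from it never gets off the ground. The correct contradiction runs in the opposite direction: one shows that the \emph{second} partition endpoint of the relevant half is of the form $f^m(p_\gamma)$. Concretely, write $I_{\alpha_0}=[0,c)$ and $I_{\beta_0}=[-1,e)$ for the leftmost domain intervals of the two halves, so that $0<c<1$ and $-1<e<0$ by the two-intervals-per-half hypothesis and $c$, $e$ are partition endpoints outside $\{-1,0\}$. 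If $f(0)=0$ then $f(I_{\alpha_0})=[0,c)$ is an image interval, so the image intervals tiling $[0,1)$ force $f(p_\gamma)=c$ for some $\gamma$: Keane is violated with $m=1$. The case $f(-1)=-1$ is symmetric. In the swap case one has $f(I_{\beta_0})=[0,1+e)$ and $f(I_{\alpha_0})=[-1,c-1)$, and you must compare $c$ with $1+e$: if $c=1+e$ then $c=f(p_\gamma)$ for some $\gamma$; if $c<1+e$ then $c-1=f(p_\gamma)$ for some $\gamma$ and, since $c-1\in I_{\beta_0}$ where $f$ is translation by $+1$, $f^2(p_\gamma)=c$; if $c>1+e$ then symmetrically $f^2(p_\gamma)=e$. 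In every branch a partition endpoint outside $\{-1,0\}$ lies on the forward orbit of a partition endpoint, contradicting the modified Keane condition. Note that the swap case genuinely requires $m=2$, so your uniform ``one more subinterval'' count could not have succeeded there even after repairing the misstated implication.
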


We say that an interval exchange transformation is \emph{irreducible} if for no $k<|\mathcal{A}|$ the union
$$
I_{\alpha_{\pi_0^{-1}(1)}}\cup\dots\cup I_{\alpha_{\pi_0^{-1}(k)}}
$$
is invariant under the transformation.
The usual Keane condition implies irreducibility.
However, this is not the case for the modified Keane condition -- it may happen that the transformation falls apart into two irreducible transformations on $[-1,0)$ and $[0,1)$.
On the other hand, if for a transformation satisfying the modified Keane condition there is an interval $I_{\alpha}\subset[-1,0)$ such that $f(I_{\alpha})\subset[0,1)$, the irreducibility will also take place.

\begin{proposition}\label{prop:keane.min}
If an irreducible billiard-like interval exchange transformation $f$ satisfies the modified Keane condition, then $f$ is minimal.
\end{proposition}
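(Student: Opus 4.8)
The plan is to adapt Keane's classical argument that the infinite distinct orbit condition implies minimality, paying special attention to the two distinguished points $-1$ and $0$ at which the partition of $I=[-1,1)$ is always broken in the billiard-like case. First I would note that, by Lemma \ref{lema:keane.periodic}, the modified Keane condition forces $f$ to have no periodic points. Assume, for contradiction, that $f$ is not minimal. By Zorn's lemma and compactness of $I$ there is a nonempty closed $f$-invariant minimal set $X\subseteq I$; since $f$ is a bijection, $X$ is fully invariant, and so is $Y:=I\setminus X$. By a standard dichotomy for interval exchange transformations, a minimal set is either a finite union of intervals or a Cantor set, and the first alternative is incompatible with the absence of periodic points (some power of $f$ would have to translate an interval into itself, hence act as the identity on it); hence $X$ is a Cantor set, $X\neq I$, and $Y$ is a dense open fully invariant set, a disjoint union of maximal open ``gap'' intervals.

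The heart of the matter is to produce a left endpoint of the partition inside $Y$. Fix a gap $J\subseteq Y$. Since $f$ is a piecewise translation, $|f^n(J)|=|J|$ whenever $f^n$ is continuous on $J$, and every $f^n(J)$ lies in the invariant set $Y$. If $f^n$ were continuous on $J$ for all $n\ge0$, then the translates $f^n(J)\subseteq Y$ could not be pairwise disjoint (their total length would exceed $|I|=2$), so two of them overlap; by injectivity of $f$ and constancy of lengths, iterating the relevant power of $f$ would yield either an interval of periodic points or a translate of $J$ escaping $I$ — both impossible. Therefore $f^n$ is discontinuous on $J$ for some $n$, i.e.\ some left endpoint $p_\alpha$ of the partition lies in the interior of some $f^j(J)\subseteq Y$. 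Running the same reasoning for $f^{-1}$ — again an interval exchange transformation — places a left endpoint of the $f^{-1}$-partition, that is a point $f(p_{\alpha'})$, inside $Y$. Chasing the orbits of these points until they meet a partition endpoint yields a ``connection'': an identity $f^m(p_\alpha)=p_\beta$ with $m\ge1$. By Definition \ref{def:keane} the only connections that are not excluded are those with $p_\beta\in\{-1,0\}$, and symmetrically for $f^{-1}$.

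It remains to rule out, in an irreducible billiard-like transformation, a connection into $\{-1,0\}$ coexisting with a proper invariant set. I would argue that, because each $I_\alpha$ and each $f(I_\alpha)$ lies entirely in the left half $[-1,0)$ or the right half $[0,1)$, the points $-1$ and $0$ are the only places where the half of $I$ occupied by the itinerary can change, the translation vector being locally constant elsewhere; consequently, if the boundary of $\overline{Y}$ lay on the grand orbit of $\{-1,0\}$ alone, then $\overline{Y}$ would be a finite $f$-invariant union of intervals whose traces on $[-1,0)$ and on $[0,1)$ are each unions of partition intervals permuted among themselves by $f$. This exhibits a proper invariant union $I_{\alpha_{\pi_0^{-1}(1)}}\cup\cdots\cup I_{\alpha_{\pi_0^{-1}(k)}}$ with $k<|\mathcal{A}|$, contradicting irreducibility. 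The step I expect to be the main obstacle is precisely this one: checking carefully that the fold at $0$ and $-1$ forces the hypothetical invariant set to split along the partition, so that irreducibility can be invoked.

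A cleaner but less self-contained alternative would be to pass to the measured foliation $\ker (dK_\lambda)$ on the leaf $\mathcal{M}_{\lambda_0}$ supplied by Theorem \ref{th:maier}: the modified Keane condition is exactly the absence of saddle-connections for this foliation, which classically forbids both cylinders of closed leaves and proper minimal subsurfaces, so the foliation — and hence the induced interval exchange, irreducibility ensuring that the relevant component is all of $\mathcal{M}_{\lambda_0}$ — is minimal.
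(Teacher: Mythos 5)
Your overall strategy is the right one: run Keane's minimality argument for interval exchange transformations and isolate the special role of the two marked points $-1$ and $0$. Your opening steps are sound --- Lemma~\ref{lema:keane.periodic} kills periodic orbits, the dichotomy for a minimal invariant set is standard, and the length/disjointness argument correctly shows that some gap of the complement $Y$ must contain a discontinuity of $f$ (and, for $f^{-1}$, a point $f(p_{\alpha'})$) in its interior. Your closing remark, that the statement can instead be read off from the structure theorem for the measured foliation (Theorem~\ref{th:maier}), is also a legitimate route and is in fact how the paper derives the neighbouring Theorem~\ref{th:poncelet-nosonja}; but as you present it, it too is a sketch rather than a proof.

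The first genuine gap is the sentence ``chasing the orbits of these points until they meet a partition endpoint yields a connection $f^m(p_\alpha)=p_\beta$.'' This is an assertion, not an argument, and it is precisely the content of Keane's theorem. What you have actually established is that a discontinuity $u=p_\alpha$ of $f$ and a discontinuity $v=f(p_{\alpha'})$ of $f^{-1}$ each lie in the interior of some gap of $Y$. It does not follow that the forward orbit of $v$ ever lands \emph{exactly} on $u$, nor that the orbit of $p_\alpha$ ever meets another $p_\beta$: these orbits simply stay inside the invariant open set $Y$, and in the minimal case the Keane condition guarantees that such coincidences never happen, so they cannot be extracted by generic ``chasing.'' Converting the presence of discontinuities inside gaps into an actual connection is the hard half of Keane's proof (one must track how the finitely many gaps containing discontinuities are split under iteration and run a counting/length argument on the resulting chains). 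At minimum this step should be a precise citation of the classical theorem in the contrapositive form ``a non-minimal IET without periodic points admits a connection,'' not a one-line paraphrase.

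The second gap is the only part of the proposition that is not already classical, and you flag it yourself as ``the main obstacle'': ruling out the residual connections $f^m(p_\alpha)\in\{-1,0\}$ that Definition~\ref{def:keane} still permits. Note that $0$ is in general a \emph{genuine} discontinuity of a billiard-like map (see the explicit transformations in Section~\ref{sec:interval}), so a connection terminating at $0$ is not automatically harmless. The mechanism that makes it harmless is the one you do not carry out: because every $f(I_\beta)$ lies in one half of $[-1,1)$, the points $0$ and $-1$ are left endpoints of intervals $f(I_{\gamma_0})$, $f(I_{\gamma_1})$, so $f^m(p_\alpha)=0$ pulls back to $f^{m-1}(p_\alpha)=p_{\gamma_0}$; one must then show that the modified Keane condition, the absence of periodic points and irreducibility together leave no admissible configuration (the cases $p_{\gamma_0}\in\{-1,0\}$ lead to fixed points, $2$-cycles of $\{-1,0\}$, or a reducible splitting into the two halves). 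Your heuristic about ``the fold at $0$ and $-1$ forcing the invariant set to split along the partition'' gestures at this but does not do it. Until both steps are completed, what you have is a correct plan rather than a proof.
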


\subsubsection*{An example}
Consider billiard trajectories within domain $\mathcal{D}_0$ with the caustic $\mathcal{C}_{\alpha_0}$, as described in Section \ref{sec:few.confocal}.
In addition, suppose the rotation numbers corresponding to ellipses $\mathcal{C}_{\lambda_1}$ and $\mathcal{C}_{\lambda_2}$ are:
$$
r_1=\frac5{11}+\frac1{22\pi},\quad r_2=\frac5{11}-\frac1{220\pi}.
$$

With given rotation numbers, the Cayley-type conditions from Theorem \ref{th:cayley-pseudo} can be rewritten in a simpler form.
Namely, a necessary condition for existence of a trajectory within $\mathcal{D}_0$ which becomes closed after $n$ reflections of $\mathcal{C}_{\lambda_1}$ and $m$ reflections off $\mathcal{C}_{\lambda_2}$ is:
$$
nr_1+mr_2\in\mathbf{Z}.
$$
In this case, this condition is satisfied for $n=1$ and $m=10$:
\begin{equation}\label{eq:r1+10r2}
r_1+10r_2=5.
\end{equation}

Since $\dfrac14+\dfrac{r_2}2-r_1>0$, the corresponding interval exhange transformation is given by:
\begin{gather*}
\Pi=\left(
\begin{array}{cccccccc}
A & B & C & D & E & F & G & H\\
D & B & E & C & H & F & A & G 
\end{array}
\right),
\\
\lambda=\left(\frac1{20\pi},\ \frac12,\ \frac1{22}-\frac{21}{220\pi},\ \frac5{11}+\frac1{22\pi},
\ \frac1{20\pi},\ \frac12,\ \frac1{22}-\frac{21}{220\pi},\ \frac5{11}+\frac1{22\pi}\right).
\end{gather*}

\begin{proposition}\label{prop:keane-example}
The transformation $(\Pi,\lambda)$ satisfies the modified Keane condition.
\end{proposition}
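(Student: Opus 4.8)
The plan is to recast the coincidence problem as linear algebra over $\mathbf{Q}$ inside the two--dimensional $\mathbf{Q}$--vector space $V=\mathbf{Q}\cdot 1\oplus\mathbf{Q}\cdot\frac1\pi$. Since $\pi$ is transcendental, $\{1,\frac1\pi\}$ is a $\mathbf{Q}$--basis of $V$, so each $v\in V$ has a unique decomposition $v=\rho(v)+\theta(v)\,\frac1\pi$ with $\rho(v),\theta(v)\in\mathbf{Q}$, and $\rho,\theta$ are $\mathbf{Q}$--linear. Writing $\tau=\frac1{220\pi}$, one has $r_1=\frac5{11}+10\tau$ and $r_2=\frac5{11}-\tau$, so all the combinatorial data of $(\Pi,\lambda)$ lies in $V$: the eight lengths $\lambda_\bullet$, hence the left endpoints $p_A,\dots,p_H$ of the subintervals, and the five translation constants $r_1+\tfrac32,\ r_2,\ r_1,\ r_1-\tfrac12,\ r_1-1$. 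First I would tabulate $\rho$ and $\theta$ on all of these. What matters is that $220\,\theta(p_\bullet)\in\{0,11,-10\}$ for the partition points, while $220\,\theta$ of a translation constant equals $10$ for each of the six ``$r_1$--type'' branches (because $\theta(r_1+c)=\theta(r_1)=\frac1{22}$ for rational $c$) and $-1$ for the two ``$r_2$'' branches.

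Because every branch of $f$ acts on $[-1,1)$ as a pure translation by an element of $V$ and $V$ is closed under addition, the whole $f$--orbit of any partition point stays in $V$, and a hypothetical coincidence $f^m(p_\alpha)=p_\beta$ with $m\ge1$ and $p_\beta\notin\{-1,0\}$ splits, by $\mathbf{Q}$--independence of $1$ and $\tfrac1\pi$, into two equations. The $\theta$--part reads $10a-b=220\bigl(\theta(p_\beta)-\theta(p_\alpha)\bigr)\in\{0,\pm10,\pm11,\pm21\}$, where $a,b$ count the ``$r_1$--type'' and ``$r_2$'' branches visited in the first $m$ steps ($a+b=m$); the $\rho$--part reads $\rho(p_\alpha)+\sum_{j<m}\rho(t_{\alpha_j})=\rho(p_\beta)$ and depends on the full itinerary, not just on $a,b$. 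Since the $\theta$--equation by itself already admits solutions (e.g.\ $10a-b=0$ with $a=1$, $b=10$), this is only a first filter; the argument must exploit the $\rho$--equation and the order structure as well.

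To finish, I would use the observation that $f^j(p_\alpha)=p_\alpha+S_j$ with $S_j=\sum_{i<j}t_{\alpha_i}\in(-2,2)$, i.e.\ the partial sums of translations stay bounded — the orbit is strongly recurrent — so one can form the first--return map $g$ of $f$ to a well chosen subinterval. The resonance relation $r_1+10r_2=5$ (equivalently $\rho(r_1)+10\rho(r_2)=5$ and $\theta(r_1)+10\theta(r_2)=0$) collapses the apparent $8$--interval complexity onto the single irrational parameter $\tau$; the plan is to show that $g$ is an irrational rotation by an angle in $V$ with nonzero $\tfrac1\pi$--part (or, more conservatively, an interval exchange of rank $2$ which is again billiard--like, to which one applies induction). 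Minimality of irrational rotations then forbids any endpoint orbit from ever hitting a partition point other than the seams $\{-1,0\}$, which is exactly the modified Keane condition; combined with the irreducibility supplied by the branch $I_A\subset[-1,0)$ with $f(I_A)\subset[0,1)$, Proposition~\ref{prop:keane.min} additionally yields minimality, and Lemma~\ref{lema:keane.periodic} the absence of periodic trajectories (which is the point of the example: Cayley's condition $r_1+10r_2\in\mathbf{Z}$ holds, yet the motion is not periodic).

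The step I expect to be the real obstacle is the bookkeeping of itineraries. Unlike $\theta$, the usual order on $V\subset\mathbf{R}$ does not respect the $\mathbf{Q}\oplus\mathbf{Q}\frac1\pi$ splitting, so one cannot decide ``for free'' into which subinterval $f^j(p_\alpha)$ falls, and it is precisely this that makes the passage from the $\theta$--filter to the first--return computation delicate. What rescues the situation is that $\frac1\pi$ is irrational, so every comparison of an orbit point with a partition point is strict and the itinerary is unambiguous, and the resonance $r_1+10r_2=5$ keeps the renormalization finite and explicit; once $g$ is identified, the remaining verification that it is a genuine irrational rotation is a short computation with the numbers $\frac1{22\pi}$ and $\frac1{220\pi}$.
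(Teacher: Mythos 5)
Your setup --- splitting everything inside the $\mathbf{Q}$-vector space $\mathbf{Q}\oplus\mathbf{Q}\cdot\frac1\pi$ and extracting two scalar equations from a hypothetical coincidence $f^m(p_\alpha)=p_\beta$ --- is in substance the paper's first step (there phrased as: the rational relation produced by the coincidence must be proportional to $r_1+10r_2=5$), and your bookkeeping of $220\,\theta$ on the endpoints and on the translation constants is correct. You also correctly observe that this filter alone is insufficient: it leaves a one-parameter family of candidates, $k_1=a-\alpha+\alpha'$, $k_2=10a-\beta+\beta'$, $k_3=-10a-\gamma+\gamma'$ in the paper's notation. The gap is in how you propose to close it. The first-return map of this $8$-interval exchange is not an irrational rotation: the invariant leaf is a surface of genus greater than one (compare Proposition \ref{prop:genus3}), so no induction or renormalization can collapse the dynamics onto a circle rotation, and your fallback ("an interval exchange of rank $2$, to which one applies induction") identifies neither a decreasing quantity nor a base case. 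Moreover, even if the return map were an irrational rotation, the inference "minimality forbids the endpoint orbit from hitting a partition point" is not valid: a minimal rotation's orbit is a countable set that can perfectly well contain any prescribed point, and deciding whether it does puts you right back into arithmetic in $\mathbf{Z}+\mathbf{Z}\alpha$, i.e., the linear algebra you started from.

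What actually closes the argument is a combinatorial bound tying $k_2$ to $k_1$: the two $r_2$-translation intervals $B$ and $F$ are short enough that no orbit visits $B\cup F$ three times consecutively, hence $k_2\le 2k_1+2$ along any orbit segment. Fed into the proportionality relation this gives $10a\le 2(a+2)+3$, i.e.\ $8a\le 7$, and non-negativity of $k_2$ then forces $a=0$, hence $k=k_1+k_2\le 3$; a finite direct check that no endpoint reaches another endpoint outside $\{-1,0\}$ in at most three iterates finishes the proof. This quantitative control on itineraries --- precisely the bookkeeping you flagged as the real obstacle --- is the missing idea, and it is elementary once one compares the lengths of $B$ and $F$ with the translation amounts; no renormalization is needed.
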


\begin{proof}
Suppose that $p$ and $p'$ are two endpoints of the intervals such that $p'\not\in\{-1,0\}$ and $f^k(p)=p'$ for some $k\ge1$.
Notice that:
$$
p=\alpha r_1+\beta r_2+\gamma\frac12,\quad
p'=\alpha' r_1+\beta' r_2+\gamma'\frac12,
$$
for some
$\alpha,\alpha'\in\{-1,0,1\}$, $\beta,\beta'\in\{-1,0\}$, $\gamma,\gamma'\in\{-2,-1,0,1,2\}$.

We have:
$$
p'=f^k(p)=p+k_1r_1+k_2r_2+k_3\frac12,
$$
for some integers $k_1$, $k_2$, $k_3$ such that $k_1+k_2=k$, $k_1\ge0$, $k_2\ge0$.
Thus:
\begin{equation}\label{eq:k1r1+k2r2}
(k_1+\alpha-\alpha')r_1+(k_2+\beta-\beta')r_2+(k_3+\gamma-\gamma')\frac12=0.
\end{equation}
Since $r_1$ and $r_2$ are irrational, equations (\ref{eq:r1+10r2}) and (\ref{eq:k1r1+k2r2}) must be dependent:
\begin{equation}\label{eq:a}
a:=k_1+\alpha-\alpha'=\frac1{10}(k_2+\beta-\beta')=-\frac1{10}(k_3+\gamma-\gamma').
\end{equation}
For each $\xi\in B\cup F$, either $f(\xi)$ or $f^2(\xi)$ are not in $B\cup F$, thus
\begin{equation}\label{eq:relk1k2}
k_2\le2k_1+2.
\end{equation}
Combining (\ref{eq:relk1k2}) and (\ref{eq:a}) we get $8a\le7$.
Since $k_2$ is non-negative, (\ref{eq:relk1k2}) gives that $a=0$, which leads to $k=k_1+k_2\le3$.
By direct calculation we check that none of the partition interval endpoints is mapped into another one, different from $-1$ and $0$ by at most three iterations.
\end{proof}

In this example, although the Cayley-type conditon for periodicity is satisfied, not only that closed trajectories do not exist, but each of the trajectories densely fills the ring between the billiard border and the caustic.

\subsection{Unique ergodicity}
\label{sec:unique.erg}
In this section it will be shown that there is infinitely many billiard tables bounded by arcs of confocal conics, such that the corresponding flow will not be uniquely ergodic.

Consider the billiard table $\mathcal{D}_1$ whose boundary consists of ellipse $\mathcal{C}_{\beta_1}$ from (\ref{eq:confocal-in-plane}) and segment $\Gamma=\{(0,y)\mid\sqrt{b-\beta_2}\le y\le\sqrt{b-\beta_1}\}$, with $\beta_1<\beta_2<b$: $\partial\mathcal{D}_1=\mathcal{C}_{\beta_1}\cup\Gamma$, see Figure \ref{fig:billiard2}.
\begin{figure}[h]
\input{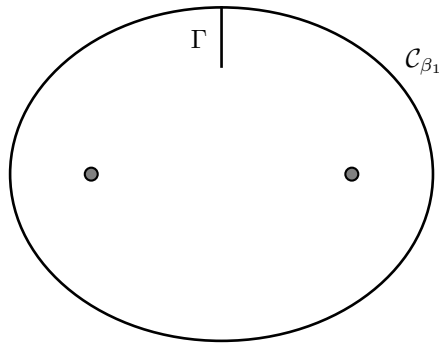}
\caption{Domain $\mathcal{D}_1$ within an ellipse having a ``wall'' on the $y$-axis.}\label{fig:billiard2}
\end{figure}

Fix parameter $\alpha_0$: such that $\beta_2<\alpha_0<b$, and take $\mathcal{C}_{\alpha_0}$ to be a caustic.
A corresponding billiard trajectory is shown in Figure \ref{fig:caustic2}.
\begin{figure}[h]
\input{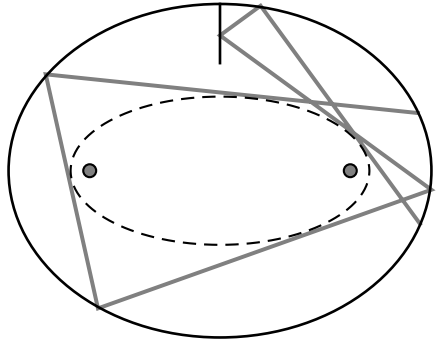}
\caption{A billiard trajectory in $\mathcal{D}_1$ with an ellipse as caustic.}\label{fig:caustic2}
\end{figure}

\begin{proposition}\label{prop:iet}
Billiard flow within domain $\mathcal{D}_1$ and caustic $\mathcal{C}_{\alpha_0}$ is equivalent to the following exchange transformation of the interval $[-1,1)$:
\begin{equation}\label{eq:iet}
 \xi\mapsto
\begin{cases}
\xi+r_1, &\xi\in[-1,-r_1)\\
\xi+r_1-1, &\xi\in[-r_1,r_2-r_1)\\
\xi+r_1, & \xi\in[r_2-r_1,1-r_1)\\
\xi+r_1-1, &\xi\in[1-r_1,r_2-r_1+1)\\
\xi+r_1-2, &\xi\in[r_2-r_1+1,1),
\end{cases}
\end{equation}
with $r_1=\rho(\beta_1)$, $r_2=\rho(\beta_2)$, and $\rho$ is the corresponding rotation function, see Proposition \ref{prop:rotation}.
\end{proposition}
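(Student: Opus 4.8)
The plan is to follow the same strategy used to establish the three interval exchange transformations in Section~\ref{sec:interval}, now applied to the simpler domain $\mathcal{D}_1$ whose only reflex angle is a single full angle at the endpoint of the segment $\Gamma$ on the $y$-axis. First I would set up the coding of a billiard trajectory with caustic $\mathcal{C}_{\alpha_0}$ by the sequence $\{(X_n,s_n)\}$, where $X_n\in\mathcal{C}_{\alpha_0}$ is the $n$-th touching point of the trajectory with the caustic and $s_n\in\{+,-\}$ records whether the particle winds counterclockwise or clockwise about the caustic on the corresponding segment. As in the case of $\mathcal{D}_0$, the metric $\mu$ from Proposition~\ref{prop:rotation} gives natural parametrizations $p:\mathcal{C}_{\alpha_0}\to[0,1)$ and $q:\mathcal{C}_{\alpha_0}\to[-1,0)$, oriented oppositely, and we glue $[0,1)$ and $[-1,0)$ into $I=[-1,1)$ via $(X,+)\sim p(X)$, $(X,-)\sim q(X)$.

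The key computational step is to track how the parameter on the caustic changes between successive contacts, and where the sign $s_n$ flips. The ellipse $\mathcal{C}_{\beta_1}$ is the full outer boundary, so reflection off it always preserves the winding direction about the foci and advances the contact point on $\mathcal{C}_{\alpha_0}$ by the rotation number $r_1=\rho(\beta_1)$ (this is exactly the content of Proposition~\ref{prop:rotation} together with the Poncelet/grid structure); the only place the winding direction reverses is when the trajectory crosses the $y$-axis at a height between $\sqrt{b-\beta_2}$ and $\sqrt{b-\beta_1}$, i.e.\ hits the wall $\Gamma$, or when it would pass through the ``missing'' part of the $y$-axis above the wall and instead must first return — but since $\mathcal{D}_1$ has no inner ellipse, the only obstruction is $\Gamma$ itself. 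I would locate, on $\mathcal{C}_{\alpha_0}$, the two points corresponding to the tangent lines from the endpoints of $\Gamma$; their $p$- and $q$-parametrization values are governed by $r_1$ and $r_2=\rho(\beta_2)$, because the tangent lines from the endpoint of $\Gamma$ lying on $\mathcal{C}_{\beta_2}\cap\{x=0\}$ cut an arc of $\mu$-measure $r_2$ (again Proposition~\ref{prop:rotation}, with $\lambda=\beta_2$). Reading off which regions of $I$ map where then yields a five-interval exchange: three ``free flight'' pieces translating by $r_1$ or $r_1-1$ (the two values differ by the convention $p\in[0,1)$ versus identifying across $0$), and two pieces, corresponding to bouncing off $\Gamma$, which translate by $r_1-1$ or $r_1-2$ and also switch which copy of the parametrization is used. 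Matching the breakpoints $-r_1$, $r_2-r_1$, $1-r_1$, $r_2-r_1+1$ to the locations of $P_0$-type and $\Gamma$-endpoint points on the caustic gives precisely formula (\ref{eq:iet}).

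The main obstacle I expect is the careful bookkeeping of the sign changes and of the wrap-around: one must verify that the map defined by (\ref{eq:iet}) is genuinely a bijection of $[-1,1)$ (the five translation amounts and lengths must fit together with total measure $2$), and that the case analysis of ``which arc the particle hits next and in which direction it then winds'' is exhaustive — in particular that, unlike for $\mathcal{D}_0$, there is no subdivision into subcases according to the sign of an expression like $\tfrac14+\tfrac{r_2}2-r_1$, because here there is only one reflex vertex rather than two wall segments $\Gamma_3,\Gamma_4$. This follows because $\Gamma$ lies entirely in the upper half-plane, so the relevant geometric configuration is rigid; I would make this explicit by drawing the analogue of Figure~\ref{fig:param} for $\mathcal{D}_1$ and checking the three bullet-style alternatives (hit $\mathcal{C}_{\beta_1}$ and continue, hit $\Gamma$ and reverse, or hit $\mathcal{C}_{\beta_1}$ then $\Gamma$) against the position of the contact point. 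Once the partition, the translation vectors, and the identification $(X,\pm)\sim p(X),q(X)$ are pinned down, the equivalence of the billiard flow with (\ref{eq:iet}) is immediate, and by Theorem~\ref{th:nezavisnost}-type reasoning the transformation depends only on $r_1$ and $r_2$, which is already visible from the formula.
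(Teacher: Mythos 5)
Your proposal is correct and follows essentially the same route as the paper: code the flow by the touching points with the caustic together with the winding direction, parametrize $\mathcal{C}_{\alpha_0}$ by the invariant measure $\mu$ of Proposition \ref{prop:rotation} via $p:[0,1)$ and $q:[-1,0)$ anchored at the tangency points of the tangents from the free endpoint $(0,\sqrt{b-\beta_2})$ of the wall, and read off the exchange from the facts that reflection off $\mathcal{C}_{\beta_1}$ advances the touching point by $r_1$ while reflection off $\Gamma$ reverses orientation, with $r_2$ locating the arc subtended from the wall's free endpoint. The paper's own proof is exactly this argument, stated even more tersely, so no further comparison is needed.
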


\begin{proof}
The billiard flow is equivalent to the discrete dynamics of touching points of the trajectory with the caustic, with the direction of motion taken into account.

Introduce metric $\mu$ on the caustic $\mathcal{C}_{\alpha_0}$ as in Proposition \ref{prop:rotation}.
Then, we parametrize $\mathcal{C}_{\alpha_0}$ by parameters:
$$
p\ :\ \mathcal{C}_{\alpha_0}\to[0,1),
\quad
q\ :\ \mathcal{C}_{\alpha_0}\to[-1,0),
$$
which are natural with respect to $\mu$ such that $p$ is oriented counterclockwise and $q$ clockwise along $\mathcal{C}_{\alpha_0}$, and the values $p=0$ and $q=-1$ correspond to touching points, contained in the right half-plane and left-half plane respectively, of tangential lines from $(0,\sqrt{b-\beta_2})$.
Having in mind that reflection on the ``wall'' $\Gamma$ changes the orientation of motion, we obtain (\ref{eq:iet}).
\end{proof}

Map (\ref{eq:iet}) is represented by  pair $(\pi,\lambda)$:
\begin{gather*}
\pi=\left(
\begin{array}{ccccc}
A & B & C & D & E \\
B & E & A & D & C 
\end{array}
\right),
\\
\lambda=(1-r_1,\ r_2,\ 1-r_2,\ r_2,\ r_1-r_2),
\end{gather*}
see also Figure \ref{fig:interval1}.
\begin{figure}[h]
\input{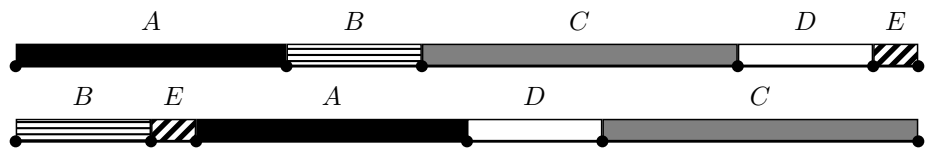}
\caption{Transformation corresponding to the billiard within $\mathcal{D}_1$.}\label{fig:interval1}
\end{figure}

\begin{theorem}\label{th:ergodic}
There are billiard tables $\mathcal{D}_1$ and caustics, such that the corresponding billiard flows are minimal and not uniquely ergodic.
\end{theorem}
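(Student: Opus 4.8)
The plan is to reduce the statement to a known result about interval exchange transformations via Proposition \ref{prop:iet}, which already identifies the billiard flow within $\mathcal{D}_1$ with caustic $\mathcal{C}_{\alpha_0}$ to a concrete $5$-interval exchange transformation depending only on the two rotation numbers $r_1=\rho(\beta_1)$, $r_2=\rho(\beta_2)$. By Theorem \ref{th:nezavisnost}-type reasoning (the same argument applies here since the dynamics depends only on $r_1,r_2$), we may freely choose $\beta_1,\beta_2$ to realize any admissible pair of rotation numbers in the range of $\rho$, i.e.\ any pair in $\left(0,\frac12\right)^2$ with $r_1>r_2$. So the theorem follows once we exhibit infinitely many such pairs $(r_1,r_2)$ for which the IET $(\pi,\lambda)$ with $\pi=\left(\begin{smallmatrix}A&B&C&D&E\\B&E&A&D&C\end{smallmatrix}\right)$ and $\lambda=(1-r_1,\ r_2,\ 1-r_2,\ r_2,\ r_1-r_2)$ is minimal but not uniquely ergodic.

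The key step is to recognize this family of IETs as (a cover of, or conjugate to) the classical Veech example of Keane--Satayev type, namely the minimal non-uniquely-ergodic skew products built from two rotations, as presented in \cites{Veech1969,MasurTab2002}. First I would normalize the interval to unit length and rewrite the transformation as a ``$\mathbf{Z}/2$-extension'' of a rotation: the reflection off the wall $\Gamma$ flips the sign of the winding direction, so the phase space $[-1,1)$ is naturally two copies of $[0,1)$ (the $p$- and $q$-parametrizations), and on each copy the map is, away from the cut points, a rotation by $r_1$, with a ``swap'' of the two copies occurring on a subinterval whose length is governed by $r_2$. This is precisely the structure of Veech's two-circle example: a rotation of the base circle combined with an interval-determined exchange of two fibre copies. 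Matching the combinatorial data $\pi$ and the lengths $\lambda$ to Veech's normal form is the bulk of the technical work, but it is bookkeeping once the $\mathbf{Z}/2$-extension picture is set up.

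With the identification in hand, I would invoke Veech's theorem: for a residual (dense $G_\delta$), in fact uncountable, set of parameters — concretely, for $r_1$ irrational and $r_2$ chosen in a suitable Liouville-type fashion relative to $r_1$ (rapid simultaneous rational approximation so that the Rauzy--Veech renormalization has unbounded but controlled behaviour) — the resulting extension is minimal yet carries at least two distinct ergodic invariant probability measures. Minimality can alternatively be obtained directly from Proposition \ref{prop:keane.min}: one checks that for these parameter choices the transformation is billiard-like, irreducible (there is an interval $I_\alpha\subset[-1,0)$ mapped into $[0,1)$, e.g.\ $B$), and satisfies the modified Keane condition, using that $r_1$ (and an appropriate combination with $r_2$) is irrational so no endpoint orbit hits another endpoint. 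Non-unique-ergodicity is the part that genuinely requires Veech's construction and cannot come from the Keane-type condition. Since the set of suitable $(r_1,r_2)$ is uncountable — hence certainly infinite — and each yields, via the range of $\rho$, a genuine billiard table $\mathcal{D}_1$ (different choices of $\beta_1,\beta_2$, even up to the scaling freedom of the confocal family, giving non-equivalent tables), the infinitude claim follows.

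The main obstacle I anticipate is the precise matching of our $5$-interval datum $(\pi,\lambda)$ to Veech's normal form: one must verify that the combinatorial type $\left(\begin{smallmatrix}A&B&C&D&E\\B&E&A&D&C\end{smallmatrix}\right)$ is, after a Rauzy induction step or an explicit conjugacy, exactly the double cover of a rotation with the swap-interval structure, and that the length parameters translate correctly into ``base rotation number'' and ``defect parameter''. A secondary subtlety is ensuring that the parameter region giving non-unique ergodicity indeed meets the admissible set $\{(r_1,r_2): 0<r_2<r_1<\tfrac12\}$ that is realizable by rotation numbers of confocal ellipses — this requires only that Veech's Liouville conditions are open/dense enough to be imposed inside that triangle, which they are, but it should be stated explicitly. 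Everything else — minimality, the reduction through Proposition \ref{prop:iet}, and the passage from a single parameter to infinitely many tables — is routine given the results already established in the paper.
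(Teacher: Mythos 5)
Your proposal follows essentially the same route as the paper: reduce to the five-interval exchange transformation of Proposition \ref{prop:iet}, recognize it as the Veech two-circle example (a $\mathbf{Z}/2$-extension of the rotation by $r_1$ with swap governed by $r_1-r_2$), and choose $r_1$ irrational with unbounded partial quotients and $r_2=r_1-r$ in Veech's Liouville-type fashion to obtain minimality without unique ergodicity. The paper's proof is just a terser version of the same argument, asserting the correspondence to \cites{Veech1969,MasurTab2002} without spelling out the combinatorial matching you describe.
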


\begin{proof}
The transformation (\ref{eq:iet}) corresponds to the Veech example of minimal and not uniquely ergodic systems \cite{Veech1969}, see also \cite{MasurTab2002}.
Namely, choose $\alpha_0$ and $\beta_1$ such that $r_1=\rho(\beta_1)$ is an irrational number with unbounded partial quotients.
Then there are irrational numbers $r$, such that for $r_2=\rho(\beta_2)=r_1-r$, the measure $\mu$ on $\mathcal{C}_{\alpha_0}$ is not ergodic, thus not uniquely ergodic.
\end{proof}


\begin{bibdiv}
\begin{biblist}
\bibselect{reference}
\end{biblist}
\end{bibdiv}

\end{document}